\documentclass[a4paper,11pt]{article}
\usepackage{jheppub} % for details on the use of the package, please see the JINST-author-manual

\usepackage{graphicx} % Required forinserting images
\usepackage{url}
\usepackage{amsmath}
\usepackage{subcaption}
\usepackage{enumitem}
\usepackage{amssymb}
\usepackage{outlines}
\usepackage{lipsum}
\usepackage{multirow}
\usepackage{mathrsfs}
\usepackage{hyperref}

\usepackage{color}  
\usepackage{tikz}
\usetikzlibrary{trees}
\usepackage{comment}
\usepackage{amsthm}
\usepackage{bbm}
\usepackage{dsfont}
\usepackage{float}
\theoremstyle{plain}
\newtheorem{thm}{Theorem}[section]

\newtheorem{lem}[thm]{Lemma}
\newtheorem{prop}[thm]{Proposition}
\newtheorem{cor}{Corollary}

\theoremstyle{definition}
\newtheorem{defn}{Definition}[section]

\theoremstyle{remark}
\newtheorem*{rem}{Remark}

\newcommand{\bs}[1]{\boldsymbol{#1}}

\hypersetup{
    colorlinks=true, %set true if you want colored links
    linktoc=all,     %set to all if you want both sections and subsections linked
    linkcolor=blue,  %choose some color if you want links to stand out
}

\makeatletter
\def\@fpheader{\relax}
\makeatother

\title{Maximally heavy dynamics in the causal diamond}
\author{David Poland, Gordon Rogelberg}
\affiliation{Department of Physics, Yale University, 217 Prospect St, New Haven, CT 06520, USA}
\emailAdd{david.poland@yale.edu, gordon.rogelberg@yale.edu}

\abstract{Correlation functions of CFT operators with infinite scaling dimension are rich, multifaceted objects that describe physics ranging across classical holography, black hole dynamics, and flat-space scattering amplitudes. In this work, we provide a rigorous framework for characterizing the space of four-point functions of identical operators with infinite dimension in terms of well-defined ``maximally heavy observables,'' which are akin to intrinsic quantities describing statistical systems in the thermodynamic limit. These observables are highly constrained by crossing symmetry and unitarity, and give novel insights into the locality of bulk states through the emergence of dynamical phase transitions. In certain cases, these results connect directly to the more familiar picture of torus partition functions at large central charge. We apply our framework to a number of illustrative examples including generalized free fields, chiral product correlators, and maximal giant gravitons in planar $\mathcal{N} =4$ SYM.}

\newcommand{\bh}{{\bar{h}}}

\newcommand{\rhob}{{\bar{\rho}}}

\newcommand{\cO}{\mathcal O}

\newcommand{\be}{\begin{equation}}
\newcommand{\ee}{\end{equation}}
\newcommand{\bea}{\begin{eqnarray}}
\newcommand{\eea}{\end{eqnarray}}
\newcommand{\ba}{\begin{equation} \begin{aligned}}
\newcommand{\ea}{\end{aligned} \end{equation}}

\newcommand{\mycomment}[1]{}

\newcommand{\mbf}{\mathbf}

\newcommand{\Df}{\Delta_\phi}

\newcommand{\De}{\Delta}

\newcommand{\U}{\textrm{U}}

\DeclareMathOperator{\sech}{sech}
\DeclareMathOperator{\csch}{csch}

\begin{document}

\maketitle

\pagenumbering{roman}
\setcounter{page}{2}
\newpage
\pagenumbering{arabic}
\setcounter{page}{1}

\section{Introduction}

The AdS/CFT correspondence provides a natural language for describing quantum gravity in asymptotically anti-de Sitter (AdS) space in terms of a conformal field theory (CFT) in one dimension lower~\cite{Maldacena:1997re,Gubser:1998bc,Witten:1998qj}. This language has been highly developed for correlation functions of operators with small scaling dimensions, which can be computed using weakly-coupled bulk perturbation theory~\cite{Arutyunov:2003ad,Chicherin_2016,Chicherin:2018avq}. In recent years, there has been an increased interest in studying correlation functions of CFT operators with large scaling dimensions, which are holographically dual to heavy states in the bulk~\cite{Fitzpatrick:2015zha,Asplund:2014coa,Collier:2019weq}. These states can take a variety of forms, including macroscopic objects such as stars, black holes, or compact binaries~\cite{Horowitz:2000fm,Hoyos:2021uff,Abajian2023-xw}. Heavy four-point correlators~\cite{Kim2015-sg,Poland:2024hvb} can thus encode the rich physics of black hole formation and thermalization, gravitational phase transitions, and the emergence of classical spacetime geometry. Experimental breakthroughs in strong-field gravity, such as the detection of gravitational wave ringdowns by LIGO~\cite{LIGOScientific:2016aoc} and the imaging of supermassive black holes by the Event Horizon Telescope~\cite{EventHorizonTelescope:2019dse}, have enabled direct observation of these phenomena and underscored the urgency of understanding them from first principles. 

Heavy states can most generally be defined as local CFT operators with scaling dimension $\De \gg d$, where $d$ is the spacetime dimension of the theory. In a holographic CFT with large central charge, $c_T \propto \langle T T\rangle$, and a parametrically large gap in the spectrum of single-trace operators, the characterization may be further refined by comparing the quantum numbers of operators with different powers of the central charge~\cite{Hartman:2014oaa, Asplund:2014coa, Ghosh:2019rcj,Pal:2023cgk,Alkalaev:2024knk}. For scalar operators labeled by dimension $\De$, we propose three main categories:
\ba
\De \ll \sqrt{c_T} \quad & \Rightarrow \quad  \text{light operators},& \\
 \sqrt{c_T} \lesssim \De \ll c_T \quad & \Rightarrow \quad \text{heavy operators},\\
 \De \gtrsim c_T \quad & \Rightarrow \quad \text{huge operators}.
\ea
The threshold between light and heavy operators is motivated by how the operator couples to leading multi-stress tensor families, which are dual to composite graviton states in the bulk. Namely, we can compute the three-point couplings between two identical operators $\cO$ with $\mathrm{dim}\; \cO = \De$ and the leading twist multi-stress tensor families, schematically written as $T^n$~\cite{Karlsson:2021duj}. At large $c_T$, these couplings scale as
\ba
\frac{\langle \cO \cO T^n\rangle}{\sqrt{\langle T^n T^n \rangle }} \sim \left(\frac{\De}{\sqrt{c_T} }\right)^n.
\ea
For $ \De \ll \sqrt{c_T}$, these couplings are suppressed at large central charge, telling us that light operators are weakly coupled to gravity in the semi-classical bulk theory. 

Light states furnish many descriptions. In 4d $\mathcal{N} = 4$ super Yang-Mills (SYM) with a rank $N$ gauge group and 't Hooft coupling $\lambda = g_{\mathrm{YM}}^2N$, the simplest examples are single-trace 1/2-BPS chiral primaries $\mathrm{Tr}(\Phi_{I_1} \Phi_{I_2} ... \Phi_{I_J})$ with protected dimension $\De = J$, dual to Kaluza-Klein supergravity modes on $S^5$~\cite{Arutyunov:2003ad}. The lightest unprotected single-trace primary is the Konishi operator $\mathcal{K} \sim \mathrm{Tr}(\Phi^I \Phi_I)$, whose dimension interpolates from $\De = 2 + O(\lambda)$ at weak coupling to $\De \sim 2\lambda^{1/4}$ at strong coupling~\cite{Eden:2012fe,Gromov:2013pga}. Operators with parametrically large dimension arise as semiclassical string states whose worldsheets trace out classical solutions in AdS${}_5 \times S^5$. Notable examples include BMN operators~\cite{Berenstein:2002adc} with parametrically large R-charge $Q\sim \sqrt{N}$, and giant magnons dual to open string solitons on $\mathbb{R} \times S^2$ with $\De - Q \sim \sqrt{\lambda}$~\cite{Hofman:2006xt}. Despite being allowed to have parametrically large dimensions, all light operators share the property that they decouple from the stress tensor families in the planar limit. For many, their correlation functions have been computed perturbatively with Witten diagrams or integrability~\cite{Arutyunov:2003ad, Beisert:2010jr, Chicherin_2016, Chicherin:2018avq}. 

For $\sqrt{c_T} \lesssim \De \ll c_T$, the heavy state is strongly coupled to gravity but does not fully backreact on the AdS bulk. The canonical examples in $\mathcal{N} = 4$ SYM are giant gravitons with $\De \sim N$, which are semi-classically described by D3-branes wrapping an $S^3 \subset S^5$ in AdS${}_5 \times S^5$, stabilized against collapse by their large angular momentum~\cite{McGreevy:2000cw}.  In the dual gauge theory, these states are described by Schur polynomial operators $\chi_R(y \cdot\Phi )$, where $y \cdot \Phi  = y^I \Phi_I$ is the linear combination of the six real adjoint scalars $\Phi^I$ specified by a null polarization vector $y^I$ on $S^5$, and $R$ is a Young diagram representation of $S_m$ with $m = \De$ boxes~\cite{Corley:2001zk, deMelloKoch:2024sdf}. These operators provide a complete orthogonal basis for the 1/2-BPS sector at finite $N$~\cite{Corley:2001zk}. The Cayley-Hamilton theorem constrains the Young diagram to have at most $N$ rows, and in the totally antisymmetric representation with exactly $N$ rows, the Schur polynomial reduces to the determinant operator $\det(y\cdot\Phi )$, known as a maximal giant graviton~\cite{McGreevy:2000cw, Balasubramanian:2001nh}. Maximal giant gravitons carry the largest angular momentum on $S^5$ and become fully extended within the compact subspace while remaining point-like in AdS${}_5$. These states can therefore be understood as defect systems, localizing to a boundary-to-boundary geodesic in the ambient AdS spacetime~\cite{Chen:2025yxg,Chen:2026ium}. 

Operators with $\De \gtrsim c_T \sim N^2$ are called ``huge" and fully backreact on the AdS geometry, effectively filling the bulk to produce a different metric entirely. Holographically, these states can be constructed by assembling stacks of $O(N)$ D-branes, whose collective gravitational backreaction generates a new asymptotically AdS geometry. In $\mathcal{N} = 4$ SYM, the simplest class of huge operators are symmetric Schur polynomials with $\De \sim N^2$, which are dual to systems of \textit{dual} giant gravitons: D3-branes wrapping an $S^3 \subset \mathrm{AdS}_5$ rather than the compact subspace~\cite{Grisaru:2000zn}. In the supergravity limit, these configurations are captured by the ``bubbling'' 1/2-BPS geometries of~\cite{Lin:2004nb}, which provide a complete geometric classification of the 1/2-BPS sector at large $N$. Studying light correlators in a huge dual giant graviton background provides a novel approach to probing the physics of a gapped QFT within the framework of a CFT~\cite{Ivanovskiy:2024vel}. Low-point correlators of huge operators have been computed in~\cite{Abajian2023-xw, Abajian:2023bqv,Kazakov:2024ald}, where it was shown that reproducing the CFT correlator holographically requires an extended horizon contribution in the form of a Gibbons-Hawking-York term in the bulk effective action, reflecting the fact that these states source horizons.
 
The eigenstate thermalization hypothesis (ETH) provides a key physical distinction between heavy and huge operators~\cite{Karlsson:2021duj}. States with $\De \gtrsim c_T$ are so excited that they behave as thermal baths, and we may equate the expectation value of some light operator $\phi$ in this $|\cO_{\text{huge}}\rangle$ state with its thermal expectation value in the micro-canonical ensemble
\ba
\langle \cO_{\text{huge}} | \phi |\cO_{\text{huge}} \rangle \sim \langle \phi\rangle_\beta.
\ea
The effective temperature scales as $T \sim (\De / c_T)^{1/d}$~\cite{Benjamin2023-qo,Asplund:2014coa}, so huge operators with $\De \gtrsim c_T$ source finite-temperature backgrounds while heavy operators have zero temperature. Evidence for this thermality was presented in~\cite{Asplund:2014coa} and subsequently established rigorously in 2d CFT using modular bootstrap techniques~\cite{Kraus:2016nwo,Das:2017vej} and asymptotics of semiclassical Virasoro vacuum blocks~\cite{Das:2020jhy,Fitzpatrick:2016ive}.

In the planar limit $N \to \infty$, one can construct light, heavy, and huge operators which all have strictly infinite scaling dimensions, while admitting completely different semi-classical descriptions. Even in a non-holographic CFT, the spectrum of global primary operators is unbounded \cite{Qiao:2017xif,Mukhametzhanov:2018zja}, so it remains possible to take a limit where one obtains a space of operators with strictly infinite scaling dimension. We call this space of operators in a generic CFT \textit{maximally heavy}, and we call a four-point function of maximally heavy operators a \textit{maximally heavy correlator}. Another example of correlators with strictly infinite external scaling dimensions arises in the study of the flat-space limit of rigid QFT in AdS space. The theory on the asymptotic conformal boundary of these rigid QFTs is given by a conformal theory (CT), which is distinguished from a CFT by its lack of a conserved stress tensor \cite{Paulos2016-mh}. Due to the scalar mass-dimension relation of $\De (\De - d) = m^2 R^2$, taking the AdS radius $R \to \infty$ with $m$ finite results in the dimensions of all non-identity operators in the CT becoming infinite. In turn, a particular subset of the space of maximally heavy correlators can also be mapped to the space of S-matrices of rigid flat-space QFT \cite{vanRees:2022zmr,Paulos2016-mh,vanRees:2023fcf,Penedones:2010ue}.

Despite the significance of maximally heavy correlators in the settings of classical gravity in asymptotically AdS space, black hole dynamics, thermalization, and flat-space QFT amplitudes, they suffer from an extremely poor formal understanding both perturbatively and in the context of the non-perturbative conformal bootstrap. The reasons for this lack of understanding are numerous. First, these correlators are generically unbounded in any neighborhood of the t-channel OPE limit of $v \to 0$. This is because the t-channel identity contribution is $(u/v)^\De$, and taking $\De \to \infty$ for any $u >v$ gives an exponentially divergent contribution to the OPE decomposition on an open subset of the domain where both the s and t-channel converge. Thus, a formal understanding of these correlators requires some type of regularization to produce a well-behaved function that is amenable to analysis. Second, unlike correlation functions of light single-trace operators, large-$N$ factorization is known to break down for huge operators, due to the fact that they delocalize away from bulk geodesics \cite{Abajian2023-xw,Abajian:2023bqv,Lin:2004nb}. This makes direct computation of these correlators extremely complicated even at tree level, and the prospect of obtaining higher order perturbative corrections at both weak and strong coupling becomes increasingly intractable. Third, since maximally heavy correlators do not satisfy the property of light-state dominance observed in \cite{Kraus:2018pax}, traditional numerical bootstrap techniques do an extremely poor job of constraining the space of these correlators, due to the fact that they are not described well by a finite number of low-lying states. Indeed, the physics of maximally heavy correlators may only be captured by studying the collective behavior of a large number $(\sim\sqrt{\De})$ of operators, which similarly becomes infinite as we take $\De \to \infty$. Therefore, a bootstrap treatment of maximally heavy correlators should not rely on computing the CFT data associated with any finite number of individual operator contributions to the OPE, and should instead elucidate the \textit{emergent} structure of these contributions in order to accurately understand maximally heavy dynamics. 

In~\cite{Poland:2024hvb}, we initiated a bootstrap study of maximally heavy correlators in terms of OPE moments: integrals of powers of exchanged quantum numbers against a positive measure encoding the conformal block decomposition of the correlator at the self-dual point $u = v = 1/4$. Crossing symmetry forces odd central moments to vanish in the heavy limit, and combining this with unitarity yields two-sided bounds on all moments. Recently, these bounds were rigorously strengthened and extended to the light-correlator regime using numerical semi-definite programming in \cite{Chiang:2026nmd}. We also analyzed the properties of saddle-point structures in the OPE measure for free multi-particle correlators. This analysis suggested that emergent structures in the OPE spectrum are the key data for characterizing heavy dynamics, rather than any finite number of exchanged operator dimensions and OPE coefficients. However, several key limitations remained, such as the fact that odd moment constraints alone are insufficient to prove the symmetry of the OPE measure around its mean~\cite{churchill1}. Moreover, the entire analysis was restricted to the self-dual point, with no control over the globally ill-defined nature of the correlator.

In this work, we develop a systematic framework for studying maximally heavy correlators of identical scalar operators in a Lorentzian kinematic regime called the \textit{causal diamond}, where both the s and t-channel OPE decompositions converge absolutely and the correlator is manifestly positive. Our approach is built on a close analogy with statistical mechanics. We treat the correlator as a partition function, the external scaling dimension $\De$ as the system size, and the cross ratios as control parameters analogous to temperature. The problem of the $\De \to \infty$ limit then becomes thermodynamic in nature: rather than studying the divergent partition function itself, one should extract intensive quantities that remain well-defined and physically meaningful.

We make this precise by introducing a framework of correlator sequences and regularizations: \textit{heavying sequences} of correlators $\{\mathcal{G}_n\}$ are those whose external dimensions $\{\De_n\}$ form a monotonically increasing and unbounded positive sequence, and \textit{good regularizations} map this sequence of correlators into topological target spaces such that accumulation points are guaranteed to exist. The set of these accumulation points defines a set of \textit{maximally heavy observables} associated to a heavying sequence in the image of a good regularization. Different good regularizations can yield distinct sets of maximally heavy observables from the same heavying sequence, each offering a unique characterization of the underlying dynamics. 

In this work, we study three good regularizations, briefly summarized as follows: The \textit{dynamical free energy density}, defined as the logarithm of the correlator divided by the external dimension, plays the role of a free energy per unit volume. We prove that a heavying sequence of correlators in the image of this regularization is uniformly bounded and equicontinuous, so that there exist maximally heavy observables called \textit{rate functions} (proposition \ref{preaa}). These observables are locally uniformly bounded and Lipschitz continuous functions on the causal diamond that encode the global phase structure of the correlator. Additionally, we derive uniform two-sided bounds for rate functions as a result of crossing symmetry and unitarity (proposition \ref{universalratefunctionbound}). The \textit{rescaled OPE measure}, obtained by uniformly rescaling the spectral parameters of the OPE by a factor of $\De$, captures the local structure of the OPE near the self-dual point. We prove that the mass of this sequence of measures does not escape to infinity in the heavy limit, so it admits maximally heavy observables called \textit{classical measures} (proposition \ref{tightnessofmeasuresequence}). These measures are compactly supported probability measures equipped with an exact discrete group symmetry determined by the crossing and chiral symmetry of the correlator (theorem \ref{thm:maxheavy}), revealing a setting for which the approximate reflection symmetry described in \cite{Kim2015-sg} becomes exact. The third maximally heavy observable is the \textit{$\alpha$-local rate function}, which probes deviations of size $\sim 1/\De^\alpha$ around the self-dual point and interpolates between the standard rate function and the cumulant generating function of the classical measure as $\alpha$ varies from $0$ to $1$.

In order to understand the global implications of the local picture provided by the classical measure, we introduce the \textit{coherent state decomposition}: a canonical procedure for building exact solutions to the s-t crossing equation at any finite external dimension from a classical measure. These coherent states are approximate minimal wavepackets on a suitably defined phase-space that narrow into delta masses in the classical limit of $\hbar \equiv 1/\De \to 0$. We provide conditions under which these solutions accurately reproduce the true correlator within a neighborhood of size $\sim 1/\sqrt{\De}$ around self-duality (theorem \ref{cohstatedecompthm}), and show that the rate function of a coherent state correlator takes a simple form depending only on the convex hull of the support of the classical measure (proposition \ref{ratefunctioncompute}). A key result connecting the local and global pictures is a \textit{matching theorem} (theorem \ref{matchinglocalratefunctions}), which identifies the precise range of $\alpha$ for which the $\alpha$-local rate function is determined by the classical measure alone.

This framework reveals a variety of phase structures that characterize the dynamics of maximally heavy operators. For correlators that are sub-exponentially bounded at the self-dual point in the heavy limit, the rate function along the diagonal $\bar{z} = z$ exhibits a universal phase transition between a phase dominated by the s-channel identity and one dominated by the t-channel identity, occurring precisely at the location of the mutual information transition studied in \cite{Ceyhan:2025qrj}. The physical origin of this transition is the localization of heavy operators to free geodesics in the bulk, made evident by studying the world-line effective action of the disconnected correlator. Away from the diagonal limit, large spin saddles may dominate the OPE decomposition to produce non-universal classical ``vortex" phases. These non-universal dynamics are characterized by the convex hull of the classical measure, and can be bounded under the assumption of a classical twist gap constraining the locations of non-identity saddles. Using a change of variables, we map these phase transitions directly to the Hawking-Page transition observed in torus partition functions of 2d CFT at large central charge. We then compare our results to the optimal bounds on the locations of non-universal regions conjectured in \cite{Hartman:2014oaa} and recently proved in \cite{Dey:2024nje}, and find that they exactly agree near the self-dual point.

For correlators that grow exponentially or faster at the self-dual point in the heavy limit, the dynamics exhibited by the rate function are far less constrained. The divergent nature of these correlators implies that non-identity contributions dominate the OPE and modify the universal transition predicted by the identity operator. The resulting rate functions may exhibit a smooth crossover or a non-universal transition encoded by interacting world-lines deformed from their geodesic in pure AdS. These possibilities are unique to the setting of four-point functions and do not have an analog to the behavior of torus partition functions, thus capturing a genuinely novel class of heavy dynamics.

We apply this formalism to a number of explicit examples, including correlators of $\phi^L$ operators in generalized free theories, chiral product correlators in 2d, and tree-level correlators of maximal giant gravitons in $\mathcal{N} = 4$ SYM. Using different maximally heavy observables to characterize different limits of these correlators, we organize maximally heavy operators into three physically distinct classes based on their degree of bulk localization:
\begin{itemize}[leftmargin=*]
\item {\bf Localized:} Correlators of localized operators are those whose rate functions exhibit sharp phase transitions around the self-dual point that are determined by the convex hull of their corresponding classical measure. These arise for correlators of $\phi^L$ operators in a generalized free theory when $\Df \to \infty$ at a fixed composite length $L$. In this example, the operators localize to free bulk geodesics, producing a universal phase structure analogous to the mutual information transition observed in \cite{Ceyhan:2025qrj}. Another notable example of localized operators are maximal giant gravitons in the planar limit \cite{Vescovi:2021fjf}.
\item {\bf Delocalized:} In contrast to localized states, correlators of delocalized operators have rate functions that exhibit a smooth crossover rather than a sharp transition around the self-dual point, and a classical measure that collapses to a single delta-mass. In the GFF example, these correspond to the limit of $L \to \infty$ with $\Df$ held fixed. These GFF states are interpreted as a gas of fixed mass particles that do not localize along any bulk world-line. While their classical measures collapse to a universal form, their rate functions remain sensitive to the microscopic details of the constituents, such as the fixed dimension of an individual field $\Df$.

\item {\bf Quasi-localized:} Correlators of quasi-localized operators interpolate between the two extremes, and are the most interesting from a diagnostic standpoint. While the classical measure for these correlators is trivial, the rate function still exhibits a sharp transition. Thus, these states cannot be fully characterized by either the classical measure or the global rate function alone; only the $\alpha$-local rate function at the critical value $\alpha = \alpha_c$ resolves their microscopic structure. The critical exponent $\alpha_c$ thus serves as a quantitative measure of the degree of bulk localization. In the GFF example, these states are produced in the ``fragmenton" limit by setting $\Df = \kappa \De^{\alpha_c}$ and $L = \lfloor \De^{1-\alpha_c } / \kappa\rfloor$ with $\alpha_c \in [0,1]$, $\kappa \in \mathbb{R}_+$, and taking $\De \to \infty$.
\end{itemize}

The paper is organized as follows. In section \ref{setup}, we set up the problem and discuss the details of the Lorentzian configurations we consider, introduce the radial monomial decomposition as a parallel to the $q$-expansion of the torus partition function, and develop the OPE measure formalism. In section \ref{maxheavy}, we formalize the notion of maximally heavy observables in terms of heavying sequences and good regularizations, and present the three regularization schemes described above. We prove the existence and structural properties of rate functions and classical measures, introduce the coherent state decomposition, prove the matching theorem connecting local and global pictures, and compare our bounds on the phase structure to those of Hartman, Keller, and Stoica~\cite{Hartman:2014oaa} and Dey, Pal, and Qiao~\cite{Dey:2024nje}. In section \ref{sec:examples}, we apply our formalism to explicit examples and develop the localized, delocalized, and quasi-localized classification. We conclude with a discussion of our results in section \ref{disc}, and relegate detailed proofs to the appendices.

\subsection*{Summary of notation}
\label{sec:notation}

{\small
\renewcommand{\arraystretch}{1.2}
\setlength{\tabcolsep}{4pt}
\setlength{\hfuzz}{5pt}

\noindent
\begin{tabular*}{\linewidth}{@{} l @{\;} l @{\quad} l @{\;} l @{}}
\hline
\noalign{\vskip10pt}
\multicolumn{2}{@{}l}{\textbf{Asymptotic notation}} &
\multicolumn{2}{@{}l}{\textbf{Bold 2-tuple ($\bs{x}=(x,\bar{x})$, $\bs{y}=(y,\bar{y})$)}} \\[1pt]
$f\sim g$   & $f/g\to C\in\mathbb{R}$      & $\bs{x}^{\bs{y}}$   & $x^y\bar{x}^{\bar{y}}$ \\
$f\simeq g$ & $f/g\to 1$                   & $\bs{x}\cdot\bs{y}$ & $xy+\bar{x}\bar{y}$ \\
$f\gg g$    & $f/g\to\infty$               & $\bar{\bs{x}}$      & $(\bar{x},x)$ \\
$f=O(g)$    & $|f|\le M|g|$   & $\|\bs{x}\|$        & $\sqrt{x^2+\bar{x}^2}$ \\
$f=o(g)$    & $f\ll g$                     & $|f(\bs{x})|^2$     & $f(x)f(\bar{x})$ \\
&& $(f(\bs{x}))$ & $(f(x),f(\bar{x})) $\\
\noalign{\vskip10pt}
\multicolumn{2}{@{}l}{\textbf{Kinematic variables}} &
\multicolumn{2}{@{}l}{\textbf{Domains \& sets}} \\[1pt]
$u,v$           & $z\bar{z}$,\;$(1{-}z)(1{-}\bar{z})$        & $\mathscr{C}$           & $\{0<z,\bar{z}<1\}\subset\mathbb{R}^2$ \\
$\rho,\bar\rho$ & $\rho=\frac{1-\sqrt{1-z}}{1+\sqrt{1-z}}$   & $\widehat{\mathscr{C}}$ & $(\rho,\bar\rho)\in D^2\times D^2$ \\
$\hat\rho$      & image of $\rho$ under $z\mapsto 1{-}z$     & $T^2_{\bs{\rho}_0}$     & torus fiber above $\bs{\rho}_0\in\mathscr{C}$ \\
$\chi,\bar\chi$ & $\chi=e^x=z/(1{-}z)$                       & $\mathcal{D}$           & $[0,\sqrt{2}]^2\subset\mathbb{R}^2_+$ \\
$x,\bar{x}$     & $\log(z/(1{-}z))$;\;$\bs{0} = $ s.d.\ point & $K$                   & $\mathrm{conv}(\mathrm{supp}(\nu))\subset\mathcal{D}$ \\
$\bs{\rho}_*$   & $(3{-}2\sqrt{2}); \; $ s.d. point            & $K_\delta,\,K_\delta^c$   & $\delta$-nbhd of $K$,\, its complement \\
&&$X\setminus Y$          & $\{x\in X\mid x\notin Y\}$ \\

\noalign{\vskip10pt}
\multicolumn{4}{@{}l}{\textbf{Correlator and OPE data}} \\[1pt]
$\widehat{\mathcal{G}}(\bs\rho)$ & unnorm.\ correlator \eqref{cft4ptfunction} & $d\mu(\bs\rho;\bs{h})$ & norm.\ OPE measure at $\bs\rho$ \\
$\mathcal{G}(\bs\rho)$ & $\widehat{\mathcal{G}}/\widehat{\mathcal{G}}(\bs\rho_*)$;\;$\mathcal{G}(\bs\rho_*)=1$ & $\mu(\bs{h})\equiv\mu(\bs\rho_*;\bs{h})$ & base OPE measure at s.d.\ point \\
$\phi_{\mu(\bs\rho_0)}(\bs{t})$ & $\mathcal{G}(\bs\rho_0 e^{i2\pi\bs{t}})/\mathcal{G}(\bs\rho_0)$ & $\omega_{j,k}(\bs{x})$ & $\int h^j\bar{h}^k\,d\mu_n(\bs{x};\bs{h})$ \\
$\bs{h}=(h,\bar{h})$ & eigvals of $H=\tfrac{D+J}{2}$,\;$\bar H=\tfrac{D-J}{2}$ &  $\omega_k\equiv\omega_{k,0}(\bs{0})$ & pure moments at s.d.~point;\;$\omega_1=\Delta/\sqrt{2}$ \\
\noalign{\vskip10pt}
\multicolumn{2}{@{}l}{\textbf{Heavying sequences \& regularizations}} &
\multicolumn{2}{@{}l}{\textbf{Gaussians}} \\[1pt]
$\{(\mathcal{G}_n,\Delta_n)\}$ & heavying seq., $\{\Delta_n\}\nearrow\infty$ & $\Lambda_\sigma(\xi)$ & $\frac{1}{\sqrt{2\pi\sigma^2}}e^{-\xi^2/(2\sigma^2)}$ \\
$\mbf{G}_\Delta$ & crossing-sym~correlators at dim.~$\Delta$ & $\bs\Lambda_\sigma(\bs\xi)$ & $\Lambda_\sigma(\xi)\Lambda_\sigma(\bar\xi)=\frac{1}{2\pi\sigma^2}e^{-\bs\xi\cdot\bs\xi/(2\sigma^2)}$ \\
$(\mathcal{T},\{\Phi_n\})$ & regularization (space, maps) & & \\
\noalign{\vskip10pt}
\multicolumn{2}{@{}l}{\textbf{Rate functions (Reg.~I)}} &
\multicolumn{2}{@{}l}{\textbf{Classical measures (Reg.~II)}} \\[1pt]
$\lambda_n(\bs{x})$ & $\tfrac{1}{\Delta_n}\log\mathcal{G}_n(\bs{x})$ & $\nu_n(\bs\eta)$ & $\mu_n(\Delta_n\bs\eta)$;\;prob.\ measure on $\mathbb{R}^2_+$ \\
$\lambda(\bs{x})$ & rate fn & $\nu$ & classical measure;\;$\mathrm{supp}(\nu)\subset\mathcal{D}$ \\
$\tilde{\lambda}_{\nu}(\bs{x})$&$ \max_{\bs{\eta} \in \mathrm{conv}(\mathrm{supp}(\nu))} \{\bs{x}\cdot \bs{\eta}/\sqrt{2} \}$ & $\tilde{\bs{t}}$ & $\Delta_n\bs{t}$\\
 $\Sigma$ & $\lim_{k\to\infty}\tfrac{1}{\Delta_{n_k}}\log\widehat{\mathcal{G}}_{n_k}(\bs\rho_*)$ & $\phi_\nu(\tilde{\bs{t}})$ & $\int e^{i2\pi\tilde{\bs{t}}\cdot\bs\eta}\,d\nu(\bs\eta)$ \\
 & & $\tilde\omega_{j,k}$ & $\int\eta^j\bar\eta^k\,d\nu(\bs\eta)$ \\

\noalign{\vskip10pt}
\multicolumn{4}{@{}l}{\textbf{Coherent states \& $\alpha$-local rate fns (Reg.~III)}} \\[1pt]
$\hbar\equiv 1/\Delta$ & effective Planck constant & $\tilde{\bs{x}}$ & $\Delta_n^\alpha\bs{x}$; $\alpha \in [0,1]$ \\
$\mu_\Delta(\bs{h})$ & $\nu(\bs{h}/\Delta)$;\;finite-dim.\ pullback & $\lambda_n(\alpha;\tilde{\bs{x}})$ & $\tfrac{1}{\Delta_n^{1-\alpha}}\log\mathcal{G}_n(\tilde{\bs{x}}/\Delta_n^\alpha)$ \\
$\tilde{\mathcal{G}}_\Delta(\bs{\chi})$ & $\int\chi^{h/\sqrt{2}}\bar\chi^{\bar{h}/\sqrt{2}}d\mu_\Delta(\bs{h})$ & $\lambda(\alpha;\tilde{\bs{x}})$ & $\alpha$-local rate fn;\;$\alpha\in[0,1]$ \\

\noalign{\vskip10pt}
\hline
\end{tabular*}
}

\section{Setup} 
\label{setup}

In this section, we review the basic setup of our problem and underline its parallel to the torus partition function in 2d CFT. Since this is mostly review, expert readers may skip this section. Consider a correlation function of four identical scalar operators $\cO$ with scaling dimension $\De$ in a unitary CFT on $\widehat{\mathbb{R}}^d =  \mathbb{R}^d\bigcup\{\infty\} \cong S^d $:
\ba
\langle \cO(x_1) \cO(x_2) \cO(x_3) \cO(x_4)\rangle = \langle \cO(x_1) \cO(x_2) \rangle \langle \cO(x_3) \cO(x_4) \rangle \mathcal{G}(u,v),
\label{cft4ptfunction}
\ea
where $\mathcal{G}(u,v)$ denotes the reduced correlator, which is a function of only cross ratios. Since $u,v$ are symmetric polynomials in $z,\bar{z}$, reduced correlators have a chiral symmetry of $\mathcal{G}(\bs{z}) = \mathcal{G}(\bar{\bs{z}})$. For all $x^\mu_i \in \widehat{\mathbb{R}}^d$, $u,v$ are positive and $\bar{z} = z^*$. 

We can analytically continue $x^0 \to i x^0$ to study the correlator in Lorentzian kinematics. In this signature, $\bs{z}$ is a real vector in $\mathbb{R}^2$, and operator configurations are equipped with a causal ordering. The convergence of OPE decompositions of the correlator in different channels depends on the value of $\bs{z}$. In this work, we focus on the regime $0<z,\bar{z}<1$ called the \textit{causal diamond} (in the language of \cite{Caron-Huot:2017vep}, or $\text{E}_{\text{st}}$ in the work of \cite{Qiao:2020bcs}). We denote this domain as $\mathscr{C}$. In this regime, the s and t OPE channels converge absolutely, and $\mathcal{G}$ is positive and real-analytic for finite $\De$. In the next section we establish this result by way of the radial monomial expansion.
 
\subsection{Radial monomial expansion}
\label{radialmonomialdecomp}
 
We now set up the correlator in the radial conformal frame to obtain a convergent expansion in radial monomials $\rho,\bar{\rho}$ with manifestly positive coefficients~\cite{Hogervorst:2013sma,Hartman:2015lfa,Kravchuk:2020scc}. Radial monomials are obtained as a direct transformation of the independent $z,\bar{z}$ variables, so chiral symmetry extends to $\mathcal{G}(\bs{\rho}) = \mathcal{G}(\bar{\bs{\rho}})$.
 
Working in the radial conformal frame of~\cite{Hogervorst:2013sma}, pairs of operators at $(x_1,x_2)$ and $(x_3,x_4)$ are placed at antipodal points of coplanar concentric circles around the origin of $\widehat{\mathbb{R}}^d$ with radius $r$ and $1$ respectively, and $\rho= \rhob^* = r e^{ i\theta}$. The simultaneous eigenstates $\{|h,\bh\rangle\}$ of the commuting operators $H \equiv \frac{D+ J}{2}$, $\bar{H} \equiv \frac{D - J}{2}$ form a complete orthonormal basis for the CFT Hilbert space denoted $\mathcal{H}_{\mathrm{CFT}}$, with $h,\bh \geq \frac{d-2}{2}$ for all non-identity states by unitarity~\cite{Simmons-Duffin2016-nn}. Here, $D$ denotes the generator of dilatations and $J = M_{12}$ the generator of rotations acting on the plane in which the four operators lie. Importantly, $J$ has a particular orientation and may have negative eigenvalues, unlike the standard quadratic Casimir operator which organizes the conformal block decomposition.
 
By the state-operator correspondence, the pair $\cO(-1)\cO(1)$ applied to the vacuum produces a state $|\psi\rangle$ whose overlaps with the eigenbasis define raw OPE coefficients $\lambda_{h,\bh} = \langle h, \bh | \psi\rangle$. While each $\lambda_{h,\bh}$ is individually finite, the formal state $|\psi\rangle = \sum_{h,\bh} \lambda_{h,\bh} |h,\bh\rangle$ is not normalizable in $\mathcal{H}_{\mathrm{CFT}}$, since $\langle\psi|\psi\rangle$ reduces to a four-point function with coincident operator insertions and therefore diverges. Nevertheless, inserting a complete set of states and performing the required scale transformation and rotation to introduce the correct kinematic dependence, the full correlator may be written as a convergent ``tilted" trace:
\ba
\langle \cO(1) \cO(-1) \cO(-\bs{\rho}) \cO(\bs{\rho})\rangle = \mathrm{Tr}_{\mathcal{H}_{\mathrm{CFT}}}\!\left(\hat{\Pi}_\psi  \rho^{H - \De} \rhob^{\bar{H} - \De}\right) = \sum_{h,\bh} |\lambda_{h,\bh}|^2\, \rho^{h-\De}\rhob^{\bh-\De},
\label{twistedtrace}
\ea
where $\hat{\Pi}_\psi = |\psi\rangle\langle\psi|$.
The non-normalizability of $|\psi\rangle$ corresponds to the divergence of $\mathrm{Tr}(\hat{\Pi}_\psi)$, but the tilted density operator $\hat{\Pi}_\psi \rho^H \rhob^{\bar{H}}$ is trace-class for all $|\rho|,|\rhob| < 1$, since the exponential damping of $\rho^h \rhob^{\bh}$ overwhelms the polynomial growth of $|\lambda_{h,\bh}|^2$ obtained through standard convergence estimates (see appendix \ref{convergence} or the classic paper \cite{Pappadopulo:2012jk}).

Factoring out the disconnected kinematic prefactor, the reduced correlator admits the following $\bs{h} \leftrightarrow\bar{\bs{h}}$ symmetric decomposition into radial monomials:
\ba
\mathcal{G}(\bs{\rho}) = \sum_{\bs{h}} a_{\bs{h}} \bs{\rho}^{\bs{h}} = \sum_{\bs{h}} a_{\bar{\bs{h}}} \bs{\rho}^{\bs{h}},
\label{monomialdecomposition}
\ea
where $|\lambda_{h,\bh}|^2 16^{-\De} = a_{\bs{h}}>0$ are real-positive with $a_{\bs{0}} = 1$, and the second equality follows from chiral symmetry. More geometrically, chiral symmetry arises from the equivalence of radially quantizing around the origin or the point at infinity in $\widehat{\mathbb{R}}^d$, at the cost of flipping $J \to -J$, resulting in the exchange of $H \leftrightarrow \bar{H}$.
 
The associativity of the OPE implies that $\mathcal{G}$ is invariant under permutations of the coordinates $\{x_i\}$, giving rise to non-trivial sum rules (crossing equations) relating different OPE decompositions. Of these, only the s-t crossing equation provides usable constraints within the causal diamond, since the u-channel decomposition does not converge in $\mathscr{C}$ due to the absence of co-dimension 1 spheres separately enclosing the operator pairs at $(x_1,x_3)$ and $(x_2,x_4)$. In this work, we therefore only impose s-t crossing symmetry, given by the following constraint:
\ba
\label{stcrossing}
\mathcal{G}(\bs{\rho}) = \left|\frac{4 \bs{\rho}}{(1-\bs{\rho})^2}\right|^{2\De}\mathcal{G}(\hat{\bs{\rho}}),
\ea
where $\hat{\bs{\rho}} = \left(\frac{1-\sqrt{\bs{\rho}}}{1+\sqrt{\bs{\rho}}} \right)^2$ is the image of $\bs{\rho}$ under $\bs{z} \to 1-\bs{z}$.

\subsubsection{A parallel to torus partition functions in 2d CFT}

The radial monomial decomposition for four-point CFT correlators in general dimensions bears a close resemblance to the $q$-expansion of the partition function of a 2d CFT on a torus. For a torus with modular parameter $\tau$ and $q = e^{2\pi i \tau}$, the partition function may be written as a trace over the CFT Hilbert space~\cite{Simmons-Duffin2016-nn}:
\ba
Z(\tau) = \text{Tr}_{\mathcal{H}_{\mathrm{CFT}}}\left( q^{L_0 - c/24} \bar{q}^{\bar{L}_0 - c/24} \right) = |\bs{q}|^{-c/12}\sum_{\bs{h}} n_{\bs{h}} \bs{q}^{\bs{h}},
\ea
where $L_0, \bar{L}_0$ are the standard generators in the left and right moving Virasoro algebras, and $n_{\bs{h}} = \mathrm{dim}(V_{h,\bh})$ counts the multiplicity of states with left and right moving conformal weights $(h,\bh)$. Multiplying through by a factor of $|\bs{q}|^{c/24}$ yields a decomposition entirely analogous to eq.~(\ref{monomialdecomposition}), with $a_{\bs{h}} \to n_{\bs{h}}$, radial monomials $\to$ $q$-monomials, and $\De \to c/24$.
 
The analogy can be sharpened by comparing the trace representations directly. The reduced torus partition function is the trace $\text{Tr}(q^{L_0} \bar{q}^{\bar{L}_0})$, while the reduced four-point correlator in eq.~(\ref{twistedtrace}) takes the form $\text{Tr}(\hat{\Pi}_\psi\, \rho^{H} \rhob^{\bar{H}})$: the same trace with the insertion of the projector $\hat{\Pi}_\psi = |\psi\rangle \langle \psi |$ onto the state created by the external operators. This projector replaces the state multiplicity $n_{\bs{h}}$ with the squared OPE coefficient $a_{\bs{h}}$, selecting the particular $\cO\times\cO$ fusion channel rather than summing over all states.
 
It is also common to parameterize the torus partition function by left and right moving temperatures $(\beta_L,\beta_R)$, related to the $q$-monomials as $(q,\bar{q}) = (e^{-\beta_L},e^{-\beta_R})$. One can do the same for the radial monomials and re-parameterize $(\rho,\bar{\rho}) = (e^{-\beta},e^{-\bar{\beta}})$. We will use these variables later to compare the mutual information transition observed in heavy correlators to the Hawking-Page transition observed in the torus partition function.

\subsection{The causal diamond and its complex extension}
\label{causaldiamond}
 
The analytic continuation from $\rho= \rhob^* = r e^{\pm i\theta}$ to independent real $0<\rho ,\bar{\rho} < 1$ places the operator configuration on $\widehat{\mathbb{R}}^{1,d-1}$ in a totally spacelike configuration. Applying a global complex dilatation $e^{i \frac{\pi}{2} D}$ to all operators in the correlator, we find that this configuration is conformally equivalent to the causal diamond configuration of~\cite{Ceyhan:2025qrj,Qiao:2020bcs}, where operators lie on the time-like separated corners of nested causal diamonds.\footnote{This is an example of an exact spacelike-timelike correspondence in CFT, analogous to the equivalence between measurements at null infinity and those on a generic null plane~\cite{Mueller:2018llt,Brunello:2025rhh}.}

Upon this analytic continuation, the full correlator picks up a monodromy factor of $e^{2 \pi i \De}$, which is stripped off when we isolate the reduced correlator. Thus, the remaining reduced correlator is independent of the path of analytic continuation. This was generally proven in~\cite{Qiao:2020bcs}, where the author showed that there exist pairings of $\rho,\bar{\rho}$ coordinates which are independent of the path of analytic continuation, and therefore pick up no monodromy factors. These path-independent quantities are as follows:
\ba\label{eq:path-ind}
(\rho \rhob)^\De \quad \forall \De \in \mathbb{R}^+ , \quad \text{and} \quad \left(\frac{\rho}{\rhob} \right)^{k/2} + \left(\frac{\rhob}{\rho} \right)^{k/2} \quad \forall k \in \mathbb{Z}.
\ea
These pairings are manifest in the radial monomial expansion. To see this, we use the $h \leftrightarrow \bh$ symmetry of the expansion to write
\ba
\mathcal{G}(\bs{\rho})= \frac{1}{2}\sum_{\bs{h}} a_{\bs{h}} (\rho \bar{\rho})^{h+\bh} \left( \left(\frac{\rho}{\rhob} \right)^{h-\bh}+\left(\frac{\rho}{\rhob} \right)^{\bh-h}  \right).
\ea
Since $h \mp \bh$ is an eigenvalue of $J$, it is an integer, so each term in the sum is a product of path-independent terms.
 
It is natural to extend $\mathscr{C}$ to a complex domain where the s and t-channel OPE decompositions converge absolutely. Writing $\rho = r e^{i2\pi t}$ and $\bar{\rho} = \bar{r} e^{i2\pi\bar{t}}$ with $(r, \bar{r}) \in [0,1)^2$ and $(t, \bar{t}) \in [0,1)^2$, the extended domain $\widehat{\mathscr{C}}$ is given by $(\rho, \bar{\rho}) \in D^2 \times D^2$, where $D^2$ denotes the open unit disk in the complex plane (see fig.~\ref{fig:extended causal diamond}). This space admits a natural $T^2$ fibration over the base $[0,1)^2$ via the projection $\pi: (\rho, \bar{\rho}) \mapsto (|\rho|, |\bar{\rho}|) = (r,\bar{r})$. The causal diamond corresponds to the real slice $t = \bar{t} = 0$, selecting a single point in each torus fiber. 

While the correlator is well-defined at the origin $(r, \bar{r}) = (0,0)$, where it reduces to the identity contribution $\mathcal{G}(\bs{0}) = 1$, the torus fiber degenerates there as the angular coordinates become ill-defined. The $T^2$ fibration is therefore only well-defined over the punctured base $(0,1)^2$, with fiber degeneration coinciding with the s-channel OPE limit. Fixing a base point $\bs{\rho}_0 = (\rho_0, \bar{\rho}_0) \in \mathscr{C}$, the torus fiber above it is
\ba
T^2_{\bs{\rho}_0} = \left\{ \left(\rho_0 e^{i2\pi t},\, \bar{\rho}_0 e^{i2\pi \bar{t}}\right) : (t, \bar{t}) \in [0,1)^2 \right\} \subset \widehat{\mathscr{C}}.
\ea
 
 \begin{figure}
     \centering
     \includegraphics[width=0.65\linewidth]{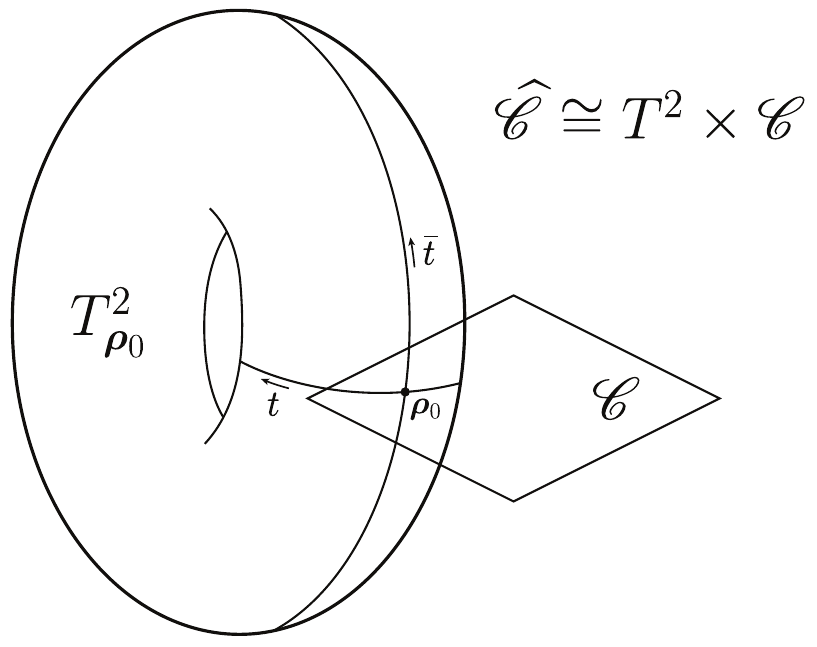}
     \caption{An illustration of the extended causal diamond $\widehat{\mathscr{C}} \cong T^2 \times\mathscr{C}$.}
    \label{fig:extended causal diamond}
 \end{figure}
 
\subsection{OPE measures}
\label{opemeasures}
 
We now review the basic formalism for recasting the OPE decomposition as an integral over a positive definite measure. Consider the radial monomial decomposition in (\ref{monomialdecomposition}) and fix some kinematics in the causal diamond $\boldsymbol{\rho} = (\rho,\bar{\rho})\in(0,1)^2$. We define the normalized OPE measure at point $\bs{\rho} \in \mathscr{C}$ as
\ba
d\mu(\bs{\rho}; \bs{h}) = \frac{1}{\mathcal{G}(\bs{\rho})}\sum_{\bs{h}'} a_{\bs{h}'} \bs{\rho}^{\bs{h}'} \delta^{(2)}(\bs{h} - \bs{h}') d\bs{h}.
\ea
We can now rewrite the radial monomial decomposition as 
\ba
\frac{\mathcal{G}\left(\bs{\rho}\right)}{\mathcal{G}\left(\bs{\rho}_0\right)} = \int_{\mathbb{R}^2_+} \left(\frac{\bs{\rho}}{\bs{\rho}_0}\right)^{\bs{h}} d\mu(\bs{\rho}_0 ; \bs{h}) .
\label{radmonomialdecomp}
\ea
This integral reproduces exactly the original sum and therefore absolutely converges in the unit disk of $\rho,\bar{\rho}$. 

Since the spectrum of $H,\bar{H}$ is discrete, for any compact subset $U \subset \mathbb{R}^2_+$, $\mu(U)$ receives contributions from a finite number of delta masses. In contrast, the pushforward measures
\ba
d\mu(\bs{\rho}; h) = \int_{\bh \in\mathbb{ R_+}} d\mu(\bs{\rho}; \bs{h}), \qquad
d\mu(\bs{\rho};\bh) = \int_{h \in\mathbb{ R_+}} d\mu(\bs{\rho}; \bs{h}),
\ea
have an infinite number of accumulation points. Aside from a single delta mass at $\bs{h} =0$, $\mu$ is supported in $\left[ \frac{d-2}{2},\infty \right)\times \left[\frac{d-2}{2},\infty \right) \subset \mathbb{R}^2_+$ as a result of the unitarity bound.
 
Fixing a base point $\bs{\rho}_0 \in \mathscr{C}$ and studying the correlator along the torus fiber $T^2_{\bs{\rho}_0}$ via $(\rho, \bar{\rho}) = (\rho_0 e^{i2\pi t}, \bar{\rho}_0 e^{i2\pi \bar{t}}) \subset \widehat{\mathscr{C}}$ leads directly to the interpretation of the normalized correlator as a {\it characteristic function} for the OPE measure. If we let $\bs{\rho}_0 \in (0,1)\times(0,1)$ and $\bs{t} \in \mathbb{R}^2$, the normalized correlator can be written as
\ba
\phi_{\mu(\bs{\rho}_0)}(\bs{t}) \equiv \frac{\mathcal{G}(\bs{\rho}_0 e^{i 2 \pi \bs{t}})}{\mathcal{G}(\bs{\rho}_0)} =  \mathcal{F}^{-1}[ \mu'(\bs{\rho}_0;\bs{h})](\bs{t}) \quad \text{with} \quad d\mu(\bs{\rho}_0;\bs{h}) = \mu'(\bs{\rho}_0;\bs{h}) d\bs{h},
\ea
which is the characteristic function corresponding to random variables $\bs{h}$ distributed according to the probability measure $\mu(\bs{\rho}_0;\bs{h})$. Since $\mu$ is a probability measure, its characteristic function is bounded and continuous with $|\phi_{\mu(\bs{\rho}_0)}(\bs{t})| \leq 1$ for all $\bs{t} \in \mathbb{R}^2$. Since the OPE measure is discrete, the OPE density $\mu'(\bs{\rho}_0;\bs{h})$ is a sum of weighted delta functions; in particular, $\mu'$ is a tempered distribution rather than an $L^1$ integrable function. The Fourier inversion formula
\ba
\label{inversionthm}
\mu'(\bs{\rho}_0;\bs{h}) = \mathcal{F}\left[\phi_{\mu(\bs{\rho}_0)} \right](\bs{h})
\ea
therefore holds in the sense of tempered distributions.
 
Since the Fourier transform acts as an automorphism on the space of tempered distributions, a uniqueness theorem for characteristic functions and their probability measures immediately follows~\cite{Chung2001,Durrett2019}. Thus, the OPE measure $\mu(\bs{\rho}_0;\bs{h})$ is uniquely determined by the restriction of the correlator to the torus fiber $T^2_{\bs{\rho}_0}$. For any finite $\De$, we can analytically continue this characteristic function away from real values of $\bs{t}$ to recover the entire normalized correlator, which establishes a bijection between the correlator (normalized to $1$ at $\bs{\rho}_0$) and the probability measure $\mu(\bs{\rho}_0;\bs{h})$.
 
Moreover, this probability measure admits a natural physical interpretation as a branching ratio for the fusion of local operators. Given a four-point configuration at cross-ratios $\bs{\rho}_0$ in the causal diamond, integrating $d\mu(\bs{\rho}_0;\bs{h})$ over a region $U \subset \mathbb{R}^2_+$ computes the probability that the external operators fuse via the s-channel into an intermediate state whose quantum numbers $\bs{h}$ lie within $U$. The dependence of this probability on the choice of base point $\bs{\rho}_0$ reflects the kinematic configuration: as one moves toward the s-channel OPE limit $\rho ,\bar{\rho}\to 0$, the measure localizes onto low-dimension operators dominated by the identity, while approaching the t-channel limit $\rho,\bar{\rho} \to 1$ requires summing over high-dimension states in the s-channel to reproduce the t-channel identity. The analogous limits for the thermal partition function are the low temperature $\beta \to \infty$ and high temperature $\beta \to 0$ limits, respectively. In the low-temperature limit, the partition function localizes around the ground state (the s-channel identity), and in the high-temperature limit, the Boltzmann factor tends to $1$ and the high-energy tails dominate the sum.

The family of OPE measures $\{ d\mu(\bs{\rho};\bs{h}) \}_{\bs{\rho} \in \mathscr{C}}$ forms an exponential family of probability measures fibrating the causal diamond. Fixing the self-dual point $\bs{\rho}_* = (3-2\sqrt{2},3-2\sqrt{2})$ as a reference, the measure at any other base point $\bs{\rho} \in \mathscr{C}$ is obtained from $d\mu(\bs{\rho}_*;\bs{h})$ by an exponential tilt:
\ba
d\mu(\bs{\rho};\bs{h}) = \frac{\mathcal{G}(\bs{\rho}_*)}{\mathcal{G}(\bs{\rho})} \left(\frac{\bs{\rho}}{\bs{\rho}_*}\right)^{\bs{h}} d\mu(\bs{\rho}_*;\bs{h}).
\ea
This identifies $d\mu(\bs{\rho}_*;\bs{h})$ as the base measure of the family and $\bs{h}$ as the sufficient statistic. As $\bs{\rho}$ varies over $\mathscr{C}$, the exponential tilt $(\bs{\rho}/\bs{\rho}_*)^{\bs{h}}$ continuously reweights the base measure, and the OPE measures trace out a smooth statistical manifold over the causal diamond. At each point the torus fiber $T^2_{\bs{\rho}}$ encodes the full measure via its characteristic function. 
 
\textbf{Note on notation:} Invoking $\bs{\rho}_*$ as the canonical reference point for the rest of this work, we simplify notation by writing $\mu(\bs{h}) \equiv \mu(\bs{\rho}_*; \bs{h})$ as the base measure. Additionally, we avoid cumbersome normalization factors by denoting $\mathcal{G}(\bs{\rho}) = \widehat{\mathcal{G}}(\bs{\rho})/\widehat{\mathcal{G}}(\bs{\rho}_*)$ as the correlator normalized to $1$ at the self-dual point, and $\widehat{\mathcal{G}}(\bs{\rho})$ as the ``raw" unnormalized correlator defined in eq.~(\ref{cft4ptfunction}).

\section{Maximally heavy observables}
\label{maxheavy}
 
Sequences of scalar operators with unbounded scaling dimension exist in any CFT, since the spectrum is countably infinite with dimensions accumulating at infinity \cite{Qiao:2017xif,Mukhametzhanov:2018zja}. They also arise naturally across families of CFTs, as in the case of giant gravitons at increasing rank $N$ or operators obtained from the flat-space limit of a rigid QFT on AdS. In both settings, the same question presents itself: \textit{How do we rigorously define and characterize four-point functions of such operators in the infinite scaling dimension limit?}
 
\begin{defn}[Heavying sequence]
\label{heavyingsequence}
Let $\{\cO_n\}_{n \in \mathbb{N}}$ denote a sequence of scalar operators in a CFT or family of CFTs, whose respective scaling dimensions $\{\De_n\}$ form a monotonically non-decreasing and unbounded sequence in $\mathbb{R}^+$. For each $n$, let $\mathcal{G}_n$ denote the reduced four-point function of $\cO_n$ normalized to $1$ at the self-dual point. We call the sequence $\{(\mathcal{G}_n, \De_n)\}_{n \in \mathbb{N}}$ a \textit{heavying sequence}. The set of heavying sequences is closed under taking subsequences: any infinite subsequence of a heavying sequence which maintains the relative ordering of elements is again a heavying sequence.
\end{defn}

\noindent The subtlety of the problem is that, as a correlator heavies, it becomes increasingly divergent and oscillatory on $\widehat{\mathscr{C}}$. This can happen for multiple reasons: additional operators may enter the OPE, or existing OPE coefficients, such as those of stress tensor families, may diverge. However, the most universal reason is that the t-channel identity contribution $(u/v)^\De$ diverges exponentially on the open subset $\{u>v\} \subset \mathscr{C}$ and oscillates with increasing frequency along any torus fiber whose basepoint lies in this subset. Thus, the pointwise limit of a heavying sequence,
\ba
\mathcal{G}_\infty(u,v) = \lim_{n \to \infty} \mathcal{G}_n(u,v),
\ea
gives an uninformative limit on $\mathscr{C}$ and an ill-defined limit on $\widehat{\mathscr{C}}$. In particular, the correlator sequence is not relatively compact in any standard function space on $\mathscr{C}$ or $\widehat{\mathscr{C}}$, so there is no natural topology in which it converges. A heavying sequence should therefore be treated as a sequence of formal objects, and the correct question is not whether the correlator sequence itself converges, but whether one can pair it with a suitable sequence of \textit{good regularizations} which extract well-defined information about the maximally heavy dynamics it encodes. 
 
We now make this precise. Let $\mbf{G}_{\De}$ denote the space of crossing-symmetric reduced four-point functions of identical scalar operators with external dimension $\De$, normalized to $1$ at the self-dual point.
 
\begin{defn}[Regularization]
\label{def:regularization}
A \textit{regularization} associated to an unbounded sequence of dimensions $\{\De_n\} \subset \mathbb{R}^+$ is a pair $(\mathcal{T}, \{\Phi_n\})$, where $\mathcal{T}$ is a topological space and $\{\Phi_n\}$ is a sequence of maps
\ba
\Phi_n :  \mbf{G}_{\De_n} \to \mathcal{T}.
\ea
The regulated sequence associated to a correlator $\mathcal{G}_n \in \mbf{G}_{\De_n}$ is $\{\Phi_n(\mathcal{G}_n)\}_{n \in \mathbb{N}} \subset \mathcal{T}$.
\end{defn}
 
\begin{defn}[Good regularizations and maximally heavy observables]
\label{def:goodreg}
A regularization $(\mathcal{T}, \{\Phi_n\})$ is \textit{good} if $\mathcal{T}$ is metrizable, and for every heavying sequence $\{(\mathcal{G}_n, \De_n)\}$, the regulated sequence $S = \{\Phi_n(\mathcal{G}_n)\}$ is relatively compact in $\mathcal{T}$. Since $\mathcal{T}$ is metrizable, relative compactness of $S$ is equivalent to sequential compactness of its closure $\bar{S}$, which guarantees that every regulated sequence possesses at least one accumulation point in $\mathcal{T}$. We define any such accumulation point as a \textit{maximally heavy observable}.
\end{defn}

\begin{rem}
By a standard topological fact, the set of accumulation points of any sequence in a metrizable space is closed, so the set of maximally heavy observables associated to a given heavying sequence and good regularization is a closed subset of $\mathcal{T}$. Metrizability simultaneously ensures that every accumulation point is the limit of some subsequence and that the limit of any specific convergent subsequence is unique.
\end{rem}

\begin{rem}
The monotonicity in definition \ref{heavyingsequence} is doing the ``heavy lifting" here; since $\{\De_n\}$ is monotonically non-decreasing and unbounded, no infinite subsequence of a heavying sequence can have bounded external dimensions. This rules out the possibility of trivial accumulation points arising from finite-dimensional correlators and justifies the label \textit{maximally heavy}: every accumulation point of a good regularization is associated to the heavy limit.
\end{rem}
We encounter a closely analogous situation in the study of statistical systems in the thermodynamic limit. For a system on a lattice with $N$ sites, extrinsic quantities such as the total internal energy or partition function $Z_N$ diverge as $N\to\infty$, but the free energy density $f = \lim_{N\to\infty} \frac{1}{N}\log Z_N$ remains well-defined. Our goal is to identify and characterize the analogous intrinsic quantities of a heavying correlator sequence, where $\mathcal{G}_n$ is viewed as a partition function, $\De_n$ the system size, and the cross ratios serve as control parameters analogous to temperature. 
 
In this work, we will study three good regularization schemes, each extracting qualitatively different heavy observables from a heavying correlator sequence. These are defined as follows:
 
\begin{defn}[Dynamical free energy density]
\label{def:ratefn}
The \textit{dynamical free energy density} lies in $\mathcal{T} = C(\mathscr{C})$ equipped with the compact-open topology, with the regulated sequence
\ba
 \lambda_n \equiv  \frac{1}{\De_n} \log \mathcal{G}_n : \mathscr{C} \to \mathbb{R}.
\ea
\end{defn}
 
\begin{defn}[Rescaled OPE measure and characteristic function]
\label{def:rescaledmeasure}
The \textit{rescaled OPE measure} lies in $\mathcal{T} = \mathcal{P}(\mathbb{R}^2_+)$ equipped with the weak topology, with the regulated sequence
\ba
\nu_n(\bs{\eta}) \equiv \mu_n(\De_n \bs{\eta}),
\ea
where $\mu_n' = \mathcal{F}^{-1}[\mathcal{G}_n]$ is the OPE density at the self-dual point, obtained by taking the inverse Fourier transform of $\mathcal{G}_n$ along the universal cover of the torus fiber $T^2_{\bs{\rho}_*} \subset \widehat{\mathscr{C}}$. In turn, the sequence of \textit{characteristic functions} $\{\phi_{\nu_n}\}$ of $\{\nu_n\}$ provides an additional regularized sequence in $\mathcal{T} = C(\mathbb{R}^2)$ equipped with compact-open topology, with the regulated sequence 
\ba
\phi_{\nu_n}(\tilde{\bs{t}}) \equiv \mathcal{G}_n(\rho_* e^{i2\pi \tilde{\bs{t}}/\De_n })  = \mathcal{F}[\nu_n' ].
\ea 
Since these regulated sequences are simply related by Fourier transform, one should view them as providing different representations of equivalent data, rather than being entirely distinct regularizations.
\end{defn}
 
\begin{defn}[$\alpha$-local dynamical free energy density]
\label{def:alphalocal}
Let $\alpha \in [0,1]$ and $(\bs{x}) = (\log(\frac{\bs{z}}{1-\bs{z}}))$ be a change of coordinates that maps $\mathscr{C} \to \mathbb{R}^2$ with the self-dual point at the origin. We parameterize the correlator on $\mathscr{C}$ as $\mathcal{G}_n(\bs{x})$. The \textit{$\alpha$-local dynamical free energy density} lies in $\mathcal{T} = C(\mathbb{R}^2)$ equipped with the compact-open topology, with the regulated sequence
\ba
\lambda_n(\alpha; \tilde{\bs{x}}) \equiv \frac{1}{\De_n^{1-\alpha}} \log \mathcal{G}_n(\tilde{\bs{x}}/\De_n^{\alpha}): \mathbb{R}^2 \to \mathbb{R}.
\ea
\end{defn}
 
\begin{rem}
    Each regularization can be classified as postcompositional, precompositional, or hybrid, according to whether it acts on the codomain or domain of $\mathcal{G}_n$. Restricting to the causal diamond and viewing $\mathbf{G}_{\De} \subset \mathrm{Hom}(\mathscr{C}, \mathbb{R}_+)$, the dynamical free energy density is purely postcompositional: it maps the codomain $\mathbb{R}_+ \to \mathbb{R}$ via $\mathcal{G}_n \mapsto \frac{1}{\De_n}\log \mathcal{G}_n$, leaving the domain $\mathscr{C}$ untouched and producing a global picture with no preferred basepoint. The characteristic function regularization is purely precompositional: it reparametrizes the torus fiber $T^2_{\bs{\rho}_*} \subset \widehat{\mathscr{C}}$ by $\mathbb{R}^2$ via $\tilde{\bs{t}} \mapsto \bs{\rho}_* e^{i2\pi\tilde{\bs{t}}/\De_n}$, so that $\phi_{\nu_n}(\tilde{\bs{t}}) = \mathcal{G}_n(\bs{\rho}_* e^{i2\pi\tilde{\bs{t}}/\De_n})$ is the normalized correlator restricted to this fiber. Since any fixed $\tilde{\bs{t}}$ is mapped arbitrarily close to $\bs{\rho}_*$ as $n \to \infty$, this regularization probes asymptotic deviations of size $\sim 1/\De_n$ around the self-dual point, giving a local picture. The rescaled OPE measure $\nu_n$ is related to $\phi_{\nu_n}$ by Fourier transform, which acts as an invertible change of representation rather than as a regularization in its own right.
 
The $\alpha$-local dynamical free energy density is a hybrid of both: it reparametrizes $\mathscr{C} \cong \mathbb{R}^2$ by $\tilde{\bs{x}} \mapsto \bs{x} = \tilde{\bs{x}}/ \De_n^\alpha$, where $\bs{x} = \log(\bs{z}/(1-\bs{z}))$ maps the self-dual point to the origin, and postcomposes by $\frac{1}{\De_n^{1-\alpha}}\log$. This allows it to interpolate between global and local pictures: at $\alpha = 0$ the precomposition is trivial and one recovers the dynamical free energy density, while at $\alpha = 1$ the postcompositional factor $\De_n^{1-\alpha} \to 1$ and the regularization reduces to a purely precompositional one, with the remaining $\log$ serving as an invertible change of representation analogous to the Fourier transform in the characteristic function case.
\end{rem}
 
Most importantly, all three regularizations are good. We state this as a proposition and sketch the key ingredients of each proof:
\begin{prop}[Goodness of the three regularization schemes]
\label{prop:goodnesspreview}
For any heavying sequence $\{(\mathcal{G}_n, \De_n)\}$:
\begin{enumerate}[label=(\roman*)]
\item The dynamical free energy density $\{\lambda_n\}$ is locally uniformly bounded and uniformly equicontinuous on $\mathscr{C} \cong \mathbb{R}^2$. By Arzel\`a--Ascoli (theorem \ref{aathm}) and the gradient bound of lemma \ref{uniformgradbound}, there exists a convergent subsequence whose limit is a 1-Lipschitz continuous function $\lambda(\bs{x}) \in C(\mathbb{R}^2)$ called a \textit{rate function}. (See section~\ref{ratefunctions}.)
 
\item The rescaled OPE measure sequence $\{\nu_n\}$ is tight on $\mathbb{R}^2_+$ (definition \ref{deftightness}). By Prokhorov (theorem \ref{thm:prokhorov}), there exists a weakly convergent subsequence whose limit is a compactly supported probability measure $\nu \in \mathcal{P}(\mathcal{D})$, where $\mathcal{D} = [0,\sqrt{2}]^2$, called a \textit{classical measure}. Crossing symmetry implies that classical measures are invariant under a discrete group isomorphic to the Klein four-group $\mbf{V}_4$. (See section~\ref{classicalmeasures}.)
 
\item The $\alpha$-local dynamical free energy density $\{\lambda_n(\alpha;\tilde{\bs{x}})\}$ is locally uniformly bounded and uniformly equicontinuous on $ \mathbb{R}^2$ for each fixed $\alpha$. By Arzel\`a--Ascoli and the same gradient bound, there exists a convergent subsequence whose limit is a $1$-Lipschitz continuous function $\lambda(\alpha; \tilde{\bs{x}}) \in C(\mathbb{R}^2)$ called an \textit{$\alpha$-local rate function}. (See section~\ref{localratefunctions}.)
\end{enumerate}
Moreover, the characteristic functions of the rescaled OPE measure sequence define another good, local regularization scheme whose accumulation points $\{\phi_\nu\}$ are in bijection with the set of classical measures $\{\nu\}$.
\end{prop}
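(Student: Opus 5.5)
The plan has one common engine: write everything in the $\bs{x}=\log(\bs{z}/(1-\bs{z}))$ coordinates, where the radial monomial expansion becomes a Laplace-type transform of the base measure. Then two structural facts carry all three parts: convexity and positivity from the OPE measure, and the s–t crossing relation \eqref{stcrossing}.

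\textbf{Step 1: a gradient bound.} Since $\mathcal{G}_n(\bs{\rho})=\int (\bs{\rho}/\bs{\rho}_*)^{\bs{h}}d\mu_n(\bs{h})$, the function $\log\mathcal{G}_n$ is convex in $\log\bs{\rho}$. Its gradient is the tilted mean, $\partial_{\log\rho}\log\mathcal{G}_n=\int h\,d\mu_n(\bs{\rho};\bs{h})\geq 0$.

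I will bound this mean by $O(\De_n)$ on compacta. The key is that crossing exchanges $\log\mathcal{G}_n(\bs{\rho})$ with $\log\mathcal{G}_n(\hat{\bs{\rho}})$ up to the explicit term $2\De\log|4\bs{\rho}/(1-\bs{\rho})^2|$. Let me sanity-check the diagonal: the identity contributions scale like $e^{\De_n x}$ in the $x$ variable, so the slope bound should be independent of $n$ after dividing by $\De_n$.

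The cleanest route is the change of variables to $\bs{x}$, where $\rho$ has derivative $\partial_x\log\rho = 2\sqrt{z}\cdot\ldots$. In these variables crossing acts as $\bs{x}\to-\bs{x}$, with prefactor $\exp(\De\,(\text{explicit function}))$. Monotonicity of $\mathcal{G}_n$ in each of $\rho,\bar\rho$ (positivity of coefficients), combined with $\bs{x}\to-\bs{x}$ crossing, gives the following. For $x\ge0$ the $s$-representation is increasing. For $x\le 0$ the crossed representation bounds the derivative by the derivative of the explicit prefactor. Convexity then transfers the bound at symmetric points to the whole segment.

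This yields $|\nabla_{\bs{x}}\lambda_n|\le 1$, which is lemma \ref{uniformgradbound}. Together with $\lambda_n(\bs 0)=0$ from the normalization, it gives local uniform boundedness and equicontinuity. Arzelà–Ascoli on the $\sigma$-compact space $\mathbb{R}^2$, with a diagonal argument over an exhaustion by compacts, gives part (i). The 1-Lipschitz property passes to the limit.

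\textbf{Step 2: part (iii).} Part (iii) follows from the same bound, because the gradient in $\tilde{\bs{x}}$ of $\De_n^{\alpha-1}\log\mathcal{G}_n(\tilde{\bs{x}}/\De_n^\alpha)$ equals $\De_n^{-1}\nabla_{\bs{x}}\log\mathcal{G}_n$. It therefore obeys the identical bound, with value $0$ at the origin.

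\textbf{Step 3: part (ii), tightness.} For part (ii) I want to show that $\nu_n$ has support essentially in $\mathcal{D}$. Apply the exponential Chebyshev (Chernoff) bound:
\[
\nu_n(\eta>a)\le e^{-\De_n s a}\,\mathcal{G}_n(s\text{-tilt}).
\]
Choose the tilt so that $(\bs{\rho}/\bs{\rho}_*)^{\bs{h}}=e^{s\,\bs{h}\cdot\ldots}$. Then the one-sided gradient bound $\lambda_n\le|\bs{x}|/\sqrt{2}$ along rays in the $\bs x$ chart yields $\nu_n(\eta>\sqrt2+\epsilon)\le e^{-c\epsilon\De_n}\to0$.

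This gives tightness in the sense of definition \ref{deftightness}. Prokhorov's theorem (theorem \ref{thm:prokhorov}) then produces weak limits, and the Portmanteau theorem on the open sets $\{\eta>\sqrt2+\epsilon\}$ places the limits in $\mathcal{D}$. The constant is fixed by matching $\omega_1=\De/\sqrt2$.

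\textbf{Step 4: the $\mbf V_4$ symmetry.} I would read off the symmetry at the level of characteristic functions. Chiral symmetry gives $\eta\leftrightarrow\bar\eta$. For crossing, expand \eqref{stcrossing} near $\bs{\rho}_*$ with $\tilde{\bs t}/\De_n$ rotations. The prefactor contributes a phase linear in $\tilde{\bs t}$ plus $O(1/\De_n)$, and the Jacobian of $\bs{\rho}\mapsto\hat{\bs{\rho}}$ at the self-dual point is $-1$ in log coordinates. The identity therefore becomes
\[
\phi_\nu(\tilde{\bs t})=e^{i2\pi\sqrt2(\tilde t+\bar{\tilde t})}\,\phi_\nu(-\tilde{\bs t}),
\]
which expresses invariance under $\eta\to\sqrt2-\eta$ (and the same for $\bar\eta$). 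Combined with the chiral swap, this generates $\mbf V_4$.

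The final claim follows from Lévy's continuity theorem. Tightness gives equicontinuity of $\phi_{\nu_n}$, and weak convergence is equivalent to locally uniform convergence of the characteristic functions. This gives the bijection.

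\textbf{Main obstacle.} I expect the main obstacle to be Step 1: extracting a clean, $n$-independent gradient bound valid on all of $\mathbb{R}^2$, and not only on the diagonal. This requires carefully combining convexity, monotonicity, and crossing off the diagonal, where the $\bs{x}$ coordinates mix $\rho$ and $\bar\rho$.
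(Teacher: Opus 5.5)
Your parts (i) and (iii), the $\mbf{V}_4$ argument, and the characteristic-function claim follow the paper's route. Your tightness argument is a genuinely different and more economical one. Two steps need fixing, both with ingredients you already have.

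\textbf{Step 1.} The mechanism you need is there, but drop the case split and the convexity transfer. As stated, the transfer fails. Convexity of $\log\mathcal{G}_n$ in $\log\rho$ makes $\omega_{1,0}$ nondecreasing in $x$, so an upper bound on the slope for $x\le0$ says nothing for $x>0$. None of this is needed. In the $\bs{x}$ chart crossing reads $\mathcal{G}_n(\bs{x})=e^{\De_n(x+\bar{x})}\mathcal{G}_n(-\bs{x})$. Also $\mathcal{G}_n$ is nondecreasing in $x$ and in $\bar{x}$ separately at every point of $\mathbb{R}^2$. Hence $\partial_x\log\mathcal{G}_n(\bs{x})=\De_n-(\partial_1\log\mathcal{G}_n)(-\bs{x})\le\De_n$ for all $\bs{x}$, where $\partial_1$ is the derivative in the first slot. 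This is exactly the paper's lemma \ref{sharpmomentbound}, written there as $\omega_{1,0}\le\De(1+\rho)/(1-\rho)$ and converted using $\partial_x\log\rho=(1-\rho)/(1+\rho)$. It then feeds lemma \ref{uniformgradbound}. The off-diagonal obstacle you anticipate does not arise. The coordinate $x$ depends only on $z$, crossing acts componentwise as $\bs{x}\to-\bs{x}$, and the prefactor is separable. So $0\le\partial_x\lambda_n,\partial_{\bar{x}}\lambda_n\le1$ holds on all of $\mathbb{R}^2$.

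\textbf{Step 3, and how it compares with the paper.} The paper gets tightness from the self-dual second-moment bound of lemma \ref{sdmomentbounds} (the $\Lambda=3$ crossing derivative plus Stieltjes positivity) and Markov's inequality. Only afterwards does it prove $\mathrm{supp}(\nu)\subset\mathcal{D}$, by passing the MGF bound of proposition \ref{mgfbound} to the limit and applying Chernoff to $\nu$. You instead apply Chernoff at finite $n$, using the MGF bound the gradient estimate already supplies. This gives exponentially small mass beyond $\sqrt2+\epsilon$, uniformly in $n$.

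From that one estimate you get both tightness and, via Portmanteau on the open sets $\{\eta>\sqrt2+\epsilon\}$, the support statement. The second-moment lemma is then unnecessary for goodness. Your route also supplies uniform exponential moments, which is what the crossed-channel error in Step 4 really needs. On the fiber $|\hat\rho|>\rho_*$ for $\tilde{t}\neq0$, so the correction is a growing factor $e^{O(\eta\tilde{t}^2/\De_n)}$, which a second-moment bound alone does not control. What the paper's route buys instead is explicit two-sided second-moment bounds with $1/\De$ corrections, which it uses in its fluctuation analysis.

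You do need to fix the input. The gradient bound gives $\lambda_n(x,0)\le x$ for $x\ge0$. Your inequality $\lambda_n\le|\bs{x}|/\sqrt2$ is false (the GFF rate function $\max(0,x+\bar{x})$ violates it), and it would put the support in $[0,1]$. The $\sqrt2$ comes from the Jacobian, not from matching $\omega_1$. With $s=\log(\rho/\rho_*)$ one has $x=\sqrt2\,s+O(s^2)$, so $\nu_n(\eta>\sqrt2+\epsilon)\le e^{\De_n(-\epsilon s+O(s^2))}$. Taking $s\sim\epsilon$ gives decay like $e^{-c\epsilon^2\De_n}$, which suffices.

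\textbf{Step 4.} Your displayed identity for $\phi_\nu$ encodes the joint map $\mathcal{RC}:(\eta,\bar\eta)\mapsto(\sqrt2-\eta,\sqrt2-\bar\eta)$. It does not give separate reflections of $\eta$ and $\bar\eta$; those, together with the swap, would generate the order-8 dihedral group rather than $\mbf{V}_4$. Composing $\mathcal{RC}$ with $\mathcal{C}$ gives $\mathcal{R}$, as in the paper.
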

 
Applying these regularizations to the same heavying correlator sequence may yield a different number of accumulation points for each. Figure~\ref{fig:heavyobservables} illustrates this: while the set of classical measures and their characteristic functions are in bijection, the map between rate functions and classical measures is generically not injective. A heavying correlator sequence may have multiple rate functions but only one classical measure, and vice-versa, so different correlator subsequences are best characterized by different regularization schemes. We demonstrate precisely what this entails through a number of examples in section~\ref{sec:examples}.
 \begin{figure}[t]
    \centering
    \includegraphics[width=0.85\linewidth]{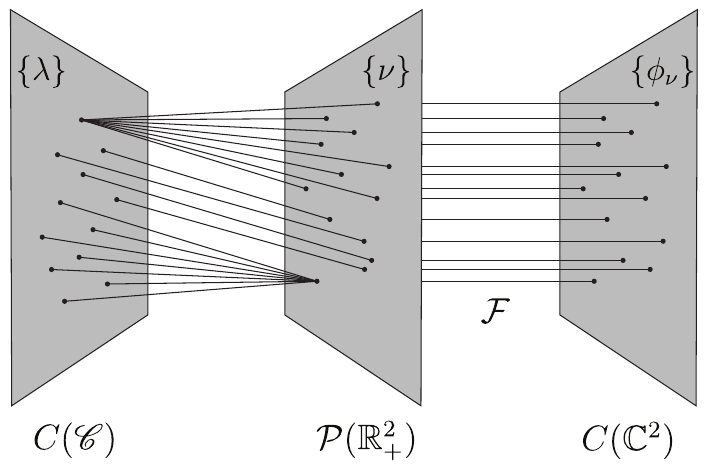}
    \caption{Different regularization schemes access different heavy observables (accumulation points). While the set of classical measures $\{\nu\}$ are in bijection with their characteristic functions $\{\phi_\nu\}$ via Fourier transform $\mathcal{F}$, rate functions $\{\lambda\}$ and classical measures are generally not. That is, a heavying correlator sequence may have multiple rate functions, but only one classical measure, and vice-versa.}
    \label{fig:heavyobservables}
\end{figure}

The remainder of this section is organized as follows. We begin in section~\ref{ratefunctions} by establishing the rate function as a well-posed object, proving existence of convergent subsequences via the Arzel\`a--Ascoli theorem and deriving its universal properties from uniform moment bounds on the sequence of OPE measures. In section~\ref{classicalmeasures}, we establish existence and structural properties of the classical measure, including the $G$-invariance theorem as a result of crossing symmetry and the isotropy subgroup classification of its support. In section~\ref{coherentstatedecomp}, we introduce the coherent state decomposition as a canonical procedure for constructing exact crossing-symmetric solutions from a classical measure, called coherent state correlators, and provide conditions under which these accurately describe the true correlator within regions of size $\sim 1/\sqrt{\De}$ around the self-dual point along the torus fiber $T^2_{\bs{\rho}_*} \subset\widehat{\mathscr{C}}$. We show that the coherent state rate function admits a simple, scale-invariant form uniquely determined by the convex hull of the corresponding classical measure. In section~\ref{localratefunctions}, we introduce the $\alpha$-local dynamical free energy density as a means of connecting the global and local pictures. An important result of this section is a matching theorem that establishes an equivalence between the $\alpha$-local rate function and the coherent state rate function for a particular range of $\alpha$, assuming the satisfaction of certain super-exponential tightness conditions on the rescaled OPE measure sequence. Additionally, we connect select coherent state rate function bounds, assuming a ``classical'' twist gap, to the work of Dey, Pal, and Qiao (DPQ) on torus partition functions at large central charge \cite{Dey:2024nje}.

\subsection{Dynamical free energy densities and rate functions}
\label{ratefunctions}

The first regularization scheme we consider is the dynamical free energy density as given in definition~\ref{def:ratefn}, defined by a sequence of regulations that act non-linearly on the correlator over $\mathscr{C}$ to produce a continuous function $\lambda_n: \mathscr{C} \to \mathbb{R}$. In this case, the target topological space $\mathcal{T}$ is $C(\mathscr{C})$ with a compact-open topology. Via the diffeomorphism $\bs{x} = \log(\bs{z}/(1-\bs{z})): \mathscr{C} \to \mathbb{R}^2$ introduced in definition~\ref{def:alphalocal}, every accumulation point in $\mathcal{T}$ is a $1$-Lipschitz continuous function on $\mathbb{R}^2$, and we use these coordinates throughout when stating gradient bounds. 

Recall that for a heavying sequence $\{(\mathcal{G}_n,\De_n)\}$, the dynamical free energy densities are
\ba
\lambda_n =  \frac{1}{ \Delta_n} \log\left(  \mathcal{G}_n \right).
\ea
This quantity is readily utilized in the study of dynamical phase transitions \cite{Heyl:2017blm}, where $\mathcal{G}_n$ is called the Loschmidt amplitude, $|\mathcal{G}_n|^2$ called the Loschmidt echo, and $\lambda = \lim_{n \to \infty} \lambda_n$ is also known as the rate function. In the dynamical phase transition literature, the Loschmidt amplitude is defined as 
\ba
\mathcal{G}(t)  = \langle \psi | e^{i H t} | \psi \rangle,
\ea
where $H$ is the Hamiltonian of the system and $|\psi\rangle$ is a generic quantum state. More generally, we can view the Loschmidt amplitude as being the expectation value of a family of unitary operators acting on a fixed state in the Hilbert space, parameterized by kinematic variables that control the evolution of the state. 

In an analogy to how the phase transitions of a system can be diagnosed by non-analyticities in their free energy in the thermodynamic limit, \textit{dynamical phase transitions} associated to a maximally heavy correlator can be diagnosed by discontinuities emerging in their dynamical free energy density. In other settings, such as in ergodic theory or large deviation theory, this quantity is known as the \textit{topological pressure} or \textit{rescaled cumulant generating function}, respectively. For simplicity of language and alignment with dynamical phase transition conventions, we will refer to an accumulation point in a sequence of dynamical free energy densities as a \textit{rate function}. More colloquially, this function describes the exponential rate at which the correlator diverges on $\mathscr{C}$.

\subsubsection{Existence and universal bounds}
It is prudent to first address the mathematical subtlety regarding the existence of a rate function for a given sequence of correlators. That is, whether there exists a subsequence $\{\lambda_{n_{k}}\}_{k\in\mathbb{N}} \subset\{\lambda_n\}$ that converges to a rate function $\lim_{k\to \infty} \lambda_{n_k} = \lambda$. Proving this requires the following theorem:
\begin{thm}[Arzel\`a-Ascoli,~\cite{Munkres2000}] 
\label{aathm}
Let $C(X)$ denote the space of continuous functions on a complete metric space $X$. We say that a sequence of functions $\{f_n(x)\}_{n\in\mathbb{N}} \subset C(X)$ is locally uniformly bounded if for any compact subset $A \subset X$, $|f_n(x)| \leq M \; \forall x \in A$ and $n \in\mathbb{N}$, where $M \in\mathbb{R}^+$ is a constant independent of $n,x$. We say a sequence is uniformly equicontinuous if, for every $\epsilon > 0$, there exists a $\delta > 0$ such that $|f_n(x) - f_n(y)| < \epsilon$ whenever $|x-y| < \delta$ for all $n\in\mathbb{N}$, where $\delta$ does not depend on $x,y,$ or $n$. If the sequence $\{f_n\}_{n\in\mathbb{N}}$ is locally uniformly bounded and uniformly equicontinuous, then it has a subsequence $\{f_{n_k}\}_{k\in \mathbb{N}}$ which converges locally uniformly to $f \in C(X)$, i.e.~$\lim_{k\to\infty}f_{n_k} = f $ uniformly on all compact subsets of $X$. 
\end{thm}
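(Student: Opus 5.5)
The plan is the classical diagonal-plus-equicontinuity argument. I will carry it out under the mild extra assumption that $X$ is separable, which holds for $X=\mathbb{R}^2\cong\mathscr{C}$, the only case used in this paper. Some such hypothesis is needed to extract a single subsequence that works on every compact set at once. Completeness of $X$ plays no essential role, since all limits will be taken in $\mathbb{R}$.

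\textbf{Step 1 (pointwise convergence on a countable dense set).} Fix a countable dense subset $E=\{e_1,e_2,\dots\}\subset X$. Each singleton $\{e_j\}$ is compact, so local uniform boundedness gives $\sup_n|f_n(e_j)|<\infty$. Bolzano--Weierstrass then yields a subsequence along which $f_n(e_1)$ converges. From it I extract a further subsequence along which $f_n(e_2)$ converges, and so on. The diagonal subsequence $\{f_{n_k}\}$ then converges at every point of $E$.

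\textbf{Step 2 (uniform Cauchy property on compacts).} Let $A\subset X$ be compact and $\epsilon>0$, and take the $\delta$ from uniform equicontinuity for $\epsilon/3$. Since $E$ is dense, the balls $B(e,\delta)$ with $e\in E$ cover $A$. By compactness, finitely many of them, centered at $e_{j_1},\dots,e_{j_m}$, already cover $A$. Choose $K$ such that $|f_{n_k}(e_{j_i})-f_{n_l}(e_{j_i})|<\epsilon/3$ for all $i\le m$ and all $k,l\ge K$. For any $x\in A$, pick $i$ with $|x-e_{j_i}|<\delta$ and apply the triangle inequality:
\begin{equation}
|f_{n_k}(x)-f_{n_l}(x)|\le |f_{n_k}(x)-f_{n_k}(e_{j_i})|+|f_{n_k}(e_{j_i})-f_{n_l}(e_{j_i})|+|f_{n_l}(e_{j_i})-f_{n_l}(x)|<\epsilon .
\end{equation}
Hence $\{f_{n_k}\}$ is uniformly Cauchy on $A$.

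\textbf{Step 3 (the limit).} By Step 2, $f(x)=\lim_k f_{n_k}(x)$ exists for every $x\in X$, because each point lies in a compact set. The convergence is uniform on every compact $A$. A uniform limit of continuous functions on $A$ is continuous on $A$. Every point of $\mathbb{R}^2$, or more generally of a locally compact $X$, has a compact neighborhood, so $f\in C(X)$. In fact continuity also follows directly, since the equicontinuity estimate $|f_{n_k}(x)-f_{n_k}(y)|<\epsilon$ for $|x-y|<\delta$ passes to the limit.

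The only delicate point is the finite-cover step in Step 2. It is exactly where equicontinuity (not just pointwise convergence) is needed, and where density of the countable set $E$ replaces a compactness argument on all of $X$. Everything else is routine.
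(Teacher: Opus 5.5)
Your proposal is correct. The separability assumption is not merely a convenience: as literally stated, for an arbitrary complete metric space, the theorem is false. Take $X=\{0,1\}^{\mathbb{N}}$ with the discrete metric ($d(x,y)=1$ for $x\neq y$). It is complete, its compact subsets are finite, and every sequence of functions on it is uniformly equicontinuous (take $\delta=1/2$). The sequence $f_n(x)=x_n$ is bounded by $1$. Yet for any subsequence $n_k$, the point $x$ with $x_{n_k}=k \bmod 2$ makes $f_{n_k}(x)$ oscillate, so no subsequence converges pointwise. Separability (implied, for instance, by $\sigma$-compactness) repairs this, and it covers the only case the paper uses, $X=\mathbb{R}^2$.

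The paper gives no proof and cites Munkres, whose argument is different. For compact $X$, one shows that an equicontinuous, pointwise bounded family is totally bounded in the sup metric. In the compact-convergence setting, one applies Tychonoff's theorem to the closure of the family in $\mathbb{R}^X$, using that pointwise and compact convergence agree on equicontinuous families. That route needs no countability assumption on $X$, but it yields only relative compactness (existence of accumulation points), not convergent subsequences. The example above is exactly a family lying in a compact but not sequentially compact set. Extracting a subsequence then requires the compact-open topology on $C(X)$ to be metrizable. This holds for hemicompact $X$ such as $\mathbb{R}^2$, and it is why the paper builds metrizability of $\mathcal{T}$ into its definition of a good regularization. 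Your diagonal argument reaches the sequential statement directly and elementarily. It also shows that the limit inherits the common modulus of continuity, which is what the paper needs to conclude that rate functions are Lipschitz.
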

Let us canonically parameterize the correlator in terms of the logarithmic coherent state variables $\bs{x} = (x,\bar{x}) = ( \log(\chi),\log(\bar{\chi}))$. Since $\bs{\chi} \in \mathbb{R}^2_+$ in the causal diamond, $\bs{x}$ maps the causal diamond to $\mathbb{R}^2$, which is a complete metric space with respect to the standard Euclidean norm. We first state the following proposition:
\begin{prop}
\label{preaa}
Let $\{\lambda_n(\bs{x})\} \subset C(\mathbb{R}^2)$ denote a sequence of dynamical free energy densities associated to the correlator sequence $\{\mathcal{G}_n(\bs{x})\}$. The sequence $\{\lambda_n(\bs{x})\}$ is locally uniformly bounded and uniformly equicontinuous. 
\end{prop}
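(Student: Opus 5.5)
The plan is to show that every $\lambda_n$ is Lipschitz in the coordinates $\bs{x}$, with a constant independent of $n$. Both properties then follow together. The normalization $\mathcal{G}_n(\bs{\rho}_*)=1$ means $\lambda_n(\bs{0})=0$, so a uniform Lipschitz bound gives $|\lambda_n(\bs{x})|\le C\|\bs{x}\|$; this is bounded on every compact set, uniformly in $n$. It also gives uniform equicontinuity directly, with $\delta=\epsilon/C$. The key ingredients are the positive radial monomial expansion (\ref{monomialdecomposition}) and the s-t crossing relation (\ref{stcrossing}), rewritten in $\bs{x}$-variables.

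\textbf{Step 1 (monotonicity from unitarity).} Since $z=\chi/(1+\chi)$, we have $dz/dx=z(1-z)$. Differentiating $\rho(z)$ gives $d\log\rho/dz = 1/(z\sqrt{1-z})$, hence
\begin{equation*}
\frac{\partial \log\rho}{\partial x}=\sqrt{1-z}\in(0,1).
\end{equation*}
The series (\ref{monomialdecomposition}) converges absolutely and locally uniformly on $\mathscr{C}$, and the same holds after multiplying its terms by polynomials in $\bs{h}$, by the convergence estimates of appendix~\ref{convergence}. So we may differentiate term by term. Using the tilted OPE measure of section~\ref{opemeasures}, this gives
\begin{equation*}
\partial_x\lambda_n(\bs{x})=\frac{\sqrt{1-z}}{\Delta_n}\int h\,d\mu_n(\bs{x};\bs{h})\;\ge\;0,
\end{equation*}
because $a_{\bs{h}}>0$ and $h\ge 0$. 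The same holds for $\partial_{\bar x}\lambda_n$.

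\textbf{Step 2 (upper bound from crossing).} The map $\bs{z}\to 1-\bs{z}$ sends $\bs{x}\to-\bs{x}$. The crossing prefactor is $(u/v)^{\Delta_n}=(\chi\bar\chi)^{\Delta_n}=e^{\Delta_n(x+\bar x)}$. Since the normalization point is self-dual, the normalization is compatible with crossing. Crossing (\ref{stcrossing}) therefore becomes
\begin{equation*}
\lambda_n(\bs{x}) = x+\bar x+\lambda_n(-\bs{x}).
\end{equation*}
Differentiating yields $\partial_x\lambda_n(\bs{x}) = 1-(\partial_x\lambda_n)(-\bs{x})$. By Step 1 the subtracted term is nonnegative, so $0\le\partial_x\lambda_n\le 1$, and likewise $0\le\partial_{\bar x}\lambda_n\le1$. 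This is the uniform gradient bound (lemma~\ref{uniformgradbound}): $|\lambda_n(\bs{x})-\lambda_n(\bs{y})|\le |x-y|+|\bar x-\bar y|$, so $\lambda_n$ is 1-Lipschitz in the $\ell^1$ norm and $\sqrt{2}$-Lipschitz in the Euclidean norm, uniformly in $n$. The argument above then gives local uniform boundedness and uniform equicontinuity.

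The main obstacle is bookkeeping rather than ideas. One must check that the crossing prefactor $\left|4\bs{\rho}/(1-\bs{\rho})^2\right|^{2\Delta}$ is exactly $(u/v)^{\Delta}$ with no leftover normalization constant. One must also justify term-by-term differentiation, using the trace-class estimate for the tilted density operator. I would verify the prefactor first, by checking that both sides of (\ref{stcrossing}) agree at $\bs{\rho}_*$, where $\hat{\bs{\rho}}_*=\bs{\rho}_*$.
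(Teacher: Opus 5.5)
Your proof is correct and follows essentially the paper's route. Both establish the uniform gradient bound $\bs{0}\le\nabla\lambda_n\le\bs{1}$, with the lower bound from positivity of the radial monomial expansion and the upper bound from crossing, then use $\lambda_n(\bs{0})=0$ and the mean value theorem to get $|\lambda_n(\bs{x})|\le\|\bs{x}\|_1$ and the $\sqrt{2}$-Lipschitz estimate. The only difference is cosmetic: you read the upper bound directly off $\lambda_n(\bs{x})=x+\bar{x}+\lambda_n(-\bs{x})$, whereas the paper routes the same computation through the first-moment bound of lemma~\ref{sharpmomentbound} in $s=-\log\rho$ variables, where the chain-rule factor is $-2\sqrt{\rho}/(1-\rho)$ instead of $-1$.
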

\begin{proof}
The proof makes use of the sharp bound on the first moment of the OPE measure for all points in the causal diamond. See appendix \ref{proofofpreaa}  for the complete proof. 
\end{proof}
The Arzel\`a-Ascoli theorem then implies that there exists a subsequence $\{\lambda_{n_k}(\bs{x})\}_{k\in\mathbb{N}}$ and a rate function $\lambda(\bs{x}) \in C(\mathbb{R}^2)$ such that
\ba
\lim_{k\to \infty} \lambda_{n_k}(\bs{x}) = \lambda(\bs{x}) \quad \forall \bs{x} \in A
\ea
uniformly on all compact subsets $A\subset \mathbb{R}^2$. Additionally, we claim that, since the derivative bounds that we use to prove the proposition are sharp, the rate function we define is the unique choice that uniformly regulates the divergences of the maximally heavy correlator on all compact subsets of the causal diamond.

Crossing symmetry also constrains rate functions in a straightforward way. Taking a logarithm of both sides of eq.~(\ref{stcrossing}), dividing by $\De$, and taking the $\De \to \infty$ limit gives the following statement of crossing symmetry for rate functions:
\ba
\lambda(\bs{x}) = (x+ \bar{x}) + \lambda(-\bs{x}) \quad \forall\bs{x} \in \mathbb{R}^2.
\ea
The key intermediate result for proving proposition \ref{preaa} is the sharp gradient bound\footnote{This gradient upper bound can also be written as $ \| \nabla_{\bs{x}}\lambda(\bs{x})\|_{\infty} \leq1$, where $\|\bs{y}\|_{\infty} = \max(y,\bar{y})$ is the $L^\infty$ norm on $\mathbb{R}^2$.}
\ba
\bs{0}\leq \nabla_{\bs{x}}\lambda(\bs{x}) \leq \bs{1}.
\ea
It immediately follows that the rate function is a $1$-Lipschitz continuous function on $\mathbb{R}^2$. Additionally, this gradient bound, along with the universal first moment of all OPE measures, gives rise to the following universal bounds on rate functions:

\begin{prop}[Bounds on rate functions]
\label{universalratefunctionbound}
    Let $\{(\mathcal{G}_n,\De_n)\}$ be a heavying sequence with a subsequence of dynamical free energy densities $\{\lambda_{n_k}\}$ that converges locally uniformly to a rate function $\lambda(\bs{x})=\lim_{k\to\infty}\lambda_{n_k}(\bs{x}) $. Define
    \ba
    f(x) = \frac{\log(\rho(x)/\rho_*)}{\sqrt{2}}=\frac{x-2 \log \left(\sqrt{e^x+1}+1\right)+2 \sinh ^{-1}(1)}{\sqrt{2}}.
    \ea
    For all $\bs{x} \in \mathbb{R}^2$, we have the bound
    \ba
  \lambda_-(\bs{x}) \leq \lambda(\bs{x}) \leq \lambda_+(\bs{x}) ,
    \ea
    where
    \ba
    \lambda_-(\bs{x}) &= \max\left( f(x)+f(\bar{x}), (x+\bar{x}) + f(-x) + f(-\bar{x}), -\Sigma , (x+\bar{x}) - \Sigma \right),\\
    \lambda_+(\bs{x}) &= \max(0,x) + \max(0,\bar{x}),
    \ea
    and $\Sigma$ denotes the free energy density at the self-dual point, given by
    \ba
    \Sigma = \lim_{k \to \infty} \frac{1}{\De_{n_k}} \log\left(\widehat{\mathcal{G}}_{n_k}(\bs{\rho}_*) \right),
    \ea
    where $\widehat{\mathcal{G}}_{n_k}(\bs{\rho}_*)$ is the value of the unnormalized correlator evaluated at the self-dual point. See figure \ref{fig:ratefuncbound} for plots of this bound. 
\end{prop}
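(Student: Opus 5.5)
Each bound should come from one of three ingredients: the gradient bound $\bs{0}\le\nabla_{\bs{x}}\lambda\le\bs{1}$ together with the normalization $\lambda(\bs{0})=0$, positivity of the radial monomial coefficients $a_{\bs{h}}$ (in particular $a_{\bs{0}}=1$), and s-t crossing. I will derive each at finite $n$ for $\lambda_{n}$ and then pass to the locally uniform limit along $n_k$.

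\emph{Upper bound.} Since $\mathcal{G}_n(\bs{0})=1$, we have $\lambda_n(\bs{0})=0$. Integrate the gradient along the path $(0,0)\to(x,0)\to(x,\bar x)$. The partial derivative $\partial_x\lambda_n$ lies in $[0,1]$, so the first leg changes $\lambda_n$ by at most $\max(0,x)$. Here I use that for $x<0$ the change is $\le 0$, because the function is nondecreasing. The second leg likewise contributes at most $\max(0,\bar x)$, which gives $\lambda_n\le\lambda_+$. The gradient bound itself is the first-moment statement from appendix \ref{proofofpreaa}: $\partial_x\lambda_n$ is $\frac{1}{\De_n}\int h\,\partial_x\log\rho\, d\mu_n(\bs{x};\bs{h})$, and it lies in $[0,1]$. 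I take this as given from the proof of proposition \ref{preaa}.

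\emph{Lower bounds.} The terms $-\Sigma$ and $(x+\bar x)-\Sigma$ come from isolating the identity contributions. The raw correlator satisfies $\widehat{\mathcal{G}}_n(\bs\rho)=\sum a_{\bs h}\bs\rho^{\bs h}\ge a_{\bs 0}=1$. Therefore
\[
\mathcal{G}_n=\widehat{\mathcal{G}}_n/\widehat{\mathcal{G}}_n(\bs\rho_*)\ge \widehat{\mathcal{G}}_n(\bs\rho_*)^{-1}.
\]
Taking $\frac1{\De_n}\log$ and the limit yields $\lambda\ge-\Sigma$. Applying the same bound to $\lambda(-\bs x)$ and using the crossing relation $\lambda(\bs x)=(x+\bar x)+\lambda(-\bs x)$ gives $\lambda\ge (x+\bar x)-\Sigma$; this is the t-channel identity.

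The terms involving $f$ should come from Jensen's inequality applied to the tilt formula (\ref{radmonomialdecomp}) with base point $\bs\rho_*$:
\[
\mathcal{G}_n(\bs\rho)=\int(\bs\rho/\bs\rho_*)^{\bs h}d\mu_n\ge\exp\Big(\log(\rho/\rho_*)\,\omega_1+\log(\bar\rho/\rho_*)\,\bar\omega_1\Big).
\]
Here $\omega_1$ and $\bar\omega_1$ are the first moments at the self-dual point, which by crossing and chiral symmetry equal $\De_n/\sqrt2$ exactly, as recorded in the notation table. Dividing by $\De_n$ gives $\lambda_n\ge f(x)+f(\bar x)$ for every $n$. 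Applying crossing once more produces the term $(x+\bar x)+f(-x)+f(-\bar x)$.

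Taking the maximum of the four lower bounds then completes the argument. The main obstacle I expect is justifying $\omega_1=\De/\sqrt2$ exactly, which rests on differentiating the crossing equation (\ref{stcrossing}) at $\bs\rho_*$, together with the careful passage to the limit. For $\Sigma$ I may need to pass to a further subsequence so that the limit exists, possibly allowing $\Sigma=+\infty$, in which case those two bounds are vacuous.
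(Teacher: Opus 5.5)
Your proposal is correct and follows essentially the same route as the paper. The upper bound comes from integrating the gradient bound outward from the normalized point $\bs{0}$; the paper does this quadrant by quadrant, using $(\rho/\rho_*)^{h}\le 1$ directly in the negative quadrant, which is equivalent to your monotone path argument. The lower bound combines Jensen's inequality with $\omega_1=\De/\sqrt{2}$, the s-channel identity bound $\widehat{\mathcal{G}}_n\ge 1$, and crossing $\lambda(\bs{x})=(x+\bar{x})+\lambda(-\bs{x})$ applied to each, exactly as in the paper. Working at finite $n$ before taking the limit, and allowing a further subsequence or $\Sigma=+\infty$, is if anything slightly more careful than the paper, which integrates the gradient of the limit directly and tacitly assumes the limit defining $\Sigma$ exists.
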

\begin{proof}
     The upper bound is derived by integrating the upper gradient bound from $\bs{0}$ to $\bs{x}$ and imposing that the rate function is 0 at $\bs{x} = 0$. The lower bound is derived in two main regions: near the self-dual point and near the OPE limits. Near the self-dual point, the lower bound is given by a crossing symmetrized application of Jensen's inequality. Near the OPE limits, we use that $\widehat{\mathcal{G}}_n(\bs{\rho}) \geq 1$ for all $\bs{\rho} \in \mathscr{C}$ due to the presence of the s-channel identity operator, which is then normalized by the value of the correlator at the self-dual point to obtain the constant shift by $-\Sigma$. For the detailed proof, see appendix \ref{proofofuniversalratefuncbound}.
\end{proof}
\begin{rem}
    If the self-dual free energy density $\Sigma$ is unbounded, that is, the value of $\widehat{\mathcal{G}}_{n_k}(\bs{\rho}_*)$ grows faster than an exponential in $\De_{n_k}$, then the $\max(-\Sigma, (x+\bar{x}) - \Sigma)$ terms will give trivial lower bounds. If $\Sigma = 0$, then the lower bound strengthens to $\max(0,x+\bar{x})$, which entirely fixes the behavior of the rate function along the diagonal limit.
\end{rem}

\begin{figure}[t]
    \centering
    \includegraphics[width=\linewidth]{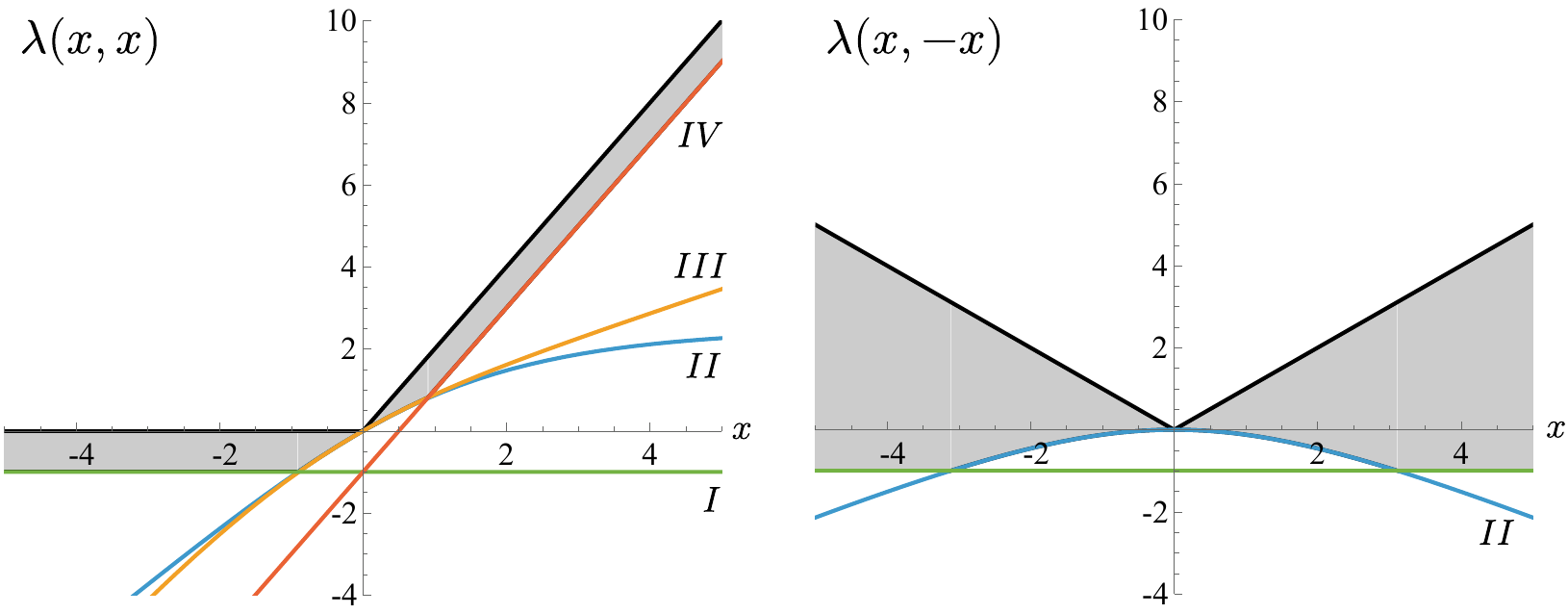}
    \caption{Bounds on rate functions along the diagonal limit (left) and self-dual line (right), as given in proposition \ref{universalratefunctionbound}. The black line denotes the upper bound $\lambda_+$ and the colored lines represent the individual curves whose maximum over $\bs{x} \in\mathbb{R}^2$ gives the total lower bound $\lambda_-$: $(I) = -\Sigma$, $(II) = f(x) + f(\bar{x})$, $(III) = (x+\bar{x}) + f(-x)+f(-\bar{x})$, and $(IV) = (x+\bar{x}) - \Sigma$. The gray area between these curves is the allowed region for a rate function $\lambda$ with self-dual free energy density $\Sigma = 1$.  }
    \label{fig:ratefuncbound}
\end{figure}

\subsubsection{Fluctuations: phase transition or cross-over?}
\label{phasetransitions}

Now that we have addressed the existence and universal properties of the rate function, we can move on to exploring the interesting maximally heavy dynamics that it encodes. The central question is: when does a rate function exhibit a sharp phase transition, and when is it smooth? As we will see, the answer is deeply connected to the degree to which the bulk state sourced by the external operator localizes around a geodesic in AdS, and also motivates the subsequent development of $\alpha$-local rate functions and classical measures.
 
It is instructive to study a simple example to gain some intuition for the global dynamics that the rate function probes. Consider a four-point function of a generalized free field (GFF) with dimension $\De$:
\ba
\mathcal{G}_{\De}(\bs{\chi}) = 1 + \left| \frac{\bs{\chi}}{1+\bs{\chi}}\right|^\De + |\bs{\chi}|^\De.
\ea
Here we are using the coordinates $\bs{\chi} = \left(e^{\bs{x}}\right) = \left(\frac{\bs{z}}{1-\bs{z}}\right)$. We can then compute the rate function in the $\De \to \infty$ limit to be
\ba
\lambda^{\text{(GFF)}}(\bs{x}) &=  \begin{cases}
    x+\bar{x} & \text{if } x+\bar{x} > 0 \\
    0 & \text{if } x+\bar{x} \leq 0
\end{cases}\\&= \max(0,x+\bar{x}).
\ea
This rate function is piecewise linear with a jump discontinuity in its first derivative along the self-dual line $x+ \bar{x} = 0$ ($z = 1-\bar{z}$). The mechanism behind this phase transition is the dominance of identity contributions in different regions of the causal diamond: for small values of $|\bs{\chi}|$, the s-channel identity contribution dominates as the other terms in the correlator exponentially decay, while for $|\bs{\chi}|>1$ the t-channel identity contribution exponentially diverges and dominates the rate function growth.
 
To put this on more physical footing, we point out that these identity terms come from disconnected contributions to the correlator, i.e.~if we write the total correlator as $\langle \cO_1 \cO_2 \cO_3\cO_4\rangle$, then the s and t-channel identity terms come from the disconnected pieces $\langle \cO_1 \cO_2 \rangle \langle \cO_3 \cO_4\rangle$ and $\langle \cO_1 \cO_4 \rangle \langle \cO_2 \cO_3\rangle$ respectively. These factorized two-point structures can be computed via a holographic world-line path integral
\ba
\label{wleffectiveaction}
\langle \cO_i \cO_j \rangle \langle \cO_k \cO_l\rangle = \int \mathcal{D} \mathcal{P}_{ij} \mathcal{D} \mathcal{P}_{kl}
e^{-\frac{\De}{R}\left(  \ell(\mathcal{P}_{ij}) +\ell(\mathcal{P}_{kl})  \right)},
\ea
where $R$ is the AdS radius, $\mathcal{P}_{ij}$ indexes the parametrized curves in AdS$_{d+1}$ which connect points $i,j$ on the boundary, and $\ell(\mathcal{P}_{ij})$ computes their proper length. In the limit $\De \to \infty$ with $R$ fixed, this path integral can be approximated via saddle-point to obtain
\ba
\label{saddlepointapprox}
\langle \cO_i \cO_j \rangle \langle \cO_k \cO_l\rangle \approx e^{-\frac{\De}{R}\left(\ell_{\text{min}}(\mathcal{P}_{ij}) +\ell_{\text{min}}(\mathcal{P}_{kl})\right)},
\ea
where $\ell_{\text{min}}(\mathcal{P}_{ij})$ denotes the minimal geodesic distance between points $i,j$ on the boundary. 

This tells us that, within the different phases of the rate function $\lambda^{\text{GFF}}$, the correlator is effectively described by the pair of minimal-length bulk geodesics connecting pairs of operators on the boundary. In the s-channel phase, operators that lie in the same causal diamond are connected by a geodesic, while in the t-channel phase, operators in different causal diamonds are connected. The phase transition occurs when $\ell_{\text{min}}(\mathcal{P}_{12}) + \ell_{\text{min}}(\mathcal{P}_{34}) \approx \ell_{\text{min}}(\mathcal{P}_{14})+\ell_{\text{min}}(\mathcal{P}_{23})$, which coincides with the self-dual line configuration $\bar{z} = 1-z$. Since each path is weighted by $e^{-\frac{\De}{R}\ell(\mathcal{P})}$, paths whose proper length exceeds the minimal geodesic length by deviations $\delta\ell \gg R/\Delta$ are exponentially suppressed relative to the contribution of $e^{-\frac{\De}{R}\ell_{\min}}$. In other words, the world-line path integral localizes onto paths deviating from the minimal geodesic by proper lengths of order $R/\Delta$. On the CFT side, this is reflected by the fact that, if we rescale $\bs{x} \to \bs{x}/\De$ and compute the resulting rate function, the sharp transition turns into a smooth crossover.
 
Underlying this is a general principle: the distance scale over which a transition becomes smooth is a direct probe of how tightly the path integral localizes around the action-minimizing world-line. More generally, for states whose fluctuations around the minimal geodesic are of order $\sim 1/\De^{\alpha}$ for some $\alpha \in [0,1]$, the rate function will exhibit a sharp transition when viewed at scales larger than $1/\De^{\alpha}$ and a smooth crossover when viewed at finer scales. This is precisely the physics that the $\alpha$-local rate function, introduced later in section \ref{localratefunctions}, is designed to capture: it probes the correlator at asymptotic deviations of size $\sim 1/\De^{\alpha}$ around the self-dual point, and therefore directly resolves the degree of path integral localization parameterized by $\alpha$.
 
We expect the bulk world-line effective action of eq.~(\ref{wleffectiveaction}) to be accurate when applied to correlation functions of heavy operators with dimensions satisfying $\sqrt{c_T} \lesssim \De \lesssim c_T$, since they are heavy enough to form a bound-state which localizes to a bulk world-line, while not being so heavy as to completely fill the non-compact AdS space. In the case of huge operators we would not expect the geodesic approximation to be accurate, as it has been demonstrated previously in \cite{Abajian2023-xw} that these states are dual to extended ``banana''-like geometries rather than a point particle propagating along a world-line. Speculatively, in the case of light operators with dimensions satisfying $\De \lesssim \sqrt{c_T}$, the story is less clear as these states can take the form of both weakly interacting gases, which do not localize in the bulk, or single particle states that do, depending on the microscopic description of the CFT state. In figure \ref{geodesic picture}, we provide a schematic picture for the holographic interpretation of the dynamical phase transition we observe in $\lambda^{(\text{GFF})}$, which generalizes to any correlator whose heavy external states localize onto bulk geodesics.
 
\begin{figure}[t]
    \centering
    \includegraphics[width=\linewidth]{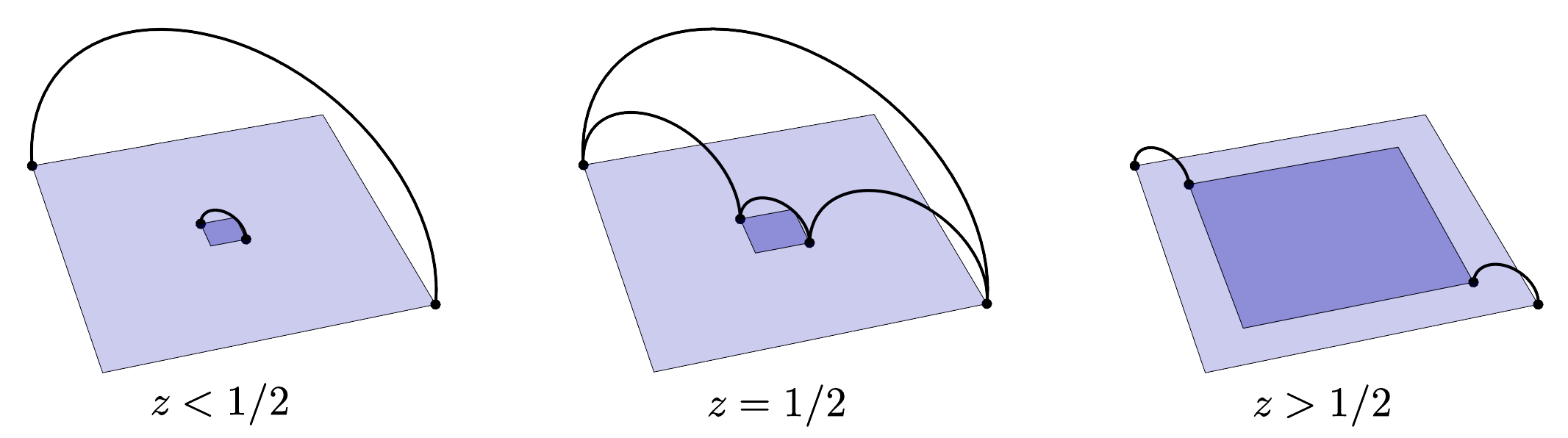}
    \caption{The holographic interpretation of dynamical phase transitions in the Poincar\'e patch (restricted to the diagonal limit of the correlator). Left: for $z<1/2$, geodesic transport between heavy operators in the same causal diamond dominates. Center: for $z = 1/2$, both s and t-channel pairs of geodesics contribute equally, along with other non-universal contributions (not pictured). Right: for $z >1/2$, geodesic transport between pairs of heavy operators across the nested causal diamonds dominates.}
    \label{geodesic picture}
\end{figure}
 
This analysis raises a natural question: \textit{is there a simple way to diagnose the presence of a phase transition at a given point in the causal diamond?} To answer this, we once again take inspiration from statistical systems. In this context, the second derivative of the free energy density is proportional to the specific heat of a system, which diverges at first-order phase transitions. Alternatively, the second derivative of the free energy density computes the fluctuations in energy density, or in our case the fluctuations of the quantum numbers $\bs{h}$ which are distributed according to the OPE measure defined at a given point. When these fluctuations diverge, we know that a first-order phase transition is present. 

We can compute this quantity as\footnote{More generally, one can consider the quantity $g_{ij}(\bs{x}) = \partial_{x_i}\partial_{x_j}\lambda(\bs{x})$ called the ``Fisher information metric,'' which is defined on the statistical manifold of OPE measures assigned to each point in $\mathscr{C}$.}
\ba
\nabla^2 \lambda_n(\bs{x}) = \partial_x^2\lambda_n(\bs{x}) + \partial_{\bar{x}}^2\lambda_n(\bs{x}) ,
\ea
and evaluate the result at whichever point one suspects to be critical in the $n \to \infty$ limit. This quantity is precisely related to the first and second cumulants of the OPE measure, which are related to the OPE moments as
\ba
\label{totalfluctations}
\nabla^2 \lambda_n(\bs{x}) = \frac{1}{\De_n}\left(  \underbrace{\frac{1}{1+e^x}\left(\omega_{2,0}(\bs{x}) - \omega_{1,0}^2(\bs{x}) \right)}_{\text{quadratic}}- \underbrace{\frac{e^{x}}{2(1+e^x)^{3/2}} \omega_{1,0}(\bs{x})}_{\text{linear}} + (x \leftrightarrow \bar{x})\right),
\ea
where we define
\ba
\omega_{j,k}(\bs{x}) = \int  h^j \bar{h}^k d\mu_n(\bs{x};\bs{h})
\ea
as the moments of $\mu_n$ at a given point $\bs{x}$ in the causal diamond. These moments satisfy $\omega_{j,k}(\bs{x}) = \omega_{k,j}(\bar{\bs{x}})$ from the definition of the OPE measure. 

In lemma \ref{sharpmomentbound} in the appendix we further prove that the first moment satisfies $\omega_{1,0}(\bs{x}) = O(\De_n)$ for all points in the causal diamond, so the linear term is always bounded as $n\to\infty$. Thus, the quadratic term, or the variance of the OPE measure at $\bs{x}$, controls whether there is a first-order phase transition at $\bs{x}$. This term is manifestly positive via Jensen's inequality:
\ba
\omega_{2,0}(\bs{x}) \geq \omega_{1,0}^2(\bs{x}),
\ea
and is only bounded in the heavy limit when $\left(\omega_{2,0}(\bs{x}) -\omega_{1,0}^2(\bs{x}) \right) = O(\De_n)$. In terms of the OPE measure, this is the statement that the mass of $\mu(\bs{x};\bs{h})$ is concentrated around its mean within standard deviations of size $O(\sqrt{\De_n})$. If the mass of $\mu(\bs{x};\bs{h})$ is concentrated around its mean with a standard deviation that grows faster than $\sim \sqrt{\De_n}$, then the first derivative of the rate function will exhibit a jump discontinuity, or first-order transition, around that point.
 
We can refine our analysis of these fluctuations at the self-dual point $\bs{x} = 0$. In appendix \ref{convergence}, we prove a rigorous two-sided bound on the second moment at the self-dual point (lemma \ref{sdmomentbounds}), which takes the form
\ba
\frac{\De^2}{2} = \omega_1^2 \leq \omega_2 \leq \frac{1}{8}\left(6\De^2 + 3\De + \sqrt{\De^2(4\De(\De+3)+13)}\right).
\ea
Here $\De$ is the external dimension, $\omega_k = \omega_{k,0}(\bs{0}) =\omega_{0,k}(\bs{0})$ denote the pure moments at the self-dual point, and $\omega_1 = \De/\sqrt{2}$ is fixed by crossing. 

Now consider a heavying sequence of OPE measures $\{\mu_n\}$ at the self-dual point. Assume that the second moments of this measure sequence satisfy
\ba
\left(\omega_2 -\frac{\De_n^2}{2}\right) \sim \De_n^\gamma
\ea
as $n\to\infty$. Then, the quadratic term in eq.~(\ref{totalfluctations}) is of order $\sim\De_n^{\gamma -1}$. If $\gamma > 1$ a phase transition at the self-dual point must occur since the quadratic term diverges as $\De_n \to \infty$, while if $\gamma \leq1$ the quadratic term remains bounded in the heavy limit and we have a crossover around self-duality. A notable example of the latter case occurs when studying the $\langle \phi^L \phi^L \phi^L \phi^L\rangle$ correlator in the ``long'' limit where we send $L\to\infty$ with $\Df$ held fixed. We explore this example in section \ref{GFF example}. The long limit for $\Df =1$ coincides with a correlator of an infinite normal ordered product of $1/2$-BPS displacement operators on the Wilson line defect of $\mathcal{N}=4$ SYM, which was studied at strong coupling in~\cite{ferrero2}, and may be interpreted as a weakly interacting gas that does not localize to a bulk geodesic.
 
Another key fact to point out is that a smooth crossover may only occur when the bound of proposition \ref{universalratefunctionbound} allows it. The self-dual free energy density $\Sigma = \lim_{n\to\infty} \frac{1}{\De_n}\log(\widehat{\mathcal{G}}_n(\bs{\rho}_*))$ controls the rate at which the unnormalized correlator $\widehat{\mathcal{G}}_n$ diverges at the self-dual point. If $\Sigma = 0$, then the lower bound of proposition~\ref{universalratefunctionbound} strengthens to
\ba
\lambda(\bs{x}) \geq \max(0, x + \bar{x})
\ea
on all of $\mathbb{R}^2$. Combined with the upper bound, this entirely fixes the behavior along the diagonal limit to $\lambda(x,x) = \max(0,2x)$. In other words, correlator sequences which are sub-exponentially bounded at the self-dual point are forced to exhibit the universal geodesic phase transition along the diagonal, regardless of the microscopic details of the theory. 

Away from the diagonal limit we may still have non-universal behavior, and we will bound the location of this non-universal region under the assumption of a ``classical'' twist gap in section \ref{selectbounds}. For the case of $\Sigma > 0$, this is equivalent to the statement that the unnormalized correlator at the self-dual point diverges exponentially as $\sim e^{\De \Sigma}$. In other words, the unnormalized OPE measure at the self-dual point must exhibit a ``pile-up'' of non-identity contributions around the mean $h,\bar{h} = \De/\sqrt{2}$, whose total weight diverges exponentially in $\De$ in order for there to be a crossover. However, this does not imply that every heavying correlator sequence that diverges exponentially at the self-dual point exhibits a crossover, as we can still have a transition arising from large fluctuations around the mean of the OPE measure.
 
The subject of dynamical phase transitions around self-duality is a story that will continue to unfold throughout the remaining sections. In the next section, we introduce rescaled OPE measure sequences and their accumulation points, called {\it classical measures}, which serve as a key tool for obtaining a general understanding of the emergence of phase transitions. As we will show, the support of the classical measure controls the phase structure of the rate function in a neighborhood of the self-dual point: when the support has a non-trivial convex hull, a sharp transition generically emerges, while when the support collapses to a single point, the rate function is locally (but not necessarily globally) smooth. 

We will make this mechanism precise by constructing a canonical family of crossing-symmetric correlators from a given classical measure in section~\ref{coherentstatedecomp}, and proving a matching theorem (theorem \ref{matchinglocalratefunctions}) that identifies the precise range of $\alpha$ for which the $\alpha$-local rate function, introduced in section~\ref{localratefunctions}, is uniquely determined by the classical measure. This establishes a sharp equivalence between the smoothness of the rate function at a given scale and how quickly the mass of the rescaled OPE measure concentrates around the support of the classical measure. In the examples of section \ref{sec:examples} we will then realize all three possible outcomes: operators whose rate functions exhibit sharp transitions predicted by classical measures with non-trivial support (localized), operators with smooth crossovers and classical measures that collapse to a single point (delocalized), and operators in an intermediate regime where only the $\alpha$-local rate function at a critical value $\alpha = \alpha_c$ resolves the microscopic structure of the state (quasi-localized).

\subsection{Classical measures}
\label{classicalmeasures}

The dynamical free energy density is a regularization which acts pointwise on correlator sequences, i.e.~it maps a poorly behaved function on $\mathscr{C}$ to a locally uniformly bounded and uniformly continuous function on the same domain. One can also consider regularizations that act on correlators via Fourier transform, allowing one to directly access the underlying OPE data which describes heavy dynamics. If we fix a base-point $\bs{\rho} \in\mathscr{C}$ and compute the Fourier transform of a heavying correlator sequence along the torus fiber $T^2_{\bs{\rho}}$, we compute the sequence of OPE measures $\{\mu_n(\bs{\rho};\bs{h})\} \in \mathcal{P}(\mathbb{R}^2_+)$. The na\"ive hope is that this sequence of measures is relatively compact in $\mathcal{P}(\mathbb{R}^2_+)$, so that there exist accumulation points, and thus well-defined maximally heavy observables. The difficulty lies in showing that this is indeed the case, since the region of OPE measure support (as estimated using OPE convergence) grows as $\sim \De_n$. Intuitively, this means that mass can ``escape to infinity" in the heavy limit, which spoils most notions of convergence in the space of probability measures. 

The remedy is to perform a uniform rescaling of the spectral parameter and study measures as a function of $\bs{\eta} = \bs{h}/\De_n$. This rescales the approximate support of the OPE measure so that it is uniformly $O(1)$ rather than $O(\De_n)$. In precise terms, this is the property of \textit{tightness} of a measure sequence which, via Prokhorov's theorem, implies the sequence is relatively compact in $\mathcal{P}(\mathbb{R}^2_+)$ and contains accumulation points. The remaining free parameter is the base-point in $\mathscr{C}$, which we choose to be the self-dual point. This choice is not required; however, as we show, accumulation points in the space of self-dual rescaled OPE measures possess an exact discrete ``reflection" symmetry and are supported within the compact subspace of $\mathcal{D} = [0,\sqrt{2}]^2 \subset\mathbb{R}^2_+$. Additionally, such a choice of base-point will ultimately allow us to connect OPE measures to rate functions, since the rate function is fixed to be to $0$ at the self-dual point. 

Throughout this section, we will discuss convergence of measures. To begin, let us define the notions of convergence we will be focusing on. 
\begin{defn}[Weak convergence]
Let $(\mu_n)_{n\in\mathbb{N}}$ denote a sequence of probability measures on $X$, and $ C_B(X): X \to \mathbb{R}$ the space of smooth and bounded functions on $X$. We say a measure sequence converges \textit{weakly} to $\mu$ if for all $f \in C_B(X)$
\ba
\lim_{n \to \infty} \int_X f d\mu_n = \int_X f d\mu.
\ea
\end{defn}

\begin{defn}[Tight measure sequence]
\label{deftightness}
Let $\{\mu_n\}_{n \in \mathbb{N}}$ be a sequence of probability measures on $X$. We say this sequence is \textit{tight} if for all $\epsilon>0$, there exists a compact set $K_\epsilon$ and $N\in\mathbb{N}$ such that
\ba
\mu_n(X\setminus K_\epsilon) <\epsilon \quad \forall n \geq N.
\ea
\end{defn}
\noindent Intuitively, this condition tells us that the mass of the sequence $\{\mu_n\}$ does not ``escape to infinity," and has uniform tail bounds that confines the mass to a ``central" region of $X$. Given a tight sequence of measures, Prokhorov's theorem states the following:

\begin{thm}[Prokhorov's theorem,~\cite{Billingsley1999,Prokhorov1956}]
\label{thm:prokhorov}
Let $(X,d)$ be a complete separable metric space, and let $\mathcal{P}(X)$ denote the space of Borel probability measures on $X$ equipped with the weak topology. A family $\Pi \subset \mathcal{P}(X)$ is relatively compact in $\mathcal{P}(X)$ if and only if it is tight.
\end{thm}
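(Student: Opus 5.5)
We will prove the two directions separately, since they use the hypotheses differently. Only the implication ``tight $\Rightarrow$ relatively compact'' is needed in this paper, for instance in proposition \ref{tightnessofmeasuresequence}. Throughout we rely on two standard facts. The first is the Portmanteau characterization of weak convergence: $\mu_n \to \mu$ weakly if and only if $\limsup_n \mu_n(F) \le \mu(F)$ for every closed $F$, equivalently $\liminf_n \mu_n(G)\ge \mu(G)$ for every open $G$. The second is that on a separable metric space the weak topology on $\mathcal{P}(X)$ is metrizable, for example by the L\'evy--Prokhorov metric. Metrizability lets us replace relative compactness by sequential relative compactness, i.e.\ every sequence in $\Pi$ has a weakly convergent subsequence.

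\textbf{Tight $\Rightarrow$ relatively compact.} Let $\{\mu_n\}\subset \Pi$. By tightness, choose compact sets $K_1\subset K_2\subset\cdots$ with $\mu_n(X\setminus K_m)<1/m$ for all $n$.

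First, we compactify. A separable metric space embeds homeomorphically into a compact metric space $Y$, such as the Hilbert cube; we identify $X$ with its image. Each $\mu_n$ pushes forward to a Borel probability measure on $Y$. Since $Y$ is compact metric, $C(Y)$ is separable, so a diagonal argument over a countable dense set of test functions, combined with the Riesz representation theorem, produces a subsequence $\mu_{n_k}$ converging weakly on $Y$ to some $\mu\in\mathcal{P}(Y)$.

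Second, we show $\mu$ lives on $X$. Each $K_m$ is compact, hence closed in $Y$, so Portmanteau gives $\mu(K_m)\ge\limsup_k \mu_{n_k}(K_m)\ge 1-1/m$. Therefore $\mu\big(\bigcup_m K_m\big)=1$, and $\mu$ restricts to a Borel probability measure on $X$.

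Third, we check weak convergence on $X$ itself. Take $F$ closed in $X$ and let $\overline F^{Y}$ be its closure in $Y$; then $\overline F^{Y}\cap X=F$. We obtain
\[
\limsup_k \mu_{n_k}(F)\le \limsup_k \mu_{n_k}(\overline F^{Y})\le \mu(\overline F^{Y})=\mu(\overline F^{Y}\cap X)=\mu(F),
\]
where the equality uses that $\mu$ is concentrated on $X$. By Portmanteau on $X$, $\mu_{n_k}\to\mu$ weakly in $\mathcal{P}(X)$.

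\textbf{Relatively compact $\Rightarrow$ tight.} This direction uses completeness and separability. Fix $\epsilon>0$ and $k\in\mathbb{N}$. Cover $X$ by countably many open balls $B_1,B_2,\dots$ of radius $1/k$, and set $G_N=\bigcup_{i\le N}B_i$, so that $G_N\nearrow X$.

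We claim there is an $N_k$ with $\mu(G_{N_k})>1-\epsilon 2^{-k}$ for all $\mu\in\Pi$. Suppose not. Then for each $N$ there is $\mu_N\in\Pi$ with $\mu_N(G_N)\le 1-\epsilon2^{-k}$. Extract a subsequence converging weakly to some $\mu$. For each fixed $M$ and all $N\ge M$ we have $\mu_N(G_M)\le\mu_N(G_N)$, so Portmanteau for open sets gives $\mu(G_M)\le\liminf\mu_N(G_M)\le 1-\epsilon2^{-k}$. Letting $M\to\infty$ yields $\mu(X)\le 1-\epsilon 2^{-k}<1$, a contradiction.

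Now set $K=\bigcap_k \overline{G_{N_k}}$. This set is closed and totally bounded, since for each $k$ it is covered by finitely many balls of radius $2/k$. Because $X$ is complete, $K$ is compact. Finally, $\mu(X\setminus K)\le\sum_k \epsilon2^{-k}=\epsilon$ for every $\mu\in\Pi$, which is tightness.

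The main obstacle is constructing the limit measure in the first direction, because weak limits can lose mass or land outside $X$. The compactification $Y$ gives existence of the limit for free. Tightness is then exactly what shows the limit has full mass on $\bigcup_m K_m\subset X$, so no mass escapes to the boundary $Y\setminus X$. The remaining care is in transferring Portmanteau between $X$ and $Y$ via closures, as in the displayed chain of inequalities.
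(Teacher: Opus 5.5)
Your proof is correct in both directions. The paper gives no proof of its own: it quotes the theorem from \cite{Billingsley1999,Prokhorov1956} and only uses the direction tight $\Rightarrow$ relatively compact, via Corollary~\ref{cor:prokhorov}. The natural comparison is therefore with the cited textbook argument.

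Your converse half is essentially Billingsley's proof. You get uniform control of finite unions of $1/k$-balls by contradiction plus Portmanteau on open sets, then take the intersection over $k$, which is closed and totally bounded and hence compact by completeness.

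Your direct half takes a genuinely different route. Billingsley builds the limit by hand. A diagonal argument gives a subsequence along which $\mu_{n_i}(H)$ converges for every $H$ in a countable class of compact sets built from the $K_m$, and this content is then extended to a measure by an outer-measure construction. You instead embed $X$ in a compact metric space $Y$ and get the limit directly from separability of $C(Y)$ and the Riesz representation theorem. Tightness is then used only to show that no mass escapes to $Y\setminus X$.

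Your route is shorter and isolates exactly what tightness does. The cost is importing the Urysohn embedding and Riesz theorem, and using separability. That is harmless here, and removable in general, since a tight family is concentrated on the separable set $\bigcup_m K_m$. Billingsley's construction is elementary and works in an arbitrary metric space, with relative compactness understood sequentially.

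Two small points are worth stating explicitly.
\begin{itemize}
\item The restriction of $\mu$ to $X$ is well defined because $\bigcup_m K_m$ is a Borel subset of $Y$ of full $\mu$-measure lying inside $X$. Hence $B\cap X\mapsto \mu(B)$ does not depend on which Borel $B\subset Y$ represents a given Borel subset of $X$.
\item The paper's Definition~\ref{deftightness} only requires the tail bound for $n\ge N$. Your direct half still goes through under that weaker hypothesis, because the estimate $\mu(K_m)\ge\limsup_k \mu_{n_k}(K_m)$ only involves large $k$.
\end{itemize}
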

\begin{cor}[Existence of a weakly convergent subsequence]
\label{cor:prokhorov}
If $\{\mu_n\}_{n \in \mathbb{N}} \subset \mathcal{P}(X)$ is a tight sequence of probability measures, then there exists a subsequence $\{\mu_{n_k}\}_{k \in \mathbb{N}}$ and a probability measure $\mu \in \mathcal{P}(X)$ such that $\mu_{n_k} \to \mu$ weakly as $k \to \infty$.
\end{cor}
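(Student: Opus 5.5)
This corollary follows from Prokhorov's theorem (theorem \ref{thm:prokhorov}) together with the fact that, in a metrizable space, relative compactness is equivalent to every sequence having a convergent subsequence. I would argue as follows.

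First, apply theorem \ref{thm:prokhorov} to the family $\Pi = \{\mu_n\}_{n\in\mathbb{N}}$. Definition \ref{deftightness} only asks for the tail bound for $n \geq N$. I would first upgrade this to tightness of the whole family. Each single Borel probability measure on a complete separable metric space is tight (Ulam's theorem), and a finite union of compact sets is compact. So for a given $\epsilon$, take $K_\epsilon$ from the definition and enlarge it by compact sets $K^{(1)},\dots,K^{(N-1)}$ with $\mu_j(X\setminus K^{(j)})<\epsilon$. The union $K_\epsilon' = K_\epsilon \cup \bigcup_{j<N} K^{(j)}$ is then compact and satisfies $\mu_n(X\setminus K_\epsilon')<\epsilon$ for all $n$. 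Prokhorov then gives that $\Pi$ is relatively compact in $\mathcal{P}(X)$, so its closure $\bar\Pi$ is compact in the weak topology.

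Second, convert compactness into sequential compactness. For $(X,d)$ complete and separable, the weak topology on $\mathcal{P}(X)$ is metrizable, for instance by the L\'evy--Prokhorov metric (or the bounded-Lipschitz metric). A compact metric space is sequentially compact. Hence the sequence $\{\mu_n\}\subset\bar\Pi$ has a subsequence $\{\mu_{n_k}\}$ converging in this metric to some $\mu\in\bar\Pi\subset\mathcal{P}(X)$. Convergence in the L\'evy--Prokhorov metric is equivalent to weak convergence, which gives $\int f\,d\mu_{n_k}\to\int f\,d\mu$ for all $f\in C_B(X)$.

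The main obstacle is only the care needed with the ``$n\geq N$'' version of tightness and the metrizability of the weak topology. Both are standard. If one prefers to avoid metrizability, an alternative is a direct route for $X=\mathbb{R}^2_+$: use Helly's selection theorem on the distribution functions to obtain a vaguely convergent subsequence, then use tightness to show no mass escapes, so the limit is a probability measure and the convergence is weak.
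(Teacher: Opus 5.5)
Your proof is correct and follows essentially the route the paper intends: the paper gives no separate argument, treating the corollary as immediate from Prokhorov's theorem plus metrizability of the weak topology on $\mathcal{P}(X)$, so that relative compactness yields a convergent subsequence — this is your second step. Your first step, upgrading the paper's ``for $n\geq N$'' tightness to tightness of the whole family via Ulam's theorem, is correct extra care that the paper leaves implicit; in its application the issue does not arise, since the proof in appendix \ref{tightness} establishes the tail bound for all $n$.
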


\noindent Here, $X = \mathbb{R}^2_+$ is a closed subset of $\mathbb{R}^2$ and hence a complete separable metric space, so Prokhorov's theorem applies directly. Corollary~\ref{cor:prokhorov} is the key tool we will use in proving the existence of accumulation points.

It is often convenient to prove weak convergence of probability measures from the side of their respective characteristic functions, for which we have the following theorem:
\begin{thm}[L\'evy's continuity theorem,~\cite{Williams1991}]
Suppose we have a sequence of $k$-dimensional random variables $\{X_n\}_{n\in\mathbb{N}}$ distributed according to the probability measures $\{\mu_n\}$, and a corresponding sequence of characteristic functions $\phi_{\mu_n}(t) = \mathbb{E}[e^{i tX_n}]$. If the sequence of characteristic functions converges pointwise to some function $\phi_{\mu}(t)$ for all $t \in \mathbb{R}^k$, then the following statements are equivalent:
\begin{center}
\begin{tabular}{rl}
    i. & $X_n$ converges weakly to some random variable $X$ ($\mu_n \to\mu$ weakly).\\[0.5em]
    ii. & $\{X_n\}_{n\in\mathbb{N}}$ is tight. \\[0.5em]
    iii. & $\phi_{\mu}(t)$ is a characteristic function of some random variable $X$.\\[0.5em]
    iv. & $\phi_{\mu}(t)$ is a continuous function of $t$.\\[0.5em]
    v. & $\phi_{\mu}(t)$ is continuous at $t = 0$.
\end{tabular}
\end{center}
\label{levycont}
\end{thm}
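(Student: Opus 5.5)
The plan is to prove the cycle of implications $\text{i}\Rightarrow\text{iii}\Rightarrow\text{iv}\Rightarrow\text{v}\Rightarrow\text{ii}\Rightarrow\text{i}$. Throughout we assume the pointwise limit $\phi_{\mu_n}(t)\to\phi_\mu(t)$ for all $t\in\mathbb{R}^k$. Only the step $\text{v}\Rightarrow\text{ii}$ requires a genuine estimate; the others are either soft or reduce to results already quoted, namely Prokhorov's theorem (theorem~\ref{thm:prokhorov}) and the uniqueness of probability measures given their characteristic functions, recalled in section~\ref{opemeasures}.

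The easy implications go as follows.
\begin{itemize}[leftmargin=*]
\item For $\text{i}\Rightarrow\text{iii}$, the functions $x\mapsto\cos(t\cdot x)$ and $x\mapsto\sin(t\cdot x)$ are bounded and continuous. Weak convergence $\mu_n\to\mu$ then gives $\phi_{\mu_n}(t)\to\int e^{it\cdot x}\,d\mu(x)$. Since pointwise limits are unique, $\phi_\mu$ is the characteristic function of $\mu$.
\item For $\text{iii}\Rightarrow\text{iv}$, any characteristic function $\int e^{it\cdot x}d\mu$ is continuous in $t$ by dominated convergence, with dominating function $1$.
\item The implication $\text{iv}\Rightarrow\text{v}$ is trivial.
\end{itemize}

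The main step, and the expected obstacle, is $\text{v}\Rightarrow\text{ii}$. The tool is the one-dimensional truncation inequality. For a probability measure $\mu$ on $\mathbb{R}$ and $u>0$, Fubini gives
\begin{equation*}
\frac{1}{u}\int_{-u}^{u}\bigl(1-\phi(t)\bigr)\,dt = 2\int\Bigl(1-\frac{\sin(ux)}{ux}\Bigr)d\mu(x)\;\geq\; \mu\bigl(\{|x|>2/u\}\bigr),
\end{equation*}
where the last step uses that $1-\sin(y)/y\geq 1/2$ for $|y|\geq 2$ and that the integrand is nonnegative. In $k$ dimensions we apply this coordinatewise, to the marginals obtained by setting $t=s e_j$, and take a union bound over $j=1,\dots,k$.

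Now fix $\epsilon>0$. Continuity of $\phi_\mu$ at $0$, together with $\phi_\mu(0)=\lim_n\phi_{\mu_n}(0)=1$, lets us choose $u$ small enough that $\frac{1}{u}\int_{-u}^{u}|1-\phi_\mu(se_j)|\,ds<\epsilon/(2k)$ for every $j$. Since $|\phi_{\mu_n}|\le 1$, dominated convergence on $[-u,u]$ shows that the same averages for $\phi_{\mu_n}$ converge to those for $\phi_\mu$. Hence for $n\ge N$ we get $\mu_n(\{\|x\|_\infty>2/u\})<\epsilon$, and the compact cube $[-2/u,2/u]^k$ witnesses tightness in the sense of definition~\ref{deftightness}. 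Finitely many initial terms can be absorbed by enlarging the cube if needed. The care needed here is in getting the inequality with the right constants and in justifying the limit interchange.

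Finally, for $\text{ii}\Rightarrow\text{i}$, corollary~\ref{cor:prokhorov} shows that every subsequence of $\{\mu_n\}$ has a further subsequence converging weakly to some $\mu'$. By the argument for $\text{i}\Rightarrow\text{iii}$, the characteristic function of $\mu'$ equals the pointwise limit $\phi_\mu$. By uniqueness of characteristic functions, all such subsequential limits coincide with a single measure $\mu$. A sequence in a metrizable space all of whose subsequences have sub-subsequences converging to the same point converges to that point, so $\mu_n\to\mu$ weakly. This closes the cycle.
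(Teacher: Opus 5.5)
Your proof is correct. The paper gives no proof of this theorem of its own; it quotes it from \cite{Williams1991}. Your cycle $\text{i}\Rightarrow\text{iii}\Rightarrow\text{iv}\Rightarrow\text{v}\Rightarrow\text{ii}\Rightarrow\text{i}$ is the standard textbook argument: the truncation inequality on coordinate marginals with a union bound gives $\text{v}\Rightarrow\text{ii}$, and Prokhorov, uniqueness of characteristic functions and the subsequence principle give $\text{ii}\Rightarrow\text{i}$.
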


\subsubsection{Existence, compactness, and Klein four-group symmetry}

Let $\{(\mathcal{G}_n,\De_n)\}$ denote a heavying sequence. For each correlator, we can take the inverse Fourier transform along the non-compact universal cover of $T^2_{\bs{\rho}_*}$ with $\bs{\rho}_* = (3-2 \sqrt{2},3-2\sqrt{2})$ to produce a sequence of OPE measures at the self-dual point denoted $\{\mu_n(\bs{h})\}$. We introduce a rescaled measure sequence $\{\nu_n(\bs{\eta})\}$ where $\nu_n(\bs{\eta}) = \mu_n(\De_n\bs{\eta})$. Since $\De_n>0$, these measures share the same support with $\bs{\eta} \in \mathbb{R}^2_+$. To prove that the regularization of $\{(\mathcal{G}_n,\De_n)\} \to \{\nu_n\}$ is good, one needs to show that there exist accumulation points in any heavying sequence of rescaled OPE measures. To do this, we will use Prokhorov's theorem, with the key intermediate result being the following proposition:

\begin{prop}
\label{tightnessofmeasuresequence}
$\{\nu_n\}_{n \in \mathbb{N}}$ is a tight sequence of probability measures on $\bs{\eta} \in \mathbb{R}^2_+$. 
\end{prop}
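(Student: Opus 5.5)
Tightness will follow from a uniform first-moment bound via Markov's inequality. The first moment of the base OPE measure at the self-dual point is fixed by crossing, $\omega_1 = \omega_{1,0}(\bs{0}) = \omega_{0,1}(\bs{0}) = \De_n/\sqrt{2}$, as recalled in the discussion of lemma \ref{sdmomentbounds}. I would first recall why. Differentiating the crossing relation (\ref{stcrossing}) with respect to $\log\rho$ (equivalently $x$) at $\bs{x}=\bs{0}$, and using that the self-dual point is fixed by $\bs{\rho}\to\hat{\bs{\rho}}$, gives $\partial_x \lambda_n(\bs{0}) = 1/2$. The definition of the OPE measure as an exponential family then identifies this derivative with a multiple of $\omega_{1,0}(\bs{0})/\De_n$. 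In $\rho$-coordinates this is the statement $\omega_{1,0}(\bs 0)=\De_n/\sqrt2$. Since this is stated as established earlier, I will simply invoke it.

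Given this, the rescaled measures satisfy
\ba
\int_{\mathbb{R}^2_+} \eta\, d\nu_n(\bs{\eta}) = \frac{\omega_{1,0}(\bs 0)}{\De_n} = \frac{1}{\sqrt{2}}, \qquad \int_{\mathbb{R}^2_+} \bar\eta\, d\nu_n(\bs{\eta}) = \frac{1}{\sqrt{2}},
\ea
uniformly in $n$. Both components are nonnegative, since $\mu_n$ is supported on $\{\bs 0\}\cup[\tfrac{d-2}{2},\infty)^2$ by unitarity. For $\epsilon>0$, take the compact set $K_\epsilon=[0,M]^2$. Markov's inequality on each coordinate then gives
\ba
\nu_n(\mathbb{R}^2_+\setminus K_\epsilon) \le \nu_n(\eta>M)+\nu_n(\bar\eta>M) \le \frac{\sqrt{2}}{M},
\ea
so choosing $M = 2\sqrt2/\epsilon$ gives $\nu_n(\mathbb{R}^2_+\setminus K_\epsilon)<\epsilon$ for all $n$. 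This is tightness in the sense of definition \ref{deftightness}, with $N=1$.

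I also need each $\nu_n$ to be a genuine Borel probability measure on $\mathbb{R}^2_+$. This holds because $\mathcal{G}_n$ is normalized to $1$ at $\bs{\rho}_*$ and the coefficients $a_{\bs h}$ are positive. Rescaling by $\De_n>0$ preserves total mass, so no further work is needed there.

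The main obstacle is making the first-moment identity rigorous, in particular justifying differentiation of the radial monomial sum term by term at $\bs{\rho}_*$. This follows from absolute convergence of (\ref{radmonomialdecomp}) in the open unit disk, which gives locally uniform convergence of derivatives by the standard estimates of appendix \ref{convergence}. The derivative of the prefactor $|4\bs\rho/(1-\bs\rho)^2|^{2\De}$ at $\rho_*$ is an elementary computation.

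An alternative route, which also supplies the compact support in $\mathcal{D}$ used later, is to bound the second moment via lemma \ref{sdmomentbounds}. That gives $\int \eta^2 d\nu_n \le \omega_2/\De_n^2 \to 3/4+\tfrac14 = 1$, and Chebyshev's inequality then yields the same tightness. The first-moment argument alone suffices for the proposition.
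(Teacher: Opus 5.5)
Your proof is correct, and it is more economical than the paper's. The paper proves tightness (appendix \ref{tightness}) from the uniform second-moment bound $\omega_2\le\omega_2^{(+)}$ of lemma \ref{sdmomentbounds}, i.e.\ $\int\eta^2\,d\nu_n\le C$, followed by Markov's inequality with $R=2\sqrt{C/\epsilon}$. That is exactly what you list as your ``alternative route.''

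Your main route instead uses only the exact identity $\int\eta\,d\nu_n=\int\bar\eta\,d\nu_n=1/\sqrt2$, together with $\eta,\bar\eta\ge 0$. The identity follows from the first-order derivative of crossing at $\bs\rho_*$. This avoids the Stieltjes/Hankel positivity and the third-order crossing constraint needed to produce $\omega_2^{(+)}$. It also gives the explicit choice $M=2\sqrt2/\epsilon$, valid for every $n$.

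In exchange, the paper's route gives a quadratic rather than linear tail and uniform integrability of $\eta,\bar\eta$ along the sequence. As a result, the mean $1/\sqrt2$ passes to any weak limit $\nu$, which a first-moment bound alone does not guarantee. The same second-moment control is also what the paper cites for the error term in theorem \ref{thm:maxheavy}.

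One correction to your closing remark: the second-moment bound does not by itself give $\mathrm{supp}(\nu)\subset\mathcal{D}$. The paper obtains compact support separately, in proposition \ref{prop:compactsupport}. That argument uses the MGF bound of proposition \ref{mgfbound}, which rests on the gradient bound of lemma \ref{uniformgradbound}.
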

\begin{proof}
    See appendix \ref{prop:tight} for the complete proof using uniform moment bounds on $\{\nu_n\}$ and Markov's inequality.
\end{proof}
\noindent As a consequence, corollary \ref{cor:prokhorov} of Prokhorov's theorem applies and there exists a subsequence $\{\nu_{n_k}\}_{k\in\mathbb{N}}$ which weakly converges to a \textit{classical} measure $\nu$. Since such a convergent subsequence always exists, we will condense notation and denote $\{\nu_{n_k}\}_{k\in\mathbb{N}} \to \{\nu_k\}_{k\in\mathbb{N}}$ as one of these convergent sequences and assume there is only one accumulation point. 

We can now consider the more non-trivial properties of a classical measure. Let $\mathcal{D} = [0,\sqrt{2}] \times [0,\sqrt{2}]$ denote the square in $\mathbb{R}_+^2$ parametrized by $\bs{\eta} = (\eta,\bar{\eta})$. We define transformations $\mathcal{R},\mathcal{C}:\mathcal{D} \to \mathcal{D}$ by
\ba
\mathcal{R}\circ\bs{\eta} = (\sqrt{2} -\bar{\eta}, \sqrt{2} - \eta) \quad\text{and}\quad \mathcal{C}\circ\bs{\eta} = \bar{\bs{\eta}}.
\ea
Geometrically, $\mathcal{R}$ is a reflection through the anti-diagonal $\{\eta + \bar{\eta} = \sqrt{2}\}$, while $\mathcal{C}$ is a reflection through the diagonal $\{\eta = \bar{\eta} \}$. These generate a group $G =\langle \mathcal{R},\mathcal{C}\rangle$ isomorphic to the Klein four-group $\mathbf{V}_4$, with presentation
\ba
G = \langle \mathcal{R},\mathcal{C}| \mathcal{R}^2 = \mathcal{C}^2 = (\mathcal{R}\mathcal{C})^2 = 1 \rangle.
\ea
The group $G$ acts on the space $\mathcal{P}(\mathcal{D})$ of probability measures on $\mathcal{D}$ by pushforward: for $g \in G$ and $\nu \in \mathcal{P}(\mathcal{D})$, define 
\ba
g_\sharp\nu(A) = \nu( g^{-1} \circ A) \quad \forall A \subset \mathcal{D}.
\ea
A measure $\nu$ is $G$-invariant if $g_\sharp\nu = \nu$ for all $g \in G$. We denote the space of $G$-invariant probability measures by $\mathcal{P}(\mathcal{D})^G$. With this notation established, we can state the main structural result as follows:
\begin{thm} (Properties of classical measures)
\label{thm:maxheavy}
Let $\{\nu_n\}_{n \in \mathbb{N}}$ denote a heavying sequence of rescaled OPE measures which weakly converges to the classical measure $\nu$. Then $\nu \in \mathcal{P}(\mathcal{D})^G$. That is:
\begin{center}
\begin{tabular}{rl}
    i. & $\mathrm{supp}(\nu) \subset \mathcal{D}$, \\[0.5em]
    ii. & $\mathcal{R}_\sharp \nu = \nu$ \quad (Reflection symmetry), \\[0.5em]
    iii. & $\mathcal{C}_\sharp \nu = \nu$ \quad (Chiral symmetry).
\end{tabular}
\end{center}
\end{thm}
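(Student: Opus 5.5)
The plan is to prove the three items separately. Each time, I first establish the property (or an asymptotic version of it) for the finite-$n$ rescaled measures $\nu_n$, and then pass it to the weak limit. Proposition \ref{tightnessofmeasuresequence} and Corollary \ref{cor:prokhorov} guarantee that the limit $\nu$ exists.

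\emph{Chiral symmetry (iii).} This is the easy item. The monomial coefficients satisfy $a_{\bs{h}}=a_{\bar{\bs{h}}}$, and the self-dual point has $\rho_*=\bar\rho_*$. Hence each $\mu_n$, and so each $\nu_n$, is exactly invariant under $\mathcal{C}_\sharp$. Since $\mathcal{C}$ is continuous, $\int f\, d(\mathcal{C}_\sharp\nu_n)=\int f\circ\mathcal{C}\, d\nu_n$. Weak convergence then gives $\mathcal{C}_\sharp\nu=\nu$.

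\emph{Support (i).} Because $\nu_n\in\mathcal{P}(\mathbb{R}^2_+)$, only the upper edges $\eta>\sqrt2$ (and likewise $\bar\eta>\sqrt2$) need to be excluded. I will use an exponential Chebyshev bound along the real causal diamond. For real $\bs{x}$ we have $(\bs\rho/\bs\rho_*)^{\bs h}=e^{\sqrt2(f(x)h+f(\bar x)\bar h)}$. Take $\bar x=0$ and $x>0$, and apply the finite-$n$ gradient bound of lemma \ref{uniformgradbound}, which gives $\lambda_n\le \max(0,x)+\max(0,\bar x)$ exactly. This yields
\begin{equation}
\nu_n\big(\{\eta\ge\sqrt2+\delta\}\big)\le e^{-\sqrt2 f(x)\Delta_n(\sqrt2+\delta)}\,\mathcal{G}_n(x,0)\le \exp\!\big(\Delta_n[\,x-2f(x)-\sqrt2\,\delta f(x)\,]\big).
\end{equation}
A short computation gives $f(0)=0$ and $f'(0)=1/2$. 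Therefore $x-2f(x)=O(x^2)$, while $\sqrt2\delta f(x)\approx \delta x/\sqrt2$. Choosing $x$ small enough, depending on $\delta$, makes the bracket strictly negative, so the mass above $\sqrt2+\delta$ decays exponentially. By the portmanteau theorem, the open set $\{\eta>\sqrt2+\delta\}$ satisfies $\nu(\cdot)\le\liminf\nu_n(\cdot)=0$. Letting $\delta\to0$, and using chirality for $\bar\eta$, gives $\mathrm{supp}(\nu)\subset\mathcal{D}$.

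\emph{Reflection symmetry (ii).} This is the main step, and I will work with characteristic functions via L\'evy's theorem \ref{levycont}. Evaluate the crossing relation \eqref{stcrossing} on the fiber at $\bs\rho=\bs\rho_* e^{i2\pi\tilde{\bs t}/\Delta_n}$. In the variable $\log\rho$, the map $\rho\mapsto\hat\rho$ fixes $\rho_*$ and has derivative $-1$ there. Hence
\begin{equation}
\log\hat\rho=\log\rho_*-i2\pi\tilde t/\Delta_n+O(\Delta_n^{-2}).
\end{equation}
The prefactor has $\partial_{\log\rho}\log\frac{4\rho}{(1-\rho)^2}=\frac{1+\rho}{1-\rho}=\sqrt2$ at $\rho_*$. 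So the factor contributes $\exp\big(i2\pi\sqrt2(\tilde t+\tilde{\bar t})+O(1/\Delta_n)\big)$. Using $\mathcal{G}_n(\bs\rho_*)=1$, this gives
\begin{equation}
\phi_{\nu_n}(\tilde{\bs t})=e^{i2\pi\sqrt2(\tilde t+\tilde{\bar t})}\,\mathcal{G}_n\big(\bs\rho_*e^{-i2\pi\tilde{\bs t}/\Delta_n+O(\Delta_n^{-2})}\big)\,(1+o(1)),
\end{equation}
locally uniformly in $\tilde{\bs t}$.

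The obstacle I expect is controlling the $O(\Delta_n^{-2})$ displacement of the argument. That displacement moves off the torus fiber, where $|\mathcal{G}_n|$ is no longer bounded by $1$, and it also shifts the phase. I plan to bound the difference by the derivative in $\log\bs\rho$. That derivative is the tilted first moment at nearby radii, which is $O(\Delta_n)$ by lemma \ref{sharpmomentbound} (the radius changes only by $e^{O(\Delta_n^{-2})}$). Multiplying by the $O(\Delta_n^{-2})$ displacement gives an $O(1/\Delta_n)$ error, and this controls the shift.

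Taking $k\to\infty$ along the convergent subsequence then gives
\begin{equation}
\phi_\nu(\tilde{\bs t})=e^{i2\pi\sqrt2(\tilde t+\tilde{\bar t})}\phi_\nu(-\tilde{\bs t}).
\end{equation}
The right-hand side is the characteristic function of the pushforward of $\nu$ under $\bs\eta\mapsto(\sqrt2-\eta,\sqrt2-\bar\eta)=\mathcal{R}\circ\mathcal{C}(\bs\eta)$. By uniqueness of characteristic functions, $(\mathcal{R}\mathcal{C})_\sharp\nu=\nu$. Combined with (iii), this gives $\mathcal{R}_\sharp\nu=\nu$, and hence $\nu\in\mathcal{P}(\mathcal{D})^G$.
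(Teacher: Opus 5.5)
Your proposal is correct, and its skeleton is the paper's. Chirality is shown at finite $n$, the support comes from an exponential Chebyshev bound fed by lemma \ref{uniformgradbound}, and crossing on the fiber at scale $1/\De_n$ gives $\phi_{\nu_n}(\tilde{\bs t})=e^{i2\pi\sqrt2(\tilde t+\tilde{\bar t})}\phi_{\nu_n}(-\tilde{\bs t})+O(1/\De_n)$. That last identity is literally the paper's $\phi_{\nu_n}=\phi_{(\mathcal{RC})_\sharp\nu_n}+O(1/\De_n)$. The technical routes differ in the following ways.

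For (i), the paper pushes the MGF bound of proposition \ref{mgfbound} through the weak limit (truncation plus monotone convergence) and then applies Chernoff to $\nu$ with $c\to\infty$. You instead apply Chernoff to $\nu_n$ with $x$ tuned to $\delta$ and then use the portmanteau theorem. Your route also yields the finite-$n$ estimate $\nu_n(\eta\ge\sqrt2+\delta)\le e^{-c(\delta)\De_n}$.

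For (ii), the paper expands each crossed-channel monomial with lemmas \ref{qtermasym}--\ref{phitermasym} and cites only the second-moment bound of lemma \ref{sdmomentbounds}. Since $|\hat\rho|=\rho_*e^{O(\De_n^{-2})}>\rho_*$, the per-monomial error is really $e^{O(h/\De_n^2)}-1$. This is not uniformly $O(\eta/\De_n)$ in the tail, so the paper implicitly also needs an exponential moment. Your mean-value estimate on $\mathcal{G}_n$ itself handles the off-fiber displacement cleanly, using lemma \ref{sharpmomentbound} together with $\mathcal{G}_n(|\bs\rho|)\le e^{O(1/\De_n)}$ at radius $\rho_*e^{O(\De_n^{-2})}$. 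When you write it up, cite lemma \ref{uniformgradbound} explicitly for the latter bound. Also note that crossing is continued analytically to a complex neighborhood of $\bs\rho_*$ in the $\log\bs\rho$ variables.

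Finally, you pass to the limit using only $\phi_{\nu_n}(\pm\tilde{\bs t})\to\phi_\nu(\pm\tilde{\bs t})$ and uniqueness of characteristic functions on $\mathbb{R}^2$. You therefore need neither L\'evy's continuity theorem nor the compact support invoked in the paper's Riesz step. In fact your route makes (i) a corollary: $(\mathcal{RC})_\sharp\nu=\nu$ and $\nu(\mathbb{R}^2_+)=1$ already force $\nu$ onto $\mathbb{R}^2_+\cap(-\infty,\sqrt2]^2=\mathcal{D}$.
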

\begin{proof}
We prove $(i)$ in proposition \ref{prop:compactsupport}, and $(iii)$ is implied by the chiral symmetry of the correlator normalized at the self-dual point. For $(ii)$, we first use the crossing equation to establish $(\mathcal{RC})_\sharp \nu = \nu$, and then compose with $(iii)$ to obtain $\mathcal{R}_\sharp \nu = \mathcal{C}_\sharp (\mathcal{RC})_\sharp \nu = \nu$. Using lemma \ref{qtermasym} and \ref{phitermasym} to approximate the monomial in the crossed channel at $\rho_0 = \rho_* = 3-2\sqrt{2}$, setting $\bs{\rho} = (\rho_* e^{i 2\pi  \tilde{\bs{t}}/\De_n } ) $, and writing the crossing equation in terms of rescaled OPE measures, we have
\ba
\phi_{\nu_n}(\tilde{\bs{t}}) = \int e^{i2\pi \bs{\eta}\cdot \tilde{\bs{t}}} d\nu_n(\bs{\eta}) &= \int e^{i2\pi (\mathcal{C}\circ \mathcal{R} \circ \bs{\eta})\cdot \tilde{\bs{t}}}\left(1 + O\left(\frac{\tilde{t}^2 \eta}{\De_n} \right) +O\left(\frac{\tilde{\bar{t}}^2 \bar{\eta}}{\De_n} \right)  \right)d\nu_n(\bs{\eta}) \\
&= \phi_{(\mathcal{RC})_{\sharp} \nu_n }(\tilde{\bs{t}})  +O(1/\De_n),
\ea
where we take $\{\nu_n\}$ to be a subsequence that converges weakly to $\nu$ as $n\to\infty$, and we have used the uniform moment bounds of lemma \ref{sdmomentbounds} to bound the error term. Since the error term decays for each fixed $\tilde{\bs{t}} \in \mathbb{R}^2$ as $n \to \infty$, $\phi_{\nu_n}$ and $\phi_{(\mathcal{RC})_{\sharp}\nu_n}$ both converge pointwise to $\phi_{\nu}$. Theorem \ref{levycont} then implies weak convergence of $\{ (\mathcal{RC})_{\sharp}\nu_{n} \} \to \nu $. Since $\nu$ and $(\mathcal{RC})_\sharp\nu$ are both supported on the compact set $\mathcal{D}$, equality against $C_b(\mathcal{D}) = C(\mathcal{D})$ implies equality as Borel measures by the Riesz representation theorem. Composing with $(iii)$ gives $\mathcal{R}_\sharp\nu = \mathcal{C}_\sharp(\mathcal{RC})_\sharp\nu = \nu$.
\end{proof}
We can systematically construct an element of $\mathcal{P}(\mathcal{D})^G$ from a probability measure on a fundamental domain. A fundamental domain for the $G$-action is
\ba
\mathbf{F} = \{ (\eta,\bar{\eta}) \in \mathcal{D} \mid \eta \leq \bar{\eta}, \; \eta + \bar{\eta} \leq \sqrt{2}\},
\ea
which is a closed triangle with vertices at $(0,0)$, $(0,\sqrt{2})$, and $(\sqrt{2}/2,\sqrt{2}/2)$. Any element of $\mathcal{P}(\mathcal{D})^G$ is uniquely determined by its restriction to $\mathbf{F} $. Conversely, given a probability measure $\sigma\in \mathcal{P}(\mathbf{F})$, the symmetrization
\ba
\nu =\frac{1}{|G|} \sum_{g \in G} g_\sharp \sigma
\ea
uniquely defines an element of $\mathcal{P}(\mathcal{D})^G$, where $|G| = 4$.

\begin{cor}
\label{cor:momentbounds}
The moments $\tilde{\omega}_{j,k} = \int \eta^j \bar{\eta}^k d\nu$ of any classical measure $\nu \in \mathcal{P}(\mathcal{D})^G$ satisfy the sharp upper bound
\ba
\tilde{\omega}_{j,k} \leq 2^{(j+k-2)/2} \quad \forall \; j+k \geq 1,
\ea
saturated by the measure produced from the $G$-orbit of $(0,0) \in \mbf{F}$, $\nu^{(+)} = \frac{1}{2}\left(\delta_{(0,0)} + \delta_{(\sqrt{2},\sqrt{2})}\right)$. For pure moments, the lower bound $\tilde{\omega}_{k,0} =\tilde{\omega}_{0,k}\geq 2^{-k/2}$ is saturated by the measure produced from the $G$-orbit of $(1/\sqrt{2},1/\sqrt{2}) \in \mbf{F}$, $\nu^{(0)} = \delta_{(1/\sqrt{2},1/\sqrt{2})}$. Combining these gives
\ba
2^{-k/2} \leq \tilde{\omega}_{k,0} \leq 2^{(k-2)/2} \quad \forall \; k \geq 1,
\ea
which recovers the analogous two-sided bounds for $\eta,\bar{\eta}$ to those derived for scaling moments in~\cite{Poland:2024hvb}. For mixed moments with $j,k \geq 1$, the lower bound is $\tilde{\omega}_{j,k} \geq 0$, saturated by the measure produced from the $G$-orbit of $(0,\sqrt{2})\in\mbf{F}$, $\nu^{(-)} = \frac{1}{2}\left(\delta_{(\sqrt{2},0)} + \delta_{(0,\sqrt{2})}\right)$.
\end{cor}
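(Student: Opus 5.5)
The plan is to reduce every moment of a $G$-invariant measure to the integral of a $G$-averaged integrand. Then I bound that averaged integrand pointwise on $\mathcal{D}$, using only theorem \ref{thm:maxheavy}: $\mathrm{supp}(\nu)\subset\mathcal{D}$, $\mathcal{R}_\sharp\nu=\nu$ and $\mathcal{C}_\sharp\nu=\nu$.

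\textbf{Averaging step.} Fix $f(\bs{\eta})=\eta^j\bar\eta^k$ with $n=j+k\geq 1$. By $G$-invariance, $\int f\,d\nu=\int \bar f\,d\nu$, where $\bar f=\frac{1}{4}\sum_{g\in G} f\circ g$. For a point $(a,b)\in\mathcal{D}$ the orbit is
\[
(a,b),\quad (b,a),\quad (\sqrt2-b,\sqrt2-a),\quad (\sqrt2-a,\sqrt2-b).
\]

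\textbf{Upper bound.} Apply weighted AM--GM to each term, $a^jb^k\leq (j a^n+k b^n)/n$. Averaging this with its $\mathcal{C}$-image $b^ja^k$ removes the dependence on $j,k$ separately and gives at most $(a^n+b^n)/2$. The same argument on the reflected pair gives at most $\big((\sqrt2-a)^n+(\sqrt2-b)^n\big)/2$. Hence
\[
\bar f(a,b)\leq \tfrac14\Big[a^n+(\sqrt2-a)^n+b^n+(\sqrt2-b)^n\Big].
\]
For $n\geq1$ the function $t\mapsto t^n+(\sqrt2-t)^n$ is convex on $[0,\sqrt2]$. Its maximum is therefore attained at an endpoint and equals $2^{n/2}$. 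So $\bar f\leq 2^{n/2}/2=2^{(n-2)/2}$ pointwise on $\mathcal{D}$. Integrating against the probability measure $\nu$ gives $\tilde\omega_{j,k}\leq 2^{(j+k-2)/2}$.

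Saturation by $\nu^{(+)}$ is a direct check. Its moments are $\frac12(\sqrt2)^{j+k}$, since the atom at the origin contributes zero when $n\geq1$.

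\textbf{Pure-moment lower bound.} First fix the mean. $\mathcal{R}$-invariance gives $\int\eta\,d\nu=\int(\sqrt2-\bar\eta)\,d\nu$, and $\mathcal{C}$-invariance gives $\int\bar\eta\,d\nu=\int\eta\,d\nu$. Together these force $\tilde\omega_{1,0}=\tilde\omega_{0,1}=1/\sqrt2$. Then Jensen's inequality for the convex function $t\mapsto t^k$ on $\mathbb{R}_+$ gives $\tilde\omega_{k,0}\geq \tilde\omega_{1,0}^k=2^{-k/2}$. Equality holds exactly for the point mass $\nu^{(0)}=\delta_{(1/\sqrt2,1/\sqrt2)}$, which is $G$-invariant. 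The $\tilde\omega_{0,k}$ case follows by chiral symmetry.

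\textbf{Mixed-moment lower bound.} For $j,k\geq1$ the bound $\tilde\omega_{j,k}\geq0$ is immediate from $\mathrm{supp}(\nu)\subset\mathcal{D}\subset\mathbb{R}^2_+$. It is saturated by $\nu^{(-)}$, because every point of its support has one vanishing coordinate.

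The only step needing real thought is the upper bound. A naive pointwise bound $\eta^j\bar\eta^k\leq 2^{n/2}$ loses the factor $1/2$. The key is to symmetrize over $\mathcal{C}$ first, to pass from mixed monomials to pure powers via AM--GM, and then over $\mathcal{R}$, to exploit convexity of $t^n+(\sqrt2-t)^n$. I expect this ordering to work, and the remaining steps are routine.
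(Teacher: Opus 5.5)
Your proof is correct and follows essentially the same route as the paper. You average the monomial over the $G$-orbit, pass from mixed to pure powers with weighted AM--GM (equivalent here to the paper's rearrangement inequality $a^jb^k+b^ja^k\le a^{j+k}+b^{j+k}$), and use convexity of $t^n+(\sqrt2-t)^n$ on $[0,\sqrt2]$. The only differences are minor: you bound the $G$-averaged integrand pointwise on all of $\mathcal{D}$ instead of reducing to point masses on $\mathbf{F}$, and you get the pure-moment lower bound from Jensen with the symmetry-fixed mean $\tilde\omega_{1,0}=1/\sqrt2$ instead of minimizing the averaged integrand at $a=b=\sqrt2/2$. Both are valid, though your claim that equality holds ``exactly'' for $\nu^{(0)}$ needs $k\ge2$, since for $k=1$ every $G$-invariant measure attains it.
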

\begin{proof} Since $\nu$ is a $G$-orbit average of its restriction $\sigma \in \mathcal{P}(\mathbf{F})$, the moment $\tilde{\omega}_{j,k}$ equals the $G$-orbit average of $\eta^j\bar{\eta}^k$ integrated against $\sigma$. For any measurable function $f: \mbf{F}\to\mathbb{R}$, $\min_{\bs{\eta} }f(\bs{\eta}) \leq \int f \,d\sigma \leq\max_{\bs{\eta} }f(\bs{\eta})$, so $\tilde{\omega}_{j,k}$ is extremized when $\sigma = \delta_{\bs{\eta}}$ for some $\bs{\eta} \in \mathbf{F}$. The rearrangement inequality implies $a^jb^k + b^ja^k \leq a^{j+k} + b^{j+k}$, and applying this to both chiral pairs in the orbit average reduces the problem to optimizing $g_n(a) + g_n(b)$, where $g_n(x) = \frac{1}{2}[x^n + (\sqrt{2}-x)^n]$ and $n = j+k$. The function $g_n$ is convex on $[0,\sqrt{2}]$ with its maximum $2^{(n-2)/2}$ at the endpoints. The upper bound follows from $a = b = 0$. Setting $k=0$, the pure moments have their minimum at $a = b = \sqrt{2}/2$. For mixed moments, $\tilde{\omega}_{j,k} \geq 0$ is immediate from non-negativity of the integrand.
\end{proof}

\subsubsection{Isotropy subgroup classification}

The action of $G$ partitions $\mathcal{D}$ into orbits. A generic point has a four-element orbit, but $G$ has nontrivial isotropies along different subsets of $\mathcal{D}$. We say a subset $S \subset \mathcal{D}$ is $H$-isotropic for a subgroup $H \subset G$ if $S \subset \mathcal{D}^H$, where $\mathcal{D}^H = \{\bs{\eta} \in \mathcal{D} \mid h \circ \bs{\eta} = \bs{\eta} \text{ for all } h \in H\}$ is the fixed-point set of $H$. Elements of $\mathcal{P}(\mathcal{D})^G$ admit a classification by isotropy type. A measure $\nu \in \mathcal{P}(\mathcal{D})^G$ is of Type $k$ according to the isotropy of its support, as enumerated in table~\ref{tab:orbits}:
\begin{table}[h]
\centering
\begin{tabular}{c|c|c|c}
Type & Support & Isotropy & Orbit size \\
\hline
0 & $\{(\sqrt{2}/2, \sqrt{2}/2)\} = \mathcal{D}^G$ & $G$ & 1 \\
1 & $\{\eta = \bar{\eta}\} = \mathcal{D}^{\langle\mathcal{C} \rangle}$ & $\langle \mathcal{C} \rangle$ & 2 \\
2 & $\{\eta+ \bar{\eta} = \sqrt{2}\} =\mathcal{D}^{\langle\mathcal{R} \rangle} $ & $\langle \mathcal{R} \rangle$ & 2 \\
3 & $\mathcal{D}$ & $\langle 1\rangle$ & 4 \\
\end{tabular}
\caption{Classification of $G$-orbits in $\mathcal{D} = [0,\sqrt{2}]^2$ by isotropy subgroup.}
\label{tab:orbits}
\end{table}
These types form a hierarchy under inclusion. The fixed-point sets satisfy $\mathcal{D}^G \subset \mathcal{D}^{\langle \mathcal{C} \rangle}\subset \mathcal{D}$ and $\mathcal{D}^G \subset \mathcal{D}^{\langle \mathcal{R} \rangle} \subset \mathcal{D}$, so that a measure of Type 0 is also of Types 1, 2, and 3, while a measure of Type 1 or 2 is also of Type 3. Additionally, a measure that is both Type 1 and 2 is of Type 0, since $\mathcal{D}^G = \mathcal{D}^{\langle\mathcal{C}\rangle} \cap\mathcal{D}^{\langle\mathcal{\mathcal{R}}\rangle}$. We say a measure is \emph{purely} of Type $k$ if its support is $H$-isotropic for the corresponding subgroup but not for any larger subgroup. In section~\ref{phasetype}, we discuss how this isotropy classification extends to a classification of the different global phases exhibited by rate functions, along with their physical interpretations.

\subsubsection{Coarse-graining ambiguity and its resolution}

The classical measure is a well-defined maximally heavy observable. Moreso, it is defined by a canonical choice of rescaling. If we instead defined a different scaled measure sequence, for example $\mu_n(\De_n^\gamma\bs{\eta}) = \nu_n^{(\gamma)}(\bs{\eta})$, then for $\gamma<1$, the regularization $\{(\mathcal{G}_n,\De_n)\} \to \{\nu_n^{(\gamma)}\}$ would not be good, since the resulting measure sequence would not be tight and therefore would not necessarily have a convergent subsequence. On the other hand, if $\gamma > 1$, then the sequence would have a unique, but trivial, limiting measure given by a single delta mass at the origin. Our choice of $\gamma = 1$ ensures that the regularization is good, while not making it trivial in that it is forced to have a single trivial accumulation point. 

The act of rescaling the measure sequence in this way introduces a kind of ``coarse-graining" ambiguity which obscures the fine structure of the true OPE measure in the heavy limit. In this subsection, we will address this coarse-graining ambiguity by studying the characteristic function of the classical measure $\nu$, and provide a condition on measure sequences for when the domain of agreement between the characteristic function of $\nu$ and the correlator can be extended.

The inverse Fourier transform of a classical measure computes its unique characteristic function
\ba
\phi_\nu(\tilde{\bs{t}}) = \int  e^{i 2 \pi \tilde{\bs{t}}\cdot \bs{\eta}}d\nu(\bs{\eta}).
\ea
By the Paley-Wiener theorem, the compactness of the support of the measure ensures that the characteristic function is an entire function on $\mathbb{C}^2$. This characteristic function can also be obtained by taking the limit of the corresponding correlator subsequence under the change of variables:
\ba
\bs{\rho} = (\bs{\rho}_* e^{i2\pi \tilde{\bs{t}}/\De_n})=( \rho_* e^{i 2 \pi \tilde{t}/\Delta_n}, \rho_* e^{i 2 \pi \tilde{\bar{t}}/\Delta_n}),
\ea
where $\rho_* = 3-2\sqrt{2}$ is the self-dual point. 

Plugging this into the radial monomial expansion for the correlator and noting $\bs{h} = \De_n \bs{\eta}$, it is straightforward to see that
\ba
\label{pointwiseconvergence1}
\lim_{n \to \infty} \mathcal{G}_n( \rho_* e^{i 2\pi \tilde{\bs{t}} /\Delta_n}) = \lim_{n \to \infty} \int  e^{i 2 \pi \tilde{\bs{t}} \cdot \bs{\eta} } d\mu(\Delta_n \bs{\eta})=\lim_{n \to \infty} \int  e^{i 2 \pi \tilde{\bs{t}} \cdot \bs{\eta} } d\nu_n(\bs{\eta}) = \phi_\nu(\tilde{\bs{t}}).
\ea
This is pointwise convergent for each fixed $\tilde{\bs{t}}$: the correlator evaluated at a position-space deviation $\sim \tilde{\bs{t}}/\De_n$ from the self-dual point converges to $\phi_\nu(\tilde{\bs{t}})$. Moreover, since the rescaled measure sequence $\{\nu_n\}$ is tight, the sequence of characteristic functions $\{\phi_{\nu_n}\}$ is uniformly bounded and equicontinuous for $\tilde{\bs{t}}\in\mathbb{R}^2$.

A natural question is whether this pointwise convergence can be improved to allow $\tilde{\bs{t}}$ to grow with some power of $\De_n$, which would extend the domain of agreement to a larger neighborhood around the self-dual point. In equations, this is the question of whether
\ba
\lim_{n \to \infty} &\left(  \mathcal{G}_n(\rho_0 e^{i 2\pi \tilde{\bs{t}} / \De_n^{1-\beta} } )   - \phi_{\nu}(\De_n^{\beta}\tilde{\bs{t}})\right) \\ 
&=\lim_{n \to \infty} \left(  \phi_{\nu_n}(\De_n^{\beta}\tilde{ \bs{t}})  - \phi_{\nu}(\De_n^{\beta}\tilde{\bs{t}})\right)\stackrel{?}{=} 0 \quad \forall \tilde{\bs{t}} \in \mathbb{R}^2 \quad \text{with} \quad \beta \in \mathbb{R}.
\ea
For $\beta \leq0$, this holds trivially since we are probing position-space deviations that vanish faster than $\sim 1/\De_n$. For $\beta >0$, we are asking whether convergence extends to arguments $\tilde{\bs{t}}$ growing like $\De_n^{\beta}$ (or $\bs{t} \sim 1/\De_n^{1-\beta}$), which would establish agreement over position-space deviations of size $\sim 1/\De_n^{1-\beta} \gg 1/\De_n$ from the self-dual point. This requires additional assumptions on the measure sequence.

We can make this precise by introducing the finite-dimension pull-back of the classical measure:
\ba
\mu_{\De}(\bs{h}) \equiv \nu(\bs{h}/\De).
\ea
This pull-back measure satisfies many of the same properties as the classical measure, but now rescaled by $\Delta$. Namely, we have compact support $\mathrm{supp}(\mu_\Delta) \subset [0,\sqrt{2} \Delta]^2 $, reflection symmetry $\mu_\Delta(\bs{h}) = \mu_\Delta(\mathcal{R}\circ\bs{h}) = \mu_\Delta(\sqrt{2}\Delta - \bar{h},\sqrt{2}\De - h)$, and chiral symmetry $\mu_\Delta(\bs{h}) = \mu_\Delta(\bar{\bs{h}})$. Its characteristic function satisfies $\phi_{\mu_\De}(\tilde{\bs{t}}) = \phi_\nu(\De \tilde{\bs{t}})$, which for $\De = \De_n$ is precisely the quantity we compare to $\phi_{\mu_n}(\tilde{\bs{t}}) = \phi_{\nu_n}(\De_n \tilde{\bs{t}})$. 

Before establishing and proving our desired result, we state the convolution theorem for reference:
\begin{thm}[Convolution Theorem] 
\label{convthm}
Let $f(\bs{t}),g(\bs{t})$ be functions on $\mathbb{R}^n$, then
\ba
\mathcal{F}[f * g](\bs{\omega}) = \mathcal{F}[f](\bs{\omega}) \cdot \mathcal{F}[g](\bs{\omega}) \quad \text{and} \quad f(\bs{t})\cdot g(\bs{t}) = \mathcal{F}^{-1} [ \mathcal{F}[f] * \mathcal{F}[g]](\bs{t}).
\ea
\end{thm}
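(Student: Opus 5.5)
The plan is to prove the first identity directly from the integral definition, using Fubini's theorem, in the setting where $f,g \in L^1(\mathbb{R}^n)$. I then obtain the second identity from the first by Fourier inversion. Finally I extend both to the generality actually needed in this paper, where one factor is a finite positive measure: an OPE density $\mu'$, i.e.\ a sum of weighted delta functions, or a classical measure $\nu$. I fix the convention $\mathcal{F}[f](\bs{\omega}) = \int f(\bs{t})\, e^{-i2\pi \bs{\omega}\cdot\bs{t}}\, d\bs{t}$, with $\mathcal{F}^{-1}$ carrying the opposite sign, consistent with the characteristic functions $\phi_\nu(\tilde{\bs t}) = \int e^{i2\pi \tilde{\bs t}\cdot\bs\eta}\,d\nu$ used above. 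Putting $2\pi$ in the exponent means no normalization constants appear in either identity.

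\textbf{Step 1 (first identity).} For $f,g\in L^1$, the integrand $|f(\bs{s})g(\bs{t}-\bs{s})|$ is integrable on $\mathbb{R}^n\times\mathbb{R}^n$ with integral $\|f\|_1\|g\|_1$, so $f*g\in L^1$ and Fubini--Tonelli applies. I write
\[
\mathcal{F}[f*g](\bs\omega)=\int\!\!\int f(\bs s)\,g(\bs t-\bs s)\,e^{-i2\pi\bs\omega\cdot\bs t}\,d\bs s\,d\bs t .
\]
I then substitute $\bs t=\bs s+\bs u$, which has unit Jacobian, and factor the phase as $e^{-i2\pi\bs\omega\cdot\bs s}e^{-i2\pi\bs\omega\cdot\bs u}$. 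Exchanging the order of integration gives $\mathcal{F}[f](\bs\omega)\,\mathcal{F}[g](\bs\omega)$.

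\textbf{Step 2 (second identity).} Set $F=\mathcal{F}[f]$ and $G=\mathcal{F}[g]$, and assume first that $f,g$ are Schwartz functions, so that $F,G$ are Schwartz as well. Applying Step 1 to $F,G$ with the inverse transform, whose proof is identical up to the sign of the phase, gives $\mathcal{F}^{-1}[F*G]=\mathcal{F}^{-1}[F]\cdot\mathcal{F}^{-1}[G]=f\cdot g$. The last equality is the Fourier inversion theorem on Schwartz space.

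\textbf{Step 3 (extension to measures).} For the applications, I replace $f$ by a finite Borel measure $\mu$, so that $\mu*g(\bs t)=\int g(\bs t-\bs s)\,d\mu(\bs s)$. The same Fubini argument goes through, because $|g|\otimes|\mu|$ is still integrable with total mass $\|g\|_1\,\mu(\mathbb{R}^n)$. This gives $\mathcal{F}[\mu*g]=\hat\mu\,\hat g$, where $\hat\mu$ is bounded and continuous. The second identity for $\mu'$ a tempered distribution, such as a sum of deltas, then follows by duality. Pairing with a test function and using that $\mathcal{F}$ is an automorphism of $\mathcal{S}'$ (as already invoked around eq.~(\ref{inversionthm})), together with the fact that multiplication of a tempered distribution by a Schwartz function and convolution of a compactly supported or finite measure with a Schwartz function are continuous operations, extends the Schwartz-space identity by density.

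\textbf{Main obstacle.} Steps 1 and 2 are routine. The real care is needed in Step 3: the product $\mathcal{F}[f]\cdot\mathcal{F}[g]$ and the convolution $\mathcal{F}[f]*\mathcal{F}[g]$ are not defined for arbitrary pairs of tempered distributions. I would therefore state the theorem only for the pairs that occur later, namely a finite measure paired with a Gaussian $\bs\Lambda_\sigma$ or another Schwartz function. For those pairs the Fubini justification above is enough, and the distributional identity reduces to it.
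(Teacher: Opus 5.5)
Your proposal is correct. It is the standard Fubini-plus-inversion argument that the paper simply defers to: the paper gives no proof of its own, only a textbook citation.

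The part of your proposal that goes beyond the paper is worth keeping. The statement as written, for arbitrary ``functions on $\mathbb{R}^n$'', is not meaningful without hypotheses. The only place it is used is the proof of proposition~\ref{convolutioncond} in appendix~\ref{convolutioncondproof}. There it is needed in its $\mathcal{F}^{-1}$ form, with a Gaussian against a finite signed measure:
\[
\mathcal{F}^{-1}[\bs{\Lambda}_\sigma*(\mu_n'-\mu_{\De_n}')]=\mathcal{F}^{-1}[\bs{\Lambda}_\sigma]\,(\phi_{\mu_n}-\phi_{\mu_{\De_n}}).
\]
The first half of your Step 3, Fubini against $|g|\otimes|\mu|$ with $|\mu|$ the total variation, proves exactly this. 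It also shows that the convolution lies in $L^1$, which the subsequent estimate $\|f\|_{L^1}\geq|\mathcal{F}^{-1}[f]|$ requires.

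The duality/density half of Step 3 concerns the second identity, which the paper never uses. If you keep it, two points need care:
\begin{itemize}
\item The convolution there is $\hat\mu*\hat g$, a bounded continuous function against a Schwartz function, not a measure against a Schwartz function.
\item A finite measure convolved with a Schwartz function is generally not Schwartz. Your continuity claims must therefore be read weak-$*$ in $\mathcal{S}'$.
\end{itemize}
It is simpler to get $\mathcal{F}[\mu' g]=\hat\mu*\hat g$ directly by the same Fubini argument, after writing $g=\mathcal{F}^{-1}[\hat g]$.
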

\begin{proof}
    This is a standard result in Fourier analysis, see e.g.~\cite{McGillemCooper1984}.
\end{proof}

\noindent As a point of notation, we denote $d\mu(\bs{h}) = \mu'(\bs{h})d\bs{h}$ for both the exact and finite-dimension pull-back measures. It is now a simple exercise in applying the above theorem to prove the following proposition:
\begin{prop}
\label{convolutioncond}
Let $\{\mu_n\}$ denote a heavying sequence of OPE measures, with their respective rescaled measure sequence $\{\nu_n\}$ weakly converging to a classical measure $\nu$. Let $\{\mu_{\De_n}\}$ denote the corresponding sequence of finite-dimension pull-back measures of $\nu$. If 
\ba
\lim_{n\to \infty}\| \bs{\Lambda}_{\De_n^{1-\alpha}} *(\mu_n' - \mu_{\De_n}')\|_{L^1(\mathbb{R}^2)} = 0 \quad \text{(Convolution condition)}
\ea
for some $\alpha>0$, then $\forall \tilde{\bs{t}} \in \mathbb{R}^2$ and $\beta < \alpha$
\ba
\label{pointwiseasymptoticconvergence}
\lim_{n \to \infty} \left( \phi_{\nu_n}(\De_n^{\beta} \tilde{\bs{t}}) -  \phi_{\nu}(\De_n^{\beta} \tilde{\bs{t}}) \right)  = 0.
\ea
\end{prop}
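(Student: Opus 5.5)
The plan is to convert the $L^1$ smallness of the Gaussian-smoothed difference measure into pointwise smallness of its Fourier transform, and then undo the Gaussian factor. The convolution theorem (theorem~\ref{convthm}) is the only tool needed.

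\textbf{Step 1: the smoothed difference is an $L^1$ function.} Set $\sigma_n = \De_n^{1-\alpha}$ and consider the signed finite measure $\mu_n - \mu_{\De_n}$ on $\mathbb{R}^2_+$. It has total variation at most $2$. Convolving it with the Gaussian $\bs{\Lambda}_{\sigma_n}$ produces a genuine $L^1$ function,
\ba
F_n \equiv \bs{\Lambda}_{\sigma_n} * (\mu_n' - \mu_{\De_n}').
\ea
Its Fourier transform, with the convention $e^{i2\pi \bs{t}\cdot\bs{h}}$ used for characteristic functions, is therefore a bounded continuous function satisfying
\ba
\sup_{\bs{t}} \left|\mathcal{F}[F_n](\bs{t})\right| \le \|F_n\|_{L^1(\mathbb{R}^2)}.
\ea

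\textbf{Step 2: factorize by the convolution theorem.} Theorem~\ref{convthm} (applied to finite measures, where Fubini justifies it) gives
\ba
\mathcal{F}[F_n](\bs{t}) = e^{-2\pi^2\sigma_n^2 \bs{t}\cdot\bs{t}}\left(\phi_{\mu_n}(\bs{t}) - \phi_{\mu_{\De_n}}(\bs{t})\right).
\ea
By the rescaling relations stated before the proposition, $\phi_{\mu_n}(\bs{t}) = \phi_{\nu_n}(\De_n\bs{t})$ and $\phi_{\mu_{\De_n}}(\bs{t}) = \phi_\nu(\De_n\bs{t})$.

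\textbf{Step 3: evaluate at the scaled argument and invert the Gaussian.} Fix $\tilde{\bs{t}}$ and $\beta<\alpha$, and choose $\bs{t} = \De_n^{\beta-1}\tilde{\bs{t}}$, so that $\De_n\bs{t} = \De_n^\beta\tilde{\bs{t}}$. Then
\ba
\left|\phi_{\nu_n}(\De_n^\beta\tilde{\bs{t}}) - \phi_\nu(\De_n^\beta\tilde{\bs{t}})\right| \le \exp\!\left(2\pi^2 \De_n^{2(\beta-\alpha)}\,\tilde{\bs{t}}\cdot\tilde{\bs{t}}\right)\,\|F_n\|_{L^1}.
\ea
The exponent follows from $\sigma_n^2\,\bs{t}\cdot\bs{t} = \De_n^{2(1-\alpha)}\De_n^{2(\beta-1)}\,\tilde{\bs{t}}\cdot\tilde{\bs{t}} = \De_n^{2(\beta-\alpha)}\,\tilde{\bs{t}}\cdot\tilde{\bs{t}}$.

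\textbf{Step 4: take the limit.} Since $\beta<\alpha$ and $\De_n\to\infty$, the prefactor tends to $1$. By the convolution condition $\|F_n\|_{L^1}\to 0$, so the right-hand side vanishes, which is eq.~(\ref{pointwiseasymptoticconvergence}).

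Step 3 shows exactly why the strict inequality $\beta<\alpha$ is needed. At $\beta=\alpha$ the Gaussian prefactor is a fixed factor $e^{2\pi^2\tilde{\bs{t}}\cdot\tilde{\bs{t}}}$ and the estimate still gives convergence, but the stated strict range is what the argument is designed for. For $\beta>\alpha$ the prefactor blows up and the bound is useless.

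\textbf{Main obstacle.} The only delicate point is bookkeeping of the Fourier conventions. One must check that the $2\pi$ in the characteristic function and the variance normalization of $\bs{\Lambda}_\sigma$ produce the Gaussian multiplier $e^{-2\pi^2\sigma^2\bs{t}\cdot\bs{t}}$. One must also check that the convolution theorem legitimately applies to discrete measures, which are tempered distributions, rather than to $L^1$ functions. I would handle the latter by computing $\int e^{i2\pi\bs{t}\cdot\bs{h}}F_n(\bs{h})\,d\bs{h}$ directly via Fubini, which is allowed because $\bs{\Lambda}_{\sigma_n}\in L^1$ and the signed measure has finite total variation.
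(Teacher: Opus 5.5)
Your proposal is correct and follows the paper's proof essentially line by line. You bound the Fourier transform of the Gaussian-smoothed difference by its $L^1$ norm, factor out $e^{-2\pi^2\De_n^{2-2\alpha}\bs{t}\cdot\bs{t}}$ via the convolution theorem, substitute $\bs{t}=\De_n^{\beta-1}\tilde{\bs{t}}$, and let the prefactor tend to $1$. Your side remark is also correct: the same estimate covers $\beta=\alpha$, where the prefactor is the fixed constant $e^{2\pi^2\tilde{\bs{t}}\cdot\tilde{\bs{t}}}$, so the stated range can be slightly sharpened to $\beta \le \alpha$.
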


\begin{proof}
Since $f(\bs{h})= \bs{\Lambda}_{\De_n^{1-\alpha}} *(\mu_n' - \mu_{\De_n}')$ is the difference between two probability densities convolved with an $L^1$ integrable function (a Gaussian), and $L^1$ integrable functions are closed under addition, $f(\bs{h})$ is $L^1$ integrable. We then use the general fact that $\| f\|_{L^1(\mathbb{R}^2)} \geq | \mathcal{F}^{\pm1}[f]|$ for any $L^1$ integrable function $f$ on $\mathbb{R}^2$, and the rest is the convolution theorem. See appendix \ref{convolutioncondproof} for the full proof.
\end{proof}

When the convolution condition holds for some $\alpha > 0$, the parameter $\beta$ can be taken into the non-trivial range $0 < \beta <\alpha$. This means asymptotic convergence extends to $|\bs{t}| \sim \De_n^{\beta}$, establishing agreement over a position-space neighborhood of size $\sim 1/\De_n^{1-\beta} \gg 1/\De_n$ around self-duality; a strictly larger domain than what is generically guaranteed by pointwise convergence alone. Proposition \ref{convolutioncond} plays a key role in the main theorem of the next section.

\subsection{Coherent state decomposition}
\label{coherentstatedecomp}

In the previous subsection, we introduced the notion of a finite-dimensional pull-back measure $\mu_{\De}$, which is constructed such that its characteristic function is comparable to a correlator with the corresponding external scaling dimension $\De$. However, when we integrate this finite-dimensional pull-back measure against a radial monomial factor as in eq.~(\ref{radmonomialdecomp}), it is easy to see that the result is not a solution to the s-t crossing equation for the associated external dimension. The reason for this is that, unlike the finite-dimension pull-back, the true OPE measure does not have compact support, and instead encodes an infinite sum over local operators in order to correctly reconstruct the t-channel OPE singularity. This fact leads to a key question: \textit{given a classical measure and its finite-dimension pullback $\mu_{\De_n}$, can one canonically construct an exact solution to the s-t crossing equation for external dimension $\De_n$?} Moreover, \textit{where in position space does this canonically constructed solution match the solution which uniquely corresponds to the exact OPE measure $\mu_n$? }

The answer to the first question is in the affirmative, where the reconstruction procedure involves integrating the finite-dimension pull-back measure against a term which is given by an infinite sum over radial monomials. We call this term a ``coherent state," due to the fact that it tends towards a minimal wavepacket in $(\bs{\xi},\bs{\eta})$ phase-space (where $\bs{\xi} = 2\pi\tilde{\bs{t}}$) that maximizes the Fourier uncertainty relation $\sigma_{\bs{\xi}}\sigma_{\bs{\eta}} \geq \hbar$, after making the identification $\hbar \equiv 1/ \De$ and taking the classical limit $\hbar\to0$. 

The second question is answered in an extremely similar way to how we remedied the aforementioned coarse-graining ambiguity. By requiring a convolution condition between the finite-dimension pull-back measures and their corresponding exact OPE measures to be satisfied, we can guarantee that the correlator computed with the coherent state decomposition agrees with the true correlator within position-space deviations of size $\ll 1/\sqrt{\De}$ around the self-dual point. As we will see in later sections, the coherent state decomposition of a correlator is a convenient object to work with, and motivates a natural parameterization of the correlator on $\mathscr{C}$. 

To begin, let us assume that there exists a function $\psi(h;\chi)$, such that
\ba
\tilde{\mathcal{G}}_\De(\bs{\chi}) = \int  \psi(h;\chi) \psi(\bh;\bar{\chi})d\mu_\De(\bs{h})
\label{psidecomp}
\ea
is normalized to $1$ at $\bs{\chi} = \mathbf{1}$ and satisfies the s-t channel crossing equation for external dimension $\Delta$:
\ba
\tilde{\mathcal{G}}_\De(\bs{\chi})  = |\bs{\chi}|^\De \tilde{\mathcal{G}}_\De(1/\bs{\chi}) .
\ea
We can equate the image of the decomposition in eq.~(\ref{psidecomp}) under reflection symmetry to its image under crossing to obtain the consistency condition
\ba
\int  \psi(\sqrt{2}\De - h;\chi) \psi(\sqrt{2}\De - \bh;\bar{\chi})d\mu_\De(\bs{h}) = |\bs{\chi}|^\De\int \psi(h;1/\chi) \psi(\bh;1/\bar{\chi}) d\mu_\De(\bs{h}) .
\label{decompconst}
\ea
\noindent There may be many choices of $\psi$ that satisfy this condition. However, we propose a \textit{canonical} choice with the following two properties:
\begin{enumerate}
    \item The consistency condition in eq.~(\ref{decompconst}) holds true term-by-term, so that
    \ba
    \psi(\sqrt{2}\De - h;\chi) = \chi^\De \psi(h;1/\chi).
    \ea
    \item The $h = 0$ term only includes the contribution from the identity operator in the direct channel:
    \ba
    \psi(0;\chi ) = 1.
    \ea
\end{enumerate}
The first condition is motivated by the observation that the reflection symmetry of the classical measure is derived from crossing symmetry. Therefore, eq.~(\ref{decompconst}) should not further constrain any properties of $\mu_\De$, which implies that the integrands themselves are equivalent. The second condition is motivated by the fact that, in standard decompositions of the correlator into radial monomials or global conformal blocks, the term associated with vanishing quantum numbers (in this case $h = 0$) is independent of position space variables and therefore only captures contributions from identity exchange. 

Imposing these two conditions fixes
\ba
\psi(h;\chi) = \chi^{h/\sqrt{2}}.
\ea
It is easy to see that this choice satisfies both $\psi(0;\chi) = 1$ and $\chi^\De \psi(h;1/\chi) = \chi^{\De - h/\sqrt{2}} = \psi(\sqrt{2}\De - h;\chi)$. Moreover, if we plug this back into eq.~(\ref{psidecomp}), then reflection symmetry becomes identical to s-t crossing symmetry, as we have engineered it to be. 
\subsubsection{Exchanging minimal wavepackets}
The coherent state decomposition admits a natural interpretation as a semiclassical expansion: each $\bs{\chi}^{\bs{h}/\sqrt{2}}$ contribution to the correlator arises from the exchange of a minimal wavepacket that saturates the Fourier uncertainty relation in the limit $\De \to \infty$. For simplicity, let us work in the diagonal limit and introduce conjugate position and spectral variables $\xi$ and $\eta$. We define the effective Planck constant 
\ba
\hbar \equiv \frac{1}{\De},
\ea
so that the heavy limit $\De \to \infty$ becomes the classical limit $\hbar \to 0$. Let $\xi = 2\pi t$ denote the physical angular coordinate on the torus fiber, so that $\rho = \rho_* e^{i\xi}$ and $\xi/\hbar = 2\pi \tilde{t}$ recovers the rescaled variable used elsewhere. In standard quantum mechanics, a position-space wavefunction takes the form
\ba
\langle \xi | \Psi \rangle = \Psi(\xi) = \frac{1}{\sqrt{2\pi \hbar}} \int \Phi(\eta) e^{\frac{i}{\hbar} \eta \xi} d\eta,
\label{wavefunction}
\ea
where $\langle \xi | \eta\rangle = \frac{1}{\sqrt{2\pi\hbar}} e^{\frac{i}{\hbar} \eta \xi}$ is a normalized plane-wave eigenstate with momentum $\eta$. Up to the normalization factor, this is directly analogous to the correlator evaluated along the torus fiber:
\ba
\mathcal{G}_n(\rho_* e^{i\xi}) = \int e^{\frac{i}{\hbar} \eta \xi} d\nu_n(\eta),
\ea
under the identification $\hbar = 1/\De_n$ and $d\nu_n(\eta) \leftrightarrow \Phi(\eta) d\eta$.
 
Now consider a single coherent state term $\chi^{h_0}$ contributing to a correlator with external dimension $\De = 1/\hbar$. Using the standard expansion of $\chi$ into radial monomials, we have for general $\rho$:
\ba
\chi(\rho)^{h_0} = \sum_{k= 0}^\infty \frac{4^{h_0} \Gamma(2h_0 +k)}{\Gamma(2h_0) k! } \rho^{h_0 +k}.
\ea
Evaluating at $\rho = \rho_* e^{i\xi}$ and writing $\eta_0 = h_0 \hbar$ for the rescaled exchanged dimension, this becomes
\ba
\chi(\rho_* e^{i\xi})^{\eta_0/\hbar} = \int e^{\frac{i}{\hbar} \eta \xi} d\nu_\hbar(\eta),
\ea
where the exact spectral measure is a discrete sum supported on a lattice with spacing $\hbar$:
\ba
d\nu_\hbar(\eta) = \sum_{k = 0}^\infty \frac{4^{\eta_0/\hbar} \Gamma(2\eta_0/\hbar +k)}{\Gamma(2\eta_0/\hbar) k!} \rho_*^{\eta_0/\hbar + k} \delta(\eta  -\eta_0 - k \hbar) d\eta.
\label{nuhbarexact}
\ea
 
This spectral measure is well-approximated by a Gaussian wavepacket. Expanding $\log\chi(\rho_* e^{i\xi})$ to second order in $\xi$ gives
\ba
\chi(\rho_* e^{i\xi})^{\eta_0/\hbar} = e^{i\sqrt{2}\, \eta_0\, \xi/\hbar \;-\; \eta_0\xi^2/(4\hbar)} \left( 1 + O(\eta_0 \xi^3/\hbar) \right),
\label{gaussianapprox}
\ea
which is a Gaussian envelope in position space of width $\sqrt{2\hbar/\eta_0}$. The approximation is valid within this envelope: evaluating the error at $\xi \sim \sqrt{\hbar/\eta_0}$ gives $O(\sqrt{\eta_0 \hbar})$, which is small in the classical limit for fixed $\eta_0$.
 
Taking the Fourier transform of eq.~(\ref{gaussianapprox}) with respect to the plane wave $e^{i\eta\xi/\hbar}$ identifies the approximate spectral measure as a Gaussian centered at $\eta = \sqrt{2}\eta_0$:
\ba
d\nu_\hbar(\eta) \approx \Lambda_{\sqrt{\eta_0 \hbar/2}}(\eta - \sqrt{2}\eta_0) \, d\eta,
\label{nuhbarapprox}
\ea
with spectral width $\sigma_\eta = \sqrt{\eta_0 \hbar/2}$. Since the position-space characteristic function has an amplitude width $\sigma_\xi = \sqrt{2\hbar/\eta_0}$, the uncertainty product is
\ba
\sigma_\eta \, \sigma_\xi = \hbar.
\ea
Among all probability measures with a given mean and this characteristic function width, the Gaussian uniquely minimizes $\sigma_\eta$, so the coherent state is a minimal wavepacket in this precise sense~\cite{FollandSitaram1997}.

Comparing the Gaussian approximation in eq.~(\ref{nuhbarapprox}) to the exact discrete measure in eq.~(\ref{nuhbarexact}), we see that the true spectrum is supported on the lattice
\ba
\eta = \eta_0 + \hbar \mathbb{Z}_+,
\ea
where the lattice spacing $\hbar$ is set by contributions of individual radial monomials. As $\hbar \to 0$ with $\eta_0$ fixed, the spectral width $\sigma_\eta = \sqrt{\eta_0\hbar/2} \to 0$, and the Gaussian collapses to a single delta mass at $\eta = \sqrt{2}\eta_0$. In other words, the ``quantum'' modes given by the contributions of $\langle \xi | \eta \rangle$ to the correlator resum into a single ``classical'' mode with rescaled dimension $\eta = \sqrt{2}\eta_0$. This is precisely the reason why we refer to $\nu$ as a \textit{classical} measure: it records the locations and relative weights of these delta masses, which are the endpoints of the coherent state concentrating to a point as $\hbar \to 0$.

\subsubsection{Where do semiclassics apply?}
We now want to address the question of where along the torus fiber around the self-dual point $T^2_{\bs{\rho}_*} \subset \widehat{\mathscr{C}}$ is the correlator described semiclassically, that is, in terms of exchanges of coherent states which are uniquely determined by the classical measure. This is answered by the following theorem:

\begin{thm}[Coherent state decomposition]
\label{cohstatedecompthm}
Let $\{\mu_n\}$ denote a heavying sequence of OPE measures, with their respective rescaled measure sequence $\{\nu_n\}$ weakly converging to a classical measure $\nu$.
Let $\{\mu_{\De_n}\}$ denote the corresponding sequence of finite-dimension pull-back measures of $\nu$. Define the following sequences of functions for $n \in \mathbb{N}$:
\ba
\mathcal{G}_n(\tilde{\bs{t}}) = \int  e^{i2\pi \tilde{\bs{t}}\cdot \bs{h} /\De_n} d\mu_n(\bs{h})\quad \text{(Exact correlator)}
\ea
and
\ba
\tilde{\mathcal{G}}_{\De_n}(\tilde{\bs{t}}) = \int \chi(\tilde{t})^{h/\sqrt{2}} \chi(\tilde{\bar{t}})^{\bar{h}/\sqrt{2}}d\mu_{\De_n} (\bs{h}) , \quad \text{(Coherent state correlator)}
\ea
with $\chi(\tilde{t}) = \frac{4 \rho_* e^{i2\pi \tilde{t}/\De_n}}{(1-\rho_* e^{i2\pi \tilde{t}/\De_n})^2}$ and $\rho_* = 3-2\sqrt{2}$. Then $\mathcal{G}_n$ and $\tilde{\mathcal{G}}_{\De_n}$ are solutions to the s-t crossing equation for external dimension $\De_n$, admit a positive decomposition into radial monomials, and are normalized to $1$ at $\tilde{\bs{t}} = \bs{0}$ (the self-dual point). Additionally, if the convolution condition from proposition~(\ref{convolutioncond}) is satisfied for $\alpha = 1/2$, then
\ba
\lim_{n \to \infty} \left( \mathcal{G}_n(\De_n^{\beta}\tilde{\bs{t}}) - \tilde{\mathcal{G}}_{\De_n}(\De_n^{\beta}\tilde{\bs{t}}) \right) = 0  \quad \forall\tilde{ \bs{t}} \in \mathbb{R}^2 \;\text{and} \; \beta < 1/2.
\label{cohstatevsexact}
\ea
\end{thm}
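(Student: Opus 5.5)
The proof splits into a structural part (crossing, positivity, normalization) and an asymptotic part (agreement at scale $\De_n^\beta$ with $\beta<1/2$).

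\textbf{Structural part.} For the exact correlator $\mathcal{G}_n$, the first three properties are immediate from section~\ref{setup}. It is the normalized radial monomial expansion evaluated on the torus fiber $T^2_{\bs{\rho}_*}$, so s-t crossing (\ref{stcrossing}), positivity of the $a_{\bs{h}}$, and $\mathcal{G}_n(\bs{0})=1$ are inherited directly.

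For the coherent state correlator $\tilde{\mathcal{G}}_{\De_n}$ I would proceed in three steps.
\begin{itemize}
\item \emph{Crossing.} Under $\bs{z}\to 1-\bs{z}$ we have $\chi\to 1/\chi$. The canonical choice $\psi(h;\chi)=\chi^{h/\sqrt{2}}$ satisfies $\chi^{\De}\psi(h;1/\chi)=\psi(\sqrt{2}\De-h;\chi)$ term by term. The reflection symmetry $\mathcal{R}_\sharp$ of $\mu_{\De_n}$, inherited from theorem~\ref{thm:maxheavy}, then gives $\tilde{\mathcal{G}}_{\De_n}(\bs{\chi})=|\bs{\chi}|^{\De_n}\tilde{\mathcal{G}}_{\De_n}(1/\bs{\chi})$ after the change of variables $\bs{h}\to\mathcal{R}\circ\bs{h}$.
\item \emph{Normalization.} At $\tilde{\bs{t}}=\bs{0}$ we have $\chi(\rho_*)=1$, and $\mu_{\De_n}$ is a probability measure, so $\tilde{\mathcal{G}}_{\De_n}(\bs{0})=1$.
\item \emph{Positivity.} Expand each $\chi^{h/\sqrt{2}}$ using the binomial series $\chi(\rho)^{h_0}=\sum_k 4^{h_0}\frac{\Gamma(2h_0+k)}{\Gamma(2h_0)k!}\rho^{h_0+k}$, which has nonnegative coefficients. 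Since $\mu_{\De_n}$ is positive and has compact support, Tonelli lets us exchange sum and integral. This yields a positive radial monomial decomposition that converges for $|\rho|,|\bar{\rho}|<1$.
\end{itemize}

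\textbf{Asymptotic part.} I would insert the characteristic function of the pull-back, $\phi_{\mu_{\De_n}}(\tilde{\bs{t}}/\De_n)=\phi_\nu(\tilde{\bs{t}})$, and split the difference as
\[
\mathcal{G}_n-\tilde{\mathcal{G}}_{\De_n}=\big(\phi_{\nu_n}(\De_n^\beta\tilde{\bs{t}})-\phi_\nu(\De_n^\beta\tilde{\bs{t}})\big)+\big(\phi_{\nu}(\De_n^\beta\tilde{\bs{t}})-\tilde{\mathcal{G}}_{\De_n}(\De_n^\beta\tilde{\bs{t}})\big).
\]
The first bracket vanishes as $n\to\infty$ by proposition~\ref{convolutioncond} with $\alpha=1/2$, since $\beta<1/2$.

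For the second bracket, I would use the Gaussian approximation (\ref{gaussianapprox}) in the form $\chi(\rho_*e^{i\xi})^{h/\sqrt{2}}=e^{ih\xi-h\xi^2/(4\sqrt{2})+O(h\xi^3)}$, where the exact constants follow from expanding $\log\chi$ around $\rho_*$. Set $\xi=2\pi\De_n^{\beta-1}\tilde{t}$ and $h\le\sqrt{2}\De_n$ on the support. Then the quadratic term is $O(\De_n^{2\beta-1})\to 0$ and the cubic term is $O(\De_n^{3\beta-2})\to 0$. Both bounds are uniform over $\mathrm{supp}(\mu_{\De_n})$.

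It follows that $|\chi^{h/\sqrt{2}}\bar{\chi}^{\bar h/\sqrt{2}}-e^{i\xi h+i\bar\xi\bar h}|\le C\De_n^{2\beta-1}$ uniformly on the support. One must check that the linear coefficient is exactly $i$, so that the phase matches $e^{i2\pi\tilde{\bs{t}}\cdot\bs{\eta}}$ with $\bs{\eta}=\bs{h}/\De_n$. Integrating this bound against the probability measure $\mu_{\De_n}$ sends the second bracket to zero.

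\textbf{Main obstacle.} I expect the main difficulty to be this uniform control of the exponent and phase. The critical threshold $\beta=1/2$ comes precisely from the Gaussian envelope $e^{-h\xi^2}$ with $h\sim\De_n$. One must make sure the $\sqrt{2}$ normalization of $\psi$ exactly reproduces the linear phase, with no residual $O(\De_n^\beta)$ phase mismatch. One must also control complex powers $\chi^{h/\sqrt{2}}$ on the principal branch near $\chi=1$, which is fine for small $\xi$.
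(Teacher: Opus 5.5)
Your proposal is correct and follows the paper's proof. The paper likewise treats the structural claims as holding by construction, and proves the limit by the triangle inequality through $\phi_\nu(\Delta_n^\beta\tilde{\bs{t}})$: proposition~\ref{convolutioncond} at $\alpha=1/2$ handles one piece, and the uniform expansion $\chi^{\Delta_n\eta/\sqrt{2}}=e^{i2\pi\tilde{t}\eta\Delta_n^\beta}\bigl(1+O(\Delta_n^{2\beta-1})\bigr)$ on the compact support of $\nu$ handles the other. Your direct Taylor expansion of $\log\chi$ along the torus fiber justifies that last step more cleanly than the paper's citation of lemma~\ref{cohstateapprox}; one small correction is that the crossing change of variables uses $\mathcal{R}\mathcal{C}$-invariance, so it needs chiral symmetry as well as $\mathcal{R}$, which theorem~\ref{thm:maxheavy} supplies.
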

\begin{proof}
The first part of the theorem follows by construction, and the second part is proven by approximating the coherent state with lemma \ref{cohstateapprox} and combining with proposition \ref{convolutioncond} for $\alpha = 1/2$. See appendix \ref{cohstatedecompthmproof} for the full proof.
\end{proof}
\begin{rem}
    Both the above theorem and proposition \ref{convolutioncond} conclude with a statement of the form $\lim_{n\to\infty} ( f_n -g_n) = 0$ for some sequences of functions $f_n,g_n$ which do not vanish on their domain. In our setting, both of these function sequences are rapidly oscillating, and are not guaranteed to have a well-defined limit, that is $\lim_{n \to \infty } f_n,g_n $ may not exist. This means our conclusions regard only the asymptotic behavior of these functions. More precisely, we say that two sequences of functions are \textit{additively} asymptotically equivalent if $\lim_{n\to\infty}(f_n -g_n) =0$.  This is contrasted to the standard notion of \textit{multiplicative} asymptotic equivalence, when $\lim_{n\to\infty} f_n/g_n = 1$. In standard Landau little-o notation, we can say $ f_n = g_n +o(1)$, where $f_n,g_n$ are the different pairs of $\phi_{\nu_n}(\De_n^\beta \tilde{\bs{t}}),\;\phi_{\nu}(\De_n^{\beta}\tilde{\bs{t}}),\;\text{and}\;\tilde{\mathcal{G}}_{\De_n}(\De_n^\beta \tilde{\bs{t}})$.
\end{rem}

While the results given by proposition \ref{convolutioncond} and theorem \ref{cohstatedecompthm} seem quite technical, the underlying physical intuition is simple. Studying the correlator along the torus fiber $T^2_{\bs{\rho}_*} \subset \widehat{\mathscr{C}}$ can be viewed as probing the real-time\footnote{``Time'' should be plural since we are dealing with two real (unrescaled) time parameters $\bs{t} = (t,\bar{t})$, but if we restrict to the diagonal $\bar{t} =t$ the standard real-time picture is recovered with the Hamiltonian given by $D - 2\De$.} evolution of the state $\hat{\Pi}_{\psi}\rho_*^{D - 2\De}$ under $e^{i 2\pi \left( t(H-\De) + \bar{t}(\bar{H}-\De) \right)}$. One should view $\phi_{\nu}(\tilde{\bs{t}})$ and $\tilde{\mathcal{G}}_{\De_n}(\tilde{\bs{t}})$ as effective descriptions of the exact time evolution given by $\mathcal{G}_n(\tilde{\bs{t}})$, valid for parametrically small times $\bs{t} = \tilde{\bs{t}}/\De_n$. The pointwise convergence in eq.~(\ref{pointwiseconvergence1}) guarantees that this effective description is accurate and determined entirely by the classical measure $\nu$ for any fixed $\tilde{\bs{t}}$. If we instead evolve the state for parametrically longer (rescaled) times $\tilde{\bs{t}} \sim \De_n^{\beta}$, then the effective descriptions may break down whenever the exact evolution depends on fine structure of $\nu_n$ that is lost in the weak limit to $\nu$. The convolution condition in proposition \ref{convolutioncond} gives a sufficient condition for the effective description given by $\phi_{\nu}(\tilde{\bs{t}})$ to remain accurate at such scales: the discrepancy between $\nu_n$ and $\nu$ must vanish after smoothing the spectrum parametrized by $\bs{\eta}$ over intervals of size $\sim\De_n^{1-\alpha}$ with $\alpha > \beta$. However, $\phi_\nu$ does not correspond to a physical correlator, as it is not a solution to the crossing equations. 

Theorem \ref{cohstatedecompthm} remedies this by constructing an explicit crossing-symmetric solution, the coherent state correlator, which serves as an alternative effective description that is accurate for finite $\tilde{\bs{t}}$. The theorem states that if the convolution condition holds for sufficiently large $\alpha$, then the coherent state correlator remains accurate for $\tilde{\bs{t}} \sim \De_n^{\beta}$ with $\beta < \alpha$, but only up to $\alpha = 1/2$ since the coherent states are wavepackets of width $\sim \sqrt{\De_n}$. This is analogous to the Ehrenfest time scale in semiclassical quantum mechanics, beyond which coherent state approximations break down due to wavepacket spreading \cite{Bambusi1999}: under the identification $\hbar = 1/\De$, our coherent states provide an accurate effective description for times $\bs{t} \lesssim \sqrt{\hbar}$, and resolving dynamics at later times requires information beyond the classical measure. In particular, a physically realizable (crossing-symmetric) effective description of the exact dynamics for times parametrically larger than $\bs{t} \sim \sqrt{\hbar}$ requires modifying the fine structure of the coherent state, which necessarily involves the entire radial monomial decomposition.

\subsubsection{Coherent state rate function}

One can also directly compute the rate function associated to a family of coherent state correlators defined by a classical measure $\nu$, which admits an extremely simple form depending only on the convex hull of the support of $\nu$. We will make use of this result in the next section when we connect the local and global pictures of maximally heavy correlators. 

\begin{prop}[Rate function of a coherent state correlator]
\label{ratefunctioncompute}
Consider a family of coherent state correlators $\{\tilde{\mathcal{G}}_\De(\bs{x})\}_{\De \in \mathbb{R}^+}$ constructed from the classical measure $\nu$. The corresponding coherent state rate function is given by
\ba
    \tilde{\lambda}_{\nu}(\bs{x}) = \lim_{\De\to\infty} \frac{1}{\De} \log\left(\tilde{\mathcal{G}}_\De(\bs{x})\right) = \max_{\bs{\eta} \in \mathrm{conv}(\mathrm{supp}(\nu))} \{ \bs{x} \cdot \bs{\eta}/\sqrt{2}\}.
\ea
\end{prop}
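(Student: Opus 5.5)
The plan is to write the coherent state correlator on the causal diamond explicitly as a Laplace transform of the classical measure, and then apply a Laplace/Varadhan-type argument. With $\bs{\chi} = e^{\bs{x}}$ and $\bs{h} = \De\bs{\eta}$, the definition of $\tilde{\mathcal{G}}_\De$ together with $\mu_\De(\bs{h}) = \nu(\bs{h}/\De)$ gives
\begin{equation*}
\tilde{\mathcal{G}}_\De(\bs{x}) = \int_{\mathcal{D}} e^{\De\, \bs{x}\cdot\bs{\eta}/\sqrt{2}}\, d\nu(\bs{\eta}).
\end{equation*}
Here $\nu$ does not depend on $\De$ and, by theorem \ref{thm:maxheavy}, is supported in the compact square $\mathcal{D}$. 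The normalization $\tilde{\mathcal{G}}_\De(\bs{0}) = 1$ is automatic. The statement is therefore the standard fact that the exponential growth rate of a Laplace transform of a fixed compactly supported measure equals its support function. The convex hull appears only because a linear functional has the same maximum on a set and on its convex hull.

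\textbf{Upper bound.} Let $M(\bs{x}) = \max_{\bs{\eta}\in \mathrm{supp}(\nu)} \bs{x}\cdot\bs{\eta}/\sqrt{2}$. This maximum is attained because $\mathrm{supp}(\nu)$ is closed and contained in $\mathcal{D}$, hence compact. Since $\nu$ is a probability measure, the integrand is at most $e^{\De M(\bs{x})}$ $\nu$-almost everywhere, so $\tilde{\mathcal{G}}_\De(\bs{x}) \le e^{\De M(\bs{x})}$. Consequently $\limsup_{\De\to\infty} \frac{1}{\De}\log\tilde{\mathcal{G}}_\De(\bs{x}) \le M(\bs{x})$.

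\textbf{Lower bound.} Fix $\epsilon>0$ and choose $\bs{\eta}_0\in\mathrm{supp}(\nu)$ attaining $M(\bs{x})$. By continuity there is a ball $B$ around $\bs{\eta}_0$ on which $\bs{x}\cdot\bs{\eta}/\sqrt{2} \ge M(\bs{x})-\epsilon$. Because $\bs{\eta}_0$ lies in the support of $\nu$, we have $\nu(B) = c > 0$, and $c$ is independent of $\De$. Restricting the integral to $B$ gives $\tilde{\mathcal{G}}_\De(\bs{x}) \ge c\, e^{\De(M(\bs{x})-\epsilon)}$. Hence $\liminf \frac{1}{\De}\log\tilde{\mathcal{G}}_\De \ge M(\bs{x})-\epsilon$, and letting $\epsilon\to 0$ shows that the limit exists and equals $M(\bs{x})$.

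\textbf{Convex hull.} It remains to replace $\mathrm{supp}(\nu)$ by $K = \mathrm{conv}(\mathrm{supp}(\nu))$. A linear functional attains its maximum over a compact convex set at an extreme point, and every extreme point of $K$ lies in $\mathrm{supp}(\nu)$. Alternatively, every point of $K$ is a convex combination of support points, so the functional is bounded on $K$ by its maximum over the support. Either way $M(\bs{x}) = \tilde{\lambda}_\nu(\bs{x})$.

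The only step that needs any care is the lower bound. There one must use that $\bs{\eta}_0$ is in the support, so that neighborhoods have positive mass independent of $\De$. This is exactly what fails if one naively maximizes over $\mathcal{D}$ rather than over the support. The argument gives pointwise convergence. If locally uniform convergence is wanted, to match the compact-open topology of definition \ref{def:ratefn}, it follows because each $\frac{1}{\De}\log\tilde{\mathcal{G}}_\De$ is convex in $\bs{x}$: it is $\frac{1}{\De}$ times the log of a Laplace transform, hence convex by Hölder. Pointwise convergence of convex functions on $\mathbb{R}^2$ to a finite limit is automatically uniform on compact sets.
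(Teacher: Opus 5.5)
Your proof is correct and has the same skeleton as the paper's. Both rewrite $\tilde{\mathcal{G}}_\De(\bs{x}) = \int e^{\De\,\bs{x}\cdot\bs{\eta}/\sqrt{2}}\,d\nu(\bs{\eta})$, reduce the convex hull to the support by Bauer's maximum principle, and sandwich the integral between exponentials at the maximizing support point $\bs{\eta}_*$.

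The one substantive difference is the lower bound, and there your version is the better one. The paper bounds $\tilde{\mathcal{G}}_\De(\bs{x}) \geq A\, e^{\De\,\bs{x}\cdot\bs{\eta}_*/\sqrt{2}}$, where $A$ is the weight of $\nu$ \emph{at the single point} $\bs{\eta}_*$. That weight is positive only if $\nu$ has an atom at the maximizer. For a non-atomic maximizer $A=0$ and the bound is vacuous. This happens, for example, for the uniform giant-graviton measure $\nu^{(\mathrm{GG})}$, to which the paper applies this proposition, where the maximizer $(\sqrt{2},\sqrt{2})$ carries no mass. Your argument uses $\nu(B)>0$ for a small ball around $\bs{\eta}_0$ and then sends $\epsilon\to 0$, so it works for every compactly supported $\nu$. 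It is the same device the paper itself uses in the lower-bound half of the proof of theorem \ref{matchinglocalratefunctions}.

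Your closing remark, that convexity of $\frac{1}{\De}\log\tilde{\mathcal{G}}_\De$ upgrades pointwise convergence to locally uniform convergence, is a genuine addition. The paper does not address this, and it is what membership in the compact-open topology actually requires.
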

\begin{proof}
As a consequence of Bauer's maximum principle, the maximum of a linear map over a set is equal to the maximum over the convex hull of the set, thus it suffices to prove
\ba
\tilde{\lambda}_{\nu}(\bs{x}) = \max_{\bs{\eta} \in \mathrm{supp}(\nu)} \{ \bs{x} \cdot \bs{\eta}/\sqrt{2}\}.
\ea
Fix any $\bs{x}$ and let $\max_{\bs{\eta} \in \mathrm{supp}(\nu)}\{ \bs{x}\cdot \bs{\eta}/\sqrt{2} \} = \bs{x} \cdot \bs{\eta}_* /\sqrt{2}$. We can write
\ba
\tilde{\mathcal{G}}_\De(\bs{x}) = e^{\De \bs{x} \cdot \bs{\eta}_* /\sqrt{2} } \left( A + \int  e^{\De ( \bs{x}\cdot \bs{\eta} - \bs{x} \cdot \bs{\eta}_*)/\sqrt{2}} (d\nu(\bs{\eta})- A \delta(\bs{\eta} - \bs{\eta}_*)d\bs{\eta})\right),
\ea
where $A\leq1$ is a positive constant denoting the weight of $\nu$ at $\bs{\eta}_*$. By definition, $\bs{x} \cdot \bs{\eta} \leq \bs{x}\cdot\bs{\eta}_*$ for all $\bs{\eta} \in \mathrm{supp}(\nu)$, thus $0<e^{\De ( \bs{x}\cdot \bs{\eta} - \bs{x} \cdot \bs{\eta}_*)/\sqrt{2}} \leq 1$ for all $\De\geq0$. This gives us the bound
\ba
A e^{\De \bs{x} \cdot \bs{\eta}_* /\sqrt{2} }  \leq \tilde{\mathcal{G}}_\De(\bs{x}) \leq e^{\De \bs{x} \cdot \bs{\eta}_* /\sqrt{2} }.
\ea
The rate function is then bounded as
\ba
\lim_{\De \to \infty} \left( \bs{x}\cdot\bs{\eta}_*/\sqrt{2} + \log(A)/\De\right) \leq \tilde{\lambda}_{\nu}(\bs{x}) \leq \bs{x}\cdot\bs{\eta}_*/\sqrt{2}.
\ea
Taking the limit gives $\tilde{\lambda}_{\nu}(\bs{x}) = \bs{x}\cdot \bs{\eta}_*/\sqrt{2} = \max_{\bs{\eta} \in \mathrm{supp}(\nu)}\{ \bs{x} \cdot \bs{\eta}/\sqrt{2}\}$ as we desired.
\end{proof}

\begin{cor} 
\label{corollaryofcohstateratefunction}
Let $\lambda_{\nu_1},\lambda_{\nu_2}$ be rate functions associated to the maximally heavy measures $\nu_1,\nu_2$ respectively. If $\mathrm{conv}(\mathrm{supp}(\nu_1) )\subset \mathrm{conv}(\mathrm{supp}(\nu_2) )$, then $\lambda_{\nu_1}(\bs{x}) \leq \lambda_{\nu_2}(\bs{x})$ for all $\bs{x} \in \mathbb{R}^2$. 
\end{cor}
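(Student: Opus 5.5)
The plan is to read the corollary as a statement about coherent state rate functions, $\lambda_{\nu_i} \equiv \tilde{\lambda}_{\nu_i}$, and obtain it directly from the closed form in proposition \ref{ratefunctioncompute}. That proposition gives
\begin{equation}
\tilde{\lambda}_{\nu_i}(\bs{x}) = \max_{\bs{\eta} \in K_i} \{ \bs{x}\cdot\bs{\eta}/\sqrt{2}\}, \qquad K_i \equiv \mathrm{conv}(\mathrm{supp}(\nu_i)).
\end{equation}
So $\tilde{\lambda}_{\nu_i}$ is, up to the factor $1/\sqrt{2}$, the support function of the convex set $K_i$. The claim then reduces to the elementary fact that support functions are monotone under inclusion of the underlying sets.

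The argument has three steps.

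\begin{enumerate}
\item Check that both maxima exist. By theorem \ref{thm:maxheavy}(i), $\mathrm{supp}(\nu_i) \subset \mathcal{D}$, and supports are closed, so $\mathrm{supp}(\nu_i)$ is compact. The convex hull of a compact set in $\mathbb{R}^2$ is compact, so each $K_i$ is compact. The map $\bs{\eta} \mapsto \bs{x}\cdot\bs{\eta}/\sqrt{2}$ is continuous, so its maximum over $K_i$ is attained.
\item Fix $\bs{x} \in \mathbb{R}^2$ and let $\bs{\eta}_1 \in K_1$ attain the maximum over $K_1$.
\item Since $K_1 \subset K_2$, the point $\bs{\eta}_1$ also lies in $K_2$. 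Hence
\begin{equation}
\tilde{\lambda}_{\nu_1}(\bs{x}) = \bs{x}\cdot\bs{\eta}_1/\sqrt{2} \leq \max_{\bs{\eta}\in K_2}\{\bs{x}\cdot\bs{\eta}/\sqrt{2}\} = \tilde{\lambda}_{\nu_2}(\bs{x}).
\end{equation}
Since $\bs{x}$ was arbitrary, the inequality holds on all of $\mathbb{R}^2$.
\end{enumerate}

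There is no real obstacle here; the only care needed is in interpretation. The statement must be read with $\lambda_{\nu}$ meaning the coherent state rate function $\tilde{\lambda}_\nu$ of the family $\{\tilde{\mathcal{G}}_\De\}$ built from $\nu$. It cannot mean an arbitrary rate function of a heavying sequence whose classical measure is $\nu$, because the map from rate functions to classical measures is not injective (figure \ref{fig:heavyobservables}). Under the coherent state reading the limit in proposition \ref{ratefunctioncompute} exists outright, not merely along a subsequence, so no accumulation-point subtleties arise.
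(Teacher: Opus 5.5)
Your proof is correct and follows essentially the same route as the paper: both read $\lambda_{\nu_i}$ as the coherent state rate functions of proposition \ref{ratefunctioncompute}, and both compare maxima of the linear functional $\bs{\eta}\mapsto \bs{x}\cdot\bs{\eta}/\sqrt{2}$ over the nested hulls. The paper phrases this through Bauer's maximum principle and extreme points, whereas you use the inclusion $K_1 \subset K_2$ directly, which is slightly cleaner. Your remark that the corollary cannot refer to arbitrary rate functions of heavying sequences with classical measure $\nu$ is also correct.
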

\begin{proof}
This is another consequence of Bauer's maximum principle, noting that the extremal points of $\mathrm{supp}(\nu_1) $ are always bounded within the extremal points of $\mathrm{supp}(\nu_2) $.
\end{proof}

\subsection{Local rate functions}
\label{localratefunctions}

Up to this point, we have provided two well-defined pictures of the maximally heavy correlator: a \textit{local} one, given by the characteristic function of the classical measure at the self-dual point, and a \textit{global} one, given by a logarithmically regulated limit of a correlator subsequence. Our local picture probes asymptotic deviations of size $\sim1/\De$ around the self-dual point, while our global picture probes arbitrary $O(1)$ deviations. The remaining task is to connect these pictures by studying asymptotic deviations of size $\sim 1/\De^{\alpha}$ around the self-dual point. To this end, we introduce a family of ``$\alpha$-local" dynamical free energy densities, defined as follows:

\begin{defn}[$\alpha$-local dynamical free energy density]
\label{localratefuncdef}
Let $\{(\mathcal{G}_n,\De_n)\}$ denote a heavying sequence of correlators normalized to 1 at the self-dual point. We define their $\alpha$-local dynamical free energy densities with $\tilde{\bs{x}}\in\mathbb{R}^2$ and $\alpha \in [0,1]$ as
\ba
\lambda_n(\alpha;\tilde{\bs{x}}) = \frac{1}{\De_n^{1-\alpha}} \log\left(  \mathcal{G}_n(\tilde{\bs{x}}/\De_n^{\alpha}) \right).
\ea
\end{defn}
\noindent This defines a family of regularizations that maps $\mathbf{G}_{\De_n}\times\mathbb{R} \to C(\mathbb{R}^2)\times\mathbb{R}$. One can again ask if this regularization is good, in that there exists a convergent subsequence that converges locally uniformly to an $\alpha$-local rate function on the same domain. This question is answered in an analogous way to the global rate function via Arzel\`a-Ascoli. We can easily verify that the hypothesis of the theorem holds:
\begin{prop}
\label{preaa2}
Fix $\alpha$ and let $\{\lambda_n(\alpha;\tilde{\bs{x}})\} \subset C(\mathbb{R}^2)$ denote a sequence of $\alpha$-local dynamical free energy densities. The sequence $\{\lambda_n(\alpha;\tilde{\bs{x}})\}$ is locally uniformly bounded and uniformly equicontinuous. 
\end{prop}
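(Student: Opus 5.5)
The plan is to reduce the statement to the gradient bound already established for the global dynamical free energy density. That bound is $\bs{0}\leq \nabla_{\bs{x}}\lambda_n(\bs{x}) \leq \bs{1}$ for every $n$ and every $\bs{x}\in\mathbb{R}^2$. It is the key input of proposition~\ref{preaa}, proven in appendix~\ref{proofofpreaa} via the sharp first-moment bound of lemma~\ref{sharpmomentbound}. The $\alpha$-local density is obtained from $\lambda_n$ by an explicit rescaling of both domain and codomain:
\be
\lambda_n(\alpha;\tilde{\bs{x}}) = \frac{1}{\De_n^{1-\alpha}}\log \mathcal{G}_n(\tilde{\bs{x}}/\De_n^\alpha) = \De_n^{\alpha}\,\lambda_n(\tilde{\bs{x}}/\De_n^{\alpha}).
\ee
The exponents are arranged so that the chain rule exactly cancels the prefactor:
\be
\nabla_{\tilde{\bs{x}}}\lambda_n(\alpha;\tilde{\bs{x}}) = \De_n^{\alpha}\cdot\De_n^{-\alpha}\,(\nabla_{\bs{x}}\lambda_n)(\tilde{\bs{x}}/\De_n^\alpha),
\ee
and therefore $\bs{0}\leq\nabla_{\tilde{\bs{x}}}\lambda_n(\alpha;\tilde{\bs{x}})\leq\bs{1}$ uniformly in $n$ and $\tilde{\bs{x}}$.

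First, equicontinuity. Integrating this gradient bound along the straight segment from $\tilde{\bs{x}}$ to $\tilde{\bs{y}}$ gives
\be
|\lambda_n(\alpha;\tilde{\bs{x}})-\lambda_n(\alpha;\tilde{\bs{y}})| \leq |\tilde{x}-\tilde{y}| + |\tilde{\bar{x}}-\tilde{\bar{y}}| \leq \sqrt{2}\,\|\tilde{\bs{x}}-\tilde{\bs{y}}\|.
\ee
Each $\lambda_n(\alpha;\cdot)$ is therefore Lipschitz with a constant independent of $n$, and in the $L^\infty$ sense of the footnote it is $1$-Lipschitz. This gives uniform equicontinuity with $\delta = \epsilon/\sqrt{2}$, independent of $\tilde{\bs{x}},\tilde{\bs{y}},n$. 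Smoothness of $\lambda_n$ is not an issue: for finite $\De_n$ the correlator is positive and real-analytic on $\mathscr{C}$, so $\log\mathcal{G}_n$ is smooth.

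Second, local uniform boundedness. The normalization $\mathcal{G}_n(\bs{0})=1$ gives $\lambda_n(\alpha;\bs{0})=0$ for every $n$. Combined with the Lipschitz bound, this yields
\be
|\lambda_n(\alpha;\tilde{\bs{x}})|\leq |\tilde{x}|+|\tilde{\bar{x}}|.
\ee
More sharply, the gradient sign constraints give $\min(0,\tilde{x})+\min(0,\tilde{\bar{x}})\leq \lambda_n(\alpha;\tilde{\bs{x}})\leq \max(0,\tilde{x})+\max(0,\tilde{\bar{x}})$, which mirrors $\lambda_+$ in proposition~\ref{universalratefunctionbound}. On any compact $A\subset\mathbb{R}^2$ this is bounded by $2\sup_{A}\|\tilde{\bs{x}}\|_\infty$, a constant independent of $n$.

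The only substantive ingredient is the uniform gradient bound on $\lambda_n$ itself, i.e.\ that the normalized first moments $\frac{1}{\De_n}\partial_x \log\mathcal{G}_n$ lie in $[0,1]$ at every point of the causal diamond. That bound is already established in the proof of proposition~\ref{preaa}, so I expect no real obstacle. The one point to check is that the bound is used pointwise at the rescaled point $\tilde{\bs{x}}/\De_n^\alpha$, which ranges over all of $\mathbb{R}^2$, so the bound must hold globally on $\mathbb{R}^2$ and not only on compacts. Since lemma~\ref{uniformgradbound} is stated for all $\bs{x}$, this is satisfied. Arzel\`a--Ascoli (theorem~\ref{aathm}) then gives item (iii) of proposition~\ref{prop:goodnesspreview}, with Lipschitz limits.
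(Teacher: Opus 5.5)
Your proposal is correct and follows the paper's proof. The paper likewise uses $\lambda_n(\alpha;\De_n^\alpha\bs{x}) = \De_n^\alpha\lambda_n(\bs{x})$ to show the gradient is invariant under the rescaling, so the global bound $\bs{0}\leq\nabla\lambda_n\leq\bs{1}$ of lemma~\ref{uniformgradbound} carries over to all $\tilde{\bs{x}}\in\mathbb{R}^2$, and the boundedness and equicontinuity arguments of appendix~\ref{proofofpreaa} then apply unchanged. Your explicit write-out of those two steps, including the sharper sign-split bound, just makes the paper's ``it suffices'' reduction fully explicit.
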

\begin{proof}
The proof of proposition \ref{preaa} for the standard dynamical free energy density relies only on the uniform bounds
\ba
\bs{0} \leq \nabla_{\bs{x}} \lambda_n(\bs{x}) \leq \bs{1} \quad \forall \bs{x} \in \mathbb{R}^2.
\ea
Thus, it suffices to prove that
\ba
\label{localgradbound}
\bs{0} \leq \nabla_{\tilde{\bs{x}}} \lambda_n(\alpha;\tilde{\bs{x}}) \leq \bs{1} \quad \forall \tilde{\bs{x}} \in \mathbb{R}^2.
\ea
By definition, we have $\lambda_n(\alpha; \De_n^\alpha\bs{x}) = \De_n^{\alpha} \lambda_n(\bs{x})$ and $\nabla_{\De_n^\alpha\bs{x}} = \De_n^{-\alpha} \nabla_{\bs{x}}$, so the gradient is invariant under rescaling $\tilde{\bs{x}} \to \De_n^\alpha \bs{x}$. Since an arbitrary rescaling of the coordinate maps $\mathbb{R}^2 \to \mathbb{R}^2$, the uniform gradient bounds hold for $\tilde{\bs{x}}$ on the same domain.
\end{proof}
Therefore, Arzel\`a-Ascoli also applies to the sequence of $\alpha$-local dynamical free energy densities; thus, there exists a subsequence which converges locally uniformly to an $\alpha$-local rate function $\lambda(\alpha;\tilde{\bs{x}})$. That is, for all compact subsets $A \subset \mathbb{R}^2$ we have uniform convergence
\ba
\lim_{k\to\infty}\lambda_{n_k}(\alpha;\tilde{\bs{x}}) = \lambda(\alpha;\tilde{\bs{x}}) \quad \forall \tilde{\bs{x}} \in A.
\ea

It is also straightforward to derive the following uniform two-sided bound for $\alpha$-local rate functions with $\alpha>0$:

\begin{prop}[Bounds on $\alpha$-local rate functions]
\label{universalboundonlocalratefunc}
Let $\{(\mathcal{G}_n,\De_n)\}$ be a heavying sequence with a subsequence of $\alpha$-local dynamical free energies $\{\lambda_{n_k}(\alpha;\tilde{\bs{x}})\}$ that converges locally uniformly to an $\alpha$-local rate function $\lambda(\alpha;\tilde{\bs{x}}) = \lim_{k\to\infty} \lambda_{n_k}(\alpha;\tilde{\bs{x}})$. For all $\tilde{\bs{x}} \in \mathbb{R}^2$ and $\alpha>0$, we have the bound
\ba
\frac{\tilde{x} + \tilde{\bar{x}}}{2} \leq \lambda(\alpha;\tilde{\bs{x}}) \leq \max(0,\tilde{x})+\max(0,\tilde{\bar{x}}).
\ea
\end{prop}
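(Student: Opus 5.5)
Both bounds should come from the global estimates of proposition \ref{universalratefunctionbound}, transported to the rescaled coordinates via the identity $\lambda_n(\alpha;\tilde{\bs{x}}) = \De_n^{\alpha}\lambda_n(\tilde{\bs{x}}/\De_n^{\alpha})$ already used in the proof of proposition \ref{preaa2}. The crucial point is that we must use bounds valid at \emph{finite} $n$, not just in the limit, since we evaluate at points $\tilde{\bs{x}}/\De_n^\alpha$ that move toward the origin.

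\textbf{Upper bound.} Since $\mathcal{G}_n(\bs{0})=1$, we have $\lambda_n(\bs{0})=0$. Integrating the uniform gradient bound $\bs{0}\le\nabla_{\bs{x}}\lambda_n\le\bs{1}$ along a path from $\bs{0}$ to $\bs{x}$, first in $x$ and then in $\bar{x}$, gives $\lambda_n(\bs{x})\le \max(0,x)+\max(0,\bar{x})$ for every $n$. This right-hand side is positively homogeneous of degree one. Substituting $\bs{x}=\tilde{\bs{x}}/\De_n^\alpha$ and multiplying by $\De_n^\alpha$ therefore gives the same bound for $\lambda_n(\alpha;\tilde{\bs{x}})$ at every $n$, and it survives the limit.

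\textbf{Lower bound.} The plan is a Jensen argument at finite $n$. Write $\mathcal{G}_n(\bs{x}) = \int (\bs{\rho}(\bs{x})/\bs{\rho}_*)^{\bs{h}}\, d\mu_n(\bs{h})$, where $\mu_n$ is the self-dual OPE measure. Jensen's inequality gives
\[
\log\mathcal{G}_n(\bs{x}) \ge \omega_1\,[\log(\rho(x)/\rho_*)+\log(\rho(\bar{x})/\rho_*)] = \De_n\,(f(x)+f(\bar{x})),
\]
using $\omega_1=\De_n/\sqrt{2}$ for both pure first moments. Applying the same argument to the crossed side of eq.~(\ref{stcrossing}), written in $\bs{x}$ coordinates as $\mathcal{G}_n(\bs{x})=e^{\De_n(x+\bar{x})}\mathcal{G}_n(-\bs{x})$, gives $\lambda_n(\bs{x})\ge (x+\bar{x})+f(-x)+f(-\bar{x})$. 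Averaging the two estimates yields
\[
\lambda_n(\bs{x})\ge \tfrac{x+\bar{x}}{2} + \tfrac12\big[f(x)+f(-x)+f(\bar{x})+f(-\bar{x})\big].
\]
The function $f$ is smooth with $f(0)=0$, so $f(x)+f(-x)=O(x^2)$ near the origin. Rescaling gives
\[
\lambda_n(\alpha;\tilde{\bs{x}}) \ge \tfrac{\tilde{x}+\tilde{\bar{x}}}{2} + \De_n^{\alpha}\,O(|\tilde{\bs{x}}|^2/\De_n^{2\alpha}),
\]
and the error is $O(\De_n^{-\alpha})$ uniformly on compact sets. It vanishes because $\alpha>0$, which is exactly where the hypothesis $\alpha>0$ enters. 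Passing to the limit along the convergent subsequence gives the claimed lower bound.

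\textbf{Main obstacle.} The delicate step is making sure the Jensen bound holds at finite $n$ with the exact first moment. A single-sided Jensen bound would give $f(x)+f(\bar{x})$, whose linear coefficient is $f'(0)$ rather than $1/2$, so it would be the wrong bound. Symmetrizing with crossing is what cancels the odd part of the error and produces the clean $(\tilde{x}+\tilde{\bar{x}})/2$. Concretely, one must check that the linear terms of $f(x)$ and of $x+f(-x)$ average to exactly $x/2$. This reduces to the identity $\tfrac12[f(x)+x+f(-x)]$ having linear coefficient $1/2$, which holds automatically because $f(x)+f(-x)$ is even.
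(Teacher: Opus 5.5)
Your proof is correct. The upper bound is the same as the paper's: integrate the gradient bound from the origin, then use degree-one homogeneity under $\tilde{\boldsymbol{x}}=\Delta_n^\alpha\boldsymbol{x}$.

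For the lower bound, you and the paper use the same two ingredients, but in the opposite order. The ingredients are Jensen's inequality with the crossing-fixed first moment $\omega_1=\Delta_n/\sqrt{2}$, and an $O(\Delta_n^{-\alpha})$ error that vanishes because $\alpha>0$.
\begin{itemize}
\item The paper first linearizes the kinematics via Lemma~\ref{cohstateapprox}, replacing $\frac{1}{s_n}\log\mathcal{G}_n(\boldsymbol{x}/\Delta_n^\alpha)$ by $\frac{1}{s_n}\log M_{\nu_n}(s_n\boldsymbol{x}/\sqrt{2})$. It then applies Jensen to the moment generating function of $\nu_n$, whose mean is $(1/\sqrt{2},1/\sqrt{2})$.
\item You apply Jensen exactly at finite $n$ to the radial monomial representation, and only afterwards Taylor-expand the explicit kinematic function $f$.
\end{itemize}
Your ordering is slightly more elementary. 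Its only approximation is Taylor's theorem for a known smooth function, so you do not need the mean-value and gradient-bound error control inside Lemma~\ref{cohstateapprox}. The paper's ordering phrases the bound through the cumulant generating function of the rescaled measure, which is the language its matching theorem uses.

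Your closing paragraph contains one false claim: the one-sided Jensen bound is not the wrong bound. Since $f(x)=s(x)/\sqrt{2}$ with $s(x)=x/\sqrt{2}+O(x^2)$, you have $f'(0)=1/2$. The same follows by differentiating $\log\mathcal{G}_n(\boldsymbol{x})=\Delta_n(x+\bar{x})+\log\mathcal{G}_n(-\boldsymbol{x})$ once at $\boldsymbol{x}=\boldsymbol{0}$, which forces $\omega_1 s'(0)=\Delta_n/2$. Hence $\Delta_n^\alpha\big(f(\tilde{x}/\Delta_n^\alpha)+f(\tilde{\bar{x}}/\Delta_n^\alpha)\big)=\tfrac{\tilde{x}+\tilde{\bar{x}}}{2}+O(\Delta_n^{-\alpha})$ already suffices on its own.

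Averaging with the crossed estimate is harmless. It makes the coefficient $1/2$ manifest without computing $f'(0)$, and it leaves only even corrections. But it is not overcoming any real obstacle.
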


\begin{proof}
    The upper bound is derived in the same way as proposition \ref{universalratefunctionbound} using the gradient bound in eq.~(\ref{localgradbound}). The lower bound is derived using the coherent state approximation (lemma \ref{cohstateapprox}), where the error term decays in the heavy limit for $\alpha>0$, and applying Jensen's inequality to the result.
\end{proof}

\subsubsection{A matching theorem}

The appeal of the $\alpha$-local rate function lies in its ability to interpolate between the global and local pictures. For $\alpha \to 0$, we exactly produce the original global rate function with $\tilde{\bs{x}} \to \bs{x}$, while for $\alpha \to 1$, we recover an object directly related to the cumulant generating function of the classical measure. To see this, apply lemma~\ref{cohstateapprox} at $\alpha = 1$. Since $s_n = 1$, we have
\ba
\lambda_n(1;\tilde{\bs{x}}) = \log\mathcal{G}_n(\tilde{\bs{x}}/\De_n) = \log M_{\nu_n}(\tilde{\bs{x}}/\sqrt{2}) + O(\tilde{\bs{x}}^2/\De_n).
\ea
The corollary of proposition~\ref{mgfbound} ensures that $\log M_{\nu_n}(\tilde{\bs{x}}/\sqrt{2}) \leq |\tilde{x}| + |\tilde{\bar{x}}| + O(1/\De_n)$ for any fixed $\tilde{\bs{x}}$, so the first term is bounded and converges to $\log M_\nu(\tilde{\bs{x}}/\sqrt{2})$ as $n\to\infty$, where $M_\nu$ denotes the moment generating function of the classical measure. This is precisely the cumulant generating function of $\nu$ after identifying $\tilde{\bs{\theta}} = \tilde{\bs{x}}/\sqrt{2}$ as the relevant control parameter. 

Now that we have laid out the groundwork and addressed the limiting cases, a key question arises: we know that if we take $\alpha \to 1$ we are accessing a region of the correlator which is uniquely characterized by its classical measure, but for which correlator subsequences are we also able to ``zoom-out," taking $\alpha<1$, and still have the $\alpha$-local rate function be fully determined by the classical measure? Which conditions need to be satisfied to allow for this, and how small can we take $\alpha$ to be? A representative example of a case where this is true for any $\alpha<1$ is given by the family of coherent state correlators constructed from a classical measure $\nu$, whose $\alpha$-local rate function is defined and computed as follows:

\begin{prop}[$\alpha$-local rate function of coherent state correlator]
Consider a family of coherent state correlators $\{\tilde{\mathcal{G}}_\De(\bs{x})\}_{\De\in\mathbb{R}_+}$ constructed from the classical measure $\nu$. The corresponding $\alpha$-local coherent state rate function is given by
\ba
\tilde{\lambda}_{\nu}(\alpha;\tilde{\bs{x}}) &= \lim_{\De \to \infty} \frac{1}{\De^{1-\alpha}} \log \left( \tilde{\mathcal{G}}_{\De}(\tilde{\bs{x}}/\De^\alpha) \right)\\ &= \max_{\bs{\eta} \in \mathrm{conv}(\mathrm{supp}(\nu))} \{\tilde{\bs{x}}\cdot \bs{\eta}/\sqrt{2} \} \quad \forall \alpha  <1.
\ea
\end{prop}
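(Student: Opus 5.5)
The plan is to mimic the proof of proposition \ref{ratefunctioncompute}, with the Laplace-type bound run at the rescaled point $\bs{x} = \tilde{\bs{x}}/\De^\alpha$ and effective system size $\De^{1-\alpha}$. By the coherent state construction, in the $\bs{x}$ coordinates we have exactly
\ba
\tilde{\mathcal{G}}_\De(\tilde{\bs{x}}/\De^\alpha) = \int e^{(\tilde{\bs{x}}/\De^\alpha)\cdot \bs{h}/\sqrt{2}}\, d\mu_\De(\bs{h}) = \int e^{\De^{1-\alpha}\, \tilde{\bs{x}}\cdot\bs{\eta}/\sqrt{2}}\, d\nu(\bs{\eta}),
\ea
using $\mu_\De(\bs{h}) = \nu(\bs{h}/\De)$ and $\chi^{h/\sqrt{2}} = e^{x h/\sqrt{2}}$. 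There is no error term here: the coherent state correlator is an exact Laplace transform of the fixed measure $\nu$, and $\alpha$ enters only through the effective large parameter $N \equiv \De^{1-\alpha}$. For $\alpha<1$ we have $N\to\infty$ as $\De\to\infty$. This is precisely why the statement restricts to $\alpha<1$; at $\alpha=1$ one instead obtains $\log M_\nu(\tilde{\bs{x}}/\sqrt{2})$, as noted above.

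The key steps then follow the earlier proposition. First, by Bauer's maximum principle, it suffices to replace the maximum over the convex hull by the maximum $m(\tilde{\bs{x}}) = \max_{\bs{\eta}\in\mathrm{supp}(\nu)} \tilde{\bs{x}}\cdot\bs{\eta}/\sqrt{2}$ over the support. This maximum is attained because $\mathrm{supp}(\nu)\subset\mathcal{D}$ is compact, by theorem \ref{thm:maxheavy}. The upper bound $\frac{1}{N}\log\tilde{\mathcal{G}}_\De \le m(\tilde{\bs{x}})$ is immediate, since the integrand is at most $e^{N m}$ and $\nu$ is a probability measure.

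For the lower bound, I would \emph{not} rely on an atom of weight $A$ at the maximizer, since $\nu$ need not have atoms. This is the one real correction to the earlier argument. Instead, fix $\epsilon>0$ and let $U_\epsilon = \{\bs{\eta}: \tilde{\bs{x}}\cdot\bs{\eta}/\sqrt{2} > m-\epsilon\}$. This set is an open neighborhood of a maximizer lying in the support, so $\nu(U_\epsilon)=:A_\epsilon>0$ by the definition of support. It follows that
\ba
\tilde{\mathcal{G}}_\De \ge A_\epsilon e^{N(m-\epsilon)},
\ea
and hence $\liminf_{\De\to\infty} \frac{1}{N}\log\tilde{\mathcal{G}}_\De \ge m-\epsilon$, because $\log A_\epsilon / N\to 0$. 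Sending $\epsilon\to0$ gives the limit $m(\tilde{\bs{x}})$.

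The main (mild) obstacle is this lower bound, specifically the positivity of $\nu(U_\epsilon)$ without atoms. It is handled by the open-neighborhood argument above, which also retroactively fixes proposition \ref{ratefunctioncompute}. One should also remark that the limit is pointwise, and in fact locally uniform in $\tilde{\bs{x}}$, since both bounds are uniform on compacts: $m$ is Lipschitz, and $A_\epsilon$ can be chosen uniformly on compacts by a covering argument. This gives a genuine $\alpha$-local rate function in the sense of proposition \ref{preaa2}, with the answer independent of $\alpha\in[0,1)$.
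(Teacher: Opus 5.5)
Your proposal is correct and is essentially the paper's own argument: the paper simply reruns the proof of Proposition~\ref{ratefunctioncompute} with $\bs{x}\to\tilde{\bs{x}}$ and $\De\to\De^{1-\alpha}$, which is exactly what your exact Laplace-transform rewriting $\tilde{\mathcal{G}}_\De(\tilde{\bs{x}}/\De^\alpha)=\int e^{\De^{1-\alpha}\tilde{\bs{x}}\cdot\bs{\eta}/\sqrt{2}}\,d\nu(\bs{\eta})$ makes explicit. Your one change is the lower bound: you use $\nu(U_\epsilon)\,e^{N(m-\epsilon)}$, whereas the paper uses the atom weight $A$ at the maximizer. This is a genuine improvement, because the paper's $A$ can vanish. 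For example, the atomless $\nu^{(\mathrm{GG})}$ puts no mass on its maximizer $(\sqrt{2},\sqrt{2})$ when $x+\bar{x}>0$. Your open-neighborhood version matches the open-ball argument the paper itself uses in the proof of Theorem~\ref{matchinglocalratefunctions}.
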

\begin{proof}
    The proof is identical to \ref{ratefunctioncompute} after identifying $\bs{x} \to \tilde{\bs{x}}$ and $\De \to \De^{1-\alpha}$. For all $\alpha < 1$, $\De^{1-\alpha} \to \infty$ as $\De\to\infty$, so the resulting limit is the same. 
\end{proof}
Thus, we have $\tilde{\lambda}_{\nu}(\alpha;\tilde{\bs{x}}) = \tilde{\lambda}_{\nu}(\tilde{\bs{x}})$ for all $\tilde{\bs{x}} \in \mathbb{R}^2$ and $\alpha <1$. One can also find a number of examples, such as correlators of $\phi^L$ generalized free fields with fixed $L$ and $\Df \to \infty$, where the global rate function is also fully determined by the classical measure (these are special cases of coherent state correlators). One example of a correlator subsequence where one can \textit{not} take $\alpha \to 0$ and have the $\alpha$-local rate function determined by the classical measure are correlators of $\phi^L$ generalized fields with $L\to\infty$ and fixed $\De_\phi$. In this case, one can directly show that the $0$-local rate function disagrees with the coherent state rate function predicted by the classical measure. We can also consider the limit of these correlators where $L,\Df \to\infty$ with $L \sim \Df^{1-\alpha_c}$ and $\alpha_c \in (0,1)$. In this case, the $\alpha$-local rate function only matches the coherent state rate function for $1>\alpha > \alpha_c$, and the $\alpha_c$-local rate function serves as a key tool for characterizing the non-universal dynamics of these ``quasi-local" external states. We will work these examples out in detail and give a discussion of their physical interpretations in section \ref{sec:examples}. 

More formally, we answer the key question stated above for the case of $\alpha>0$ with the following theorem:
\begin{thm}[Matching local rate functions]
\label{matchinglocalratefunctions}
Let $\{(\mathcal{G}_n,\De_n)\}$ be a heavying subsequence of correlators such that $\lim_{n\to\infty}\lambda_n(\alpha;\tilde{\bs{x}}) = \lambda(\alpha,\tilde{\bs{x}})$ uniformly on all compact subsets of $\mathbb{R}^2$ and $\lim_{n\to\infty} \nu_n = \nu$ weakly, with $K = \mathrm{conv}(\mathrm{supp}(\nu))$. Then 
\ba
\lambda(\alpha;\tilde{\bs{x}}) =\tilde{\lambda}_{\nu}(\tilde{\bs{x}}) \qquad\text{for all}\qquad \alpha \geq \beta > 0,\, \tilde{\bs{x}} \in \mathbb{R}^2,
\ea
if and only if for every $\delta > 0$ we have
\ba
\lim_{n\to\infty} \frac{\log(\nu_n(K_\delta^c))}{\De_n^{1-\beta}} = -\infty,\quad (\star)
\ea
where $K_\delta = \{ \bs{\eta}\in \mathbb{R}^2\,|\, d(\bs{\eta},K)\leq \delta\}$ and $d(\bullet,\bullet)$ is the Euclidean metric on $\mathbb{R}^2$, with $K_\delta^c$ its complement in $\mathbb{R}^2$.
\end{thm}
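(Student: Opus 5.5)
The plan is to reduce the $\alpha$-local rate function to a Laplace-type limit of the moment generating function of the rescaled measures $\nu_n$, and then prove the equivalence with $(\star)$ by a Varadhan-type argument at speed $s_n \equiv \De_n^{1-\alpha}$.

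\textbf{Reduction to the moment generating function.} By lemma~\ref{cohstateapprox}, exactly as in the $\alpha=1$ discussion above and in the proof of proposition~\ref{universalboundonlocalratefunc}, for $\alpha>0$ we have
\ba
\lambda_n(\alpha;\tilde{\bs{x}}) = \frac{1}{s_n}\log \int e^{s_n \tilde{\bs{x}}\cdot\bs{\eta}/\sqrt{2}}\, d\nu_n(\bs{\eta}) + o(1),
\ea
locally uniformly in $\tilde{\bs{x}}$, since the error term decays whenever $\alpha>0$. Hence $\lambda(\alpha;\cdot)$ is the limit of the scaled cumulant generating functions $\Lambda_n(\bs{\theta}) = s_n^{-1}\log M_{\nu_n}(s_n\bs{\theta})$ with $\bs{\theta}=\tilde{\bs{x}}/\sqrt{2}$. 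The target $\tilde{\lambda}_\nu(\tilde{\bs{x}})$ is the support function $h_K(\bs{\theta}) = \max_{\bs{\eta}\in K}\bs{\theta}\cdot\bs{\eta}$, by proposition~\ref{ratefunctioncompute} and Bauer's principle.

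\textbf{Sufficiency of $(\star)$.} The lower bound $\liminf \Lambda_n(\bs{\theta}) \ge h_K(\bs{\theta})$ needs no assumption. Take $\bs{\eta}_*\in\mathrm{supp}(\nu)$ attaining the maximum and $\epsilon>0$. Weak convergence (portmanteau on the open ball) gives $\nu_n(B(\bs{\eta}_*,\epsilon))\ge c>0$ for large $n$. Then $\Lambda_n(\bs{\theta}) \ge h_K(\bs{\theta}) - \epsilon\|\bs{\theta}\| + s_n^{-1}\log c$, and we let $n\to\infty$ and then $\epsilon\to 0$.

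For the upper bound, split the integral over $K_\delta$ and $K_\delta^c$. On $K_\delta$ the integrand is at most $e^{s_n(h_K(\bs{\theta})+\delta\|\bs{\theta}\|)}$. On $K_\delta^c$ apply Cauchy--Schwarz:
\ba
\int_{K_\delta^c} e^{s_n\bs{\theta}\cdot\bs{\eta}}\,d\nu_n \le \nu_n(K_\delta^c)^{1/2}\, M_{\nu_n}(2s_n\bs{\theta})^{1/2}.
\ea
The second factor is at most $e^{C s_n\|\bs{\theta}\|}$ by the uniform gradient bound of proposition~\ref{preaa2} (equivalently, the upper bound $\max(0,\tilde x)+\max(0,\tilde{\bar x})$). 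The first factor is $e^{-s_n\cdot\omega(1)}$ by $(\star)$, because $\De_n^{1-\beta}\ge\De_n^{1-\alpha}=s_n$ for $\alpha\ge\beta$. So the tail contribution is negligible at speed $s_n$, which gives $\limsup\Lambda_n \le h_K(\bs{\theta})+\delta\|\bs{\theta}\|$, and then we send $\delta\to0$.

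\textbf{Necessity.} Suppose $(\star)$ fails for some $\delta$. Then along a subsequence $\nu_n(K_\delta^c)\ge e^{-C\De_n^{1-\beta}}$. Since $K$ is compact and convex, $K_\delta^c$ is covered by finitely many half-planes $\{\bs{\eta}: \bs{\theta}_i\cdot\bs{\eta}\ge h_K(\bs{\theta}_i)+\delta/2\}$ with unit normals $\bs{\theta}_i$, taken from a fine enough net of directions. Pigeonholing, one fixed $i$ carries mass at least $e^{-C\De_n^{1-\beta}}/m$ along a further subsequence, where $m$ is the number of half-planes.

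Now use $\alpha=\beta$, so that $s_n=\De_n^{1-\beta}$, and the direction $t\bs{\theta}_i$ with $t>0$. This gives
\ba
\Lambda_n(t\bs{\theta}_i) \ge t\,(h_K(\bs{\theta}_i)+\delta/2) - C - o(1).
\ea
For $t > 2C/\delta$ this strictly exceeds $t\,h_K(\bs{\theta}_i)=h_K(t\bs{\theta}_i)$, contradicting $\lambda(\beta;\cdot)=\tilde\lambda_\nu$ at $\tilde{\bs{x}}=\sqrt{2}\,t\bs{\theta}_i$.

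\textbf{Main obstacle.} I expect the hardest step to be controlling the unbounded tail in the sufficiency direction. The Cauchy--Schwarz bound requires a uniform exponential moment bound on $\nu_n$ at speed $s_n$, and I would extract it from the gradient bound of proposition~\ref{preaa2}. A second point to check is that the coherent-state error from lemma~\ref{cohstateapprox} is uniform on compact sets of $\tilde{\bs{x}}$ at speed $s_n$, so that it does not interfere with the Laplace asymptotics.
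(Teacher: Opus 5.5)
Your sufficiency direction follows the paper's argument. It uses the same reduction via lemma~\ref{cohstateapprox} and the same $K_\delta/K_\delta^c$ split with Cauchy--Schwarz on the tail. The paper takes the exponential moment bound from proposition~\ref{mgfbound}, which is the integrated gradient bound you invoke. The lower bound is the same portmanteau argument.

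Necessity is where you genuinely differ, and your route is correct. The paper reads matching at $\alpha=\beta$ as convergence of the speed-$\De_n^{1-\beta}$ cumulant generating function of $\nu_n$ to the support function $h_K$. It then cites the G\"artner--Ellis upper bound with $h_K^*=\iota_K$, so every closed set at positive distance from $K$ gets rate $+\infty$. You unroll that machinery instead: a finite half-plane cover of $K_\delta^c$, a pigeonhole step, and exponential Chebyshev along a ray $t\bs{\theta}_i$, ending in a contradiction once $t>2C/\delta$.

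This is essentially the G\"artner--Ellis upper-bound proof specialized to a support function, and it has three advantages:
\begin{enumerate}
\item The half-plane cover handles the unbounded set $K_\delta^c$ in one step, so no separate compact-set plus exponential-tightness argument is needed.
\item It uses only the inequality $\lambda(\beta;\cdot)\le\tilde\lambda_\nu$.
\item It is quantitative: a failure of $(\star)$ with constant $C$ produces an explicit excess $t\delta/2-C$ along a specific ray. This is exactly the ``observational signature'' the paper describes after the theorem.
\end{enumerate}
The paper's route is shorter and yields a full large-deviation upper bound over all closed sets at once.

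The covering claim needs its one-line proof, because $K_\delta^c$ is unbounded and a naive net estimate picks up a term proportional to $\|\bs{\eta}\|$. For $\bs{\eta}\in K_\delta^c$, write $\bs{\eta}=\bs{p}+r\bs{u}$, where $\bs{p}$ is the projection onto $K$, $r=d(\bs{\eta},K)>\delta$, and $h_K(\bs{u})=\bs{u}\cdot\bs{p}$. Any net direction with $\|\bs{\theta}_i-\bs{u}\|\le\epsilon$ then gives $\bs{\theta}_i\cdot\bs{\eta}-h_K(\bs{\theta}_i)\ge r(1-\epsilon^2/2)-2\epsilon R_K$, where $R_K=\max_{\bs{k}\in K}\|\bs{k}\|$. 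This exceeds $\delta/2$ once $\epsilon$ is small.

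Finally, your lower bound discards $s_n^{-1}\log c$, which needs $s_n=\De_n^{1-\alpha}\to\infty$, i.e.\ $\alpha<1$; the paper's proof relies on the same thing. At $\alpha=1$ the local rate function is $\log M_\nu(\tilde{\bs{x}}/\sqrt{2})$, which equals $\tilde\lambda_\nu$ only when $\nu$ is a point mass. The equivalence should therefore be read for $\beta\le\alpha<1$.
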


\begin{proof}
To sketch, we prove that super-exponential tightness $(\star)$ implies matching using the squeeze theorem, with the Portmanteau theorem determining the lower bound, and that matching implies super-exponential tightness with the upper bound of the G\"artner-Ellis theorem \cite{DemboZeitouni1998}. See appendix \ref{proofofmatchinglocalratefunctions} for the detailed proof.
\end{proof}

The intuitive picture of this result is that, in a sufficiently small window of the self-dual point, the rate function is fully determined by the support of the limiting classical measure $\nu$, as long as the total mass outside the support of $\nu$ decays sufficiently fast as $n \to\infty$. If the mass outside the support of $\nu$ does not decay quickly enough, then it is possible for there to be other classical modes or fat tails of $\{\nu_n\}$ that cause the true rate function to deviate from the coherent state rate function away from asymptotic deviations $\sim 1/\De_n$ around self-duality. The strength of this result lies in the fact that it is a true equivalence between a matching statement for $\alpha$-local rate functions and the super-exponential tightness property $(\star)$. This allows us to directly relate a statement which is position-space in flavor to a rather strong statement that is spectral in flavor. In practice, it is the reverse direction that is most useful. It is rather straightforward to compute rate functions, but difficult to work with the exact rescaled OPE measure sequences. The fact that we can make such a strong statement from matching rate functions alone is a genuinely useful result; any failure of super-exponential tightness at speed $\De_n^{1-\beta}$ would manifest itself as a deviation of the $\alpha$-local rate function from the coherent state prediction at the scale $\alpha = \beta$, giving a direct observational signature.

\subsubsection{Select bounds on coherent state rate functions}
\label{selectbounds}
In proposition \ref{universalratefunctionbound}, we presented a universal bound on rate functions defined globally on $\mathscr{C}$. When the self-dual free energy density $\Sigma = \lim_{n\to\infty} \log(\widehat{\mathcal{G}}_n(\bs{\rho}_*))/\De_n$ satisfies $\Sigma = 0$, these bounds collapse to a universal function along the diagonal limit given by $\lambda(x,x) = \max(0,2x)$. Away from this diagonal, the upper and lower bounds remain non-trivial. We now want to sharpen the upper bound in the case of coherent state rate functions constructed from a classical measure $\nu$ that satisfies certain properties. More specifically, we want to consider a classical measure $\nu$ which receives finite contributions from the identity modes, and has a gap of $\eta_0$ in the spectrum of non-identity classical modes. This can be thought of as a classical twist gap assumption: one is \textit{not} assuming that there are no quantum states in the true OPE measure lying below $ \eta_0\De$. Instead, one is merely assuming that, after passing to the rescaled OPE measure and taking the heavy limit, whichever quantum modes are present in the OPE measure coalesce into classical saddles which lie above this gap. Through the discrete symmetry of the classical measure, this classical twist gap implies that all non-identity modes must lie within a closed square with corners at $\{(\eta_0,\eta_0),(\sqrt{2} - \eta_0,\eta_0),(\eta_0,\sqrt{2}-\eta_0),(\sqrt{2}-\eta_0,\sqrt{2}-\eta_0)\}$. 

Combining this non-universal region with the identity operator contributions at $\bs{\eta} = (0,0),(\sqrt{2},\sqrt{2})$, the convex hull of $\nu$ lies within a diamond with vertices at $\{(0,0),(\sqrt{2} -\eta_0,\eta_0),(\eta_0,\sqrt{2} - \eta_0),(\sqrt{2},\sqrt{2})\}$. This is the maximal convex hull, assuming our classical twist gap assumption. On the other hand, the minimal convex hull is simply a straight line connecting the identity modes, where non-identity modes may still be present along this diagonal. From corollary \ref{corollaryofcohstateratefunction}, we know that a rate function computed with these convex hull assumptions will give an upper and lower bound, respectively, on all coherent state rate functions computed from a classical measure satisfying these assumptions. 

We can then compute the upper bound directly by placing coherent states at the points $\{(0,0),(\sqrt{2} -\eta_0,\eta_0),(\eta_0,\sqrt{2} - \eta_0),(\sqrt{2},\sqrt{2})\}$ in the classical measure as follows:
\ba
\label{ratetwistgap}
\lambda_{\eta_0}(\bs{x}) &= \lim_{\De \to \infty} \frac{1}{\De} \log\left(1 + e^{\De((1 -\eta_0/\sqrt{2})x +\eta_0 \bar{x}/\sqrt{2}) } + e^{\De((1 -\eta_0/\sqrt{2})\bar{x} +\eta_0 x/\sqrt{2}) } + e^{\De(x+\bar{x})}\right)\\
&=\begin{cases}
    0 & \text{if } \bs{x} \in H^{\leq}(\frac{\eta_0 -\sqrt{2}}{\eta_0}) \bigcap H^{\leq}(\frac{\eta_0}{\eta_0 - \sqrt{2}}) \\
    (1-\eta_0/\sqrt{2})x + (\eta_0/\sqrt{2})\bar{x} & \text{if } \bs{x} \in H^{\geq}(\frac{\eta_0 -\sqrt{2}}{\eta_0}) \bigcap H^{\leq}(\frac{\eta_0}{\eta_0 - \sqrt{2}}) \\
    (1-\eta_0/\sqrt{2})\bar{x} + (\eta_0/\sqrt{2})x & \text{if } \bs{x}\in H^{\leq}(\frac{\eta_0 -\sqrt{2}}{\eta_0}) \bigcap H^{\geq}(\frac{\eta_0}{\eta_0 - \sqrt{2}})
    \\
    x +\bar{x} & \text{if } \bs{x} \in H^{\geq}(\frac{\eta_0 -\sqrt{2}}{\eta_0}) \bigcap H^{\geq}(\frac{\eta_0}{\eta_0 - \sqrt{2}})
\end{cases},
\ea
where $H^{ \bullet}(\kappa) = \{ (x,\bar{x}) \in \mathbb{R}^2 \,|\, \bar{x} \bullet \kappa x \; \;\text{or} \;\; x \bullet \bar{x}/\kappa \}$ denotes a half plane. 

Doing the same for the lower bound with support at $\{(0,0),(\sqrt{2},\sqrt{2})\}$ gives us the following inequality for all $\bs{x} \in\mathbb{R}^2$:
\ba
\max(0,x+\bar{x}) \leq \tilde{\lambda}_{\nu}(\bs{x}) \leq \lambda_{\eta_0}(\bs{x}),
\ea
where $\tilde{\lambda}_{\nu}(\bs{x}) = \lambda_{\eta_0}(\bs{x})$ for $\bs{x} \in \left( H^{\leq}(\frac{\eta_0 -\sqrt{2}}{\eta_0}) \bigcap H^{\leq}(\frac{\eta_0}{\eta_0 - \sqrt{2}})\right) \bigcup \left( H^{\geq}(\frac{\eta_0 -\sqrt{2}}{\eta_0}) \bigcap H^{\geq}(\frac{\eta_0}{\eta_0 - \sqrt{2}}) \right)$.\newline

\noindent This result tells us that we can expect three main dynamical phases:
\begin{itemize}
\item a {\it low-temperature phase} associated with the s-channel identity saddle, 
\item a {\it high-temperature phase} associated with the t-channel identity saddle, and
\item a {\it non-universal phase} induced by off-diagonal modes allowed within the region specified by our classical twist gap. 
\end{itemize}
We denote the region that the non-universal phase occurs in as 
\ba
D&{}_\text{non-univ.~heavy}^{(\eta_0)} \\
&= \left(H^{\leq}\left( \frac{\eta_0 - \sqrt{2}}{\eta_0}\right) \bigcap H^{\geq}\left( \frac{\eta_0}{\eta_0 - \sqrt{2}}\right) \right)\bigcup \left(H^{\geq}\left( \frac{\eta_0 - \sqrt{2}}{\eta_0}\right) \bigcap H^{\leq}\left( \frac{\eta_0}{\eta_0 - \sqrt{2}}\right) \right).
\ea
In figure \ref{fig:convexhullandratefunction}, we plot these rate functions for different values of $\eta_0$, along with the associated convex hulls of $\mathrm{supp}(\nu)$.

\begin{figure}[t] 
    \centering 
    \includegraphics[width=1\textwidth]{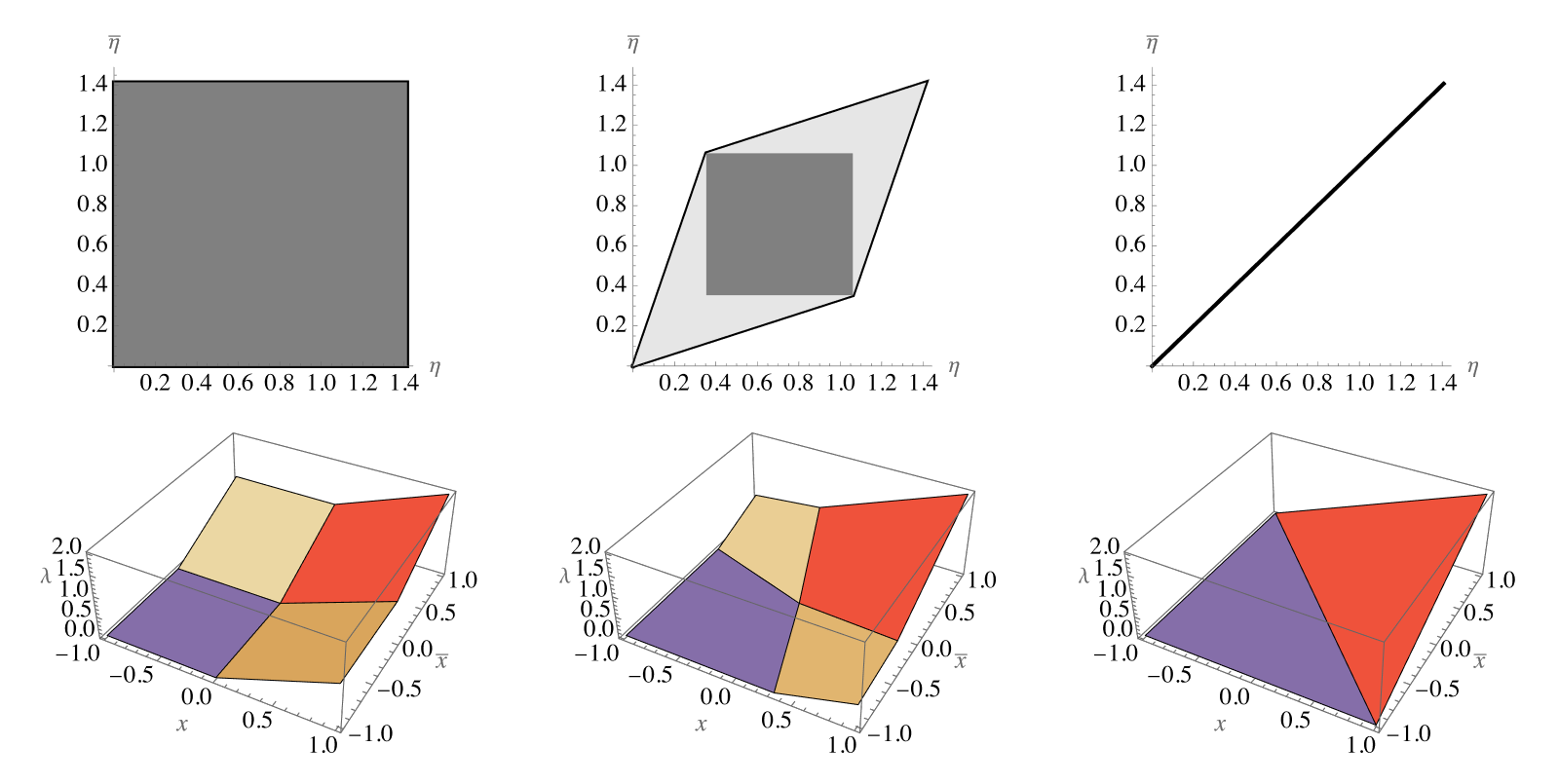}
    \caption{Top row: the light gray region is the convex hull of $\mathrm{supp}(\nu)$ and the dark gray region is the allowed region for non-identity contributions with classical twist gaps of $\eta_0 = \{0,\sqrt{2}/4,\sqrt{2}/2\}$. Bottom row: the coherent state rate functions computed from $\nu$ plotted over $\bs{x} \in [-1,1]\times[-1,1]$. The red region is the high-temperature t-channel phase, the blue region is the low-temperature s-channel phase, and the tan region is a non-universal chiral phase induced by the presence of off-diagonal modes in $\nu$.} 
    \label{fig:convexhullandratefunction} 
\end{figure}

\subsubsection{A parallel to the analysis of HKS and DPQ}
\label{parallelHKS}

The phase structure of maximally heavy correlators described above bears close resemblance to the universal behavior of torus partition functions in 2d CFT at large central charge. In a seminal work by Hartman, Keller, and Stoica (HKS)~\cite{Hartman:2014oaa}, the authors conjectured that the free energy of a torus partition function at large central charge becomes universal for all values of temperature, provided the low-lying spectrum is sufficiently sparse. In a subsequent work by Dey, Pal, and Qiao (DPQ)~\cite{Dey:2024nje}, the authors proved this conjecture and provided a more general formulation of phase diagrams dependent on the twist gap of the theory. The main result of this section is that the non-universal domains in both problems coincide near the self-dual point, when expressed in terms of coherent state variables defined analogously for each setup, with the twist gap parameters related by a simple rescaling.

The torus partition function is studied using the familiar $q$-expansion, given by
\ba
\widehat{Z}(\bs{\beta}) = \sum_{\bs{h}} n_{\bs{h}} \bs{q}^{\bs{h}},
\ea
where $\bs{q} = (q,\bar{q}) = (e^{-\beta_L},e^{-\beta_R})$. Modular invariance is the condition that
\ba
\widehat{Z}(\beta_L,\beta_R) = e^{\frac{c}{24}\left( \frac{4\pi^2}{\beta_L} - \beta_L + \frac{4\pi^2}{\beta_R} - \beta_R \right)}\widehat{Z}(4\pi^2/\beta_L, 4\pi^2/\beta_R).
\ea
In parallel to our setup, we identify $\De = c/24$ and introduce analogous coherent state variables for the torus partition function $\bs{\chi} = (\chi,\bar{\chi}) = (e^x, e^{\bar{x}})$, with
\ba
x = \frac{4\pi^2}{\beta_L} - \beta_L \quad \text{and} \quad \bar{x} = \frac{4\pi^2}{\beta_R} - \beta_R.
\label{cohstatetorus}
\ea
Under this change of variables, the modular crossing equation takes the identical form to the s-t crossing equation for four-point correlators:
\ba
\widehat{Z}(\bs{\chi}) = |\bs{\chi}|^{2\De} \widehat{Z}(1/\bs{\chi}),
\ea
where the self-dual point $(\beta_L,\beta_R) = (2\pi,2\pi)$ maps to $\bs{x} = 0$. 

Now, let us present the results from DPQ. The main object of study is the free energy in the limit $c \to \infty$:
\ba
\Omega(\beta_L,\beta_R) = \log\left(Z(\beta_L,\beta_R)\right) = \log\left(e^{\frac{c}{24}(\beta_L + \beta_R)} \widehat{Z}(\beta_L,\beta_R)\right).
\ea
For $\beta_L \beta_R \geq 4\pi^2$, the free energy is dominated by the vacuum contribution
\ba
\Omega(\beta_L,\beta_R) = \frac{c}{24}(\beta_L + \beta_R) + \mathcal{E}(\beta_L,\beta_R),
\ea
while for $0 < \beta_L \beta_R \leq 4\pi^2$, the free energy is dominated by the vacuum under the image of the modular transform
\ba
\Omega(\beta_L,\beta_R) = \frac{c}{24}\left(\frac{4\pi^2}{\beta_L} + \frac{4\pi^2}{\beta_R}\right) + \mathcal{E}\left(\frac{4\pi^2}{\beta_L}, \frac{4\pi^2}{\beta_R}\right).
\ea
The error term $\mathcal{E}(\beta_L,\beta_R)$ is precisely what DPQ bounds. Along the diagonal $\beta_L = \beta_R$, the free energy transitions directly from the low-temperature phase (thermal AdS) to the high-temperature phase (AdS black holes), with no intermediate non-universal region. Away from the diagonal, near the self-dual line $\beta_L \beta_R = 4\pi^2$, a non-universal phase can emerge where non-vacuum states dominate.

DPQ characterize the non-universal domain in terms of the function
\ba
f(\zeta; y) = \frac{1}{2}\left(\zeta(2-y) + \sqrt{\zeta^2(2-y)^2 + 4(1-\zeta)}\right),
\ea
where the twist gap is $\min(h,\bh) = \zeta (c/24)$. The non-universal domain is
\ba
D^{(\zeta)}_{\text{non-univ.~torus}} = \left\{(\beta_L,\beta_R) \,\Bigg|\, \frac{\beta_L}{2\pi} < f\left(\zeta, \frac{\beta_R}{2\pi}\right), \frac{2\pi}{\beta_R} < f\left(\zeta, \frac{2\pi}{\beta_L}\right)\right\} \bigcup (\beta_L \leftrightarrow \beta_R).
\ea
For all values of $\zeta$, this domain pinches off at the self-dual point $(\beta_L,\beta_R) = (2\pi,2\pi)$. For $\zeta > 1/2$, the domain terminates at finite points on the self-dual line, while for $\zeta \leq 1/2$ it extends along the entire self-dual line.

We now compare the DPQ result to our bound on the rate function given in eq.~(\ref{ratetwistgap}). First, consider the location of the phase transition in $(\beta,\bar{\beta})$ variables, related to radial monomials as $\bs{\rho} = (e^{-\beta}, e^{-\bar{\beta}})$. In these variables, the self-dual line is given by
\ba
\sech^2\left(\frac{\beta}{2}\right) + \sech^2\left(\frac{\bar{\beta}}{2}\right) = 1,
\ea
which is the location of the mutual information transition observed in the causal diamond, in agreement with the results of \cite{Ceyhan:2025qrj}. 

In terms of temperature, this transition occurs at a different location than the Hawking-Page transition. However, both transitions can be compared on equal footing by expressing each in terms of their respective coherent state variables. For four-point functions, we define
\ba
\bs{\chi} = (e^x,e^{\bar{x}}) = \left( \csch^2\left( \frac{\beta}{2}\right), \csch^2\left( \frac{\bar{\beta}}{2}\right)\right),
\ea
while for the torus partition function we define coherent states as in eq.~(\ref{cohstatetorus}). These definitions are chosen so that the crossing equation and modular invariance condition take the same functional form in $\bs{\chi}$.

To express the DPQ domain in terms of coherent state variables, we solve eq.~(\ref{cohstatetorus}) for $(\beta_L,\beta_R)$ to obtain
\ba
\beta_L = \frac{1}{2}\left(-x + \sqrt{16\pi^2 + x^2}\right) \quad \text{and} \quad \beta_R = \frac{1}{2}\left(-\bar{x} + \sqrt{16\pi^2 + \bar{x}^2}\right),
\ea
selecting the branch where $\beta_L, \beta_R > 0$ for all $\bs{x} \in \mathbb{R}^2$. Substituting into the definition of $D^{(\zeta)}_{\text{non-univ.~torus}}$ gives a complicated set of functional inequalities in $\bs{x}$. However, expanding both sides of the inequalities around $\bs{x} = 0$, the boundary conditions linearize. Explicitly, the inequality $\frac{\beta_L}{2\pi} < f(\zeta, \frac{\beta_R}{2\pi})$ becomes
\ba
\bar{x} > x\frac{\zeta - 2}{\zeta} + O(x^2 \zeta),
\ea
and similarly $\frac{2\pi}{\beta_R} < f(\zeta, \frac{2\pi}{\beta_L})$ becomes
\ba
\bar{x} < x\frac{\zeta}{\zeta - 2} + O(x^2 \zeta ).
\ea
The linearized domain is therefore
\ba
D^{(\zeta)}_{\text{non-univ.~torus}} = \left\{(x,\bar{x}) \,\Bigg|\, \bar{x} > x\frac{\zeta - 2}{\zeta} + O(x^{2}\zeta), \; \bar{x} < x\frac{\zeta}{\zeta - 2} + O(x^{2}\zeta)\right\} \bigcup (x \leftrightarrow \bar{x}),
\ea
which, at leading order in small $\bs{x}$, is exactly $D^{(\eta_0)}_{\text{non-univ.~heavy}}$ with $\eta_0 = \zeta/\sqrt{2}$. The non-universal domains therefore coincide near the self-dual point. 

More precisely, we have
\ba
A\left( B_{a}\bigcap\left(D^{(\zeta/\sqrt{2})}_{\text{non-univ.~heavy}} \setminus D^{(\zeta)}_{\text{non-univ.~torus}}\right) \right)/A(B_a) =  O(a \zeta),
\ea
where $B_{a}$ denotes an open $2$-ball of radius $a$ centered around $\bs{x} = 0$, $A(U)$ denotes the area of an open set $U \subset \mathbb{R}^2$, and $D^{(\zeta)}_{\text{non-univ.~torus}} \subset D^{(\zeta/\sqrt{2})}_{\text{non-univ.~heavy}} $ so we can define $D^{(\zeta/\sqrt{2})}_{\text{non-univ.~heavy}}\setminus D^{(\zeta)}_{\text{non-univ.~torus}}$ as standard set subtraction. Thus, for any fixed $\zeta$ and $a\to 0$ or fixed $a$ and $\zeta \to 0$, the area of the difference between the non-universal regions within the ball vanish relative to the total area of the ball. The factor of $\sqrt{2}$ discrepancy between the twist-gap assumption of HKS and DPQ and our classical twist-gap assumption arises precisely from the fact that we are directly bounding the classical spectrum. Indeed, a coherent state with classical dimension $\eta_0$ has a twist-gap in its exact OPE measure at $\zeta = \sqrt{2}\eta_0$, which coincides with the twist gap assumption of HKS and DPQ. 

In figure~\ref{fig:nonuniversalregions}, we plot these non-universal regions for different values of $\zeta$, comparing the exact DPQ domain in $\bs{x}$-coordinates to the linearized domain predicted for maximally heavy correlators with $\eta_0 = \zeta/\sqrt{2}$. The matching of phase diagrams reflects a common underlying mechanism: in both cases, the phase structure is determined by a competition between identity-type contributions localized at the boundary of the spectral support, with the twist gap controlling the extent of the non-universal region where other states can dominate.

\begin{figure}[t] 
    \centering 
    \includegraphics[width=1\textwidth]{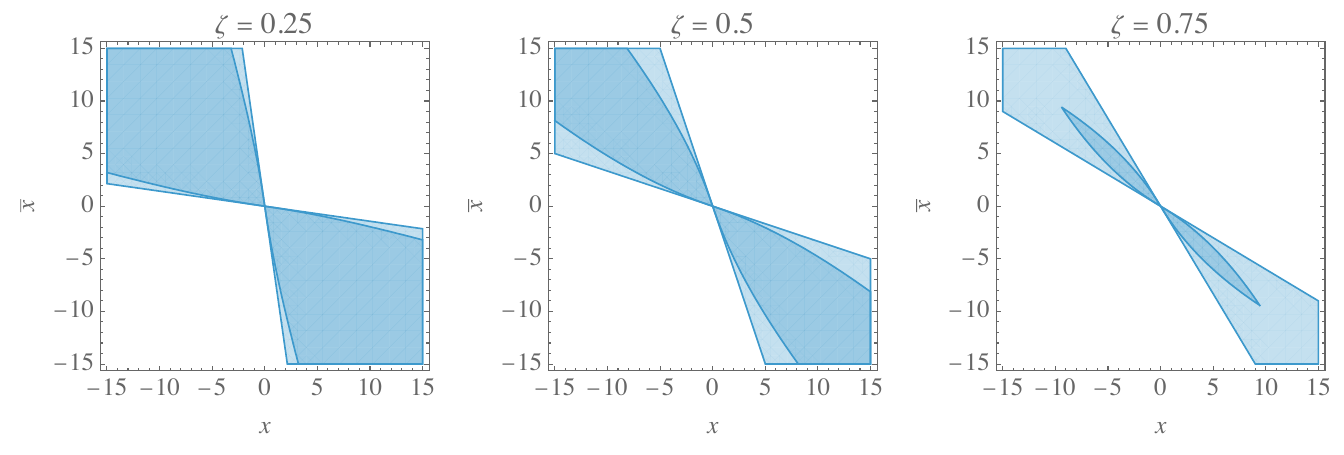} 
    \caption{The non-universal regions of the free energy of a torus partition function (dark blue) and the coherent state rate function of a maximally heavy correlator (light blue) with classical twist gap $\eta_0 = \zeta/\sqrt{2}$ for $\zeta = \{1/4,1/2,3/4\}$. The domains coincide for all $\bs{x}$ as $\zeta \to 0$, and when $|\bs{x}| \ll 1/\sqrt{\zeta}.$} 
    \label{fig:nonuniversalregions}
\end{figure}

\subsubsection{Phases near self-duality}
\label{phasetype}
Since the coherent state rate function constructed from a classical measure $\nu$ depends only on $\mathrm{supp}(\nu)$, the isotropy subgroup classification of the measure support shown in table \ref{tab:orbits} naturally extends to a classification of phases in the vicinity of the self-dual point. In this section, we describe in more detail the different phase structures each type of measure gives rise to. In the next section we will also give explicit examples of Type 0, 1, and 3 measures and the corresponding coherent state correlators. 

\paragraph{Type 0:} This type of classical measure is trivial and contains only a single normalized delta mass at $\bs{\eta} = (\sqrt{2}/2, \sqrt{2}/2)$. As a result, the coherent state rate function is also trivial and is given by
\ba
\tilde{\lambda}_0(\bs{x}) = \frac{x+\bar{x}}{2}.
\ea
When an $\alpha$-local rate function matches this coherent state rate function for $\alpha > \alpha_c$ it means that the bulk states produced by the external operators are delocalized when viewed at scales smaller than $1/\De^{\alpha_c}$. If $\alpha_c = 0$, then the state is completely delocalized and no transition is observed by the global rate function. This rate function is also the lower bound in proposition \ref{universalboundonlocalratefunc} after replacing $\bs{x} \to \tilde{\bs{x}}$.

\paragraph{Type 1:} Measures of purely this type have non-trivial support contained on the diagonal $\{\eta = \bar{\eta} \}$, and therefore have coherent state rate functions which are discontinuous along the self-dual line. If the classical measure has finite contributions coming from the s and t-channel identity saddles, then the convex hull is the entire line segment beginning at $\bs{\eta} = (0,0)$ and ending at $\bs{\eta} = (\sqrt{2},\sqrt{2})$. These measures are quite common in the simplest examples, and give rise to a universal s-channel phase where there is no mutual information shared by operator pairs on the corners of nested causal diamonds, and a universal t-channel phase where mutual information scales linearly as one brings the operator pairs close to one another~\cite{Ceyhan:2025qrj}. The sharpness of this transition comes from the fact that the bulk states sourced by the correlated operators localize along different bulk geodesics that are causally disconnected in the s-channel phase and causally connected in the t-channel phase. 

The coherent state rate functions associated with a general Type 1 measure are of the form
\ba
\tilde{\lambda}^{(\kappa)}_1(\bs{x}) = \begin{cases}
    (1-\kappa)(x+\bar{x}) & \text{if } x + \bar{x} > 0 \\
    \kappa (x+\bar{x}) & \text{if } x+\bar{x} \leq 0 
\end{cases},
\ea
where $\mathrm{conv}(\mathrm{supp}(\nu)) = \{(\sqrt{2} \lambda, \sqrt{2} \lambda)\,|\,
 \forall \lambda\in[\kappa,1-\kappa] \}$ is a diagonal line segment of length $2 (1-2\kappa)$ centered around $(\sqrt{2}/2,\sqrt{2}/2)$, and $\kappa \in [0,1/2]$. The $\kappa = 0$ case is when there is a finite identity contribution, and the segment extends to the full $\mathcal{D}^{\langle C \rangle }$ subspace.  The $\kappa = 1/2$ case simply reduces to $\tilde{\lambda}_0$.

\paragraph{Type 2:} These measures are a $90^{\circ}$ rotated version of Type 1 measures, where instead of the support contained in the diagonal, it is contained in the anti-diagonal of $\{\eta +\bar{\eta} = \sqrt{2}\}$. As a result, the significance of the self-dual line and the diagonal limit flips, and these rate functions encounter a discontinuity only along the self-dual line. 

We can compute a general form of the coherent state rate function in an analogous way to the Type 1 case, which gives
\ba
\tilde{\lambda}_2^{(\kappa)}(\bs{x}) = \begin{cases}
   (1-\kappa) x & \text{if } x  - \bar{x} > 0 \\
    (1-\kappa) \bar{x} & \text{if } x -\bar{x} \leq 0 
\end{cases},
\ea
where $\mathrm{conv}(\mathrm{supp}(\nu)) = \{(\sqrt{2} \lambda, \sqrt{2}(1- \lambda))\,|\, \forall \lambda\in[\kappa,1-\kappa] \}$ is a line segment on the anti-diagonal. The phases in this case arise from the presence of saddles with non-zero classical spin, and should therefore be heuristically interpreted as classical ``vortex" phases, where the direction in which the vortex is spinning is determined by the sign of $x - \bar{x}$. 

While we have not yet found a physically realizable correlator which produces such a phase diagram, it is possible to write down a solution to the s-t crossing equation which does, given by
\ba
\tilde{\mathcal{G}}_\De^{(\kappa)}(\bs{\chi}) = \chi^{\De \kappa}\bar{\chi}^{\De(1- \kappa)} + \chi^{\De(1-\kappa)}\bar{\chi}^{\De \kappa}.
\ea
Computing the coherent state rate function for $\tilde{\mathcal{G}}_\De^{(\kappa)}(\bs{\chi})$  with $\bs{\chi} = ( e^{\bs{x}})$ produces $\tilde{\lambda}_2^{(\kappa)}$.

\paragraph{Type 3:} These are the most general kind of measures, of which Type 0, 1, and 2 are subsets. Coherent state rate functions associated to these measures do not admit a simple general form, and may have phase transitions along both the self-dual line and diagonal limit (along with other lines which cross through the self-dual point). This type of phase diagram arose in our previous discussion of correlators with a specified twist gap \ref{selectbounds}. We further consider a physically realizable correlator of this type in example \ref{chiralproduct}.

\section{Examples}
\label{sec:examples}

In this section, we apply the formalism developed above to a collection of four-point correlators arising in generalized free theories in general dimension, chiral product theories in 2d, and planar $\mathcal{N} = 4$ SYM at tree level. We compute the classical measures and rate functions for each example, and use the results to organize maximally heavy operators into three classes based on the degree to which they localize in the bulk:
\begin{itemize}[leftmargin=*]
\item \textbf{Localized:} operators whose rate functions exhibit sharp phase transitions around the self-dual point that are determined by their classical measure. In these cases, the bulk states sourced by the external operators localize along worldlines, producing distinct phases described by connecting different pairs of external operators.
\item \textbf{Delocalized:} operators whose rate functions are smooth, exhibiting a crossover rather than a sharp transition. These may correspond to states which fragment into a gas of finite-mass particles as $\De \to \infty$, and do not localize along any bulk geodesic. The classical measure collapses to the universal Type 0 form, and loses all sensitivity to the microscopic composition of the state.
\item \textbf{Quasi-localized:} operators which are intermediate between localized and delocalized, and cannot be fully characterized by either the classical measure or the rate function alone. These states are instead distinguished by the $\alpha$-local rate function, for which there exists a critical value $\alpha_c \in (0,1)$ that quantifies exactly how localized the operator is: for $\alpha > \alpha_c$, the local rate function trivializes to the coherent state rate function of the Type 0 classical measure, while for $\alpha < \alpha_c$, a sharp phase transition re-emerges encoding that the state looks localized over large enough distance scales. Only at $\alpha = \alpha_c$ does the local rate function resolve the microscopic details of the state.
\end{itemize}
The first two examples, given by generalized free fields and chiral product correlators respectively, allow one to realize all three classes by tuning the relative scaling of the elementary field dimension $\Df$ and the composite length $L$ in the $\De \to \infty$ limit. The third example, tree-level four-point functions of maximal giant gravitons in $\mathcal{N} = 4$ SYM, provides a physically distinct realization of the localized class: here the external operators have $\De = N$ and are built from elementary fields with fixed classical dimension, so the localization mechanism does not rely on tuning $\Df$ to infinity.

\subsection{Generalized free fields}
\label{GFF example}
The simplest example we will discuss is a correlator of elementary scalar fields, and normal-ordered products thereof, in a generalized free field theory (GFF) on $\mathbb{R}^d$. The boundary action is given by
\ba
S = \int_{\mathbb{R}^d} \phi(\bs{y}) (-\De)^{\gamma} \phi(\bs{y}) d\bs{y} \quad \text{with} \quad \gamma \equiv \frac{d-2\Df}{2},
\ea
where $(-\De)^{\gamma}$ is the fractional Laplacian with dimension $[(-\De)^\gamma] = 2 \gamma$, and $\gamma$ is chosen such that $[\phi] = \Df$. When $\Df = \frac{d-2}{2}$, this just reduces to the standard free field Lagrangian. The equation of motion is simply $(-\De)^\gamma\phi = 0$, so that all correlation functions are computed with Wick contractions. We use an intermediate unit normalization of $\phi$ such that
\ba
\langle \phi^L(0) \phi^L(y)\rangle = \frac{L!}{y^{2\Df}} \quad \text{with} \quad \phi^L \equiv :\phi^L:.
\ea

The correlators we will study are given by
\ba
\widehat{\mathcal{G}}_L(u,v) = \frac{\langle \phi^L(x_1) \phi^L(x_2) \phi^L(x_3)\phi^L(x_4) \rangle}{\langle \phi^L(x_1) \phi^L(x_2) \rangle \langle \phi^L(x_3) \phi^L(x_4) \rangle},
\ea
with external dimension $\De_{\phi^L} = L\Df $. These were computed in \cite{Poland:2024hvb}, and admit a simple closed form as a sum over hypergeometrics
\ba
\widehat{\mathcal{G}}_L(u,v) = \sum_{k}^L \binom{L}{k}^2 u^{k \Df} {}_2 F_1 ( -k,-k;1;v^{-\Df}).
\ea
For integer $k$, the hypergeometric function simplifies into a polynomial, giving the explicit summands of
\ba
\label{hypergeometricsummand}
u^{k \Df} {}_2 F_1 ( -k,-k;1;v^{-\Df}) = \left( \frac{u}{v}\right)^{\Df k} \sum_{j = 0}^k \binom{k}{j}^2 v^{\Df j}.
\ea
We now want to argue that the normalized correlator in both the heavy limits of $\Df \to \infty$ with $L$ fixed and $L \to \infty$ with $\Df$ fixed can be approximated by the coherent state decomposition 
\ba
\label{cohstateGFF}
\widehat{\mathcal{G}}_L(\bs{\chi}) \simeq  \binom{2L}{L}^{-1}\sum_{k}^L \binom{L}{k}^2 |\bs{\chi}|^{2 \Df k}.
\ea
For the limit of $\Df \to \infty$ and $L$ fixed, this approximation is immediate since any term in~(\ref{hypergeometricsummand}) with $j\geq1$ becomes relatively exponentially suppressed at any point in $\widehat{\mathscr{C}}$. Thus, we are left only with the leading $j=0$ term which is the coherent state of $|\bs{\chi}|^{2\Df k}$. 

For the limit of $L\to \infty$ with $\Df$ fixed, the approximation comes from the fact that the squared binomial coefficients $\binom{L}{k}^2$ are sharply peaked around $k \sim L/2$, so that, after normalization, the terms in the sum with $k = O(1) $ are exponentially suppressed in the $L\to\infty$ limit since $\widehat{\mathcal{G}}_L(\bs{\rho}_*) \gtrsim \binom{2L}{L} \approx 4^L/\sqrt{\pi L}$. Thus, we assume that non-negligible terms in the total sum have $ k \propto L$. We apply a similar logic to the sum in~(\ref{hypergeometricsummand}). Here, the sum is not explicitly normalized, so the $j = O(1)$ terms are not exponentially suppressed, but the terms with $j \propto k$ are since $k \propto L \to \infty$ and $|v| < 1$ in $\widehat{\mathscr{C}}$. Of the $j = O(1)$ terms, the $j = 0$ term dominates near the self-dual point where $v \approx 1/4$, once again leaving us with a lone coherent state of $|\bs{\chi}|^{2\Df k}$ on the RHS of~(\ref{hypergeometricsummand}). While the latter argument is not completely rigorous, we have checked our main results for the $L\to \infty$ and $\Df$ fixed limit by other means, including direct moment computations for the classical measure and numerical computation of the rate function, and have found perfect agreement with the analytic results we present here.

Using our previous analysis for the classical measures associated with coherent state correlators, we find the classical measures for these correlators at fixed $L$ and $\Df \to \infty$ to be
\ba
d\nu^{(L)}(\bs{\eta}) = \binom{2L}{L}^{-1} \sum_k^L  \binom{L}{k}^2\delta^{(2)}\left(\bs{\eta}- \frac{\sqrt{2} k}{L}\bs{1} \right) d\bs{\eta}.
\ea
Since this measure is supported only on the diagonal subspace $\mathcal{D}^{\langle\mathcal{C}\rangle}$, it is in bijection with the same measure restricted to $\bs{\eta} = (\eta,0)$ obtained by integrating out $\bar{\eta}$. The resulting pushforward measure is a discrete measure on $[0,\sqrt{2}]$ given by
\ba
d\nu^{(L)}(\eta)=\binom{2L}{L}^{-1} \sum_k^L  \binom{L}{k}^2\delta\left(\eta- \frac{\sqrt{2} k}{L} \right) d\eta,
\ea
with the full measure on $\mathcal{D}$ recovered by multiplying each delta mass by its image under chiral reflection. 

This gives us a simple way to compute the $L \to \infty$ limit of the measure, where the symmetry factors tend towards a continuous normalized Gaussian weight function over $k$
\ba
\label{binomweightapprox}
\binom{2L}{L}^{-1} \binom{L}{k}^2 \simeq \frac{1}{\sqrt{2 \pi} \varsigma} e^{-\frac{1}{2} \left(\frac{k - L/2}{\varsigma} \right)^2} \quad \text{with}\quad \varsigma \equiv \frac{\sqrt{L}}{2\sqrt{2}}.
\ea
This is simply a normalized Gaussian centered around $k = L/2$ with width $\varsigma \equiv \frac{\sqrt{L}}{2\sqrt{2}}$. Since the delta masses are located at $\eta = \frac{\sqrt{2}k}{L}$, we can convert the above Gaussian into a distribution over $\eta$ by setting $ k = L \eta /\sqrt{2}$ and modifying the Jacobian by the appropriate scale factor coming from the delta mass, i.e.~$\delta(\eta - \sqrt{2}k/L) = |L/\sqrt{2}| \delta(k- L \eta /2)$. Doing so gives the restricted measure of
\ba
d\nu^{(\infty)}(\eta) = \lim_{L \to \infty} \sqrt{\frac{2 L}{\pi}} e^{-2L ( \eta - \sqrt{2}/2)^2} d\eta = \delta(\eta - \sqrt{2}/2) d\eta.
\ea
Multiplying this result by its image under chiral reflection recovers the full classical measure over $\bs{\eta} \in \mathcal{D}$ in the $L \to \infty$ limit as
\ba
d\nu^{(\infty)} (\bs{\eta}) = \delta^{(2)}\left(\bs{\eta} - \frac{\sqrt{2}}{2} \bs{1}\right).
\ea
The corrections from finite $\Df$ do not affect this final result by the same argument as \cite{Poland:2024hvb}: since the width of the total classical measure is always $O(1/\sqrt{L})$, and the width of an individual coherent state wavepacket in $\eta$ space with $k \propto L$ is also $O(1/\sqrt{L})$, the widths of the individual wavepackets giving dominant contributions to the total measure do not grow faster than the width controlled by the binomial factor.

This result can be alternatively obtained by setting $\bs{\rho} = (\bs{\rho}_* e^{\bs{c}})$ and taking derivatives of the correlator with respect to $c,\bar{c}$ and evaluating at $c = \bar{c} = 0$; checking to a high derivative order, we find perfect agreement with the appropriately rescaled moments for the above  measure in the limit of $L \to \infty$ with finite $\Df$. 

To recap, we have computed the following maximally heavy measures for $\mathcal{G}_L$ in both the $\Df \to \infty$ limit for fixed $L$, and the $L \to \infty$ limit for fixed and infinite $\Df$:
\ba
\label{classicalmeasuresGFF}
d\nu^{(L)}(\bs{\eta}) &= \binom{2L}{L}^{-1} \sum_k^L  \binom{L}{k}^2\delta^{(2)}\left(\bs{\eta}- \frac{\sqrt{2} k}{L}\bs{1} \right) d\bs{\eta},\\
d\nu^{(\infty)} (\bs{\eta}) &= \delta^{(2)}\left(\bs{\eta} - \frac{\sqrt{2}}{2} \bs{1}\right).
\ea
Classifying these measures into isotropy subgroups, we find that $\nu^{(L)}$ for finite $L$ is purely Type 1, and $\nu^{(\infty)}$ is Type 0. 

 We will now compute the rate functions of these correlators. For the $\Df\to\infty$ and $L$ fixed case, the rate function is uniquely determined by $\nu^{(L)}$, and is given by  $\tilde{\lambda}_1^{(0)}$ as defined in section \ref{phasetype}. To remind readers, this is simply
\ba
\tilde{\lambda}^{(0)}_1(\bs{x}) = \max(0,x+\bar{x}).
\ea

The rate functions for the $L\to\infty$ and $\Df$ fixed case are a bit trickier to compute, as they are not globally determined by a coherent state rate function. To do so, we will employ a saddle point approximation for the sum over $k$ in eq.~(\ref{cohstateGFF}). As $L \to \infty$, this sum becomes sharply peaked around a critical value $k = k^*$ when
\ba
\binom{L}{k^*}^2|\bs{\chi}|^{2\Df k^*} \approx \binom{L}{k^*+1}^2|\bs{\chi}|^{2\Df (k^*+1)}.
\ea
Expanding both sides of the equation around large $L$ and $k^*$, assuming that $k^*/L$ is held fixed, and solving for $k^*$ gives
\ba
k^{*} = \frac{L|\bs{\chi}|^{\Df} }{1+|\bs{\chi}|^{\Df} }.
\ea
Plugging this value into the summand and taking the $L\to\infty$ limit gives
\ba
\label{saddlepointapproxofGFFcorr}
\frac{\widehat{\mathcal{G}}_\infty(\bs{\chi})}{\widehat{\mathcal{G}}_\infty(\bs{1}) } \simeq \frac{2^{-2 L-1} \left(|\bs{\chi}|^{\Df}+1\right)^{2 L+2}}{\sqrt{\pi } \sqrt{L} |\bs{\chi}|^{\Df}},
\ea
where we have normalized by the asymptotic value of the correlator at the self-dual point in the $L\to\infty$ limit. Writing $|\bs{\chi}| = e^{(x+\bar{x})/2}$ and computing the rate function in the $L\to\infty$ limit gives
\ba
\lambda^{(\Df)}(\bs{x}) = \frac{2}{\Df} \log\left(\frac{1}{2} \left( 1 + e^{\frac{\Df}{2}(x + \bar{x}) } \right) \right) \quad \forall \bs{x} \in \mathbb{R}^2.
\ea

In figure \ref{fig:ratefuncplot}, we plot these GFF rate functions in the diagonal limit $\bar{x} = x$ for both the case of $L\to\infty$ with $\Df$ fixed (for various values of $\Df$), as well as the universal rate function in the case of $\Df\to\infty$ with $L$ fixed. We also include dashed lines corresponding to the lower bound of proposition \ref{universalratefunctionbound} for different values of $\Sigma = \log(4)/\Df$. One observes that in the OPE limits of $x \to \pm \infty$, these bounds are saturated, and that the lower bound coming from Jensen's inequality lies tangent to the rate function at the self-dual point. We note that the case of $\Df =1$ and $L\to \infty$ in the diagonal limit is precisely the correlator of four infinite products of identical $1/2$-BPS displacement operators in the Wilson line defect CFT of $\mathcal{N}=4$ SYM in the limit of large 't Hooft coupling studied in \cite{ferrero2}. These operators are generally viewed as sourcing a free gas of single-particle states in the bulk which do not localize to a geodesic, hence their rate function on $\mathscr{C}$ is smooth and reveals a cross-over rather than a phase transition.

\begin{figure}[t] 
    \centering 
    \includegraphics[width=\textwidth]{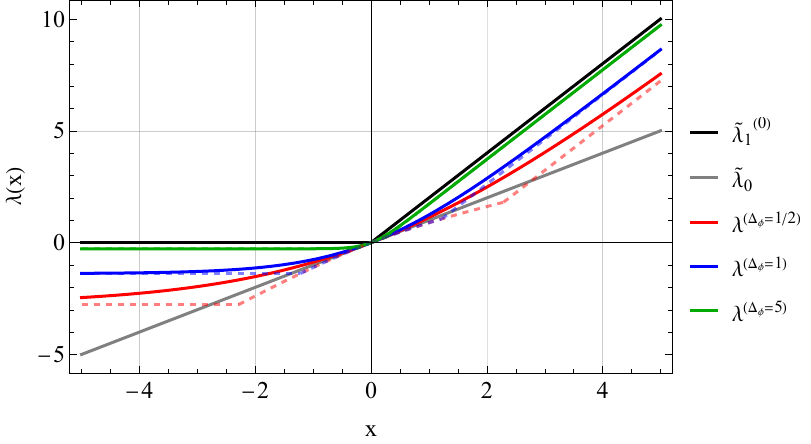} 
    \caption{Rate functions $\lambda(\mathrm{x})$ of GFF correlators in the diagonal limit $\bar{x} =  x$. Black: $\tilde{\lambda}_1^{(0)}$ is the universal rate function of $\mathcal{G}_L$ correlators with $L$ fixed and $\Df\to\infty$. Gray: $\tilde{\lambda}_0$ is the coherent state rate function of a Type 0 classical measure. Red, Blue, Green: $\lambda^{(\Df)}$ are the exact rate functions in the $L\to\infty$ limit for $\Df = \{ 1/2,1,5\}$. The light dashed lines are the lower bounds from proposition \ref{universalratefunctionbound} with $\Sigma = \frac{\log(4^{L})}{L\Df} + O(1/L) = \frac{\log(4)}{\Df}$.} 
    \label{fig:ratefuncplot} 
\end{figure}

\subsubsection{Quasi-local ``fragmentons"}
\label{fragmentons}
We have discussed two maximally heavy limits of the $\mathcal{G}_L$ correlator, one with $\Df \to \infty$ and $L$ fixed, and one with $L\to\infty$ and $\Df$ fixed. The former is ``local" in flavor: we can characterize the external operators by their fixed values of $L$ using local observables such as the classical measure and characteristic function, while they all give rise to the same global picture via their universal rate function. On the other hand, the $L\to\infty$ and $\Df$ fixed operators are ``global" in flavor: they all share the same classical measure, which does not discern their fixed value of $\Df$, while their rate function does. We now want to consider an intermediate regime with a kind of ``quasi-local" external operator, which is neither uniquely characterized by a local nor global regularization. Indeed, the regularization that can discern differences between these states is the $\alpha$-local rate function, for which theorem \ref{matchinglocalratefunctions} is an important tool in diagnosing for which critical value of $\alpha = \alpha_c$ the correlator can be uniquely characterized by its $\alpha_c$-local rate function.

The intuitive physical picture can be understood in the following way: fix the AdS radius to $R =1$ and consider a dual single-particle state with mass $M$ that we want to ``localize" by sending $M \to \infty$, so that only its free geodesic path contributes to the world-line effective action in eq.~(\ref{wleffectiveaction}). If this particle is sufficiently stable, then we can safely take $M\to\infty$ to recover the case when $L = 1$ and $M = \Df \to \infty$, and similarly for a stable multi-particle state with $L>1$. Now, we want to introduce some fragmentation mechanism so that, as we try to increase $M$ beyond some threshold, the initial single-particle state is forced to break into two daughter particles, each with mass $M/2$. As we take $M \to \infty$, we produce an infinite cascade of these fragments whose total mass sums to $M$. If the threshold mass at which fragmentation occurs is a constant, say $M_*$, then taking $M \to \infty$ results in a cascade of $L \sim M/M_*$ particles each with fixed mass $M_*$. This is precisely the case where $L \to \infty$ and $\Df = M_*$ is fixed, producing a delocalized gas cloud in the bulk. More precisely, every time the parent particle fragments, it contributes another world-line effective action term $\sim M_*  \ell(\mathcal{P}_{ij})$, which does not localize onto a geodesic in the path integral since $M_*$ never becomes large. 

In order to produce an excitation that neither completely localizes to a geodesic nor fully fragments into a gas cloud, we can require that the stable mass $M_*$ grows with some power of $M$, say $M_* = \kappa M^{\alpha_c}$, with $\kappa> 0$ and $\alpha_c \in (0,1)$. Since $M = M_* L$, this implies that $L \sim M^{1-\alpha_c}$ as we take $M \to \infty$. We can model this process with our GFF correlator by setting $\Df = M_* = \kappa M^{\alpha_c}$ and $L = \lfloor M^{1-\alpha_c}/\kappa\rfloor$. As we take $M \to \infty$, the total dimension is $\De_{\phi^L} \simeq M$. Since $\alpha_c <1$, taking $M\to\infty$ results in $L\to\infty$, so our approximation of the binomial weights in eq.~(\ref{cohstateGFF}) still holds, and the classical measure still concentrates around $\bs{\eta} = \frac{1}{\sqrt{2}} \bs{1}$. Thus, the classical measure in this fragmentation limit is simply given by $\nu^{(\infty)}$ in eq.~(\ref{classicalmeasuresGFF}). 

The first non-trivial task is to compute the critical speed at which the mass around this point falls off super-exponentially, as described by the condition in theorem \ref{matchinglocalratefunctions}. Using the approximation for the binomial weights in eq.~(\ref{binomweightapprox}) pulled back to the $\bs{\eta}$ variables, and setting $L=L_{\alpha_c} \equiv \lfloor M^{1-\alpha_c}/\kappa \rfloor$, we find
\ba
\nu_{M}^{(L_{\alpha_c})}(K^c_\delta) =O( e^{-\delta^2 M^{1-\alpha_c} /\kappa}).
\ea
Plugging this into the condition of theorem \ref{matchinglocalratefunctions}, we find the condition holds for all $\beta > \alpha_c \sim \frac{\log(M_*)}{\log(M)}$. That is,
\ba
\lim_{M\to\infty} \frac{\log(\nu_{M}^{(L_{\alpha_c})}(K^c_\delta))}{M^{1-\beta}} \sim \lim_{M\to\infty} -\delta^2 M^{\beta - \alpha_c}/\kappa = -\infty
\ea
if and only if $\beta > \alpha_c$. As a result, the coherent state rate function computed from the classical measure only predicts the $\alpha$-local rate function for $\alpha > \alpha_c $. 

More generally, using the saddle point approximation of eq.~(\ref{saddlepointapproxofGFFcorr}), we find that the $\alpha$-local rate function is given by
\ba
\lambda^{(M_*)}(\alpha;\bs{x}) = \lim_{M \to \infty} \frac{2}{\kappa} M^{\alpha-\alpha_c} \log\left( \frac{1 + e^{\kappa M^{\alpha_c - \alpha} (x+\bar{x})/2 }}{2} \right).
\ea
Evaluating the limit explicitly gives rise to three regimes in the $\alpha$-parameter
\ba
\lambda^{(M_*)}(\alpha;\bs{x})  =  \begin{cases}
\displaystyle \frac{x+\bar{x}}{2} & \alpha > \alpha_c \quad \text{(coherent state prediction)}, \\[14pt]
\displaystyle \frac{2}{\kappa}\log \left( \frac{1 + e^{\kappa (x+\bar{x}) / 2}}{2} \right) & \alpha = \alpha_c \quad \text{(critical regime)}, \\[14pt]
\max(0,x+\bar{x}) & \alpha < \alpha_c \quad \text{(sharp phase transition)}.
\end{cases}
\ea
Clearly, it is only in the critical regime of $\alpha = \alpha_c$ that our $\alpha$-local rate function is able to detect the microscopic details of the setup, namely the $\kappa$ parameter that enters the stable mass cutoff $M_* = \kappa M^{\alpha_c}$. Away from this critical regime, the $\alpha$-local rate function is given by either the coherent state rate function prediction from the classical measure ($\alpha > \alpha_c$), or the sharp universal phase transition governed by different pairs of disconnected geodesics $(\alpha < \alpha_c)$. Thus, this simple model of a free ``fragmenton" correlator characterized by the threshold mass $M_*$ shows how the $\alpha$-local rate function can be used to discern the details of certain quasi-local states: ones which neither fully localize to bulk geodesic exchange nor de-localize into a gas of particle states with fixed mass.

\subsection{Chiral product correlators}
\label{chiralproduct}

Another interesting family of unitary solutions to crossing are ones which factorize into a ``chiral" product of correlators separately dependent on $z,\bar{z}$, such as the correlator of energy density operators $\varepsilon$ in the 2d Ising CFT. These correlators are given by a product of generalized free fermion correlators, each with $\Df = 1/2$, which gives a solution to the crossing equation with total external dimension $\De = 1$:
\ba
\frac{\langle \varepsilon\varepsilon\varepsilon\varepsilon\rangle }{\langle \varepsilon\varepsilon\rangle\langle\varepsilon\varepsilon\rangle} &= \widehat{\mathcal{G}}^F_{1/2}(z) \widehat{\mathcal{G}}^F_{1/2}(\bar{z})\\
&= \left(1 - z + \frac{z}{1-z}\right)\left(1 - \bar{z} + \frac{\bar{z}}{1-\bar{z}}\right).
\ea
One can consider a generalization of this correlator of the form $\mathcal{G}_\De^{\mathrm{B/F}}(z)\mathcal{G}_\De^{\mathrm{B/F}}(\bar{z})$, where, for certain values of $\De$, this gives a correlator which is single-valued in the Euclidean section and admits a decomposition into even-spin conformal blocks. These correlators may be physically realized in a theory that decomposes into non-interacting holomorphic and anti-holomorphic sectors. A further generalization is given by correlators of the form $\widehat{\mathcal{H}}_L(\bs{z}) = \widehat{\mathcal{G}}_L(z)\widehat{\mathcal{G}}_L(\bar{z})$, where $\widehat{\mathcal{G}}_L(z)$ is defined by the previous example after setting $(u,v) =(z,1-z)$. This example was constructed using a bosonic field $\phi$, but it is important to note that the terms in the correlator which distinguish fermionic and bosonic fields decay exponentially in the causal diamond as $\De\to\infty$ and therefore are not different from each other for the maximally heavy analysis. 

Focusing on the $\widehat{\mathcal{H}}_L$ generalization, we write
\ba
\widehat{\mathcal{H}}_L(\bs{z}) &= \widehat{\mathcal{G}}_L(z,1-z) \widehat{\mathcal{G}}_L(\bar{z},1-\bar{z})\\
&= \sum_{j,k}^L \binom{L}{k}^2\binom{L}{j}^2 (z^k \bar{z}^j)^{\Df} {}_2 F_1(-k,-k;1;(1-z)^{-\Df}) {}_2 F_1(-j,-j;1;(1-\bar{z})^{-\Df}).
\ea
Restricting to the kinematics within the causal diamond, as $\Df \to \infty$, we are left only with the coherent state decomposition of
\ba
\widehat{\mathcal{H}}_L(\bs{\chi}) \simeq  \sum_{j,k}^L \binom{L}{k}^2 \binom{L}{j}^2 \left( \chi^k \bar{\chi}^j\right)^{\Df}, 
\ea
which corresponds to a maximally heavy measure of 
\ba
d\nu_{\mathcal{H}}^{(L)}(\bs{\eta}) = \binom{2L}{L}^{-2} \sum_{j,k}^L \binom{L}{k}^2 \binom{L}{j}^2 \delta\left(\eta -\frac{\sqrt{2} k}{L} \right)\delta\left(\bar{\eta} -\frac{\sqrt{2} j}{L} \right)d\bs{\eta}.
\ea
Unlike the $\nu^{(L)}$ measure described previously, $\nu_{\mathcal{H}}^{(L)}$ is purely Type 3, and has off diagonal modes which can give rise to a classical vortex phase when the saddles indexed by $(j,k) = (0,L)$ and $(j,k) = (L,0)$ dominate the coherent state decomposition. In this case one has a rate function
\ba
\lambda_{\mathcal{H}}(\bs{x}) = \max(0,x) + \max(0,\bar{x}),
\ea
which is equivalent to $\lambda_{\eta_0}(\bs{x})$ with $\eta_0 = 0$ as defined in eq.~(\ref{ratetwistgap}).

Using the same technique as in the previous example applied separately to the sums over $j,k$, one finds that taking the limit $L \to \infty$ yields the same universal Type 0 measure of
\ba
d\nu^{(\infty)}(\bs{\eta}) = d\nu^{(\infty)}_{\mathcal{H}}(\bs{\eta}) = \delta^{(2)}\left(\bs{\eta} - \frac{\sqrt{2}}{2} \bs{1} \right).
\ea
The rate function of these correlators when $L \to \infty$ and $\Df$ is held fixed can also be computed using the saddle point approximation of eq.~(\ref{saddlepointapproxofGFFcorr}) applied separately to each correlator factor. The result is 
\ba
\lambda_{\mathcal{H}}^{(\Df)}(\bs{x}) = \frac{2}{\Df} \left( \log\left(\frac{1 + e^{\Df x/2}}{2}\right) + \log\left(\frac{1 + e^{\Df \bar{x}/2}}{2}\right) \right).
\ea

Lastly, we can compute the $\alpha$-local rate function for the ``fragmenton" limit where $\Df = M_*= \kappa M^{\alpha_c}$ and $L =\lfloor M^{1-\alpha_c}/\kappa \rfloor$ with $\alpha_c \in (0,1)$ and $M\to\infty$. Using the same techniques as before, we find
\ba
\label{eq:fragmenton_alpha_local}
\lambda^{(M_*)}_{\mathcal{H}}(\alpha;\bs{x})  =  \begin{cases}
\displaystyle \frac{x+\bar{x}}{2} & \alpha > \alpha_c, \quad \\
\displaystyle \frac{2}{\kappa}\left( \log \left( \frac{1 + e^{\kappa x / 2}}{2} \right)+\log \left( \frac{1 + e^{\kappa \bar{x} / 2}}{2} \right) \right) & \alpha = \alpha_c, \, \\
\max(0,x)+\max(0,\bar{x})  & \alpha < \alpha_c. \quad 
\end{cases}
\ea

While $\mathcal{H}_L$ admits a decomposition into a positive sum of radial monomials for all $\Df$, and is thus compatible with the constraints from unitarity in our setup, it does not admit a decomposition into even spin conformal blocks for generic $\Df$. Moreover, for general $\Df$, the correlator does not admit a decomposition into the path-independent quantities given in eq.~(\ref{eq:path-ind}). To see this, one first observes that each factor $\mathcal{G}_L$ admits a positive expansion in radial monomials that contains an identity and the first non-identity monomial $\rho^{\Df}$. Thus, the full decomposition of the correlator contains the term $\rho^{\Df} + \rhob^{\Df}$ (up to an overall coefficent). We write the path-independent quantity as
\ba
\left(\frac{\rho}{\rhob}\right)^{k/2} + \left(\frac{\rhob}{\rho}\right)^{k/2}  = (\rho \rhob)^{-k/2} \left( \rho^k + \rhob^k \right).
\ea
Noting that $(\rho \rhob)^{-k/2}$ is path independent, we see that $\rho^{\Df} + \rhob^{\Df}$ is only path independent for $\Df \in \mathbb{Z}^+$. This can be equivalently viewed as the constraint that only integer spin operators may show up in the OPE. 

An additional constraint arises by imposing that the correlator admits a decomposition into only even spin conformal blocks, which is a consequence of Bose symmetry. This follows from the fact that a three-point function of two identical operators $\cO(x_1),\cO(x_2)$ and an exchanged spinning operator $\cO_\ell(x_3)$ has a unique tensor structure that picks up a factor of $(-1)^\ell$ when $x_1 \leftrightarrow x_2$. Since $\cO(x_1),\cO(x_2)$ are identical bosons, the three-point function must be symmetric under the exchange of $x_1 \leftrightarrow x_2$, thus the correlator is only non-vanishing when $\ell$ is even. Imposing this constraint restricts the heavying sequence to $\Df \in 2\mathbb{Z}^+$. This can be explicitly shown by verifying that the global 2d conformal block decomposition of the correlator only receives contributions from even spin blocks. 

We can check this explicitly with the following procedure. First, perform the change of variables back to the $u,v$ cross ratios:
\ba
z &= \frac{1}{2} \left(-\sqrt{(u-v+1)^2-4 u}+u-v+1\right),\\
\bar{z} &=\frac{1}{2} \left(\sqrt{(u-v+1)^2-4 u}+u-v+1\right).
\ea
For $\Df \in \mathbb{Z}$, $\widehat{\mathcal{H}}_L$ then reduces to the form
\ba
\widehat{\mathcal{H}}_L(u,v) = \sum_{p,q \in \mathbb{Z}} c^{(n)}_{p,q} \frac{u^p}{v^q},
\label{uvexp}
\ea
where the sum is finite for finite $n$ and $\Df$, and $c_{p,q} \in \mathbb{Z}$ are possibly negative coefficients. 

Symmetrizing the $\alpha$-space identity of \cite{Hogervorst:2017sfd} in $z \leftrightarrow \bar{z}$, we obtain the following identity for 2d blocks
\ba
\frac{u^p}{v^q} = \sum_{j,k}^\infty A_j(p,q) A_k(p,q) G_{2 p + j+k , j-k}(u,v),
\ea
where
\ba
A_j(p,q) = \frac{(p)_j^2 }{j! (j+2 p-1)_j} \, _3F_2(-j,j+2 p-1,p-q;p,p;1),
\ea
and
\ba
G_{\De,\ell}(u,v) = \frac{1}{2}\left( k_{\frac{\De +\ell}{2}}(z)k_{\frac{\De -\ell}{2}}(\bar{z}) + (z\leftrightarrow\bar{z})\right)
\ea
denotes the 2d conformal block (with slightly unconventional normalization). Here $k_h(z) = z^h {}_2 F_1(h,h,2h;z)$ is the standard 1d conformal block. It is now just a matter of plugging this identity into eq.~(\ref{uvexp}) and verifying that only conformal blocks with even spins show up in the decomposition (i.e.~that blocks with odd $j-k$ vanish), and that they all have positive coefficients. After checking this explicitly in \texttt{Mathematica} for a large range of $L, \Df \in \mathbb{Z}^+$, we find that only for $\Df \in 2\mathbb{Z}^+$ are both of the conditions satisfied, with odd integer values of $\Df$ resulting in contributions from odd spin blocks. This brute-force approach is crude, but demonstrates the necessary condition of $\Df \in 2\mathbb{Z}^+$ for the $\mathcal{H}_L$ correlator to be compatible with Bose symmetry constraints arising in its conformal block decomposition.

\subsection{Maximal giant gravitons at tree level}
\label{giant graviton example}

The previous examples of maximally heavy correlators arise in generalized free theories, and only exhibit non-trivial phases in their respective rate functions when the dimension of a single elementary field $\phi$ is taken to be infinite. Since the mechanism behind non-trivial phase structure arises from the localization of heavy operators to bulk geodesics, one would also expect sharp transitions to emerge in four-point correlators of maximal giant gravitons, where they have been previously shown to admit a description as a bulk defect \cite{Chen:2025yxg,Chen:2026ium}. These operators have protected dimension $\De = N$, where $N$ is the rank of the $U(N)$ gauge group in $\mathcal{N} = 4$ SYM. These operators are built from the six elementary real scalar fields in the theory, which have a classical dimension of $1$. Therefore, unlike the previous examples where one needs to tune the dimension of a primary scalar operator to be infinite, determinant operators give an example of a heavy state that both localizes to a bulk geodesic, and is constructed from elementary fields with a fixed dimension.

Let us describe the structure of these operators in a bit more detail. Determinant operators are a class of Schur polynomial operators which are taken to be in the totally antisymmetric representation of the permutation group, with a corresponding Young diagram of a single column. Via the Cayley-Hamilton theorem, the length of this column is bounded by $N$, with the maximal case saturating the bound. We construct a determinant operator out of the six real scalar fields $\Phi^I$ in the adjoint representation of $\U(N)$ with $I = 1,...,6$ as:
\ba
\mathcal{D}_i = \det\left(y_i \cdot\Phi(x_i) \right),
\ea
where $y_i \cdot \Phi(x_i) = y_i^I \Phi_I(x_i)$ and $y_i^I$ is a six-dimensional null polarization vector $(y_i \cdot y_i = 0)$ that controls the $SU(4)$ R-symmetry polarization of the operator. In computing a four-point function of these operators, we can shift to a conformal frame in both $SO(2,4)$ and $SU(4)$. In this frame the connected correlator, given by
\ba
\mathrm{GG}(z,\bar{z}; r,\bar{r}) = \frac{\langle \mathcal{D}_1 \mathcal{D}_2 \mathcal{D}_3\mathcal{D}_4 \rangle}{\langle \mathcal{D}_1 \mathcal{D}_2 \rangle \langle \mathcal{D}_3\mathcal{D}_4 \rangle },
\ea
is a function of the standard cross ratios $ u = z \bar{z}$ and $v = (1-z)(1-\bar{z})$, in addition to the R-symmetry cross ratios
\ba
r \bar{r} = \frac{y_{12} y_{34}}{y_{14}y_{23}} \quad \text{and} \quad (1-r)(1- \bar{r} )= \frac{y_{13} y_{24}}{y_{14}y_{23}},
\ea
where $y_{ij} = y_i \cdot y_j$. At tree level, $\mathrm{GG}$ is purely a function of the superconformal cross ratios, given by
\ba
\mathfrak{u} = \frac{z \bar{z}}{r \bar{r}} \quad \text{and} \quad \mathfrak{v} =  \frac{(1-z)(1- \bar{z})}{(1-r)(1- \bar{r})}.
\ea

At weak coupling, we expand $\mathrm{GG}$ as
\ba
\mathrm{GG}(z,\bar{z};r,\bar{r}) = \mathrm{GG}^{(0)}(\mathfrak{u},\mathfrak{v}) + g^2 \mathrm{GG}^{(1)}(z,\bar{z};r,\bar{r})  + O(g^4),
\ea
where $g^2 = g_{\mathrm{YM}}^2 N /(16 \pi^2)$ and $g_{\mathrm{YM}}$ is the Yang-mills coupling. At finite $N$, the tree-level result can be computed using Wick contractions to obtain the closed form result~\cite{Vescovi:2021fjf}
\ba
\mathrm{GG}^{(0)}(\mathfrak{u},\mathfrak{v}) = \sum_{n=0}^N \sum_{m=0}^{N-n} c_{nm} \mathfrak{u}^{N-n}\mathfrak{v}^{n+m-N},
\ea
with coefficients
\ba
c_{nm}  = \frac{1}{N!} \sum_{p,q,s} \frac{1}{(2n + p + q - N)!  s!  (q - p + s)!} \left(
\frac{n!  m!  (N - n - m)!}{(m - p)!  (N - n - m - q)!  (p - s)!} \right)^2,
\ea
where the sum runs over the ranges
\ba
p \in [0, m], \quad 
q \in [\max(0, N - 2n - p),  N - n - m], \quad 
s \in [\max(0, p - q),  p].
\ea
Due to the R-symmetry polarizations of the operators, the s-t crossing equation is modified to
\ba
\mathrm{GG}(z, \bar{z};r,\bar{r}) = \left( \frac{\mathfrak{u}}{\mathfrak{v}}\right)^N \mathrm{GG}(1-z,1- \bar{z};1-r,1-\bar{r}).
\ea
In turn, we have a notion of \textit{super-coherent} states, defined as 
\ba
\mathfrak{X} = \frac{z (1-r)}{(1-z)r} = e^{\mathfrak{x}} \quad\text{and}\quad \bar{\mathfrak{X}} = \frac{\bar{z} (1-\bar{r})}{(1-\bar{z})\bar{r}} = e^{\bar{\mathfrak{x}}}. 
\ea
Note that we could have also defined these by swapping $r \leftrightarrow \bar{r}$ due to an equivalent notion of chiral symmetry in R-symmetry cross-ratio space. The self-dual configuration in this set up is defined as $\mathfrak{x} = \bar{\mathfrak{x}} = 0$, which is also the base point where we define the maximally heavy measure. There are additionally two more conjugate variables corresponding to R-symmetry charge that label the spectral measure. However, at tree-level, the parameter space reduces and we need only consider the conjugate variables to the superconformal cross ratios.

In order to directly compare our computations of maximally heavy observables for supersymmetric correlators to our previous results for non-supersymmetric correlators, we will restrict the R-symmetry cross ratios to the Euclidean self-dual line, defined by the condition $r \bar{r} = (1-r)(1-\bar{r})$ with $r = \bar{r}^*$. Writing $r = x+ i y$ and solving this condition directly, we see this locus is parametrized by
\ba
r = \tfrac{1}{2} + i y, \quad \bar{r} = \tfrac{1}{2} - iy, \quad y \in\mathbb{R},
\ea
for which $r\bar{r} = (1-r)(1-\bar{r}) = \tfrac{1}{4} + y^2$. The superconformal cross ratios on this locus take the form $\mathfrak{u} = \sigma u$ and $\mathfrak{v} = \sigma v$ with $\sigma(y) = ({\tfrac{1}{4} + y^2})^{-1}$, so that $\mathfrak{u}/\mathfrak{v} = u/v$ for all $y$. There is also the special case of $y = \sqrt{3}/2$, corresponding to $r = e^{i\pi/3}$, where $\sigma = 1$ and $(\mathfrak{u},\mathfrak{v}) = (u,v)$, which completely removes dependence on the R-symmetry cross ratios from the problem.

At large $N$, the tree-level correlator on this locus can be computed using the effective field theory description of~\cite{Vescovi:2021fjf}. At leading order around the planar limit, it takes the form
\ba
\mathrm{GG}^{(0)}(\mathfrak{u},\mathfrak{v}) \simeq f(\mathfrak{u},\mathfrak{v}) + f(1/\mathfrak{u}, \mathfrak{v}/\mathfrak{u})\,\mathfrak{u}^N + f(\mathfrak{v},\mathfrak{u}) \left(\frac{\mathfrak{u}}{\mathfrak{v}}\right)^N,
\ea
with $f(a,b) = {b}/\big({(1-a)(b-a)}\big) + O(N^{-1})$. On the R-symmetry self-dual line, $\mathfrak{u} = \sigma u$ and $\mathfrak{v} = \sigma v$, so near the self-dual point where $u = v = 1/4$, the middle saddle scales as $\mathfrak{u}^N \sim (\sigma/4)^N$. For any $y \neq 0$ we have $\sigma < 4$, so this saddle is exponentially suppressed and can be dropped. 

In the remaining two saddles, $\mathfrak{u}/\mathfrak{v} = u/v$ is manifestly $\sigma$-independent, and the only $\sigma$-dependent factors are $1/(1-\sigma u)$ and $1/(1-\sigma v)$ from the function $f$, which both reduce to $1/(1-\sigma/4)$ at the self-dual point. Combining the first and third saddles gives
\ba
\label{eq:GGsaddles}
f(\sigma u, \sigma v) + f(\sigma v, \sigma u)\,(u/v)^N \;\approx\; \frac{1}{4(1-\sigma/4)} \frac{(u/v)^N - 1}{u - v}.
\ea
The prefactor $1/(1-\sigma/4)$ also appears when the correlator is taken to the self-dual point, where $((u/v)^N - 1)/(u - v) \to 4N$, giving $\mathrm{GG}^{(0)}(\sigma/4,\sigma/4) \approx N/(1-\sigma/4)$. Therefore, all $\sigma$ dependence cancels for correlators normalized to $1$ at the self-dual point in conformal cross-ratio space, giving
\ba
 \frac{\mathrm{GG}^{(0)}(\sigma u,\sigma v)}{\mathrm{GG}^{(0)}(\sigma/4,\sigma/4) } \;\approx\; \frac{1}{4N} \frac{(u/v)^N - 1}{u - v}.
\ea

Now, we write this normalized correlator in radial coordinates and evaluate at $\bs{\rho} = \left(\rho_* e^{i2\pi \bs{t}/N} \right)$, where $\rho_* = 3-2 \sqrt{2}$. Taking the $N\to\infty$ limit of the result gives the following characteristic function of the classical measure:
\ba
\phi^{(\mathrm{GG})}(\bs{t}) = \frac{e^{i 2\sqrt{2}\pi(t + \bar{t})} - 1}{i\, 2\sqrt{2}\pi(t + \bar{t})}.
\ea
Since $\phi^{(\mathrm{GG})}$ depends only on $t + \bar{t}$, the classical measure is supported on the diagonal $\bar{\eta} = \eta$. The Fourier inverse of the characteristic function gives the classical measure
\ba
\label{GGclassicalmeasure}
d\nu^{(\mathrm{GG})}(\bs{\eta}) = \frac{1}{\sqrt{2}}\,\delta(\bar{\eta} - \eta) \;\mathbbm{1}_{\eta \in [0,\sqrt{2}]}\, d\bs{\eta},
\ea
where $\mathbbm{1}_{\eta \in [0,\sqrt{2}]} = \Theta(\eta) - \Theta(\eta - \sqrt{2})$ denotes the indicator function. This is simply a uniform distribution on the full diagonal of $\mathcal{D}^{\langle \mathcal{C} \rangle}$.

As a cross-check, this result can also be obtained exactly at finite $N$ in the $y \to \infty$ limit, where $\sigma \to 0$ and the tree-level correlator reduces to
\ba
\mathrm{GG}^{(0)}\big|_{y \to \infty} = \sum_{n = 0}^N (\chi \bar{\chi})^{N-n} = \frac{(\chi \bar{\chi})^{N+1} - 1 }{\chi \bar{\chi} -1},
\ea
using the fact that $c_{n0} = 1$. Normalizing by $N+1$ and taking $N \to \infty$ in the radial parametrization reproduces $\phi^{(\mathrm{GG})}(\bs{t})$ directly. Since $\mathrm{supp}(\nu^{(\mathrm{GG})}) = \mathcal{D}^{\langle \mathcal{C} \rangle}$, this measure is purely Type 1, with a corresponding rate function of $\tilde{\lambda}_1^{(0)}$.

\subsection{Summary}

The examples studied in this section illustrate how the three classes of localized, quasi-localized, and delocalized operators arise concretely in different heavy limits. We collect the main results below, organized by class, and plot the cumulative distribution functions for different classical measures projected onto the $\eta$ parameter in figure \ref{fig:CDFplots}.

\textbf{Localized operators} are those for which the global rate function exhibits a sharp phase transition around the self-dual point, governed entirely by the coherent state rate function constructed from the classical measure. The results for this class are summarized in table~\ref{tab:localized}.
\begin{table}[t]\footnotesize
\centering
\renewcommand{\arraystretch}{2.5}
\begin{tabular}{c|c|c|c}
Correlator & Heavy Limit & Classical Measure $d\nu(\bs{\eta})$ & Type \\
\hline
\hline
$\mathcal{G}_L$ (GFF) & $\Df \to \infty$, $L$ fixed & $\displaystyle\binom{2L}{L}^{-1} \sum_{k=0}^L  \binom{L}{k}^2\delta^{(2)}\!\left(\bs{\eta}- \frac{\sqrt{2} k}{L}\bs{1} \right) d\bs{\eta}$ & 1 \\
\hline
$\mathcal{H}_L$ (Chiral) & $\Df \to \infty$, $L$ fixed & $\displaystyle\binom{2L}{L}^{-2} \sum_{j,k=0}^L \binom{L}{k}^2 \binom{L}{j}^2 \delta\!\left(\eta -\frac{\sqrt{2} k}{L} \right)\delta\!\left(\bar{\eta} -\frac{\sqrt{2} j}{L} \right)d\bs{\eta}$ & 3 \\
\hline
$\mathrm{GG}^{(0)}$  & $N \to \infty$ & $\displaystyle\frac{1}{\sqrt{2}}\delta(\bar{\eta} - \eta)  \mathbbm{1}_{\eta \in [0,\sqrt{2}]} d\bs{\eta}$ & 1 \\
\end{tabular}
\caption{Classical measures for localized operators. The GG result holds uniformly on the Euclidean R-symmetry self-dual line of $r\bar{r} = (1-r)(1-\bar{r})$ with $r = \bar{r}^*$.}
\label{tab:localized}
\end{table}
The corresponding rate functions are given by:
\ba
\lambda_{\mathcal{G}_L}(\bs{x}) = \lambda_{\mathrm{GG}}(\bs{x}) = \tilde{\lambda}_1^{(0)}(\bs{x}) &= \max(0, x + \bar{x}), \\
\lambda_{\mathcal{H}_L}(\bs{x}) &= \max(0,x) + \max(0,\bar{x}).
\ea
The GFF and giant graviton correlators are Type 1, with a mutual information phase transition along the diagonal reflecting a change in causal connectivity between bulk geodesics. The chiral product correlators are Type 3, and support vortex phases sourced by off-diagonal saddles with non-zero classical spin. The chiral product correlators satisfy additional Bose symmetry constraints (decomposition into even spin conformal blocks) only when the heavying sequence is restricted to $\Df \in 2\mathbb{Z}^+$.

\begin{figure}[p]
    \centering
\includegraphics[width=0.95\linewidth]{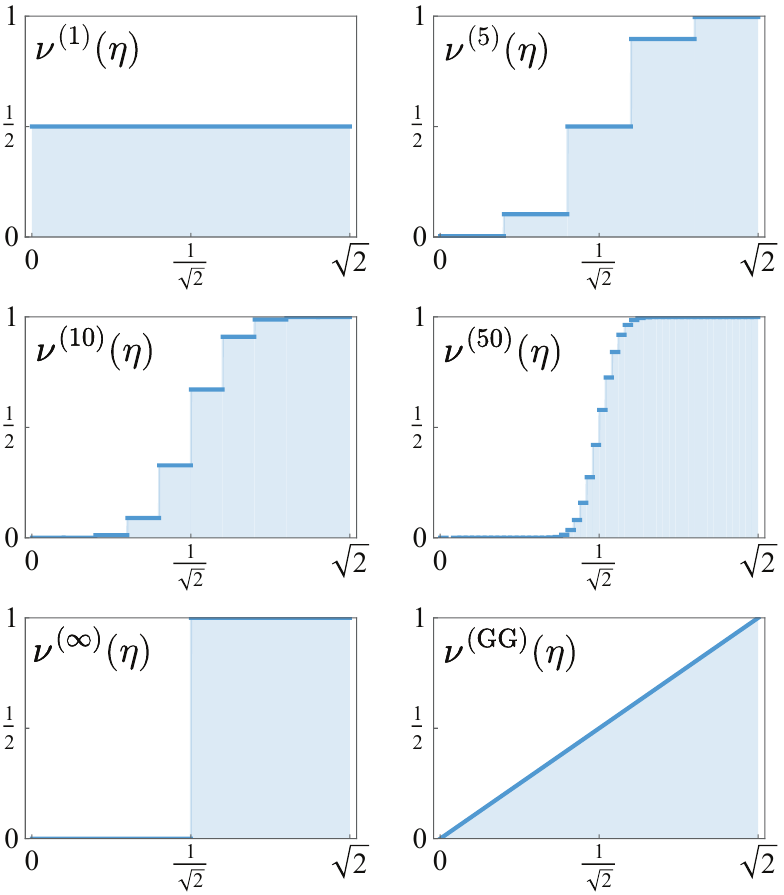}
    \caption{Classical cumulative distribution functions projected onto the $\eta$ parameter (bottom axis) for $\nu^{(L)}$ (the GFF classical measure with $\Df \to\infty$ and fixed $L = 1,5,10,$ and $50$), the universal Type 0 measure $\nu^{(\infty)}$ (the GFF classical measure with $\Df$ fixed and $L\to\infty$), and the classical measure of the maximal giant graviton at tree level $\nu^{(GG)}$. For $\eta > \sqrt{2}$, the CDF is 1 since $\mathrm{supp}(\nu) \subset [0,\sqrt{2}]^2$. }
    \label{fig:CDFplots}
\end{figure}

The giant graviton correlator provides a physically distinct realization of the localized phase diagram. Unlike the GFF and chiral product examples, where one must send $\Df \to \infty$ to achieve localization, giant gravitons are built from elementary fields with fixed classical dimension and localize because their total dimension $\De = N$ is built into the composite structure of the operator. The tree-level classical measure is a uniform distribution on $\mathcal{D}^{\langle \mathcal{C}\rangle}$ at every point on the Euclidean R-symmetry self-dual line $r = \bar{r}^* = \frac{1}{2} + i y$. In these configurations, $(\mathfrak{u},\mathfrak{v}) = (\sigma u ,\sigma v)$ with $\sigma = \left(\frac{1}{4} + y^2\right)^{-1}$ and $\mathfrak{u}/\mathfrak{v} = u/v$, so the singular structure of the EFT saddles is $\sigma$-independent, and the regular $\sigma$-dependent prefactors cancel upon normalization. This universality reflects the insensitivity of the localization mechanism to the choice of self-dual R-symmetry polarization at tree level.

\textbf{Delocalized operators} are those for which the global rate function is smooth around self-duality, exhibiting a crossover rather than a sharp phase transition. The classical measure collapses to the universal Type 0 form in all cases:
\ba
d\nu^{(\infty)}(\bs{\eta}) = \delta^{(2)}\!\left(\bs{\eta} - \frac{\sqrt{2}}{2}\bs{1}\right) d\bs{\eta}.
\ea
This collapse occurs for both $\mathcal{G}_L$ and $\mathcal{H}_L$ in the $L \to \infty$ limit with $\Df$ held fixed. Physically, the external operators fragment into an infinite number of constituents each with fixed mass, producing a delocalized gas in the bulk that does not localize along any geodesic. 

While the classical measure is universal across all delocalized operators, the global rate functions retain sensitivity to the microscopic details. The results are:
\ba
\lambda^{(\Df)}_{\mathcal{G}}(\bs{x}) &= \frac{2}{\Df} \log\left(\frac{1 + e^{\Df(x + \bar{x})/2}}{2}\right), \\
\lambda^{(\Df)}_{\mathcal{H}}(\bs{x}) &= \frac{2}{\Df}\left( \log\left(\frac{1 + e^{\Df x/2}}{2}\right) + \log\left(\frac{1 + e^{\Df \bar{x}/2}}{2}\right)\right).
\ea
These rate functions interpolate smoothly between the s-channel and t-channel regimes, with the width of the crossover region controlled by $1/\Df$.

\textbf{Quasi-localized operators} interpolate between the localized and delocalized classes. They arise in the ``fragmenton" limit, where $\Df = M_* = \kappa M^{\alpha_c}$ and $L = \lfloor M^{1-\alpha_c}/\kappa \rfloor$ with $\alpha_c \in (0,1)$ and $M \to \infty$. The classical measure collapses to the same universal Type 0 form as in the delocalized case, so the local picture cannot distinguish these operators from fully delocalized ones. At the same time, the global rate function ($\alpha = 0$) exhibits a sharp phase transition, so the global picture cannot distinguish them from fully localized ones. The resolution requires the $\alpha$-local rate function, which exhibits three distinct regimes as a function of $\alpha$:
\ba
\lambda^{(M_*)}_{\mathcal{G}}(\alpha;\bs{x}) &= \begin{cases}
\displaystyle \frac{x+\bar{x}}{2} & \alpha > \alpha_c, \\[10pt]
\displaystyle \frac{2}{\kappa}\log \left( \frac{1 + e^{\kappa (x+\bar{x}) / 2}}{2} \right) & \alpha = \alpha_c, \\[10pt]
\max(0,x+\bar{x}) & \alpha < \alpha_c,
\end{cases}\\[16pt]
\lambda^{(M_*)}_{\mathcal{H}}(\alpha;\bs{x}) &= \begin{cases}
\displaystyle \frac{x+\bar{x}}{2} & \alpha > \alpha_c, \\[10pt]
\displaystyle \frac{2}{\kappa}\left( \log \left( \frac{1 + e^{\kappa x / 2}}{2} \right)+\log \left( \frac{1 + e^{\kappa \bar{x} / 2}}{2} \right) \right) & \alpha = \alpha_c, \\[10pt]
\max(0,x)+\max(0,\bar{x}) & \alpha < \alpha_c.
\end{cases}
\ea
For $\alpha > \alpha_c$, the rate function matches the smooth coherent state prediction from the classical measure. For $\alpha < \alpha_c$, it reduces to the sharp phase transition characteristic of a localized operator. Only at the critical value $\alpha = \alpha_c$ does the rate function resolve the non-universal microscopic details of the state, namely the stable mass parameter $\kappa$ entering the fragmentation threshold $M_* = \kappa M^{\alpha_c}$. The critical exponent $\alpha_c$ thus serves as a precise quantitative measure of the degree of localization of the external operator.

\section{Discussion}
\label{disc}

In this work, we have developed a systematic framework for studying the space of CFT four-point functions of identical scalar operators in the heavy limit. This space of correlators captures a wide range of physical phenomena, including classical holography, black hole dynamics, planar gauge theory, flat-space scattering amplitudes, and more.

Our formalism revolves around the notion of maximally heavy observables, which are accumulation points in the image of good regularizations acting on ``heavying" sequences of correlation functions, whose external dimensions form an unbounded sequence in $\mathbb{R}_+$. These are akin to intrinsic quantities that describe statistical systems in the thermodynamic limit. Different maximally heavy observables characterize heavying sequences in different ways. Rate functions give a global picture of a logarithmically regulated correlator sequence on the entire causal diamond, while classical measures give a local picture as the Fourier transform of a characteristic function that probes deviations of size $\sim1/\De$ around the self-dual point. Using the classical measure, we can canonically construct a family of solutions to the crossing equation for an arbitrary external dimension using the coherent state decomposition, and the rate function of the resulting coherent state correlator is solely determined by the convex hull of the support of the classical measure. In order to interpolate between the global and local pictures, we introduce an  ``$\alpha$-local" rate function, which probes asymptotic deviations of size $\sim1/\De^{\alpha}$ around the self-dual point. For $\alpha \to 0$, we recover the standard global rate function, while taking $\alpha\to1$ returns the cumulant generating function for the classical measure. 

In addition to being well-defined for any given heavying correlator sequence, these observables are strongly constrained by crossing symmetry, chiral symmetry, and unitarity. This results in universal bounds on both the value and gradient of rate functions over the entire causal diamond. In the local picture, crossing and chiral symmetry imply that classical measures lie in the space of compactly supported measures on $\mathcal{D} = [0,\sqrt{2}]^2$, which are invariant under a discrete group action isomorphic to the Klein four-group. This establishes a setting where the approximate ``reflection symmetry" result of \cite{Kim2015-sg} is exact. Under the assumptions of a finite identity operator contribution to the classical measure and a twist gap for non-identity contributions, we establish a connection between bounds on the non-universal regions of rate functions and those derived for the free energy of a 2d torus partition function at large central charge in \cite{Hartman:2014oaa,Dey:2024nje}.

Maximally heavy observables also give novel insights into the holographic interpretations of heavy boundary operators. They allow us to develop a refined classification of external states based on the degree to which they localize around bulk geodesics, as described through their world-line effective actions. Localized operators exhibit sharp dynamical phase transitions in their rate functions around the self-dual point, as predicted by the non-trivial convex hull of the corresponding classical measure. Delocalized operators have trivial classical measures that degenerate around a single point, and exhibit smooth crossovers in their rate functions around the self-dual point. Quasi-localized operators have trivial classical measures, yet still produce rate functions that demonstrate a phase transition around the self-dual point. In order to resolve the non-universal microscopic structure of quasi-localized operators, one can tune the $\alpha$-local rate function to a critical value $\alpha =\alpha_c$, where $\alpha_c$ is a critical exponent characterizing the extent to which the state localizes in the bulk. The classification of classical measures also allows for the possible emergence of classical ``vortex" phases in correlators along the self-dual line $\bar{z} = 1-z$, where large-spin saddles dominate the OPE decomposition. We realize these different possibilities in a number of examples, including generalized free theories, chiral product theories in 2d, and giant graviton correlators in planar $\mathcal{N} = 4$ SYM at weak coupling.

A number of compelling questions and future directions arise from this analysis. As an immediate next step, one can compute maximally heavy observables in a wider range of examples, including higher-order perturbative computations in planar $\mathcal{N} = 4$ SYM and other holographic theories at both weak and strong coupling, in order to see how interactions modify the leading-order results. One might expect correlators of delocalized operators in one coupling regime to become localized in another, analogous to a dilute gas coalescing into a single bound state as an attractive interaction is switched on. It might be especially interesting to compute a four-point function of huge operators, such as dual giant gravitons, in order to analyze black hole scattering through the lens of our heavy observables. One might expect very different rate functions for these states, perhaps with more than one phase transition away from the self-dual point. The reason for this is as follows: when operator pairs are taken to be well separated, we expect the correlator to be described by a disconnected banana geometry introduced in~\cite{Abajian2023-xw}, while at the self-dual point these bananas would merge into a single ``quad-nana" encoding the event that a pair of colliding black holes first merge and then separate at a later time. In between these disconnected banana and fully merged geometries, we might expect a phase transition when the event horizons of the black holes begin to interact. Would this be a first or second order transition in the rate function? Would it remain a smooth crossover? These questions can be answered through a direct calculation.

\begin{figure}
    \centering
    \includegraphics[width=\linewidth]{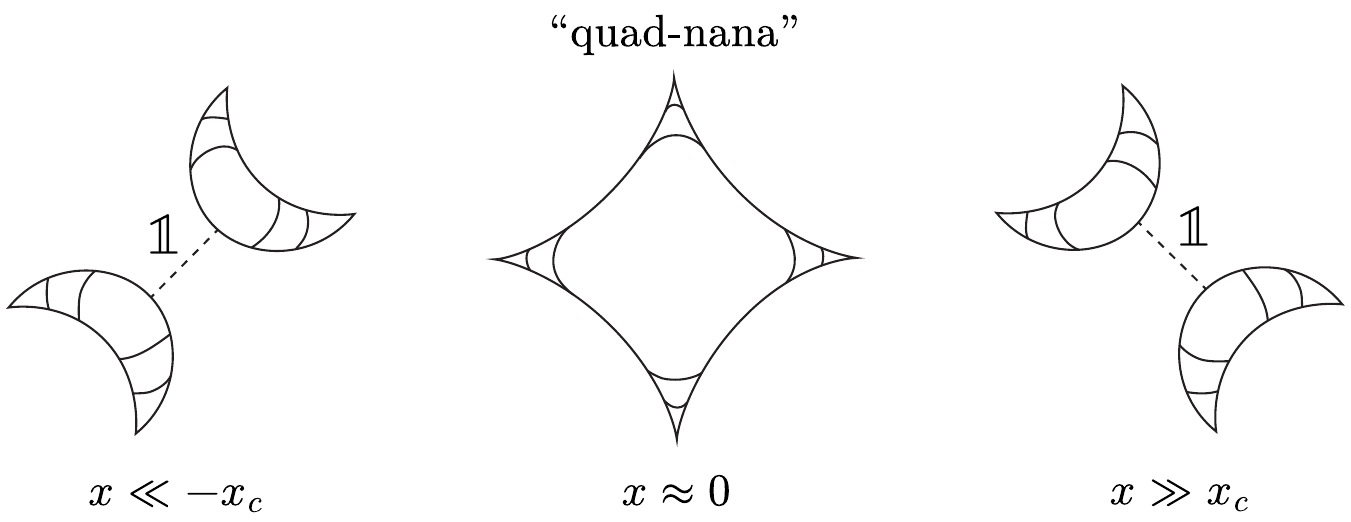}
    \caption{An illustration of proposed phase dynamics for a four-point function of huge operators. Deep in the OPE limits of $|x| \gg x_c$, the bulk effective action is dominated by a disconnected double banana geometry, or ``bi-nanas," associated with the exchange of the s and t-channel identity operators. Near the self-dual point, the bi-nanas merge into a single ``quad-nana," analogous to the ``tri-nana" pictured in figure 12 of \cite{Abajian2023-xw}. Sharp transitions may occur around $x = \pm x_c$ when the event horizons of the disconnected bananas begin to interact.}
    \label{fig:quadnana}
\end{figure}

With this in mind, one hope is that, given that a single maximally heavy observable does not require knowing all the analytic structure of the correlator, computing a maximally heavy observable directly in perturbation theory may be easier to do than computing the entire heavying sequence of correlators (say at increasing values of finite $N$) and then extracting the heavy observables by taking a limit. For example, in the case of free localized operators, we know that the rate function is universally predicted from the free geodesic action of the disconnected pieces. It may be straightforward to compute perturbations of this free rate function using the well-developed toolbox of world-line effective field theory and Witten diagrams~\cite{Strassler:1992zr,Maxfield:2017rkn,Kulkarni:2024ghc,Hijano:2015zsa}. It would also be fruitful to develop a systematic
$1/N$ expansion for dynamical free energy densities directly, complementing the standard perturbative expansion of correlators around large $N$. Since these functions are locally bounded and Lipschitz on the entire causal diamond, they provide a natural starting point for perturbation theory in which the leading-order result is known classically, and subleading corrections would smooth out the sharp phase transitions. This smoothing would reflect the manner in which quantum corrections smear a classical world-line into a fuzzy wave packet at finite $N$.

In holographic theories that admit descriptions in terms of integrable spin chains, such as planar $\mathcal{N} = 4$ SYM~\cite{Beisert:2003tq,Beisert:2010jr}, one can ask how these alternate descriptions translate into maximally heavy observables. In this setting, the number of Bethe roots scales with the system size in the heavy limit, and the thermodynamic Bethe ansatz provides exact finite-coupling predictions for anomalous dimensions. Recent progress in computing structure constants involving determinant operators at finite coupling via the hexagon formalism and $g$-function methods~\cite{Jiang:2019xdz} suggests that integrable descriptions of heavy correlators may be within reach. This may lead to exciting results such as exact rate functions and classical measures at finite 't Hooft coupling, giving us a window into the scattering of huge bodies in non-perturbative quantum gravity. 

Even without gravity, such as in the context of the rigid AdS/CT correspondence, one could develop a connection between our picture of maximally heavy observables and flat-space scattering amplitudes~\cite{vanRees:2022zmr,Paulos2016-mh,vanRees:2023fcf,Penedones:2010ue}. Here, the limit of infinite external scaling dimension is manifest, and one might expect to find heavy observables that can be related directly to properties of the S-matrix. This raises the sharp structural question of whether the S-matrix is itself a maximally heavy observable in our precise sense, and whether the computation of the on-shell T-matrix in \cite{vanRees:2022zmr} is a generically good regulation of a heavying sequence. If so, this would provide a rigorous bridge between the conformal bootstrap and the S-matrix bootstrap \cite{Paulos:2016fap,Paulos:2017fhb}, and would illuminate the sense in which analyticity, crossing, and unitarity of amplitudes emerge from properties of heavy CFT correlators.

While general heavying sequences may correspond to a family of CFTs, one can also focus on heavying sequences which belong to a single CFT. Since the scalar spectrum of any given CFT has an unbounded sequence of dimensions~\cite{Qiao:2017xif,Mukhametzhanov:2018zja}, one can ask about the complete set of maximally heavy observables that a given CFT produces, and their dependence on properties such as central charge, internal global symmetries, and degree of supersymmetry. It is possible that these observables are a fundamental property of a given theory, similar to the spectrum of relevant operators, and thus carry an imprint of the low-lying fusion algebra or underlying equations of motion. It would be interesting to make this precise by studying heavying sequences of quasi-primaries in a known minimal model or rational CFT. Another question is whether a generic CFT at finite central charge contains only one class of excitation, or if a single CFT can furnish localized, delocalized, and quasi-localized operators, as one might expect for holographic CFTs with all different kinds of asymptotic classical states, each having different scalings with $N$. 

While we have only studied four-point functions of identical scalar operators in this work, it is natural to consider correlation functions of non-identical maximally heavy operators. One can ask whether our regularizations remain well-defined for such heavying sequences, or which regularizations are better suited to these cases. This analysis could also be extended to maximally heavy operators in the presence of defects, such as the setup of~\cite{Linardopoulos:2026mut}, where two-point functions of heavy operators were computed in the presence of a co-dimension 1 defect dual to a probe brane in AdS. One expects a dynamical phase transition to emerge in this setting through a similar mechanism: when the operators are close together relative to their distance from the defect, the bulk effective action is dominated by a single geodesic connecting them, and as the operators are taken far apart, it becomes dominated by the configuration where each operator connects independently to the probe brane via a minimal length geodesic. Studying higher-point functions of maximally heavy identical operators could also be fruitful, granting access to additional dynamical phases in a configuration space with more cross ratios. In this vein, there is the more bootstrap-oriented question of how crossing and chiral symmetry can be generally understood for multi-point functions as a generalization of reflection symmetry for the corresponding space of OPE measures.

In our setup, we introduced the causal diamond as a statistical manifold fibered by OPE measures defined at each point. It would be interesting to develop this interpretation further, and to better understand the information-theoretic aspects of this problem. In this setting, the Hessian matrix of the rate function is the Fisher information metric~\cite{Ruppeiner:1995zz,Amari:2000ig}, which can be used to detect phase transitions when the scalar curvature diverges. One can also consider higher curvature invariants, which may be used to detect second order phase transitions in the rate function. There are additional information-theoretic properties of the classical measure to consider as well, such as the spectral entropy and Kullback--Leibler divergences relating different classical measures. There is then a natural question of how these quantities encode bulk physics, which may be answered similarly to the work of~\cite{Bohra:2021zyw}, generalized to four-point functions.

Another exciting direction would be to clarify the physical origins of the quasi-local fragmenton limit described in section~\ref{fragmentons}. Namely, we would like to find a dynamical mechanism that would produce such a fragmentation cascade in a known holographic theory. One real-world mechanism which produces a similar effect is that of higher-order soliton fission in nonlinear dispersive media. In this setting, a soliton propagating in an optical fiber is characterized by a soliton number $N \propto \sqrt{P_0}$, where $P_0$ is the peak power~\cite{Agrawal:2019nfo,Dudley:2006rmp}. When perturbed by higher-order dispersion or Raman scattering, this higher-order soliton fragments into $N$ fundamental solitons, each centered at a different frequency, and the resulting cascade generates a broadband supercontinuum of excitations as it propagates through the medium~\cite{Herrmann:2001prl,Husakou:2002josab}. This may be viewed as a parallel to the fragmenton limit with $\alpha_c = 1/2$, where we take the external dimension to be proportional to the peak power. 

A potentially closer holographic parallel may be provided by grey galaxy solutions in AdS~\cite{Kim:2023sig,Bajaj:2024utv}, which have been proposed as endpoints of the superradiant instability of Kerr-AdS black holes. In these configurations, the initial rotating black hole progressively sheds energy and angular momentum into co-rotating bulk modes until it reaches a critical angular velocity, at which point the instability saturates and the system settles into a composite state: a localized black hole at the center of AdS surrounded by a delocalized disk of gravitons revolving at the speed of light. This is a fragmentation cascade with a definite dynamical endpoint, producing a state that appears quasi-localized in the sense of our classification. However, the grey galaxy fragmentation pattern is qualitatively different from the uniform fragmenton model of section~\ref{fragmentons}: rather than producing $\sim M^{1-\alpha_c}$ identical fragments each of mass $M_* \sim M^{\alpha_c}$, the superradiant cascade produces a bimodal mass spectrum consisting of one heavy remnant carrying an $O(M)$ fraction of the total energy and $O(M)$ light quanta each carrying $O(1)$ energy in AdS units. If this picture extends to correlators, it suggests that no single critical exponent $\alpha_c$ would characterize such a state. In this case, the $\alpha$-local rate function may exhibit a richer interpolation between the localized and delocalized regimes than the three-regime structure of eq.~(\ref{eq:fragmenton_alpha_local}), with the black hole and gas components entering at different characteristic scales as $\alpha$ is varied. It would be interesting to make this precise by computing maximally heavy observables for correlators of CFT operators dual to grey galaxy states. 

One can ask a similar question for the emergence of classical vortex phases in heavy correlators. Since the chiral product correlator example we considered in this work is rather tautological, it would be extremely interesting to find a more physical example of a heavying sequence of correlators in $d>2$ which produces a rate function that exhibits this phase. Developing this picture further could also help clarify the physical mechanism behind the non-universal phases observed along the self-dual line in~\cite{Dey:2024nje}.

More generally, it would be exciting to extend our construction of maximally heavy observables to large-$c$ observables in 2d CFT torus partition functions. These setups are enticingly similar, but with modular invariance notably more constraining than crossing symmetry. This is evident by the fact that the bound on the non-universal regions of the free energy of the torus partition function in $(\beta_L,\beta_R)$ space is much tighter than what we obtained in the heavy correlator setup (see figure \ref{fig:nonuniversalregions}). It would be very interesting to use $\alpha$-local rate functions to more systematically probe the fine structure of the partition function near the self-dual point, as well as explicitly compute the classical measure for known holographic theories such as a symmetric product orbifold or Narain CFTs \cite{Dijkgraaf:1996xw,AfkhamiJeddi:2020hde, Benjamin:2021wzr,Maloney:2020nni}. These could give a much more refined characterization of the set of CFT operators whose quantum numbers grow proportional to the central charge that the free energy alone is insensitive to. There are likely other maximally heavy observables that are especially suited for torus partition functions, such as those which encode the universal Schwartzian sector discussed in \cite{Ghosh:2019rcj}.

In conclusion, the space of maximally heavy dynamics in the causal diamond is a vast, largely unexplored landscape which captures a huge\footnote{No pun intended.} variety of physics both in holographic settings and beyond. Very few examples of heavying correlator sequences have been computed beyond tree level, and there is a great deal of rich structure to be uncovered through direct computation in perturbative and non-perturbative settings. It is our hope that this work provides a robust and mathematically rigorous foundation for these fascinating dynamics to be explored further, shedding light on key problems in both holography and the conformal bootstrap, as well as inspiring novel questions about quantum gravity, gauge theory, and scale-invariant phenomena.

\acknowledgments

We thank Carlos Bercini, Li-Yuan Chiang, Liam Fitzpatrick, Tom Hartman, Nafiz Ishtiaque, Petr Kravchuk, Yue-Zhou Li, George Linardopoulos, Tony Liu, Jeremy Mann, Matthew Mitchell, Ian Moult, Raman Sundrum, Balt van Rees, and Yuan Xin for discussions. We thank Anshul Adve, Sridip Pal, and Allic Sivaramakrishnan for comments on the draft. The authors were supported by DOE grant DE-SC0017660.

\begin{appendix}
\label{appendix}

\section{Asymptotics}
Let $(\rho,\rhob)  = (e^{-t-s_0}, e^{-\bar{t} - \bar{s}_0}) $ and $(\rho_0,\rhob_0) = (e^{-s_0},e^{-\bar{s}_0})$. In this appendix, we will present the asymptotics of various factors involving radial monomials around $t = 0$, computed using \texttt{Mathematica}. 

\begin{lem}

Let $h \in (0,\infty)$, $\mathrm{\mathbf{t}} = t \csch(s_0/2)$, and $N \in \mathbb{N}_{\geq2}$, then we have the following asymptotic expansion around $t = 0$:
\ba
\left(\frac{1-\sqrt{\rho}}{1+\sqrt{\rho}}\right)^{2h} =& \left( \frac{1-\sqrt{\rho_0}}{1+\sqrt{\rho_0}}\right)^{2h}e^{h \mathrm{\mathbf{t}}}\\
&\times \left(1 + \sum_{n=2}^{N-1} \mathrm{\mathbf{t}}^n \sum_{m=1}^{\lfloor\frac{n}{2}\rfloor}h^m f_{m,n}(s_0) + O\left(\mathrm{\mathbf{t}}^N \sum_{m=1}^{\lfloor\frac{N}{2}\rfloor}h^m f_{m,N}(s_0)\right)  \right),
\ea
where
\ba
f_{m,n}(s_0) = \sum_{k = (m-n)/2}^{(n-m)/2} a^{(m,n)}_k e^{k s_0},
\ea
for some symmetric coefficients $a^{(m,n)}_k=a^{(m,n)}_{-k} \in \mathbb{R}$. 
\label{qtermasym}
\end{lem}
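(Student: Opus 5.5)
We write the left-hand side as $\exp\big(2h\,g(t)\big)$ with $g(t)=\log\frac{1-e^{-(t+s_0)/2}}{1+e^{-(t+s_0)/2}}$. The map $t\mapsto g(t)$ is real-analytic near $t=0$ because $0<\rho_0<1$. The whole expansion then follows from Taylor expanding $g$ in $t$, so that $h$ enters only through the prefactor $2h$ in the exponent.

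The first step is to compute $g'(t)$. Write $w=e^{-(t+s_0)/2}$. Then $\frac{d}{dt}\log\frac{1-w}{1+w} = -\frac{w}{2}\big(\frac{-1}{1-w}-\frac{1}{1+w}\big)=\frac{w}{1-w^2}$. This simplifies to $\frac{1}{w^{-1}-w}=\frac{1}{2\sinh((t+s_0)/2)}$. At $t=0$ we get $2h\,g'(0)\,t=h\,t\,\csch(s_0/2)=h\mathrm{\mathbf{t}}$, which gives the factor $e^{h\mathrm{\mathbf{t}}}$. The zeroth-order term $e^{2hg(0)}$ is exactly the stated prefactor.

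The second step is the higher derivatives. Every $g^{(n)}(0)$ with $n\ge2$ is a derivative of $\tfrac12\csch$ evaluated at $s_0/2$. Writing $g(t)-g(0)-g'(0)t=\sum_{n\ge2}b_n t^n=\sum_{n\ge 2}\tilde b_n\mathrm{\mathbf{t}}^n$ with $\tilde b_n=b_n\sinh^n(s_0/2)$, each $\tilde b_n$ is a Laurent polynomial in $e^{s_0/2}$. These polynomials are symmetric under $s_0\to -s_0$ up to sign, because $\csch$ is odd; for example $\tilde b_2=-\tfrac14\cosh(s_0/2)\sinh(s_0/2)\cdot\tfrac{1}{\sinh^2}\cdot\sinh^2 \propto \sinh s_0$.

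The third step is to exponentiate. We have $\exp\big(2h\sum_{n\ge2}\tilde b_n\mathrm{\mathbf{t}}^n\big)=1+\sum_{n\ge2}\mathrm{\mathbf{t}}^n\sum_m h^m f_{m,n}$. The power $h^m$ arises from products of $m$ factors $\tilde b_{n_i}$ with each $n_i\ge2$ and $\sum n_i=n$. This forces $1\le m\le\lfloor n/2\rfloor$. Taylor's theorem with remainder, applied to the analytic function at fixed $s_0$, gives the $O(\cdot)$ truncation at order $N$. The remainder should be understood as uniform for $h\mathrm{\mathbf{t}}^2$ bounded; we state the lemma as a formal asymptotic series, matching its use elsewhere.

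The main obstacle is establishing the claimed form $f_{m,n}=\sum_{k=(m-n)/2}^{(n-m)/2}a_k e^{ks_0}$ with symmetric coefficients. For this we track the degree of $\tilde b_n$ as a Laurent polynomial in $e^{s_0/2}$. Using $\frac{d}{dt}\csch=-\csch\coth$ inductively, $\csch^{(j)}(y)=\csch(y)\,P_j(\coth y)$ with $\deg P_j=j$. Hence $\tilde b_n\propto \sinh^{n}(s_0/2)\csch(s_0/2)P_{n-1}(\coth(s_0/2))$, which is a combination of $\sinh^{n-1-i}\cosh^{i}$ of total degree $n-1$. That is, $\tilde b_n$ is a Laurent polynomial in $e^{s_0/2}$ with exponents ranging over $[-(n-1)/2,(n-1)/2]$.

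A product of $m$ such factors, with $\sum n_i=n$, has exponent range $\pm(n-m)/2$ in $e^{s_0}$, which is exactly the stated range. Under $s_0\to-s_0$, each factor picks up the sign $(-1)^{n_i-1}$. The product therefore picks up the sign $(-1)^{n-m}$. Combined with the step size of $k$, this is what yields $a_k=a_{-k}$, possibly after absorbing sign conventions into the indexing; we would check this parity bookkeeping against low orders computed explicitly, for instance $n=2,3$.
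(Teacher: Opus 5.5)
Your route is the natural one (the paper gives no derivation beyond a \texttt{Mathematica} expansion), and most of it is right. Writing the left side as $e^{2hg(t)}$ with $g'(t)=\tfrac12\csch\bigl((t+s_0)/2\bigr)$ gives the prefactor and the factor $e^{h\mathrm{\mathbf{t}}}$. Counting compositions gives $1\le m\le\lfloor n/2\rfloor$, and the degree count gives $|k|\le (n-m)/2$. (Small slip: $\tilde b_n$ is a Laurent polynomial in $e^{s_0}$, not $e^{s_0/2}$, with exponents in $[-(n-1)/2,(n-1)/2]$; your next sentence already uses it that way.)

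The gap is the step you flag as the main obstacle, the symmetry $a_k=a_{-k}$. Your parity bookkeeping is wrong, and as written it proves the opposite of the lemma. If each $\tilde b_{n_i}$ picked up $(-1)^{n_i-1}$ under $s_0\to-s_0$, then $f_{m,n}(-s_0)=(-1)^{n-m}f_{m,n}(s_0)$. That means $a_{-k}=-a_k$ whenever $n-m$ is odd, already for $(m,n)=(1,2)$, and no reindexing or sign convention turns this into $a_k=a_{-k}$. Your low-order check is also miscomputed. Since $g''(0)=-\tfrac14\csch(s_0/2)\coth(s_0/2)$, you get $\tilde b_2=\tfrac12 g''(0)\sinh^2(s_0/2)=-\tfrac18\cosh(s_0/2)$, which is even, so $f_{1,2}=-\tfrac14\cosh(s_0/2)$. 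Your answer $\propto\sinh s_0$ would even contain $e^{\pm s_0}$, outside the range $|k|\le 1/2$ you derived for $n=2$.

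What you dropped is the parity of $P_{n-1}$. From $P_0=1$ and $P_{j+1}(c)=-cP_j(c)-(c^2-1)P_j'(c)$, the polynomial $P_j$ has parity $(-1)^j$. So in $\sinh^{n-1}P_{n-1}(\coth)=\sum_i p_i\sinh^{n-1-i}\cosh^{i}$ only $i\equiv n-1\pmod 2$ occurs, and $\sinh$ appears only to even powers. Equivalently, $g^{(n)}(0)=2^{-n}\csch^{(n-1)}(s_0/2)$ is the $(n-1)$-th derivative of an odd function, so it has parity $(-1)^n$ in $s_0$; this cancels the $(-1)^n$ from $\sinh^n(s_0/2)$. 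Hence every $\tilde b_n$ is even in $s_0$, and so is every $f_{m,n}$, being a polynomial in the $\tilde b_n$. Then $f_{m,n}(s_0)=f_{m,n}(-s_0)$, together with linear independence of the $e^{ks_0}$, gives $a_k=a_{-k}$ directly. With this correction the argument goes through.

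Separately, tighten your uniformity remark. For odd $N$ the omitted term $h^{(N+1)/2}\mathrm{\mathbf{t}}^{N+1}$ is not $O(h^{(N-1)/2}\mathrm{\mathbf{t}}^{N})$ when only $h\mathrm{\mathbf{t}}^2$ is bounded, because their ratio is $h\mathrm{\mathbf{t}}$. The stated remainder is uniform for $h\mathrm{\mathbf{t}}$ bounded, which is the regime $h\sim\Delta_n\eta$, $t\sim\tilde t/\Delta_n$ where the paper applies the lemma.
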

\paragraph{Note:}As a finite sum of entire functions in $s_0$, $f_{m,n}(s_0)$ is an entire function in $s_0$ for all $n,m <\infty$. On the other hand, $\csch(s_0/2)$ has a $\sim1/s_0$ divergence around $s_0 \sim 0$, so we lose control of the asymptotic expansion in this limit. 

\begin{lem}
\label{cohstateasymexp}
Let $\De \in (0,\infty)$, $\mathrm{\mathbf{t}} = t \csch(s_0/2)$, and $N \in \mathbb{N}_{\geq2}$, then we have the following asymptotic expansion around $t = 0$:
\ba
\left(\frac{4 \rho }{(1-\rho)^2}\right)^{\De} =& \left( \frac{4 \rho_0}{(1-\rho_0)^2}\right)^{\De}e^{-t \De \coth(s_0/2)}\\
&\times\left(1 + \sum_{n=2}^{N-1} \mathrm{\mathbf{t}}^n \sum_{m=1}^{\lfloor\frac{n}{2}\rfloor}\De^m g_{m,n}(s_0) + O\left(\mathrm{\mathbf{t}}^N \sum_{m=1}^{\lfloor\frac{N}{2}\rfloor}\De^m g_{m,N}(s_0)\right)  \right),
\ea
where
\ba
g_{m,n}(s_0) = \sum_{k = (m-n+2)/2}^{(n-m-2)/2} b^{(m,n)}_k e^{k s_0},
\ea
for some symmetric coefficients $b^{(m,n)}_k  =b^{(m,n)}_{-k} \in \mathbb{R}$. 
\label{phitermasym}
\end{lem}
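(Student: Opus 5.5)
The plan is to take the logarithm, Taylor-expand it in $t$, and then exponentiate, tracking the $s_0$-dependence of each coefficient explicitly. The starting observation is that, with $\rho=e^{-s}$ and $s=s_0+t$, we have $4\rho/(1-\rho)^2=\csch^2(s/2)$. Hence
\begin{equation*}
\De\log\frac{4\rho}{(1-\rho)^2}=-2\De\,L(s_0+t),\qquad L(s)\equiv\log\sinh(s/2).
\end{equation*}
For $s_0>0$ the function $L$ is real-analytic at $s_0$, with radius of convergence $s_0$ because the nearest singularity sits at $s=0$. Expanding,
\begin{equation*}
-2\De L(s_0+t)=-2\De L(s_0)-t\De\coth(s_0/2)-2\De\sum_{k\ge2}\frac{L^{(k)}(s_0)}{k!}t^k .
\end{equation*}
The first term gives the prefactor $\bigl(4\rho_0/(1-\rho_0)^2\bigr)^{\De}$. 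The linear term gives $e^{-t\De\coth(s_0/2)}$, using $L'(s)=\tfrac12\coth(s/2)$.

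The main structural step, and the one I expect to require the most care, is an inductive description of the higher derivatives. The claim is that for $k\ge2$,
\begin{equation*}
L^{(k)}(s)=\csch^k(s/2)\,P_k\bigl(\cosh(s/2)\bigr),
\end{equation*}
where $P_k$ is a real polynomial of degree at most $k-2$ with parity $(-1)^k$. The base case is $L''=-\tfrac14\csch^2(s/2)$. For the inductive step, write $c=\cosh(s/2)$ and differentiate:
\begin{equation*}
\frac{d}{ds}\bigl[\csch^k(s/2) P(c)\bigr]=\csch^{k+1}(s/2)\Bigl[-\tfrac{k}{2}\,cP(c)+\tfrac12(c^2-1)P'(c)\Bigr].
\end{equation*}
This uses $\sinh^2(s/2)=c^2-1$, and the bracket raises the degree by exactly one and flips the parity, which closes the induction. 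Substituting $t=\mathbf{t}\sinh(s_0/2)$, the coefficient of $\mathbf{t}^k$ in the exponent becomes
\begin{equation*}
\frac{L^{(k)}(s_0)}{k!}\sinh^k(s_0/2)=\frac{P_k(\cosh(s_0/2))}{k!}.
\end{equation*}
Since a polynomial of degree at most $k-2$ in $\cosh(s_0/2)$ expands in the exponentials $e^{js_0}$, this is a Laurent polynomial $\sum_j \beta_j e^{js_0}$ with $|j|\le (k-2)/2$, the index $j$ running in unit steps. Its coefficients are symmetric under $j\to-j$ because $\cosh$ is even.

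Next I exponentiate $\exp\bigl(\De\sum_{k\ge2}\gamma_k(s_0)\mathbf{t}^k\bigr)=\sum_m \frac{\De^m}{m!}\bigl(\sum_k\gamma_k\mathbf{t}^k\bigr)^m$ and collect the coefficient of $\mathbf{t}^n$. Each factor carries $k\ge2$, so the term $\De^m$ can only appear at order $\mathbf{t}^n$ with $2m\le n$, i.e. $m\le\lfloor n/2\rfloor$. A product of $m$ factors with $k_1+\dots+k_m=n$ is a symmetric Laurent polynomial with exponent range $|j|\le (n-2m)/2$. Since $m\ge1$, this lies within the stated range $|j|\le(n-m-2)/2$. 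Summing over compositions of $n$ gives $g_{m,n}(s_0)$ with symmetric coefficients $b^{(m,n)}_k$ of the stated form. The $m=0$ term is the leading $1$.

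Finally, for the remainder I truncate at order $N$. Because the exponent is analytic in $t$ on $|t|<s_0$, the tail beyond $\mathbf{t}^{N-1}$ is bounded by the next Taylor term, uniformly for $t$ in a compact subdisk. This gives the error term $O\bigl(\mathbf{t}^N\sum_m\De^m g_{m,N}\bigr)$. The only point where uniformity needs care is the regime $s_0\to0$, where $\csch(s_0/2)$ blows up; this matches the caveat noted after lemma \ref{qtermasym}. The computation runs parallel to lemma \ref{qtermasym}, which can be handled the same way using $L(s)=\log\tanh(s/4)$.
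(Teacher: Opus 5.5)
Your analytic route is sound up to the last step: you Taylor-expand $-2\De\log\sinh(s/2)$, prove $L^{(k)}=\csch^k(s/2)\,P_k(\cosh(s/2))$ with $\deg P_k\le k-2$ via $P_{k+1}=-\tfrac{k}{2}cP_k+\tfrac12(c^2-1)P_k'$, and exponentiate. The paper gives no derivation of this lemma; it only quotes a \texttt{Mathematica} computation, so your recursion is the real content here.

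The last step fails. You correctly find that the coefficient of $\De^m\mathbf{t}^n$ has exponents $|k|\le(n-2m)/2$. You then assert that ``since $m\ge1$'' this lies inside $|k|\le(n-m-2)/2$. But $(n-2m)/2\le(n-m-2)/2$ is equivalent to $m\ge2$, so the claim is false at $m=1$.

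No other argument can close this gap, because the printed statement is itself false at $m=1$. Write $\gamma_k=-2P_k(\cosh(s_0/2))/k!$. Your own formulas then give:
\begin{itemize}
\item $g_{1,2}=\gamma_2=\tfrac14$, since the expansion contains $+\tfrac{\De}{4}\mathbf{t}^2$. The printed sum for $(m,n)=(1,2)$ runs from $k=\tfrac12$ to $k=-\tfrac12$ and is empty.
\item $g_{1,3}=-\tfrac{1}{12}\cosh(s_0/2)$, whereas the printed form is a constant $b_0$.
\item $g_{1,4}=\tfrac{1}{96}\cosh s_0+\tfrac{1}{48}$, whereas the printed exponents are only $\pm\tfrac12$.
\end{itemize}
In general $P_n$ has exact degree $n-2$, with leading coefficient $(-1)^{n-1}/4$. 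So $g_{1,n}$ always contains $e^{\pm(n-2)s_0/2}$, which is outside the printed range.

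Under the unit-step reading you adopt, containment also fails for odd $m\ge3$. Your exponents lie in $\tfrac{n}{2}+\mathbb{Z}$, while the printed ones lie in $\tfrac{m-n}{2}+\mathbb{Z}$. For example, $g_{3,6}=\gamma_2^3/3!=\tfrac{1}{384}$ is a nonzero constant, but the printed form for $(3,6)$ is a multiple of $\cosh(s_0/2)$.

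What your argument actually proves is a corrected lemma, with $g_{m,n}(s_0)=\sum_{k=(2m-n)/2}^{(n-2m)/2}b^{(m,n)}_k e^{ks_0}$ in unit steps and $b^{(m,n)}_k=b^{(m,n)}_{-k}$. This range coincides with the printed one only for $m=2$; for even $m\ge4$ it is merely contained in it. It is the exact analogue of the range $(m-n)/2\le k\le(n-m)/2$ in lemma \ref{qtermasym}. There, the derivatives of $\log\tanh(s/4)$ are $\csch^k(s/2)$ times polynomials of degree $k-1$ in $\cosh(s/2)$, rather than $k-2$.

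So instead of claiming containment, you should flag the printed index range as a misprint and state the corrected one. The discrepancy appears harmless for the paper's uses, which only rely on the corrections starting at order $\De\mathbf{t}^2$ at the self-dual base point $\rho_0=\rho_*$. Your remainder estimate is adequate at fixed $\De$ and $s_0>0$, which is all the statement asks for.
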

\section{OPE convergence and moment bounds}
\label{convergence}
In this appendix we present and prove a number of lemmas concerning convergence and moment bounds which will be necessary for proving our main results.

\begin{lem}[Convergence estimate]
\label{convestimate}
Let $\mu(\bs{\rho};\bs{h})$ denote the normalized OPE measure at $\bs{\rho}$. The integrated high energy tails of $h,\bh \gg \De$ satisfy the asymptotic bounds
\ba
\mathcal{G}(\bs{\rho})\int_{ h,\bh \geq \De^{2\alpha}} h^j \bh^k d\mu(\bs{\rho}; \bs{h}) \lesssim\De^{2\alpha(j+k)} |\bs{\rho}|^{\De^{2\alpha} ( 1 + g(\De) ) },
\ea
where $\alpha >1/2$ and $g(\De) = O(\De^{1-2\alpha}\log(\De))$.
\end{lem}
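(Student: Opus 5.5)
We will bound the high-energy tail by comparing the OPE sum at $\bs{\rho}$ with the same sum at a slightly larger radius. The left-hand side is
\[
\sum_{h,\bh\geq \De^{2\alpha}} a_{\bs{h}}\, h^j\bh^k\, \bs{\rho}^{\bs{h}},
\]
since $\mathcal{G}(\bs{\rho})\,d\mu(\bs{\rho};\bs{h})=\sum a_{\bs{h}}\bs{\rho}^{\bs{h}}\delta(\bs{h}-\bs{h}')$. Any radius strictly larger than $\rho$ at which the correlator is still controlled will do; the t-channel identity gives a clean handle on the growth of $\mathcal{G}$ as $\bs{\rho}\to\bs{1}$. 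Concretely, we choose $\bs{\rho}'=(\rho e^{\epsilon},\bar\rho e^{\bar\epsilon})$ inside the unit disk and write
\[
a_{\bs{h}}\bs{\rho}^{\bs{h}} \;=\; a_{\bs{h}}\bs{\rho}'^{\bs{h}}\, e^{-\epsilon h-\bar\epsilon\bh}.
\]
Positivity of $a_{\bs{h}}$ then gives, term by term,
\[
\text{LHS}\;\le\; \mathcal{G}(\bs{\rho}')\,\sup_{h,\bh\ge\De^{2\alpha}} h^j\bh^k e^{-\epsilon h-\bar\epsilon\bh}.
\]

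The first step is to control $\mathcal{G}(\bs{\rho}')$. By the crossing equation (\ref{stcrossing}), $\mathcal{G}(\bs{\rho}')=|4\bs{\rho}'/(1-\bs{\rho}')^2|^{2\De}\,\mathcal{G}(\hat{\bs{\rho}}')$. As long as $\hat{\bs{\rho}}'$ stays away from the boundary, $\mathcal{G}(\hat{\bs{\rho}}')$ is bounded by a quantity at most polynomial or exponential in $\De$. The prefactor behaves like $(1-\rho')^{-4\De}$. So we get $\log\mathcal{G}(\bs{\rho}') = O(\De\log(1/(1-\rho')))$.

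The second step is to optimize $\epsilon$. If the sup is attained at the boundary $h=\De^{2\alpha}$, then $\epsilon\gtrsim j/\De^{2\alpha}$, and the sup equals
\[
\De^{2\alpha(j+k)}\,|\bs{\rho}|^{\De^{2\alpha}}\,|\bs{\rho}'/\bs{\rho}|^{-\De^{2\alpha}}\ \text{(appropriately)}.
\]
Rewriting everything in terms of $|\bs{\rho}|^{\De^{2\alpha}(1+g)}$, the total exponent is
\[
\De^{2\alpha}\log|\bs{\rho}| \;+\; O(\De^{2\alpha}\epsilon) \;+\; O(\De\log(1/\epsilon)).
\]
We choose $\epsilon\sim \De^{1-2\alpha}$ (with $\rho$ fixed, $\epsilon\to 0$ because $\alpha>1/2$). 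This makes both correction terms $O(\De\log\De)$. Dividing by $\De^{2\alpha}\log|\bs{\rho}|$ gives $g(\De)=O(\De^{1-2\alpha}\log\De)$, which is exactly the claimed form. The condition $\alpha>1/2$ is what ensures $\epsilon\to0$ and $g\to0$.

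The main obstacle is making step one uniform. One needs a bound on $\mathcal{G}$ at the shifted point that is only exponential in $\De$ (times $\log$), without already knowing the tail estimate. If the dual point $\hat{\bs{\rho}}'$ is near the origin, we can bound $\mathcal{G}(\hat{\bs{\rho}}')$ by a fixed reference value using monotonicity of positive monomial sums in $|\rho|$. Failing that, we will simply state the result conditionally on $\mathcal{G}$ at the enlarged radius being $e^{O(\De\log\De)}$, as the paper's ``$\lesssim$'' suggests. A secondary care point is $j,k$ fixed versus growing: we will take them fixed, so polynomial factors are absorbed into $\lesssim$.
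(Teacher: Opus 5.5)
There is a genuine gap: your comparison-point strategy is sound, but the comparison point you actually write down makes the central estimate fail. The strategy itself is to bound the tail by the correlator at a larger radius using $a_{\bs{h}}\geq 0$, then control that correlator by crossing. With $\bs{\rho}'=(\rho e^{\epsilon},\bar\rho e^{\bar\epsilon})$ and $T=\De^{2\alpha}$, the supremum is $\sup_{h,\bh\geq T}h^j\bh^k e^{-\epsilon h-\bar\epsilon\bh}=T^{j+k}e^{-(\epsilon+\bar\epsilon)T}$. There is no factor $|\bs\rho|^{T}$: your expression $|\bs\rho|^{T}|\bs\rho'/\bs\rho|^{-T}$ contains a spurious $|\bs\rho|^{T}$. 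For $\epsilon\sim\De^{1-2\alpha}$ this supplies only $e^{-O(\De)}$ of suppression. At the same time $\bs\rho'\to\bs\rho$, so $\log\mathcal{G}(\bs\rho')=\De\log(\chi'\bar\chi')+\log\mathcal{G}(\hat{\bs\rho}')$, with $\chi'\to\chi$ and $\hat{\bs\rho}'\to\hat{\bs\rho}$ a fixed interior point. So there is no $\log(1/\epsilon)$ term, and $\hat{\bs\rho}'$ is not near the origin. What the argument as written proves is $\mathrm{LHS}\leq T^{j+k}e^{O(\De)}\mathcal{G}(\hat{\bs\rho}')$. That decays at best exponentially in $\De$, not in $\De^{2\alpha}$. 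Extracting $|\bs\rho|^{T}$ from $\sup_{h\geq T}h^j(\rho/\rho')^h$ requires $\rho'\to 1$. Your bookkeeping $T\log|\bs\rho|+O(T\epsilon)+O(\De\log(1/\epsilon))$, and your remark that $\hat{\bs\rho}'$ sits near the origin, are valid only for $\bs\rho'=(e^{-\epsilon},e^{-\bar\epsilon})$, where $\epsilon$ is the distance to the boundary. The write-up conflates two different choices.

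With that choice the argument does go through. Write $\rho^h=\rho'^h(\rho e^{\epsilon})^h$. For large $\De$ the function $h^j(\rho e^{\epsilon})^h$ is decreasing on $h\geq T$, so $\mathrm{LHS}\leq T^{j+k}|\bs\rho|^{T}e^{(\epsilon+\bar\epsilon)T}\mathcal{G}(\bs\rho')$. Crossing with $\chi(e^{-\epsilon})=1/\sinh^2(\epsilon/2)$ gives $\log\mathcal{G}(\bs\rho')=2\De\log(2/\epsilon)+2\De\log(2/\bar\epsilon)+o(\De)+\log\mathcal{G}(\hat{\bs\rho}')$, with $\hat{\bs\rho}'\to\bs{0}$. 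Taking $\epsilon=\bar\epsilon=2\De/T$ yields $T^{j+k}|\bs\rho|^{T}\De^{4(2\alpha-1)\De}e^{4\De+o(\De)}\mathcal{G}(\hat{\bs\rho}')$. This reproduces the paper's $\frac{1}{4\pi}\De^{2\alpha(4\De+j+k)-4\De-1}e^{4\De}|\bs\rho|^{T}$, up to subleading factors and the factor $\mathcal{G}(\hat{\bs\rho}')$, which the paper implicitly sets to $1$.

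Once repaired, your route is genuinely different from the paper's. The paper feeds the t-channel identity asymptotics $C_{j,k}(\De)s^{-2\De-j}\bar s^{-2\De-k}$ into a Tauberian theorem and integrates by parts. Yours is an honest exponential-Markov inequality, rather than an asymptotic transfer whose uniformity in the growing exponent $2\De$ is not controlled. It also isolates the one input the paper assumes silently by keeping only the t-channel identity.

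That input is less of an obstacle than you feared. By monotonicity, $\mathcal{G}(\hat{\bs\rho}')\leq\mathcal{G}(\bs\rho_*)$ once $\hat{\bs\rho}'\leq\bs\rho_*$ coordinatewise. So the bound is unconditional for the correlator normalized at the self-dual point. In the raw normalization ($a_{\bs 0}=1$) it holds whenever $\log\widehat{\mathcal{G}}(\bs\rho_*)=O(\De\log\De)$.
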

\begin{proof}
    The proof makes use of Tauberian theory, which is a well studied topic in both the physics~\cite{Qiao:2017xif,Mukhametzhanov:2018zja,vanRees:2024xkb} and mathematics~\cite{Korevaar2004} literature. We will simply quote the main results here. Consider the Laplace transform
    \ba
\mathcal{L}(s) = \int_0^\infty dF(t) e^{-s t},
    \ea
    where $dF(t)$ denotes a positive spectral density over $t$. If the Laplace transform has the asymptotic behavior as $s \to 0$
    \ba
    \mathcal{L}(s) \sim C(k) s^{-k},
    \ea
    for some constant $C(k)$ (which may depend on $k$), then the integrated spectral density has the large $T$ asymptotic
    \ba
    F(T) = \int_0^T dF(t) = \frac{C(k) T^k}{\Gamma(1+k)}\left( 1 + O\left( \frac{1}{\log(T)}\right) \right),
    \ea
    where the relative error term is vanishing for $T \gg k$. We can use this result to estimate the tail of the Laplace transform given by
    \ba
\mathcal{L}(T_0,s) = \int_{T_0}^\infty dF(t) e^{-s t}.
\ea
Integrating by parts, we have
\ba
\mathcal{L}(T_0,s) &=\lim_{t\to\infty} \left[e^{-s t}F(t)\right] - e^{-s T_0} F(T_0) + s \int_{T_0}^
\infty F(t) e^{-s t} dt\\
&\leq\lim_{t\to\infty} \left[e^{-s t}F(t)\right] + s \int_{T_0}^
\infty F(t) e^{-s t} dt,
\ea
where we used the positivity of the integrated spectral density to drop the second term in the first line, and the first term in the second line vanishes for all $s>0$ since $F(t)$ grows only as fast as a power in $t$. 

We can now compute the remaining term explicitly and expand around large $T_0 \gg k/s$, to find the asymptotic bound of
\ba
\mathcal{L}(T_0,s)  \lesssim \frac{C(k) e^{-s T_0} T_0^k}{\Gamma(k+1)} \quad \text{for  } T_0 \gg k/s.
\ea
To produce the lemma, we apply these estimates to the following Laplace transform:
\ba
\mathcal{G}(e^{-\bs{s}}) = \sum_{\bs{h}} a_{\bs{h}} e^{-\bs{s}\cdot \bs{h}},
\ea
which is dominated by the t-channel OPE singularity in the $\bs{s} \to \bs{0}$ limit. More generally, we want to take derivatives with respect to $(-s,-\bar{s})$ before expanding around small $\bs{s}$ to obtain convergence estimates for the tails of the OPE measure weighted by powers of $(h,\bar{h})$. 

This gives the asymptotic of 
\ba
\partial_{-s}^j \partial_{-\bar{s}}^k \mathcal{G}(e^{-\bs{s}\cdot \bs{h}}) \sim \frac{C_{j,k}(\De)}{ s^{2\De + j} \bar{s}^{2\De + k}}  , 
\ea
with
\ba
C_{j,k}(\De) = \frac{ 16^{\De} (-1)^{j+k} \Gamma^2 (1-2 \Delta )}{\Gamma (-j-2 \Delta +1)\Gamma (-k-2 \Delta +1)}.
\ea
Applying the Tauberian theorem then gives the integrated tail estimate of
\ba
\sum_{\bs{h} \geq \De^{2\alpha} } a_{\bs{h}} e^{-\bs{s}\cdot \bs{h}} h^j \bh^k \lesssim \frac{C_{j,k}(\De) e^{-(s+\bar{s})\De^{2\alpha}} \De^{2\alpha( 4\De + j+k)} }{ \Gamma(2\De+j +1) \Gamma(2\De + k+ 1)},
\ea
where $\alpha > 1/2$ so that $\De^{2\alpha} \gg \De$ as $\De \to\infty$. Fixing $(j,k) \in \mathbb{Z}_+^2$ and expanding the $\Gamma$-functions around $\De \to \infty$ gives
\ba
\frac{C_{j,k}(\De) e^{-(s+\bar{s})\De^{2\alpha}} \De^{2\alpha( 4\De + j+k)} }{ \Gamma(2\De+j +1) \Gamma(2\De + k+ 1)} = \Delta ^{2 \alpha  (4 \Delta +j+k)-4 \Delta -1} e^{4 \Delta -\Delta ^{2 \alpha } (s+\bar{s})}\left(\frac{1}{4\pi} + O\left(\frac{j+k}{\De} \right) \right).
\ea
Factoring out the $\De^{2\alpha( j +k)}/4\pi$ term, we bound the remaining factors by taking a logarithm and re-exponentiating:
\ba
\log( \Delta ^{4 (2 \alpha -1) \Delta -1 } e^{4 \Delta -\Delta ^{2 \alpha } (s+\bar{s})} ) &=(8 \alpha  \Delta -4 \Delta -1) \log (\Delta )+4 \Delta -\left(\Delta ^{2 \alpha } (s+\bar{s})\right)\\
& = -\De^{2\alpha}(s+\bar{s} + O(\De^{1-2\alpha}\log(\De))
\ea
\ba
\implies \Delta ^{4 (2 \alpha -1) \Delta -1  } e^{4 \Delta -\Delta ^{2 \alpha } (s+\bar{s})}  = |\bs{\rho}|^{\De^{2\alpha}(1 + g(\De))},
\label{expdecayest}
\ea
where $|\bs{\rho}| = e^{-(s+\bar{s})}$ with $s,\bar{s}>0$ and $g(\De) = O(\De^{1-2\alpha} \log(\De))$. Multiplying the estimate of eq.~(\ref{expdecayest}) by the $\De^{2\alpha(j+k)} /4\pi$ term which we had previously factored out gives
\ba
\sum_{\bs{h} \geq \De^{2\alpha} } a_{\bs{h}} e^{-\bs{s}\cdot \bs{h}} h^j \bh^k \lesssim \De^{2\alpha(j+k)} |\bs{\rho}|^{\De^{2\alpha}(1+g(\De))},
\ea
which, after restoring our standard notation, is the statement of the lemma.
\end{proof}

\begin{lem}[Coarse moment bounds in the causal diamond]
We denote the normalized moments of the OPE measure at base point $\bs{\rho}$ as
\ba
\omega_{j,k}(\bs{\rho}) = \int h^j \bh^k d\mu(\bs{h};\bs{\rho}).
\ea
For any $\epsilon > 0$ and $\bs{\rho} \in (0,1)^2$ we have
\ba
\omega_{j,k}(\bs{\rho}) = O(\De^{j+k + \epsilon}).
\ea
\end{lem}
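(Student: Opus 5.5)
The plan is to split each moment into a bulk piece and a tail piece at the threshold $h,\bh \sim \De^{1+\epsilon'}$, and to control the tail with the convergence estimate of lemma \ref{convestimate}. Fix $\bs{\rho}\in(0,1)^2$ and $\epsilon>0$, and choose $\alpha>1/2$ with $2\alpha = 1+\epsilon/(j+k+1)$, or simply $2\alpha=1+\epsilon'$ for a small enough $\epsilon'$. We write
\be
\omega_{j,k}(\bs{\rho}) = \int_{\max(h,\bh)< \De^{2\alpha}} h^j\bh^k\, d\mu(\bs{\rho};\bs{h}) + \int_{\max(h,\bh)\geq \De^{2\alpha}} h^j\bh^k\, d\mu(\bs{\rho};\bs{h}).
\ee
On the bulk region the integrand is pointwise at most $\De^{2\alpha(j+k)} = \De^{(j+k)(1+\epsilon')}$. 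Since $\mu$ is a probability measure, this piece is $O(\De^{j+k+\epsilon})$ once $\epsilon'(j+k)\le\epsilon$.

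For the tail, the stated form of lemma \ref{convestimate} treats the corner $h,\bh\geq\De^{2\alpha}$. The complement of the bulk square is covered by the regions $\{h\geq \De^{2\alpha}\}$ and $\{\bh\geq\De^{2\alpha}\}$, so I would rerun the Tauberian argument of that lemma on each strip. In the strip $\{h\geq\De^{2\alpha}\}$ one first integrates out $\bh$ by setting $\bar{s}$ fixed. The same Laplace-transform estimate in the single variable $s$, with $\bar{s}$ held at its base value, gives a bound of the form $\mathcal{G}(\bs{\rho})\int_{h\geq\De^{2\alpha}} h^j\bh^k d\mu \lesssim \De^{c}\,\rho^{\De^{2\alpha}(1+g(\De))}$, where $c$ is polynomial in $\De$ times $j+k$. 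The factor $\bh^k$ is absorbed by a derivative in $\bar{s}$, which only contributes powers of $\De$, as in the constant $C_{j,k}(\De)$.

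Next we divide by $\mathcal{G}(\bs{\rho})$. By the s-channel identity, $\widehat{\mathcal{G}}(\bs{\rho})\geq 1$, so the normalization cannot amplify the tail. The tail then decays like $\exp(-\De^{1+\epsilon'}|\log\rho|(1+o(1)))$ times at most $e^{O(\De\log\De)}$ prefactors. This is superpolynomially small and in particular $o(1)$. Adding the two pieces gives $\omega_{j,k}(\bs{\rho})=O(\De^{j+k+\epsilon})$.

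The main obstacle is the uniformity of the Tauberian step. Lemma \ref{convestimate} is stated as an asymptotic "$\lesssim$" whose validity requires $T_0\gg k/s$. Here $k\sim 2\De$, so we need $\De^{2\alpha}\gg \De/|\log\rho|$. This is exactly why $\alpha>1/2$ is required, and why the constant depends on $\bs{\rho}$, which is fine for fixed $\bs{\rho}$. The error factor $(1+O(1/\log T))$ must also not spoil the polynomial counting. I would check this by comparing the $\log$ of the tail bound, $-\De^{2\alpha}|\log\rho| + O(\De\log\De)$, and noting that it goes to $-\infty$ for any $\epsilon'>0$. The leftover $\epsilon$ is exactly the price of this threshold choice.
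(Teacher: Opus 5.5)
Your proposal is correct and follows essentially the paper's argument: split at $\De^{2\alpha}$ with $2\alpha-1$ small, bound the low sector by $\De^{2\alpha(j+k)}$ because $\mu$ is a probability measure, and show the high sector is superpolynomially small using the Tauberian estimate of lemma~\ref{convestimate} together with $\widehat{\mathcal{G}}(\bs{\rho})\geq 1$, then choose $2\alpha-1\propto\epsilon/(j+k)$. You also note that the complement of the bulk square is the union of the two strips $\{h\geq\De^{2\alpha}\}$ and $\{\bh\geq\De^{2\alpha}\}$, not just the corner that lemma~\ref{convestimate} covers; the paper's proof passes over this point, and rerunning the one-variable Tauberian bound on each strip is a reasonable way to close it.
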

\begin{proof}
    Split the sum over $\bs{h}$ into high and low energy sectors as
    \ba
\sum_{\bs{h}} a_{\bs{h}}\bs{\rho}^{\bs{h}} h^j \bh^k = \sum_{\bs{h} < \De^{2\alpha} } a_{\bs{h}}  \bs{\rho}^{\bs{h}} h^j \bh^k +\sum_{\bs{h} \geq \De^{2\alpha}} a_{\bs{h}} \bs{\rho}^{\bs{h}} h^j \bh^k .
    \ea
    The first term is bounded as
    \ba
\sum_{\bs{h} < \De^{2\alpha} } a_{\bs{h}} \bs{\rho}^{\bs{h}} h^j \bh^k <  \De^{2\alpha(j+k)}\sum_{\bs{h} < \De^{2\alpha} } a_{\bs{h}} \bs{\rho}^{\bs{h}} 
< \De^{2\alpha(j+k)} \mathcal{G}(\bs{\rho}).
    \ea
    Thus,
    \ba
\sum_{\bs{h}} a_{\bs{h}}\bs{\rho}^{\bs{h}} h^j \bh^k &< \De^{2\alpha(j+k)} \mathcal{G}(\bs{\rho}) \left( 1 + \frac{1}{\De^{2\alpha(j+k)} \mathcal{G}(\bs{\rho})}\sum_{\bs{h} \geq \De^{2\alpha}} a_{\bs{h}} \bs{\rho}^{\bs{h}} h^j \bh^k  \right)\\
&= \De^{2\alpha(j+k)} \mathcal{G}(\bs{\rho}) \left( 1 + O\left( |\bs{\rho}|^{\De^{2\alpha} (1+g(\De))}\right) \right),
    \ea
    where we have used lemma \ref{convestimate} and $\mathcal{G}(\bs{\rho}) > 1$ to bound the high energy sector. For $\alpha > 1/2$ and $|\bs{\rho}|<1$, the error term decays faster than $\De^{-N} \; \forall N \in\mathbb{N}$, so we can asymptotically bound it by an arbitrary positive constant. Dividing the result by $\mathcal{G}(\bs{\rho})$ to normalize and setting $\alpha = 1/2 + \frac{\epsilon}{2(j+k)}>1/2$ with $\epsilon > 0$ gives the lemma.
\end{proof}

\begin{lem}[Sharp bound on the first moment in the causal diamond]
\label{sharpmomentbound}
The first normalized moments satisfy the sharp bounds
\ba
\omega_{1,0}(\bs{\rho}) \leq \De \frac{1+\rho}{1-\rho} \quad \text{and} \quad \omega_{0,1}(\bs{\rho}) \leq \De \frac{1+\rhob}{1-\rhob} \quad \forall \bs{\rho} \in (0,1)^2.
\ea
\end{lem}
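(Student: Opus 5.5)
The plan is to read the first moment as a logarithmic derivative of the correlator and then use s-t crossing to split it into an explicit kinematic piece plus a piece with a definite sign. From the exponential-family structure of section~\ref{opemeasures}, differentiating the radial monomial decomposition \eqref{monomialdecomposition} term by term gives
\begin{equation}
\omega_{1,0}(\bs{\rho}) = \frac{1}{\mathcal{G}(\bs{\rho})}\sum_{\bs{h}} a_{\bs{h}}\, h\, \bs{\rho}^{\bs{h}} = \rho\,\partial_\rho \log \mathcal{G}(\bs{\rho}).
\end{equation}
Term-by-term differentiation is justified because the series converges absolutely on $(0,1)^2$ and, by the coarse moment bounds above, $\omega_{1,0}$ is finite there. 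The statement for $\omega_{0,1}$ then follows by chiral symmetry $\mathcal{G}(\bs{\rho})=\mathcal{G}(\bar{\bs{\rho}})$, so it suffices to treat $\omega_{1,0}$.

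The first step is to use the crossing equation \eqref{stcrossing} in factorized form. With the bold-tuple convention $|f(\bs{\rho})|^2=f(\rho)f(\rhob)$, it reads
\begin{equation}
\mathcal{G}(\bs{\rho}) = \left(\frac{4\rho}{(1-\rho)^2}\right)^{\De}\left(\frac{4\rhob}{(1-\rhob)^2}\right)^{\De}\mathcal{G}(\hat{\bs{\rho}}),
\end{equation}
where $\hat\rho$ depends only on $\rho$. Applying $\rho\partial_\rho\log$ to both sides gives two contributions. The prefactor contributes
\begin{equation}
\De\left(1+\frac{2\rho}{1-\rho}\right)=\De\,\frac{1+\rho}{1-\rho}.
\end{equation}
By the chain rule, the crossed correlator contributes
\begin{equation}
\frac{\rho}{\hat\rho}\frac{d\hat\rho}{d\rho}\;\omega_{1,0}(\hat{\bs{\rho}}).
\end{equation}

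The second step is to show that this crossed contribution is non-positive. The map $\rho\mapsto\hat\rho=\big((1-\sqrt\rho)/(1+\sqrt\rho)\big)^2$ sends $(0,1)$ to $(0,1)$ and is strictly decreasing, so $d\hat\rho/d\rho<0$ while $\rho,\hat\rho>0$. The crossed moment satisfies $\omega_{1,0}(\hat{\bs{\rho}})\ge 0$, since the OPE measure at $\hat{\bs{\rho}}\in(0,1)^2$ is a positive probability measure supported on $h\ge 0$ (unitarity, $a_{\bs{h}}>0$). The product is therefore $\le 0$, and dropping it yields the claimed upper bound.

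The main obstacle is the sharpness claim rather than the inequality itself. The bound is saturated exactly when $\omega_{1,0}(\hat{\bs{\rho}})$ vanishes, i.e.\ when the crossed-channel measure concentrates on $h=0$. The natural witness is a heavying sequence deep in the t-channel phase, such as the GFF example with $\De\to\infty$ and $x+\bar x>0$. There the t-channel identity dominates, so the crossed-channel OPE measure is dominated by $\bs{h}=\bs{0}$. One then checks that $\omega_{1,0}(\hat{\bs{\rho}})=o(\De)$, which can be read off from the explicit GFF form, where non-identity terms in the crossed channel are exponentially suppressed. This shows that the prefactor $\De(1+\rho)/(1-\rho)$ cannot be improved uniformly.
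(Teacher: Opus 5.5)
Your proof is correct and is essentially the paper's argument. Both take $\rho\partial_\rho=\partial_{-s}$ of the logarithm of the crossing equation, identify the kinematic term $\De\frac{1+\rho}{1-\rho}$, and drop the crossed term $\frac{2\sqrt{\rho}}{\rho-1}\,\omega_{1,0}(\hat{\bs{\rho}})\le 0$ by positivity of the OPE measure. Your GFF witness for asymptotic saturation (where $\omega_{1,0}(\hat{\bs{\rho}})$ is exponentially small for $x+\bar{x}>0$) goes beyond the paper, which calls the bound sharp without giving such an argument.
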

\begin{proof}
The moments satisfy $\omega_{j,k}(\bs{\rho}) = \omega_{k,j}(\bar{\bs{\rho}})$, so it suffices to check for $(j,k) = (1,0)$ as long as the bound holds for all $\bs{\rho} \in (0,1)^2$. We write $\bs{\rho} = (e^{-\bs{s}}) = (e^{-s},e^{-\bar{s}})$ and take the logarithm of the crossing equation
\ba
\log(\mathcal{G}(\bs{s})) = \De \log(|\bs{\chi}(\bs{s})|) + \log( \mathcal{G}( \hat{\bs{s}} )),
\ea
where 
\ba
\hat{\bs{s}} = \left(- \log(\hat{\bs{\rho}}(\bs{s}))\right) = \left(- 2 \log\left(\frac{1-\sqrt{e^{-s}}}{1+\sqrt{e^{-s}}}\right) ,- 2 \log\left(\frac{1-\sqrt{e^{-\bar{s}}}}{1+\sqrt{e^{-\bar{s}}}}\right) \right)
\ea
and 
\ba
|\bs{\chi}(\bs{s})| = \left(  \frac{16 e^{-(s+\bar{s})}}{(1-e^{-s})^2 (1-e^{-\bar{s}})^2}\right).
\ea
We now take a derivative of both sides with respect to $-s$ and evaluate at $\bs{s} = \left(-\log(\bs{\rho}) \right)$ to compute the first moment $\omega_{1,0}(\bs{\rho})$ on the LHS
\ba
\omega_{1,0}(\bs{\rho}) = \De \partial_{-s} \log(|\bs{\chi}(\bs{s})|)|_{\bs{s} = -\log(\bs{\rho})} + \left(\partial_{-s} \hat{\rho}(s) \right)|_{s = -\log(\rho)} \omega_{1,0}(\hat{\bs{\rho}}).
\ea
By the positivity of the support of the OPE measure, $\omega_{1,0}(\bs{\rho})$ is positive for all points in the causal diamond, as is $\omega_{1,0}(\hat{\bs{\rho}})$. It is also easy to check that $\partial_{-s} \hat{\rho}(s) = \frac{2 \sqrt{e^{-s}}}{e^{-s} -1} \leq 0$ for all points in the causal diamond, so the second term on the RHS is always negative. Thus, we have
\ba
\omega_{1,0}(\bs{\rho}) &= \De \partial_{-s} \log(|\bs{\chi}(\bs{s})|)|_{\bs{s} = -\log(\bs{\rho})} + \left(\partial_{-s} \hat{\rho}(s) \right)|_{s = -\log(\rho)} \omega_{1,0}(\hat{\bs{\rho}})\\& \leq \De \partial_{-s} \log(|\bs{\chi}(\bs{s})|)|_{\bs{s} = -\log(\bs{\rho})}  = \De \frac{1+\rho}{1-\rho}.
\ea
\end{proof}

\begin{lem}[Uniform gradient bound]
\label{uniformgradbound}
For all $\bs{x} \in \mathbb{R}^2$ and $n \in \mathbb{N}$,
\ba
\bs{0}\leq \nabla\lambda_n(\bs{x}) \leq \bs{1} .
\ea
\end{lem}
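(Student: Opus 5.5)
The plan is to reduce the gradient bound to the positivity of the OPE measure and the sharp first-moment bound of lemma \ref{sharpmomentbound}, by a change of variables from $\bs{x}$ to radial coordinates. Normalizing at the self-dual point only subtracts the constant $\frac{1}{\De_n}\log\widehat{\mathcal{G}}_n(\bs{\rho}_*)$ from $\lambda_n$, so it does not affect the gradient. By chiral symmetry it suffices to treat $\partial_x$; the $\partial_{\bar{x}}$ component follows identically, or from $\omega_{j,k}(\bs{x})=\omega_{k,j}(\bar{\bs{x}})$.

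\textbf{Step 1: Jacobian of the change of variables.} I first relate $x$ to $\rho$. From $z = 4\rho/(1+\rho)^2$ one gets $\chi = z/(1-z) = 4\rho/(1-\rho)^2$, so $x = \log 4 + \log\rho - 2\log(1-\rho)$. Differentiating gives
\begin{equation}
\frac{dx}{d\log\rho} = \frac{1+\rho}{1-\rho}, \qquad \frac{d\log\rho}{dx} = \frac{1-\rho}{1+\rho} \in (0,1) \quad \text{for } \rho\in(0,1).
\end{equation}
This is a smooth monotone bijection $(0,1)\to\mathbb{R}$, so $\lambda_n$ is differentiable in $\bs{x}$.

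\textbf{Step 2: express the derivative as a first moment.} Using the radial monomial expansion (\ref{monomialdecomposition}), which converges absolutely together with its termwise derivatives on $\mathscr{C}$ (appendix \ref{convergence}), I differentiate termwise to get $\partial_{\log\rho}\log\mathcal{G}_n(\bs{\rho}) = \omega_{1,0}(\bs{\rho})$, the first $h$-moment of the normalized OPE measure at $\bs{\rho}$. Combining with Step 1,
\begin{equation}
\partial_x \lambda_n(\bs{x}) = \frac{1}{\De_n}\,\frac{1-\rho}{1+\rho}\,\omega_{1,0}(\bs{\rho}(\bs{x})).
\end{equation}

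\textbf{Step 3: two-sided bound.} The lower bound $\partial_x\lambda_n\ge 0$ follows because $\mu_n(\bs{\rho};\cdot)$ is a positive measure supported on $h\ge 0$, so $\omega_{1,0}\ge 0$, and the Jacobian factor is positive. For the upper bound I apply lemma \ref{sharpmomentbound}, which gives $\omega_{1,0}(\bs{\rho})\le \De_n\frac{1+\rho}{1-\rho}$; this cancels the Jacobian exactly and yields $\partial_x\lambda_n\le 1$, uniformly in $n$ and $\bs{x}$.

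The only delicate point is justifying termwise differentiation of the infinite sum, which is routine given the convergence estimates of appendix \ref{convergence}. The sharpness of the bound is entirely inherited from lemma \ref{sharpmomentbound}, which rests on crossing symmetry.
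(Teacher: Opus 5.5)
Your proposal is correct and follows essentially the same route as the paper. You write $\partial_x\lambda_n=\frac{1}{\De_n}\frac{1-\rho}{1+\rho}\,\omega_{1,0}(\bs{\rho})$ via the Jacobian of $x\mapsto\log\rho$, cancel that factor against the sharp first-moment bound of lemma \ref{sharpmomentbound} for the upper bound, and get the lower bound from positivity of the OPE measure on $h\geq 0$ (equivalently, monotonicity of $\mathcal{G}_n$ in $x,\bar{x}$), with chiral symmetry handling $\partial_{\bar{x}}$.
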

\begin{proof}
Since $\lambda_n(\bs{x}) = \lambda_n(\bar{\bs{x}})$ by chiral symmetry, it suffices to bound $\partial_x \lambda_n(\bs{x})$. We compute the following:
\ba
\partial_x \lambda_n(\bs{x}) = \frac{1}{\De_n} \partial_x \log(\mathcal{G}_n(\bs{x})) &= \frac{1}{\De_n} \frac{1- \rho}{1+\rho} \omega_{1,0}(\bs{\rho})\\
&\leq \frac{1}{\De_n} \frac{1- \rho}{1+\rho} \De_n \frac{1+\rho}{1-\rho} = 1,
\ea
where we obtained the second line by applying lemma \ref{sharpmomentbound}. Since the bound is independent of $\bs{x}$, we have both $ \partial_{x} \lambda_n(\bs{x}) , \partial_{\bar{x}} \lambda_n(\bs{x}) \leq 1$. Defining $\left(\partial_{x} \lambda_n(\bs{x}) , \partial_{\bar{x}} \lambda_n(\bs{x})  \right) \equiv \nabla\lambda_n(\bs{x})$  gives the upper bound of the lemma. For the lower bound, we simply have that $\mathcal{G}(\bs{x})$ increases monotonically in $x,\bar{x}$ independently, thus its logarithm does as well, so its gradient is positive. 
\end{proof}

\begin{lem}[Second moment bound at the self-dual point]
\label{sdmomentbounds}
Let $\omega_n$ denote the pure moments of the OPE measure at the self-dual point with external dimension $\De > 0$. Then
\ba
\omega_1 = \De/\sqrt{2}, \qquad \frac{\De^2}{2}\leq \omega_2 \leq \omega_2^{(+)} = \frac{1}{8}\left(6\De^2 + 3\De + \sqrt{\De^2(4\De(\De+3)+13)}\right).
\ea
In particular, $\omega_2^{(+)}/\De^2 = 1 + \frac{3}{4\De} + O(1/\De^2)$ as $\De \to \infty$.
\end{lem}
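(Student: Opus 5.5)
The plan is to differentiate the logarithm of the s--t crossing equation along the chiral variable $\sigma=-s$ at the self-dual point, exactly as in the proof of lemma \ref{sharpmomentbound}, and to treat the odd-order identities as the nontrivial constraints. Write $\rho=e^{-s}$, $F(\sigma)=\log\mathcal{G}$ as a function of $\sigma=-s$ at fixed $\bar s=\bar s_*$, $L(\sigma)=\log|\bs{\chi}|$, and $\tau(\sigma)=-\hat s(-\sigma)$ for the crossing map. Crossing then reads
\begin{equation}
F(\sigma)=\De L(\sigma)+F(\tau(\sigma)).
\end{equation}
Derivatives of $F$ are the cumulants $\kappa_j$ of the pushforward OPE measure $d\mu(h)$ at $\bs{\rho}_*$. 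The map $\tau$ is an involution with fixed point $\sigma_*=\log\rho_*$, so $\tau'(\sigma_*)=-1$. I will also need $\tau''(\sigma_*)$ and $\tau'''(\sigma_*)$, and $L',L'',L'''$ at $\sigma_*$; these are elementary closed forms in $\rho_*=3-2\sqrt2$.

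\textbf{First and second order.} At first order we get $\kappa_1=\De L'+\tau'\kappa_1$, hence $2\omega_1=\De L'(\sigma_*)=\De\frac{1+\rho_*}{1-\rho_*}=\sqrt2\,\De$, which gives $\omega_1=\De/\sqrt2$. The lower bound $\omega_2\ge\omega_1^2=\De^2/2$ is just Jensen. At second order, $\kappa_2=\De L''+\tau'^2\kappa_2+\tau''\kappa_1$, and the $\kappa_2$ terms cancel. So this identity carries no information on $\omega_2$; it should reduce to a consistency check between $L''$ and $\tau''$, which I will verify. This is the known fact that crossing fixes only the odd central data.

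\textbf{Third order and positivity.} The real constraint comes at third order:
\begin{equation}
\kappa_3=\De L'''-\kappa_3+3\tau'\tau''\kappa_2+\tau'''\kappa_1,
\end{equation}
so that $2\kappa_3=\De L'''-3\tau''\kappa_2+\tau'''\kappa_1$ at $\sigma_*$. This expresses $\omega_3$ as an explicit function of $\omega_2$ and $\De$, linear in $\omega_2$ once converted from cumulants to moments. The measure $d\mu(h)$ is supported on $[0,\infty)$: the identity sits at $h=0$ and every other state has $h\ge\frac{d-2}{2}\ge0$. Therefore the Stieltjes moment conditions apply, and in particular the Cauchy--Schwarz inequality $\omega_1\omega_3\ge\omega_2^2$ holds (apply it to $h^{1/2}\cdot h^{3/2}$). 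Substituting the crossing expression for $\omega_3$ turns this into a quadratic inequality $\omega_2^2-A(\De)\,\omega_2-B(\De)\le0$. Its larger root should be $\omega_2^{(+)}=\frac18\bigl(6\De^2+3\De+\sqrt{\De^2(4\De(\De+3)+13)}\bigr)$, and the quadratic is what produces the square root. The final large-$\De$ expansion $\omega_2^{(+)}/\De^2=1+\frac{3}{4\De}+O(\De^{-2})$ is then routine.

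\textbf{Main obstacle.} I expect the main difficulty to be bookkeeping rather than concept. I must get the derivatives $\tau'',\tau'''$ and $L'''$ at $\rho_*$ right, and convert cumulants to raw moments correctly ($\omega_3=\kappa_3+3\kappa_2\omega_1+\omega_1^3$). I also need to check that the resulting coefficient of $\omega_2$ has the sign that makes the bound an upper rather than a lower bound. If $\omega_1\omega_3\ge\omega_2^2$ turns out to give the wrong direction, the fallback is the $2\times2$ Hankel condition on the shifted measure $h\,d\mu$, or equivalently on the measure restricted away from $h=0$, combined with the identity mass.
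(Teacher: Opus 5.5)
Your proposal is correct and is essentially the paper's proof. The paper also differentiates the crossing equation at $\bar\rho=\rho_*$. It reads off $\omega_1=\Delta/\sqrt2$ at first order and takes the lower bound from Jensen. The third-order identity then gives $\omega_3=\frac{\Delta-2\Delta^2(4\Delta+3)}{8\sqrt2}+\frac{3(2\Delta+1)}{2\sqrt2}\,\omega_2$, which is fed into the Stieltjes (Cauchy--Schwarz) condition $\omega_1\omega_3\ge\omega_2^2$. That yields $\omega_2\in[\omega_2^{(-)},\omega_2^{(+)}]$ with $\omega_2^{(-)}<\Delta^2/2$, so the inequality comes out as an upper bound and your fallback is not needed. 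Working with cumulants in $\log\rho$ rather than raw moments in $\rho$ is only bookkeeping: it gives $\kappa_3=\frac{3}{2\sqrt2}\kappa_2+\frac{\Delta}{8\sqrt2}$, which converts via $\omega_3=\kappa_3+3\kappa_2\omega_1+\omega_1^3$ to exactly the relation above.
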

 
\begin{proof}
The support of the OPE measure lies in $\mathbb{R}^2_+$ and the radial monomial decomposition is exponentially convergent~\cite{Pappadopulo:2012jk}, so the moment problem is determinate and the Stieltjes conditions $H^{(k)}_l\equiv \left(\omega_{k+i+j}\right)_{0 \leq i,j \leq l} \succeq 0$ for $k = 0,1$ and all $l \in\mathbb{N}$ hold. Write the crossing equation with $\rho$ free and $\rhob = \rho_* = 3-2\sqrt{2}$:
\ba
\int (\rho/\rho_*)^h \,d\mu(\rho_*;h) = \left(\frac{4\rho}{(1-\rho)^2}\right)^{\!\De} \int \left(\hat{\rho}/\rho_* \right)^h d\mu(\rho_*;h),
\ea
where $\hat{\rho} = \left(\frac{1-\sqrt{\rho}}{1+\sqrt{\rho}}\right)^2$. We now differentiate the crossing equation $\Lambda$ times and evaluate at $\rho = \rho_*$. Since $\hat{\rho}(\rho_*) = \rho_*$ and $\hat{\rho}'(\rho_*) = -1$, only odd-$\Lambda$ derivatives yield nontrivial constraints. $\Lambda = 1$ gives $\omega_1 = \De/\sqrt{2}$, and $\omega_2 \geq \omega_1^2 = \De^2/2$ by Jensen's inequality. The $\Lambda = 3$ constraint gives
\ba
\omega_3 = \frac{\De - 2\De^2(4\De+3)}{8\sqrt{2}} + \frac{3(2\De+1)}{2\sqrt{2}}\,\omega_2.
\ea
Combining with $\det H^{(1)}_1 \geq 0$, i.e.\ $\omega_3 \geq \omega_2^2/\omega_1$, yields $-16(\omega_2 - \omega_2^{(-)})(\omega_2 - \omega_2^{(+)}) \geq 0$. Since $\omega_2^{(-)} < \De^2/2$ violates Jensen's inequality for all $\De > 0$, $(\omega_2 - \omega_2^{(-)}) > 0$, so the effective constraint is $\omega_2 \leq \omega_2^{(+)}$.
\end{proof}

\subsection{Uniform MGF bound and coherent state approximation}
\label{mgfsection}
 
The gradient bound on the dynamical free energy density (lemma~\ref{uniformgradbound}) directly implies a uniform bound on the moment generating function (MGF) of the rescaled OPE measure. Define $s: [0,\infty) \to [0, s_{\max})$ by $s(x) = \log(\rho(x)/\rho_*)$, where $\rho(x)$ is the radial coordinate as a function of $x = \log(z/(1-z))$ and $s_{\max} = \log(1/\rho_*) < \infty$. Then $s$ is smooth and strictly increasing with $s(0) = 0$, and has a smooth inverse:
\ba
s^{-1}(u) &= \log \left(4 \sqrt{2} \sinh (u)+6 \cosh (u)-2\right)-2 \log (3-\cosh (u))\quad \forall u < s_{\max}.
\ea

\begin{prop}[Uniform MGF bound]
\label{mgfbound}
Let $\{\nu_n\}$ denote the rescaled OPE measure sequence at the self-dual point. For all $n \in \mathbb{N}$ and $c,\bar{c} \in [0, \De_n\, s_{\max})$,
\ba
\int e^{\bs{c} \cdot \bs{\eta}}\, d\nu_n(\bs{\eta}) \leq e^{\De_n\left(s^{-1}(c/\De_n) + s^{-1}(\bar{c}/\De_n)\right)}.
\ea
\end{prop}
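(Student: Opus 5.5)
The MGF is the normalized correlator evaluated at a real point of the causal diamond, so the bound should follow by integrating the uniform gradient bound of lemma \ref{uniformgradbound}. Since $\bs{\eta} = \bs{h}/\De_n$, we have
\ba
\int e^{\bs{c}\cdot\bs{\eta}}\,d\nu_n(\bs{\eta}) = \int e^{(c h + \bar{c}\bar{h})/\De_n}\,d\mu_n(\bs{h}) = \int \left(\frac{\bs{\rho}}{\bs{\rho}_*}\right)^{\bs{h}} d\mu_n(\bs{h}) = \mathcal{G}_n(\bs{\rho}),
\ea
with $\bs{\rho} = (\rho_* e^{c/\De_n}, \rho_* e^{\bar{c}/\De_n})$, using eq.~(\ref{radmonomialdecomp}) with base point $\bs{\rho}_*$. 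The hypothesis $c,\bar{c} < \De_n s_{\max}$ is exactly what keeps $\rho, \bar{\rho} < 1$, so this point lies in $\mathscr{C}$ and the sum converges absolutely. The tilt along $\bs{\rho}$ is real, so no oscillation issues arise.

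Next, pass to the $\bs{x}$ coordinates. By definition, $s(x) = \log(\rho(x)/\rho_*)$, so the point above has $\bs{x} = (s^{-1}(c/\De_n), s^{-1}(\bar{c}/\De_n))$. Both components are nonnegative because $c, \bar{c} \geq 0$ and $s$ is increasing with $s(0) = 0$. The normalization gives $\lambda_n(\bs{0}) = 0$, and lemma \ref{uniformgradbound} gives $\partial_x\lambda_n, \partial_{\bar{x}}\lambda_n \le 1$ everywhere. Integrate first along $x$ from $0$ at $\bar{x} = 0$, then along $\bar{x}$. This yields $\lambda_n(\bs{x}) \le x + \bar{x}$ for $x, \bar{x} \ge 0$, and hence
\ba
\mathcal{G}_n(\bs{x}) \le e^{\De_n (x + \bar{x})}.
\ea
Substituting the values of $x$ and $\bar{x}$ gives the claimed inequality.

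The only genuine content beyond this bookkeeping is lemma \ref{uniformgradbound} itself, which the paper has already proved via the sharp first-moment bound. The step needing care is checking that the coordinate change is consistent: the map $x \mapsto \rho(x)$ used to define $s$ must coincide with $\rho(z)$ for $z = e^x/(1+e^x)$, and $s^{-1}$ must be applied on its stated domain $[0, s_{\max})$. I would verify that the explicit formula for $s^{-1}$ inverts $f(x)\sqrt{2} = \log(\rho(x)/\rho_*)$ from proposition \ref{universalratefunctionbound}. The range endpoint $c/\De_n \to s_{\max}$ corresponds to $x \to \infty$, which is consistent with the bound diverging there.
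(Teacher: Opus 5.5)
Your proposal is correct and is essentially the paper's proof. Both integrate the gradient bound of lemma \ref{uniformgradbound} from $\bs{0}$ to get $\mathcal{G}_n(\bs{x}) \le e^{\De_n(x+\bar{x})}$ for $x,\bar{x}\ge 0$, identify this with the moment generating function of $\nu_n$ via $\bs{h} = \De_n\bs{\eta}$ and $s(x)=\log(\rho(x)/\rho_*)$, and then substitute $\bs{x} = (s^{-1}(c/\De_n), s^{-1}(\bar{c}/\De_n))$. The only difference is that you make the identification before integrating rather than after.
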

 
\begin{proof}
The gradient bound of lemma \ref{uniformgradbound}, integrated from $\bs{0}$, gives $\log\mathcal{G}_n(x,\bar{x}) \leq \De_n(x + \bar{x})$ for $x, \bar{x} \geq 0$. Exponentiating the inequality and changing variables $\bs{h} = \De_n\bs{\eta}$ gives
\ba
\int e^{\De_n(\eta\, s(x) + \bar{\eta}\, s(\bar{x}))}\, d\nu_n(\bs{\eta}) \leq e^{\De_n(x+\bar{x})} \quad \forall\, x, \bar{x} \geq 0.
\ea
Setting  $\bs{x} = ( s^{-1}(c/\De_n),s^{-1}(\bar{c}/\De_n)) $  gives the desired bound. 
\end{proof}
 \begin{cor}
     Expanding $s^{-1}(u)$ around small $u$ gives the following asymptotic bound for any fixed $c,\bar{c} \geq 0$ and $\De_n \gg 0$:
     \ba
        \int e^{\bs{c} \cdot \bs{\eta}} d\nu_n(\bs{\eta}) \leq e^{\sqrt{2}(c+\bar{c}) + O(1/\De_n) }.
     \ea
 \end{cor}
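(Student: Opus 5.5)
The corollary follows from Proposition~\ref{mgfbound} once we Taylor-expand $s^{-1}(u)$ near $u=0$. Fix $c,\bar{c}\geq 0$. For $\De_n$ large enough that $c/\De_n,\bar{c}/\De_n<s_{\max}$, the proposition applies. It bounds the MGF by $\exp\!\big(\De_n(s^{-1}(c/\De_n)+s^{-1}(\bar{c}/\De_n))\big)$. It therefore suffices to show that
\[
\De_n\, s^{-1}(u/\De_n)=\sqrt{2}\,u+O(1/\De_n) \qquad \text{for fixed } u\geq 0 .
\]

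The first step is to compute the Taylor data of $s^{-1}$ at the origin. Since $s$ is smooth and strictly increasing with $s(0)=0$, the inverse $s^{-1}$ is smooth near $0$ with $s^{-1}(0)=0$ and $(s^{-1})'(0)=1/s'(0)$. By definition $s(x)=\log(\rho(x)/\rho_*)$, which equals $\sqrt{2}\,f(x)$ with $f$ as in proposition~\ref{universalratefunctionbound}. Differentiating the closed form $x-2\log(\sqrt{e^x+1}+1)+\mathrm{const}$ at $x=0$ gives
\[
s'(0)=1-\frac{1}{\sqrt{2}(\sqrt{2}+1)}=\frac{1}{\sqrt{2}} .
\]
This is consistent with $\omega_1=\De/\sqrt{2}$. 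As a cross-check, one can differentiate the explicit formula for $s^{-1}(u)$: the log of $6\cosh u+4\sqrt{2}\sinh u-2$ contributes $4\sqrt{2}/4=\sqrt{2}$, and the $-2\log(3-\cosh u)$ term contributes nothing at first order. Either way, $(s^{-1})'(0)=\sqrt{2}$.

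Next, Taylor's theorem with remainder on a fixed compact interval $[0,u_0]\subset[0,s_{\max})$ gives $s^{-1}(u)=\sqrt{2}\,u+R(u)$ with $|R(u)|\leq C u^2$, where $C=\tfrac12\sup_{[0,u_0]}|(s^{-1})''|$. Substituting $u=c/\De_n$ yields
\[
\De_n\, s^{-1}(c/\De_n)=\sqrt{2}\,c+O(c^2/\De_n),
\]
and the same holds for $\bar{c}$. Summing the two pieces and exponentiating gives the stated bound, with an error term that is $O(1/\De_n)$ at fixed $c,\bar c$.

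The only point needing care is uniformity: the $O$-constant depends on $c,\bar{c}$, which is acceptable because the corollary asserts the bound for fixed $c,\bar{c}$. A secondary check is that $(s^{-1})''$ is bounded near $0$. This holds because the closed form for $s^{-1}$ is smooth for $u<s_{\max}$, where $3-\cosh u>0$.
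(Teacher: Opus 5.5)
Your proof is correct and takes the same route as the paper. The paper's one-line justification applies Proposition~\ref{mgfbound} and uses $s^{-1}(u)=\sqrt{2}\,u+O(u^2)$ to get $\De_n s^{-1}(c/\De_n)=\sqrt{2}\,c+O(c^2/\De_n)$. Beyond that, you verify $(s^{-1})'(0)=\sqrt{2}$ explicitly and bound the remainder with Taylor's theorem on a compact subinterval of $[0,s_{\max})$, both of which the paper leaves implicit.
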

\begin{rem}
The coefficient $\sqrt{2}$ is sharp: it agrees with the support bound $\mathrm{supp}(\nu) \subseteq [0,\sqrt{2}]^2$ derived below in appendix \ref{tightness}.
\end{rem}

\begin{lem}[Coherent state approximation]\label{cohstateapprox}
    For any $\alpha>0$, $\bs{x} \in \mathbb{R}$, and $s_n = \De_n^{1-\alpha}$,
    \ba
\frac{1}{s_n} \log \mathcal{G}_n(\bs{x}/\De_n^\alpha) = \frac{1}{s_n}\log M_{\nu_n}(s_n \bs{\theta}) + O(\De_n^{-\alpha}),
    \ea
    where $\bs{\theta} = \bs{x}/\sqrt{2}$ and $M_{\nu_n}(\bs{c}) = \int e^{\bs{c} \cdot \bs{\eta}} d\nu_n(\bs{\eta})$.
\end{lem}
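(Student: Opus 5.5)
The plan is to rewrite $\mathcal{G}_n(\bs{x}/\De_n^\alpha)$ exactly as an integral against the rescaled OPE measure at the self-dual point, and then Taylor expand the exponent. By eq.~(\ref{radmonomialdecomp}) with base point $\bs{\rho}_*$ and $\bs{h}=\De_n\bs{\eta}$,
\ba
\mathcal{G}_n(\bs{x}/\De_n^\alpha) = \int e^{\De_n\left(\eta\, s(x/\De_n^\alpha) + \bar{\eta}\, s(\bar{x}/\De_n^\alpha)\right)}\, d\nu_n(\bs{\eta}),
\ea
where $s(x)=\log(\rho(x)/\rho_*)$ as in appendix~\ref{mgfsection} (extended to negative $x$ by the same formula). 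Since $s$ is smooth with $s(0)=0$ and $s'(0)=1/\sqrt{2}$ (this is $f'(0)$ in proposition~\ref{universalratefunctionbound}), I write $\De_n s(x/\De_n^\alpha) = s_n\, x/\sqrt{2} + r_n(x)$ with $|r_n(x)| \le C x^2 \De_n^{1-2\alpha}$, uniformly for $x$ in a compact set.

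The second step is to control the remainder inside the integral. Factoring, the integrand is $e^{s_n \bs{\theta}\cdot\bs{\eta}}\, e^{\eta r_n(x) + \bar{\eta} r_n(\bar{x})}$. If $\bs{\eta}$ ranged over a bounded set, say $[0,\eta_{\max}]^2$, the second factor would be $e^{O(\De_n^{1-2\alpha})}$ uniformly, giving $\log\mathcal{G}_n = \log M_{\nu_n}(s_n\bs{\theta}) + O(\De_n^{1-2\alpha})$, and dividing by $s_n=\De_n^{1-\alpha}$ yields exactly the claimed $O(\De_n^{-\alpha})$. So the whole argument reduces to showing that the tails $\eta \gg 1$ contribute negligibly relative to $M_{\nu_n}(s_n\bs{\theta})$, on both sides of the inequality.

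This tail control is the main obstacle, since $\nu_n$ is not compactly supported at finite $n$. I would split the integral at $\|\bs{\eta}\|_\infty = R$ with a fixed $R>\sqrt{2}$ (chosen larger than the support bound of the classical measure), and treat each side separately. On the bulk region the estimate above applies directly. For the tail I would use the uniform MGF bound of proposition~\ref{mgfbound} at a slightly enlarged argument, $c = (1+\epsilon)s_n|\theta|$ or more crudely $c$ proportional to $\De_n$ times a constant less than $s_{\max}$. Combined with Chernoff/Markov, this shows $\nu_n(\|\bs{\eta}\|_\infty>R)\, e^{\,\cdot}$ is exponentially small in $\De_n$, i.e. $O(e^{-c' \De_n})$. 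Meanwhile $M_{\nu_n}(s_n\bs{\theta}) \ge e^{s_n \bs{\theta}\cdot\bs{\omega}}$ by Jensen, with first moments $\omega_1/\De_n = 1/\sqrt{2}$ from lemma~\ref{sdmomentbounds}, so it is at least $e^{-O(s_n)}$. Because $s_n = o(\De_n)$ for $\alpha>0$, the tail is negligible relative to the main term. This is precisely where $\alpha>0$ is needed.

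For negative components of $\bs{x}$ the integrand is at most one, and the same bulk/tail split works more easily. The exponent is then bounded by the first moment, or by lemma~\ref{convestimate}, so no growth issue arises. Assembling the bulk estimate and the tail estimates gives the two-sided bound. The lemma then follows with the error uniform on compact sets of $\bs{x}$.
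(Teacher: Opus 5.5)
Your argument is correct, but it takes a different and heavier route than the paper. The paper never expands inside the integral. It uses the exact identity $M_{\nu_n}(\bs{c}) = \mathcal{G}_n\big(s^{-1}(c/\De_n), s^{-1}(\bar{c}/\De_n)\big)$, which follows from the radial monomial decomposition. Both sides of the lemma are therefore values of the same function $\De_n\lambda_n$ at the points $\bs{x}/\De_n^\alpha$ and $s^{-1}(\bs{\theta}\De_n^{-\alpha})$ (componentwise), and these points differ by $O(\bs{x}^2/\De_n^{2\alpha})$. The mean value theorem with the uniform gradient bound $\bs{0}\le\nabla\lambda_n\le\bs{1}$ of lemma \ref{uniformgradbound} then bounds the difference of logarithms by $O(\bs{x}^2\De_n^{1-2\alpha})$ in one line. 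Because the paper puts the nonlinearity of $s$ into the argument of the correlator, not into an $\bs{\eta}$-dependent factor $e^{\eta r_n(x)}$, it needs no bulk/tail split, no Chernoff bound, and no Jensen lower bound on the main term.

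Your route rests on the same input, since proposition \ref{mgfbound} is just the integrated gradient bound, plus $\omega_1=\De_n/\sqrt{2}$ from lemma \ref{sdmomentbounds}. The bookkeeping works, with one detail: in the ``crude'' option, not every constant below $s_{\max}$ will do. You need $\kappa$ small enough that $s^{-1}(\kappa)<\kappa R$, which is possible for any $R>\sqrt{2}$ because $s^{-1}$ is convex with $s^{-1}(\kappa)/\kappa\to\sqrt{2}$. You then get $\nu_n(\eta>R)\le e^{-\De_n(\kappa R-s^{-1}(\kappa))}$ for all $n$, so the weighted tail is $e^{-c'\De_n+O(s_n)}$ against a main term $\ge e^{-O(s_n)}$. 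This crude option, rather than $c=(1+\epsilon)s_n|\theta|$, is what covers all $\alpha>0$, including $\alpha\ge 1$ where $s_n$ stays bounded.

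What your version buys is a quantitative byproduct: $\nu_n$ is exponentially tight at speed $\De_n$ outside $[0,R]^2$ for every $R>\sqrt{2}$. This strengthens proposition \ref{prop:compactsupport}, which only constrains the weak limit.

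You could also halve the work. Since $s'(x)=(1+e^x)^{-1/2}$, $s$ is concave, so $r_n\le 0$ and $\mathcal{G}_n(\bs{x}/\De_n^\alpha)\le M_{\nu_n}(s_n\bs{\theta})$ holds exactly; only the lower bound needs tail control.

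Minor slip: $s'(0)=1/\sqrt{2}$ is right, but since $f=s/\sqrt{2}$, this equals $\sqrt{2}f'(0)$, not $f'(0)$.
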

\begin{proof}
    The radial monomial decomposition with $\bs{h} = \De_n \bs{\eta}$ gives the identity
    \ba
 \log M_{\nu_n}(\bs{c}) = \De_n \lambda_n\left(s^{-1}(c/\De_n),s^{-1}(\bar{c}/\De_n) \right),
    \ea
    where $s(x) = \log(\rho(x)/\rho_*)$. Setting $\bs{c} = s_n \bs{\theta}$ and using $s^{-1}(u) = \sqrt{2} u + O(u^2)$ gives
    \ba
\log M_{\nu_n}(s_n \bs{\theta}) = \De_n \lambda_n\left( x/\De_n^\alpha + O(x^2/\De_n^{2\alpha}), \bar{x}/\De_n^\alpha + O(\bar{x}^2/\De_n^{2\alpha}) \right).
    \ea
    Additionally,  $\log \mathcal{G}_n(\bs{x}/\De_n^\alpha) = \De_n \lambda_n(\bs{x}/\De_n^\alpha)$. By the mean value theorem, there exists a $\bs{\xi} \in \mathbb{R}^2$ such that
    \ba
\log \mathcal{G}_n(\bs{x}/\De_n^\alpha) - \log M_{\nu_n}(s_n \bs{\theta})  = \De_n \nabla \lambda_n\left(\bs{\xi}) \cdot ( \bs{x}/\De_n^\alpha - s^{-1}(\bs{\theta} \De_n^{1-2\alpha}) \right) ,
    \ea
where $||( \bs{x}/\De_n^\alpha - s^{-1}(\bs{\theta} \De_n^{1-2\alpha}) )|| = O(\bs{x}^2/\De_n^{2\alpha})$. Taking an absolute value of both sides, and bounding the RHS with Cauchy-Schwarz and the gradient bound of lemma \ref{uniformgradbound} gives
\ba
|\log \mathcal{G}_n(\bs{x}/\De_n^\alpha) - \log M_{\nu_n}(s_n \bs{\theta})| &\leq \De_n || \nabla \lambda_n(\bs{\xi}) ||  \; || ( \bs{x}/\De_n^\alpha - s^{-1}(\bs{\theta} \De_n^{1-2\alpha}) )||
\\&=O(\bs{x}^2 \De_n^{1-2\alpha}).
\ea
Dividing both sides of this inequality by $s_n$ gives the lemma.
\end{proof}

\section{Tightness and compactness}
\label{tightness}

In this appendix, we prove that the rescaled measure sequences $\{\nu_n(\bs{\eta}) = \mu_n(\De_n \bs{\eta})\}$ defined at the self-dual point are tight, and that any weak limit $\nu = \lim_{n\to\infty} \nu_n$ is compactly supported on $\mathcal{D} = [0,\sqrt{2}]^2$.

\begin{prop}[Tightness of rescaled measure sequences]
\label{prop:tight}
Let $\{\nu_n\}_{n\in\mathbb{N}}$ denote the rescaled measure sequence at the self-dual point. This measure sequence is tight. That is, for any $\epsilon > 0$, there exists a compact set $K_\epsilon \subset \mathbb{R}_+^2$ such that
\ba
\nu_n(K_\epsilon^c) < \epsilon\quad \forall n \in \mathbb{N}.
\ea
\end{prop}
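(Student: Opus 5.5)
The plan is to obtain tightness from a uniform moment bound and Markov's inequality, as advertised in the main text. At the self-dual point the first moments of the unrescaled OPE measure are fixed exactly by crossing: lemma~\ref{sdmomentbounds} gives $\omega_1=\omega_{1,0}(\bs{0})=\De_n/\sqrt{2}$, and chiral symmetry gives the same value for $\omega_{0,1}(\bs{0})$. Passing to the rescaled variable $\bs{\eta}=\bs{h}/\De_n$, the rescaled first moments are
\be
\int \eta \, d\nu_n(\bs{\eta}) = \int \bar{\eta}\, d\nu_n(\bs{\eta}) = \frac{1}{\sqrt{2}}
\ee
for every $n$, independent of $\De_n$. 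If one prefers to avoid the exact crossing identity, lemma~\ref{sharpmomentbound} at $\bs{\rho}_*$ gives the uniform bound $\omega_{1,0}\le \De_n(1+\rho_*)/(1-\rho_*)=\sqrt{2}\,\De_n$, which serves equally well.

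Since $\nu_n$ is supported in $\mathbb{R}^2_+$, the function $\eta+\bar\eta$ is nonnegative on the support. Take $K_\epsilon=[0,R]^2$ with $R$ to be chosen. Its complement within $\mathbb{R}^2_+$ is contained in $\{\eta+\bar\eta> R\}$, so Markov's inequality gives
\be
\nu_n(K_\epsilon^c)\le \nu_n(\{\eta+\bar\eta>R\})\le \frac{1}{R}\int(\eta+\bar\eta)\,d\nu_n = \frac{\sqrt{2}}{R}.
\ee
Choosing $R>\sqrt{2}/\epsilon$ makes this smaller than $\epsilon$ uniformly in $n\in\mathbb{N}$. The set $K_\epsilon$ is a closed bounded subset of $\mathbb{R}^2_+$ and hence compact, which establishes tightness. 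The bound holds for every $n$, so no threshold $N$ is needed.

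There is essentially no analytic obstacle, because the estimate is uniform for a trivial reason. The one point to check is that the moments being used are finite and equal to the derivatives of the correlator. This holds because the radial monomial expansion converges absolutely in a neighborhood of $\bs{\rho}_*$, so differentiating term by term in $\log\rho$ is justified; this is the same fact already used in proving lemma~\ref{sdmomentbounds}. Note that tightness alone does not place the support inside $\mathcal{D}$. That stronger statement needs the MGF bound (proposition~\ref{mgfbound}) and is proved separately in proposition~\ref{prop:compactsupport}.
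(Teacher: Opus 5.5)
Your proof is correct. It uses the same Markov-inequality strategy as the paper but feeds it a different moment.

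\textbf{How the paper does it.} The paper's proof works with second moments. It invokes the upper bound $\omega_2\le\omega_2^{(+)}$ of lemma~\ref{sdmomentbounds} and notes that $\omega_2^{(+)}/\De_n^2\le C$ uniformly in $n$. It then applies Markov's inequality to $\eta^2$ and $\bar\eta^2$ separately, with a union bound, to get $\nu_n(K_\epsilon^c)\le 2C/R^2$.

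\textbf{How your route differs.} You use only the first moment, which crossing fixes exactly: $\int\eta\,d\nu_n=\int\bar\eta\,d\nu_n=1/\sqrt{2}$. Your arithmetic is right, including $(1+\rho_*)/(1-\rho_*)=\sqrt{2}$ in the fallback via lemma~\ref{sharpmomentbound}. A single Markov bound on the nonnegative function $\eta+\bar\eta$ then gives the tail bound $\sqrt{2}/R$.

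\textbf{What your route buys.} It rests only on the $\Lambda=1$ crossing identity, or in your fallback on the one-line positivity argument of lemma~\ref{sharpmomentbound}. It therefore avoids the $\Lambda=3$ identity and the Stieltjes/Hankel positivity step behind $\omega_2^{(+)}$. Uniformity in $n$ is also automatic. The paper instead needs $\omega_2^{(+)}/\De_n^2$ to be bounded, which silently uses $\De_n\ge\De_1>0$, since that ratio diverges as $\De\to 0$.

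\textbf{What the paper's route buys.} It gives quadratic rather than linear tail decay. More usefully, uniformly bounded second moments make $\eta$ and $\bar\eta$ uniformly integrable, so the first moment survives the weak limit ($\int\eta\,d\nu=1/\sqrt{2}$). First-moment Markov alone does not give this, although in the paper it is subsumed anyway by the uniform MGF bound of proposition~\ref{mgfbound}.

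For tightness itself, your argument is complete and slightly more elementary than the paper's.
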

\begin{proof}
Lemma \ref{sdmomentbounds} gives $\omega_2 \leq \omega_2^{(+)}$, where $\omega_2^{(+)}/\De_n^2 = 1 + O(1/\De_n)$ as $\De_n \to \infty$. In particular, there exists a constant $C > 0$ such that $\omega_2^{(+)}/\De_n^2 \leq C$ for all $n\in\mathbb{N}$. Since $\nu_n(\bs{\eta}) = \mu_n(\De_n\bs{\eta})$, the second moment of $\nu_n$ satisfies
\ba
\int \eta^2\, d\nu_n(\bs{\eta}) = \frac{\omega_2}{\De_n^2} \leq \frac{\omega_2^{(+)}}{\De_n^2} \leq C,
\ea
and by chiral symmetry, $\int \bar{\eta}^2\, d\nu_n(\bs{\eta}) \leq C$ as well. Now fix $\epsilon > 0$ and let $R = 2\sqrt{C/\epsilon}$. Define the compact set $K_\epsilon = [0,R]^2 \subset \mathbb{R}^2_+$. Since $K_\epsilon^c \cap \mathbb{R}^2_+ \subset \{\eta > R\} \cup \{\bar{\eta} > R\}$, a union bound followed by Markov's inequality gives
\ba
\nu_n(K_\epsilon^c) \leq \nu_n(\eta > R) + \nu_n(\bar{\eta} > R) \leq \frac{1}{R^2}\int \eta^2\, d\nu_n + \frac{1}{R^2}\int \bar{\eta}^2\, d\nu_n \leq \frac{2C}{R^2} = \frac{\epsilon}{2} < \epsilon
\ea
for all $n \in \mathbb{N}$, which establishes tightness.
\end{proof}

\begin{prop}[Compact support of classical measures]
\label{prop:compactsupport}
Let $\nu$ be any weak limit of $\{\nu_n\}$. Then $\mathrm{supp}(\nu) \subset \mathcal{D} = [0,\sqrt{2}]^2$.
\end{prop}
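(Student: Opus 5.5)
We show that $\nu$ assigns zero mass to $\{\eta>\sqrt{2}\}$, hence by chiral symmetry also to $\{\bar\eta>\sqrt{2}\}$. Since $\nu_n$ lives on $\mathbb{R}^2_+$ and $\mathbb{R}^2_+$ is closed, the Portmanteau theorem already gives $\mathrm{supp}(\nu)\subset\mathbb{R}^2_+$. The tool is the uniform MGF bound of proposition~\ref{mgfbound} and its corollary: for fixed $c\ge 0$ and $\bar c=0$,
\be
\int e^{c\eta}\,d\nu_n(\bs{\eta}) \le e^{\sqrt{2}c+O(1/\De_n)}.
\ee

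\textbf{Step 1: pass the bound to the limit.} Fix $c\ge0$. The function $e^{c\eta}$ is continuous but unbounded, so we truncate. For $M>0$ the function $\min(e^{c\eta},M)$ is bounded and continuous, and weak convergence gives
\be
\int \min(e^{c\eta},M)\,d\nu = \lim_{n\to\infty}\int \min(e^{c\eta},M)\,d\nu_n \le \limsup_{n\to\infty}\int e^{c\eta}\,d\nu_n \le e^{\sqrt{2}c}.
\ee
Letting $M\to\infty$ and using monotone convergence yields $M_\nu(c,0)\le e^{\sqrt{2}c}$ for every $c\ge0$.

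\textbf{Step 2: Chernoff bound.} Fix $\delta>0$. By Markov's inequality,
\be
\nu(\eta\ge\sqrt{2}+\delta)\le e^{-c(\sqrt{2}+\delta)}M_\nu(c,0)\le e^{-c\delta}.
\ee
Sending $c\to\infty$ gives $\nu(\eta\ge\sqrt{2}+\delta)=0$. Taking the union over $\delta=1/m$ shows $\nu(\eta>\sqrt{2})=0$.

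\textbf{Step 3: the other coordinate.} Apply the same argument with $\bar c$ in place of $c$; the MGF bound treats $c$ and $\bar c$ symmetrically. Alternatively, chiral symmetry of $\nu_n$ passes to the weak limit. Either way $\nu(\bar\eta>\sqrt{2})=0$. Hence $\nu$ is supported in $[0,\sqrt{2}]^2$, which is closed, so the support, being the smallest closed full-measure set, lies in $\mathcal{D}$.

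\textbf{Main obstacle.} The delicate point is the corollary's uniformity: the $O(1/\De_n)$ error must vanish for each fixed $c$. This holds because $s^{-1}(u)=\sqrt{2}u+O(u^2)$ and $c/\De_n<s_{\max}$ eventually, so
\be
\De_n\, s^{-1}(c/\De_n)=\sqrt{2}c+O(c^2/\De_n).
\ee
This is enough, since Step 1 takes $n\to\infty$ at fixed $c$ before Step 2 takes $c\to\infty$, and the two limits are never exchanged.
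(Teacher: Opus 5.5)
Your proposal is correct and follows essentially the same route as the paper: you truncate $e^{c\eta}$ to a bounded continuous function, pass the uniform MGF bound of proposition~\ref{mgfbound} to the weak limit via monotone convergence, and conclude with a Chernoff bound as $c\to\infty$, handling $\bar\eta$ by chiral symmetry. Your remark that $n\to\infty$ is taken at fixed $c$ before sending $c\to\infty$ makes explicit a step the paper leaves implicit.
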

\begin{proof}
Since each $\nu_n$ is supported on $\mathbb{R}^2_+$ and the weak limit of measures supported on a closed set is supported on that set, $\mathrm{supp}(\nu) \subset \mathbb{R}^2_+$. It remains to show $\nu(\{\eta > \sqrt{2}\}) = 0$ and $\nu(\{\bar{\eta} > \sqrt{2}\}) = 0$. By chiral symmetry, it suffices to prove the former.
 
Fix $c > 0$ and $R > 0$, and define the continuous bounded function $f_R(\bs{\eta}) = \min(e^{c\eta}, e^{cR})$. Since $f_R \leq e^{c\eta}$, the uniform MGF bound (proposition~3.3) with $\bar{c} = 0$ gives, for all $n$ sufficiently large,
\ba
\int f_R\, d\nu_n \leq \int e^{c\eta}\, d\nu_n(\bs{\eta}) \leq e^{\sqrt{2}\, c + O(1/\De_n)}.
\ea
Since $f_R$ is continuous and bounded, weak convergence $\nu_n \to \nu$ gives 
\ba
\int f_R\, d\nu = \lim_{n\to\infty}\int f_R\, d\nu_n \leq e^{\sqrt{2}\, c}.
\ea
Sending $R \to \infty$ and applying the monotone convergence theorem, we obtain the MGF bound for the limiting measure:
\ba
\int e^{c\eta}\, d\nu(\bs{\eta}) \leq e^{\sqrt{2}\, c} \quad \forall\, c > 0.
\ea
Now fix $\delta > 0$. Applying the exponential Markov inequality (Chernoff bound), we have
\ba
\nu(\eta \geq \sqrt{2} + \delta) \leq e^{-c(\sqrt{2}+\delta)} \int e^{c\eta}\, d\nu \leq e^{-c\delta} \quad \forall c > 0.
\ea
Sending $c \to \infty$ gives $\nu(\eta \geq \sqrt{2} + \delta) = 0$. As this holds for all $\delta > 0$, we conclude $\nu(\eta > \sqrt{2}) = 0$. By an identical argument applied to $\bar{\eta}$, $\nu(\bar{\eta} > \sqrt{2}) = 0$, hence $\mathrm{supp}(\nu) \subset [0,\sqrt{2}]^2$.
\end{proof}

\section{Proof of proposition \ref{preaa}}
\label{proofofpreaa}

We use lemma \ref{uniformgradbound} to prove compact boundedness and uniform equicontinuity as follows:

\begin{lem}[Compact boundedness]\label{compactbound}
Let $A \subset \mathbb{R}^2$ be compact, then
\ba
|\lambda_n(\bs{x})| \leq \sup_{\bs{x} \in A} ||\bs{x}||_1 \equiv M_A,
\ea
where $M_A $ is bounded and does not depend on $n$ or $\bs{x} \in A$.
\end{lem}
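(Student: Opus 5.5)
The plan is to combine the normalization $\lambda_n(\bs{0}) = 0$ with the uniform gradient bound of lemma \ref{uniformgradbound}, via the fundamental theorem of calculus along a straight path. Since $\mathcal{G}_n$ is normalized to $1$ at the self-dual point, which sits at $\bs{x}=\bs{0}$ in the coherent state coordinates, we have $\lambda_n(\bs{0}) = \frac{1}{\De_n}\log 1 = 0$ for every $n$. Each $\lambda_n$ is smooth on $\mathbb{R}^2$, because $\mathcal{G}_n$ is positive and real-analytic on the causal diamond and $\bs{x}\mapsto\bs{\rho}$ is a diffeomorphism. So we may write
\begin{equation*}
\lambda_n(\bs{x}) = \int_0^1 \nabla\lambda_n(\tau\bs{x})\cdot \bs{x}\, d\tau .
\end{equation*}

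Next we bound the integrand using lemma \ref{uniformgradbound}, which says each component of $\nabla\lambda_n$ lies in $[0,1]$ for all $\bs{x}$ and all $n$. Then $|\nabla\lambda_n(\tau\bs{x})\cdot\bs{x}| \le |x| + |\bar{x}| = \|\bs{x}\|_1$. This gives $|\lambda_n(\bs{x})| \le \|\bs{x}\|_1$ pointwise, uniformly in $n$.

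Finally we take the supremum over the compact set $A$. Because $\|\cdot\|_1$ is continuous, $M_A = \sup_{\bs{x}\in A}\|\bs{x}\|_1$ is finite. It depends only on $A$, not on $n$ or $\bs{x}$, which gives the claim.

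The only point needing care is that the gradient bound really is uniform in $n$ and holds on all of $\mathbb{R}^2$, including the quadrants where $x$ and $\bar{x}$ have opposite signs. It does: lemma \ref{uniformgradbound} is stated for all $\bs{x}$ and $n$, resting on the $\De$-linear first-moment bound of lemma \ref{sharpmomentbound}. No real obstacle is expected.
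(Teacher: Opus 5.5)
Your proof is correct and is essentially the paper's argument: both combine $\lambda_n(\bs{0})=0$ with the uniform gradient bound of lemma \ref{uniformgradbound} to get $|\lambda_n(\bs{x})|\le |x|+|\bar{x}|$. The only difference is the path: the paper applies the mean value theorem along the coordinate axes (from $\bs{0}$ to $(0,\bar{x})$ and then to $\bs{x}$), while you integrate along the straight ray from $\bs{0}$, which changes nothing of substance.
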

\begin{proof}
Since $\mathcal{G}_n$ is normalized to $1$ at $\bs{x} = 0$, $\lambda_n(\bs{0}) = 0$ for all $n$. Applying the mean value theorem (MVT) along the $x,\bar{x}$ axes we have

\ba
|\lambda_n(\bs{x})| &= |\lambda_n(\bs{x}) - \lambda_n(0,\bar{x}) + \lambda_n(0,\bar{x}) - \lambda_n(\bs{0}) | \\
&\leq |\lambda_n(\bs{x}) - \lambda_n(0,\bar{x}) | + |\lambda_n(0,\bar{x}) - \lambda_n(\bs{0})|\\
&\leq \partial_x \lambda_n(\xi,\bar{x}) |x| + \partial_{\bar{x}} \lambda_n(0,\bar{\xi}) |\bar{x}|\\
& \leq |x| + |\bar{x}| \leq \sup_{\bs{x}\in A} || \bs{x}||_1 .
\ea
Since $|| \bs{x}||_1$ is continuous in $\bs{x}$, it is bounded on any compact subset $A$. Thus, we have $\sup_{\bs{x}\in A} || \bs{x}||_1 \equiv M_A<\infty$.
\end{proof}

\begin{lem}[Uniform equicontinuity] \label{uniformequi}
For any $\bs{x},\bs{y} \in \mathbb{R}^2$ and $\epsilon > 0$, we have
\ba
|\lambda_n(\bs{x}) - \lambda_n(\bs{y})|  < \epsilon
\ea
when $||\bs{x} - \bs{y}|| < \delta \equiv \epsilon /\sqrt{2} $.
\end{lem}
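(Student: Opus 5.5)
The plan is to deduce uniform equicontinuity directly from the uniform gradient bound of lemma \ref{uniformgradbound}, using the mean value theorem on the segment joining $\bs{x}$ and $\bs{y}$. The point is that the bound is independent of $n$, $\bs{x}$ and $\bs{y}$, so a single $\delta$ will serve for the whole sequence.

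First, I note that each $\lambda_n = \frac{1}{\De_n}\log\mathcal{G}_n$ is continuously differentiable on $\mathbb{R}^2$. This holds because $\mathcal{G}_n$ is positive and real-analytic on the causal diamond, and $\bs{x}\mapsto\bs{\rho}$ is a smooth diffeomorphism. Fix $\bs{x},\bs{y}\in\mathbb{R}^2$ and set $\gamma(\tau) = \bs{y} + \tau(\bs{x}-\bs{y})$ for $\tau\in[0,1]$. The mean value theorem applied to $\tau\mapsto\lambda_n(\gamma(\tau))$ gives some $\bs{\xi}$ on the segment with
\[
\lambda_n(\bs{x})-\lambda_n(\bs{y}) = \nabla\lambda_n(\bs{\xi})\cdot(\bs{x}-\bs{y}).
\]

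Next, I bound the right-hand side using lemma \ref{uniformgradbound}. Since $\bs{0}\leq\nabla\lambda_n\leq\bs{1}$ componentwise, we have $\|\nabla\lambda_n(\bs{\xi})\|\leq\sqrt{2}$ for every $n$ and every $\bs{\xi}$. Cauchy--Schwarz then gives
\[
|\lambda_n(\bs{x})-\lambda_n(\bs{y})| \leq \sqrt{2}\,\|\bs{x}-\bs{y}\|.
\]
The sharper $\ell^1$ form $|x-y|+|\bar{x}-\bar{y}|\leq\sqrt{2}\|\bs{x}-\bs{y}\|$ would serve equally well. Choosing $\delta = \epsilon/\sqrt{2}$, the condition $\|\bs{x}-\bs{y}\|<\delta$ gives $|\lambda_n(\bs{x})-\lambda_n(\bs{y})|<\epsilon$. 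This holds for all $n$, and $\delta$ does not depend on $\bs{x}$, $\bs{y}$ or $n$, which is exactly uniform equicontinuity.

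There is essentially no obstacle here, since all the work sits in the sharp first-moment bound (lemma \ref{sharpmomentbound}) that feeds lemma \ref{uniformgradbound}. The only point I would state explicitly is that the gradient bound holds globally on $\mathbb{R}^2$ and not just near the origin. That global validity is what makes $\delta$ independent of the location of $\bs{x},\bs{y}$, in contrast to the merely local boundedness in lemma \ref{compactbound}. Combined with lemma \ref{compactbound}, this completes the proof of proposition \ref{preaa}.
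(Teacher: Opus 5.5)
Your proposal is correct and is essentially the paper's argument. Both apply the mean value theorem along the segment and use the componentwise gradient bound of lemma \ref{uniformgradbound}, passing through exactly the $\ell^1$ estimate $|x-y|+|\bar{x}-\bar{y}|\leq\sqrt{2}\,\|\bs{x}-\bs{y}\|$ you mention, which gives $\delta=\epsilon/\sqrt{2}$ independent of $n$, $\bs{x}$ and $\bs{y}$.
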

\begin{proof}
Once again, we apply MVT:
\ba
|\lambda_n(\bs{x}) - \lambda_n(\bs{y})|  &= \nabla\lambda_n(\bs{\eta})\cdot(\bs{x}-\bs{y})\\
&\leq |\partial_x \lambda_n| |x-y| + |\partial_{\bar{x}}\lambda_n| | \bar{x} - \bar{y}|\\
& \leq |x-y| + |\bar{x}-\bar{y}| \leq \sqrt{2} || \bs{x}-\bs{y}||,
\ea
where we used Cauchy-Schwarz for the last line. Setting $\delta = \epsilon / \sqrt{2}$ completes the proof.
\end{proof}
Combining lemma \ref{compactbound} and \ref{uniformequi} gives the content of proposition \ref{preaa}.

\section{Proof of proposition \ref{universalratefunctionbound}}
\label{proofofuniversalratefuncbound}

For the upper bound, we bound each quadrant for $\bs{x} \in \mathbb{R}^2$. Starting with $x,\bar{x} \leq0$ (or $\bs{x} \in \mathbb{R}^{--}$), we notice that $(\rho/\rho_*), (\rhob/\rho_*) \leq 1$. Thus, we have
\ba
\mathcal{G}(\bs{\rho}(\bs{x})) \leq \int d\mu(\bs{\rho}_*,\bs{h}) =1,
\ea
so that $\lambda(\bs{x}) \leq 0$ for $\bs{x} \in \mathbb{R}^{--}$.

To bound the mixed quadrants $\mathbb{R}^{+-}$ and $\mathbb{R}^{-+}$ it suffices to bound $\mathbb{R}^{+-}$ and then use chiral symmetry to get the bound on $\mathbb{R}^{-+}$. We note that $\lambda(\bs{x}) \leq \lambda( x,0)$ for $\bs{x} \in\mathbb{R}^{+-}$. We write
\ba
d\lambda(x',0) = \partial_{x'}\lambda(x',0) dx'
\ea
and integrate both sides from $x' =0$ to $x$ to obtain
\ba
\lambda(x,0) - \lambda(\bs{0}) = \lambda(x,0) = \int_0^x\partial_{x'}\lambda(x',0) dx' \leq x,
\ea
where we have used the fact that $\lambda(\bs{0}) =0$ via our normalization and our gradient bound to get the far right inequality. In total, this gives $\lambda(\bs{x}) \leq x$ for $\bs{x} \in \mathbb{R}^{+-}$ and $\lambda(\bs{x}) \leq \bar{x}$ for $\bs{x} \in \mathbb{R}^{-+}$. For the final quadrant $\mathbb{R}^{++}$, we write
\ba
d\lambda(\bs{x}) = \partial_x \lambda(\bs{x}) dx +  \partial_{\bar{x}} \lambda(\bs{x}) d\bar{x}
\ea
and perform the same procedure to obtain $\lambda(\bs{x}) \leq x+\bar{x}$ for $\bs{x} \in \mathbb{R}^{++}$. Combining the results for each quadrant gives the upper bound in the proposition.

For the lower bound, consider the radial monomial decomposition of the correlator normalized to $1$ at the self-dual point
\ba
\mathcal{G}(\bs{\rho}) = \int (\rho/\rho_*)^h (\rhob/\rho_*)^{\bh} d\mu(\bs{\rho}_*;\bs{h}).
\ea
For all $\bs{\rho} \in \mathscr{C}$, the monomial factors are convex functions for all $\bs{h} \in \mathbb{R}^2_+$. Thus, Jensen's inequality applies, and we have
\ba
\mathcal{G}(\bs{\rho}) \geq (\rho/\rho_*)^{\omega_{1,0}(\bs{0})} (\rhob/\rho_*)^{\omega_{0,1}(\bs{0})}.
\ea
Plugging in the universal result that $\omega_{1,0}(\bs0)= \omega_{0,1}(\bs0) = \De/\sqrt{2}$ and computing the rate function, we obtain the bound
\ba
\lambda(\bs{x}) \geq f(x) + f(\bar{x}) \quad \forall \bs{x} \in \mathbb{R}^2,
\ea
where $f(x) = \log(\rho(x) /\rho_*)/\sqrt{2}$. 

We can improve this bound by imposing crossing symmetry:
\ba
\lambda(\bs{x}) = (x+\bar{x}) + \lambda(-\bs{x}) \geq (x+\bar{x}) + f(-x) + f(-\bar{x}).
\ea
For the bounds to be compatible, we then have
\ba
\lambda(\bs{x}) \geq \max(f(x) + f(\bar{x}) , (x+\bar{x}) + f(-x) + f(-\bar{x}) ).
\ea
This gives our lower bound in the region around the self-dual point. To derive tighter bounds far away from the self dual point, we use that the unnormalized correlator is bounded from below by $1$ due to the contribution of the s-channel identity operator. This gives
\ba
\lambda(\bs{x}) &= \lim_{n \to \infty} \frac{1}{\De_n} \left( \log\left( \widehat{\mathcal{G}}_n(\bs{x} ) \right) - \log\left(\widehat{\mathcal{G}}_n(\bs{0}) \right)\right)\\
&\geq  - \lim_{n\to\infty} \frac{1}{\De_n}\log\left( \widehat{\mathcal{G}}_n(\bs{0})\right)  = - \Sigma \quad \forall \bs{x} \in \mathbb{R}^2 ,
\ea
where $\widehat{\mathcal{G}}$ denotes the raw, unnormalized correlator. Again, we can strengthen this bound by imposing crossing symmetry:
\ba
\lambda(\bs{x}) = (x+\bar{x}) + \lambda(-\bs{x}) \geq (x+\bar{x}) - \Sigma.
\ea
Taking the maximum of all lower bounds gives the full lower bound $\lambda_-$ in the proposition.

\section{Proof of proposition \ref{convolutioncond}}
\label{convolutioncondproof}

Using that $||f||_{L^1(\mathbb{R}^2)} \geq | \mathcal{F}^{\pm}[f]|$ for any $L^1$ integrable function $f$ on $\mathbb{R}^2$, we have
    \ba
    \| \bs{\Lambda}_{\De_n^{1 - \alpha}}*(\mu_n'-\mu_{\De_n}') \|_{L^1(\mathbb{R}^2)} &\geq | \mathcal{F}^{-1}\left[ \bs{\Lambda}_{\De_n^{1 - \alpha}}*(\mu_n'-\mu_{\De_n}')  \right](\bs{\xi}) |\\
    & = e^{-2 \pi^2 \De_n^{2 -2\alpha} \bs{\xi}^2} |  \phi_{\mu_n}(\bs{\xi}) - \phi_{\mu_{\De_n}}(\bs{\xi}) |\\
    & =  e^{-2 \pi^2 \De_n^{2 - 2\alpha} \bs{\xi}^2} | \phi_{\nu_n}(\De_n \bs{\xi}) - \phi_{\nu}(\De_n\bs{\xi})  |,
    \ea
where we used theorem \ref{convthm} to obtain the second line. Now, write $\bs{\xi} = \tilde{\bs{t}}\De_n^{-1 + \beta} $, which gives
\ba
\| \bs{\Lambda}_{\De_n^{1 - \alpha}}*(\mu_n'-\mu_{\De_n}') \|_{L^1(\mathbb{R}^2)} \geq  e^{-2 \pi^2 \De_n^{2(\beta -\alpha)} \tilde{\bs{t}}^2}|  \phi_{\nu_n}(\De_n^{\beta} \tilde{\bs{t}}) -\phi_{\nu}(\De_n^{\beta} \tilde{\bs{t}})  |.
\ea
For any fixed $\tilde{\bs{t}} \in \mathbb{R}^2$ and $\beta <\alpha$, the Gaussian prefactor monotonically increases to $1$ as $n \to \infty$, so the vanishing of the $L^1(\mathbb{R}^2)$ norm implies the vanishing of $| \phi_{\nu_n}(\De_n^{\beta} \tilde{\bs{t}}) -\phi_{\nu}(\De_n^{\beta} \tilde{\bs{t}})  |$ as $n \to \infty$. Finally, bounding 
\ba
-|\phi_{\nu_n}(\De_n^{\beta} \tilde{\bs{t}}) -\phi_{\nu}(\De_n^{\beta} \tilde{\bs{t}})  | &\leq \left(\phi_{\nu_n}(\De_n^{\beta} \tilde{\bs{t}}) -\phi_{\nu}(\De_n^{\beta} \tilde{\bs{t}}) \right)  \\&\leq |\phi_{\nu_n}(\De_n^{\beta} \tilde{\bs{t}}) -\phi_{\nu}(\De_n^{\beta} \tilde{\bs{t}})  |
\ea 
and applying the squeeze theorem gives eq.~(\ref{pointwiseasymptoticconvergence}).

\section{Proof of theorem \ref{cohstatedecompthm}}
\label{cohstatedecompthmproof}

The first part of the theorem follows from the definition of coherent states. For the second part, we first need to verify pointwise convergence between $\phi_\nu(\De_n^\beta \tilde{\bs{t}})$ and $\tilde{\mathcal{G}}_{\De_n}(\De_n^{\beta}\tilde{\bs{t}})$. We have
\ba
\phi_\nu(\De_n^\beta\tilde{\bs{t}}) = \int e^{ i 2\pi \tilde{\bs{t}} \cdot \bs{\eta} \De_n^\beta} d\nu(\bs{\eta})
\ea
and
\ba
\tilde{\mathcal{G}}_{\De_n}(\De_n^{\beta}\tilde{\bs{t}}) &= \int \chi(\De_n^\beta \tilde{t})^{h/\sqrt{2}}  \chi(\De_n^\beta \tilde{\bar{t}})^{\bar{h}/\sqrt{2}} d\mu_{\De_n}(\bs{h})\\ &= \int \chi(\De_n^\beta \tilde{t})^{\De_n\eta/\sqrt{2}}  \chi(\De_n^\beta \tilde{\bar{t}})^{\De_n \bar{\eta}/\sqrt{2} } d\nu(\bs{\eta}).
\label{cohstatedecomprewrite}
\ea
From lemma \ref{cohstateapprox}, we approximate
\ba
\chi(\De_n^\beta \tilde{t})^{\De_n \eta/\sqrt{2}} = e^{i2\pi \tilde{t} \eta \De_n^{\beta}} \left(1 + O\left( \De_n^{2\beta - 1} \eta \tilde{t}^2\right) \right).
\ea
Plugging this into eq.~(\ref{cohstatedecomprewrite}), and using the compactness of $\nu$, we have
\ba
\tilde{\mathcal{G}}_{\De_n}(\De_n^{\beta}\tilde{\bs{t}}) &= \int e^{ i 2\pi \tilde{\bs{t}} \cdot \bs{\eta} \De_n^\beta} d\nu(\bs{\eta}) + O\left(\De_n^{2\beta -1} \tilde{\bs{t}}^2\right) \\
&= \phi_\nu(\De_n^\beta \tilde{\bs{t}}) + O\left(\De_n^{2\beta -1} \tilde{\bs{t}}^2\right) 
\ea
for all $\tilde{\bs{t}} \in \mathbb{R}^2$. Thus for any fixed $\tilde{\bs{t}} \in \mathbb{R}^2$ and $\beta < 1/2$ we have
\ba
\lim_{n\to \infty} | \tilde{\mathcal{G}}_{\De_n}(\De_n^{\beta}\tilde{\bs{t}}) - \phi_\nu(\De_n^\beta \tilde{\bs{t}})|  = 0.
\label{cohstatevspullback}
\ea
Now, write
\ba
|\mathcal{G}_n(\De_n^\beta \tilde{\bs{t}}) - \tilde{\mathcal{G}}_{\De_n}(\De_n^{\beta}\tilde{\bs{t}}) | &= |\mathcal{G}_n(\De_n^\beta \tilde{\bs{t}}) - \phi_\nu(\De_n^\beta \tilde{\bs{t}}) + \phi_\nu(\De_n^\beta \tilde{\bs{t}}) - \tilde{\mathcal{G}}_{\De_n}(\De_n^{\beta}\tilde{\bs{t}}) |\\
&\leq |\mathcal{G}_n(\De_n^\beta \tilde{\bs{t}}) - \phi_\nu(\De_n^\beta) | + |\phi_\nu(\De_n^\beta) - \tilde{\mathcal{G}}_{\De_n}(\De_n^{\beta}\tilde{\bs{t}}) |.
\ea
Since the convolution condition holds true for $\alpha = 1/2$, the first term on the RHS vanishes as $n \to \infty$ for $\beta < 1/2$, and by eq.~(\ref{cohstatevspullback}) the second term on the RHS also vanishes as $n \to \infty$ for $\beta < 1/2$. Therefore
\ba
\lim_{n \to \infty} |\mathcal{G}_n(\De_n^\beta \tilde{\bs{t}}) - \tilde{\mathcal{G}}_{\De_n}(\De_n^{\beta}\tilde{\bs{t}}) | = 0 \quad \forall \tilde{\bs{t}} \in \mathbb{R}^2 \;\text{and}\; \beta < 1/2.
\ea
Applying the squeeze theorem to 
\ba
-|\mathcal{G}_n(\De_n^\beta \tilde{\bs{t}}) - \tilde{\mathcal{G}}_{\De_n}(\De_n^{\beta}\tilde{\bs{t}})|&\leq\left( \mathcal{G}_n(\De_n^\beta \tilde{\bs{t}}) - \tilde{\mathcal{G}}_{\De_n}(\De_n^{\beta}\tilde{\bs{t}})\right) \\&\leq|\mathcal{G}_n(\De_n^\beta \tilde{\bs{t}}) - \tilde{\mathcal{G}}_{\De_n}(\De_n^{\beta}\tilde{\bs{t}})| 
\ea 
gives eq.~(\ref{cohstatevsexact}).

\section{Proof of theorem \ref{matchinglocalratefunctions}}
\label{proofofmatchinglocalratefunctions}

We note that for $\beta \geq1$, matching is always guaranteed, as we have pointwise convergence of both the true and coherent state correlators in this region. Thus we can just focus on the case where $0<\beta<1$. Let $s_n = \De_n^{1-\alpha}$ and $\bs{\theta} = \bs{x}/\sqrt{2}$. For the forward direction that the $(\star)$ condition implies matching, it suffices to prove
\ba
\lim_{n\to\infty} \frac{1}{s_n} \log \int e^{s_n \bs{\theta}\cdot \bs{\eta}}d\nu_n(\bs{\eta}) = h_K(\bs{\theta}) = \sup_{\bs{\eta}\in K} \bs{\theta} \cdot \bs{\eta},
\label{laplaceintegralasym}
\ea
as the remaining step of passing from the true correlator to the above Laplace integral uses only the coherent state approximation of radial monomials. Lemma~\ref{cohstateapprox} and definition~\ref{localratefuncdef} give
\ba
\frac{1}{s_n}\log\mathcal{G}_n(\bs{x}/\De_n^\alpha) = \frac{1}{s_n}\log\int e^{s_n\bs{\theta}\cdot\bs{\eta}}\,d\nu_n(\bs{\eta}) + O(\De_n^{-\alpha}).
\ea
The error term decays as $n \to \infty$ for any $\alpha > 0$. 

We now just need to compute the limit of the remaining term to prove eq.~(\ref{laplaceintegralasym}). We will do so by deriving an upper and lower bound and applying the squeeze theorem. To obtain the upper bound, fix $\delta> 0$ and decompose
\ba
\int e^{s_n \bs{\theta}\cdot\bs{\eta} }d\nu_n = \int_{K_\delta} e^{s_n \bs{\theta}\cdot \bs{\eta}} d\nu_n + \int_{K_\delta^c} e^{s_n \bs{\theta}\cdot\bs{\eta}} d\nu_n.
\ea
In the first integral on the RHS, we have 
\ba
\bs{\theta} \cdot\bs{\eta} \leq h_K(\bs{\theta}) + ||\bs{\theta}||\delta \quad \forall \bs{\eta} \in K_\delta,
\ea
giving
\ba
\int_{K_\delta} e^{s_n \bs{\theta}\cdot \bs{\eta}} d\nu_n \leq e^{s_n(h_K(\bs{\theta}) + ||\bs{\theta}||\delta)}.
\ea
For the second integral, we use Cauchy-Schwarz to bound
\ba
\int_{K_\delta^c} e^{s_n \bs{\theta}\cdot\bs{\eta}} d\nu_n \leq \left(  \int e^{2 s_n \bs{\theta} \cdot \bs{\eta}} d\nu_n\right)^{1/2} \left( \nu_n(K_\delta^c)\right)^{1/2}.
\ea
Since $\eta,\bar{\eta}\geq 0$, we have the bound $\bs{\theta}\cdot\bs{\eta} \leq |\theta|\eta + |\bar{\theta}|\bar{\eta}$, so the integral in the first factor is bounded by $M_{\nu_n}(2s_n|\theta|,\,2s_n|\bar{\theta}|)$. 

Since $2s_n|\theta|/\De_n = 2|\theta|\De_n^{-\alpha} \to 0$, the arguments lie in $[0,\De_n s_{\max})$ for $\De_n \gg 0$, and proposition~\ref{mgfbound} with $s^{-1}(u) = \sqrt{2}\,u + O(u^2)$ gives
\ba
\frac{1}{2}\log M_{\nu_n}(2s_n|\theta|,\,2s_n|\bar{\theta}|) \leq \sqrt{2}\,s_n||\bs{\theta}||_1 + O(\De_n^{1-2\alpha}).
\ea
Writing the result in terms of $\De_n^{1-\beta}$ and using $\alpha \geq \beta$ gives
\ba
\frac{1}{\De_n^{1-\beta}}\log\int_{K_\delta^c} e^{s_n \bs{\theta}\cdot\bs{\eta}} d\nu_n \leq \sqrt{2}\De_n^{\beta-\alpha}(|\theta| + | \bar{\theta}|) + O(\De_n^{\beta-2\alpha}) +\frac{1}{2} \frac{\log\nu_n(K_\delta^c)}{\De_n^{1-\beta}}.
\ea
The first two terms remain $O(1)$ since $\alpha \geq \beta$, while the $(\star)$ condition sends the last term to $-\infty$, so the entire contribution is negligible in the $\limsup$. Taking the limit supremum of the non-vanishing term, we have
\ba
\limsup_{n\to\infty}\frac{1}{s_n}\int e^{s_n \bs{\theta}\cdot\bs{\eta} }d\nu_n \leq h_K(\bs{\theta}) + ||\bs{\theta}||\delta .
\ea
Since the bound holds for all $\delta>0$ and the LHS is independent of $\delta$, we can take $\delta \to 0^+$ and obtain $(\mathrm{LHS})\leq h_K(\bs{\theta})$.

To compute a lower bound, fix $\bs{\theta}$ and let $\bs{\eta}_* \in \text{supp}(\nu)$ achieve $\bs{\theta}\cdot\bs{\eta}_* = h_K(\bs{\theta})$, which exists by the compactness of $\nu$. Fix $\epsilon>0$ and consider the open ball $B = B(\bs{\eta}_*,\epsilon)$ in $\mathbb{R}^2$. Since $\bs{\eta}_*\in\text{supp}(\nu)$, we have the Portmanteau theorem applied to $\{\nu_n\}$, which weakly converges to $\nu$, to obtain the lower bound
\ba
\liminf_{n\to\infty} \nu_n(B) \geq \nu(B) >0.
\ea
In particular, $\nu_n(B) \geq c_\epsilon$ for all large enough $n$ and some positive constant $c_\epsilon$. We can restrict the integral to $B$ to find
\ba
\int e^{s_n \bs{\theta}\cdot \bs{\eta}} d\nu_n \geq \int_B e^{s_n \bs{\theta}\cdot \bs{\eta}} d\nu_n \geq e^{s_n(h_K(\bs{\theta}) - ||\bs{\theta}||\epsilon)}\nu_n(B).
\ea
Taking a $\log$ of both sides and dividing by $s_n$, we have

\ba
\frac{1}{s_n}\log \int e^{s_n \bs{\theta}\cdot \bs{\eta}} d\nu_n  \geq h_K(\bs{\theta}) -|| \bs{\theta}||\epsilon + \log(c_\epsilon)/s_n.
\ea
Since $s_n \to \infty$ for any $\alpha<1$, the last term vanishes. Taking the $\liminf$ and sending $\epsilon \to 0$ gives
\ba
\liminf_{n\to\infty}\frac{1}{s_n}\log \int e^{s_n \bs{\theta}\cdot \bs{\eta}} d\nu_n  \geq h_K(\bs{\theta}),
\ea
where combining with the upper bound and applying the squeeze theorem gives eq.~(\ref{laplaceintegralasym}) as desired.

We now prove the reverse statement.  Let $s_n = \De_n^{1-\beta}$ and $\bs{\theta} = \bs{x}/\sqrt{2}$. Assume that for all $\alpha \geq \beta$ and $\bs{x} \in \mathbb{R}^2$, matching holds:
\ba
\lambda(\alpha;\bs{x}) = \tilde{\lambda}_\nu(\bs{x}) = h_K(\bs{\theta}) \equiv \sup_{\bs{\eta}\in K}\bs{\theta}\cdot\bs{\eta},
\ea
where $K = \mathrm{conv}(\mathrm{supp}(\nu))$. We need to show that for every $\delta>0$,
\ba
\lim_{n\to\infty}\frac{\log(\nu_n(K_\delta^c))}{s_n} = -\infty.
\ea
The key observation is that matching identifies $h_K(\bs{\theta})$ as the limiting rescaled cumulant generating function (CGF) of $\{\nu_n\}$ at speed $s_n$. To see this, recall the coherent state approximation from the forward direction: for any $\beta>0$,
\ba
\frac{1}{s_n}\log|\mathcal{G}_n(\bs{x}/\De_n^\beta)| = \frac{1}{s_n}\log\int e^{s_n \bs{\theta}\cdot\bs{\eta}}d\nu_n(\bs{\eta}) + O(\De_n^{-\beta}).
\ea
The error term vanishes as $n \to \infty$ for any $\beta > 0$, so matching at $\alpha = \beta$ gives convergence of the rescaled CGF:
\ba
\label{rescaledCGFconvergence}
\frac{1}{s_n}\log\int e^{s_n\bs{\theta}\cdot\bs{\eta}}d\nu_n(\bs{\eta}) \to h_K(\bs{\theta}) \quad \forall \bs{\theta}\in\mathbb{R}^2.
\ea
Note that $h_K$ is the support function of the compact convex set $K$, so it is convex, finite on all of $\mathbb{R}^2$, and continuous. In particular, $h_K$ is finite in a neighborhood of $\bs{0}$. This places us in the setting of the G\"artner-Ellis theorem \cite{DemboZeitouni1998}, which provides the following unconditional upper bound: for any closed set $F\subset\mathbb{R}^2$,
\ba
\limsup_{n\to\infty}\frac{1}{s_n}\log\nu_n(F) \leq - \inf_{\bs{\eta}\in F} h^*(\bs{\eta}),
\ea
where $h_K^*$ is the Legendre-Fenchel transform of $h_K$. A standard result in convex analysis gives $h_K^* = \iota_K$, the convex indicator function of $K$:
\ba
h_K^*(\bs{\eta}) = \sup_{\bs{\theta}\in\mathbb{R}^2}\left\{\bs{\theta}\cdot\bs{\eta} - \sup_{\bs{\zeta}\in K}\bs{\theta}\cdot\bs{\zeta}\right\} = \begin{cases}0 & \bs{\eta}\in K \\ +\infty & \bs{\eta}\notin K\end{cases}.
\ea
Now fix $\delta > 0$ and let $F_\delta = \{\bs{\eta}\in\mathbb{R}^2\mid d(\bs{\eta},K)\geq \delta\}$, which is closed with $K_\delta^c \subset F_\delta$. Since $F_\delta$ is disjoint from $K$, we have $\inf_{\bs{\eta}\in F_\delta}\iota_K(\bs{\eta}) = +\infty$. Applying the G\"artner-Ellis upper bound to $F_\delta$ and using $\nu_n(K_\delta^c)\leq \nu_n(F_\delta)$:
\ba
\limsup_{n\to\infty} \frac{1}{s_n}\log\nu_n(K_\delta^c) \leq \limsup_{n\to\infty} \frac{1}{s_n}\log\nu_n(F_\delta) \leq -\infty.
\ea
Since $\delta > 0$ was arbitrary, this establishes the $(\star)$ condition.

\begin{rem}
Two features of this argument are worth highlighting. First, the proof uses only the G\"artner-Ellis \textit{upper bound}, which requires no regularity assumptions on the limiting CGF beyond pointwise convergence and finiteness in a neighborhood of the origin. In particular, the essential smoothness condition required for the full G\"artner-Ellis LDP lower bound is not needed. This is crucial, since $h_K$ fails to be differentiable whenever $K$ has flat faces (e.g.~when $K$ is a polytope). Second, the Legendre-Fenchel dual of a support function is always a convex indicator function, which takes the value $+\infty$ everywhere outside $K$. This is a much stronger statement than the ``good rate function" condition that the level sets of the Legendre-Fenchel dual are compact: it says that deviations outside $K$ are not merely exponentially rare, but \textit{super-exponentially} suppressed at any sub-leading speed relative to the CGF convergence.
\end{rem}

\end{appendix}

\newpage
\bibliography{biblio}{}

@article{Maldacena:1997re,
    author = "Maldacena, Juan Martin",
    title = "{The Large N limit of superconformal field theories and supergravity}",
    eprint = "hep-th/9711200",
    archivePrefix = "arXiv",
    reportNumber = "HUTP-97-A097, HUTP-98-A097",
    doi = "10.4310/ATMP.1998.v2.n2.a1",
    journal = "Adv. Theor. Math. Phys.",
    volume = "2",
    pages = "231--252",
    year = "1998"
}

@article{Gubser:1998bc,
    author = "Gubser, S. S. and Klebanov, Igor R. and Polyakov, Alexander M.",
    title = "{Gauge theory correlators from noncritical string theory}",
    eprint = "hep-th/9802109",
    archivePrefix = "arXiv",
    reportNumber = "PUPT-1767",
    doi = "10.1016/S0370-2693(98)00377-3",
    journal = "Phys. Lett. B",
    volume = "428",
    pages = "105--114",
    year = "1998"
}

@ARTICLE{Simmons-Duffin2016-nn,
  title         = "{TASI} Lectures on the Conformal Bootstrap",
  author        = "Simmons-Duffin, David",
  month         =  feb,
  year          =  2016,
  copyright     = "http://arxiv.org/licenses/nonexclusive-distrib/1.0/",
  archivePrefix = "arXiv",
  primaryClass  = "hep-th",
  eprint        = "1602.07982"
}

@article{Benjamin2023-qo,
    author = "Benjamin, Nathan and Lee, Jaeha and Ooguri, Hirosi and Simmons-Duffin, David",
    title = "{Universal asymptotics for high energy CFT data}",
    eprint = "2306.08031",
    archivePrefix = "arXiv",
    primaryClass = "hep-th",
    reportNumber = "CALT-TH 2023-014, IPMU 23-0020",
    doi = "10.1007/JHEP03(2024)115",
    journal = "JHEP",
    volume = "03",
    pages = "115",
    year = "2024"
}

@article{Abajian2023-xw,
    author = "Abajian, Jacob and Aprile, Francesco and Myers, Robert C. and Vieira, Pedro",
    title = "{Holography and correlation functions of huge operators: spacetime bananas}",
    eprint = "2306.15105",
    archivePrefix = "arXiv",
    primaryClass = "hep-th",
    doi = "10.1007/JHEP12(2023)058",
    journal = "JHEP",
    volume = "12",
    pages = "058",
    year = "2023"
}

@article{Paulos2016-mh,
    author = "Paulos, Miguel F. and Penedones, Joao and Toledo, Jonathan and van Rees, Balt C. and Vieira, Pedro",
    title = "{The S-matrix bootstrap. Part I: QFT in AdS}",
    eprint = "1607.06109",
    archivePrefix = "arXiv",
    primaryClass = "hep-th",
    reportNumber = "CERN-TH-2016-162",
    doi = "10.1007/JHEP11(2017)133",
    journal = "JHEP",
    volume = "11",
    pages = "133",
    year = "2017"
}

@article{Kim2015-sg,
    author = "Kim, Hyungrok and Kravchuk, Petr and Ooguri, Hirosi",
    title = "{Reflections on Conformal Spectra}",
    eprint = "1510.08772",
    archivePrefix = "arXiv",
    primaryClass = "hep-th",
    reportNumber = "CALT-TH-2015-053, IPMU-15-0184",
    doi = "10.1007/JHEP04(2016)184",
    journal = "JHEP",
    volume = "04",
    pages = "184",
    year = "2016"
}

@article{Witten:1998qj,
    author = "Witten, Edward",
    title = "{Anti-de Sitter space and holography}",
    eprint = "hep-th/9802150",
    archivePrefix = "arXiv",
    reportNumber = "IASSNS-HEP-98-15",
    doi = "10.4310/ATMP.1998.v2.n2.a2",
    journal = "Adv. Theor. Math. Phys.",
    volume = "2",
    pages = "253--291",
    year = "1998"
}

@article{churchill1,
	author = {Edmund Churchill},
	doi = {10.1214/aoms/1177730987},
	journal = {The Annals of Mathematical Statistics},
	number = {2},
	pages = {244 -- 246},
	publisher = {Institute of Mathematical Statistics},
	title = {{Information Given by Odd Moments}},
	url = {https://doi.org/10.1214/aoms/1177730987},
	volume = {17},
	year = {1946}}

@article{ferrero2,
    author = "Ferrero, Pietro and Meneghelli, Carlo",
    title = "{Unmixing the Wilson line defect CFT. Part II. Analytic bootstrap}",
    eprint = "2312.12551",
    archivePrefix = "arXiv",
    primaryClass = "hep-th",
    doi = "10.1007/JHEP06(2024)010",
    journal = "JHEP",
    volume = "06",
    pages = "010",
    year = "2024"
}

@article{vanRees:2024xkb,
    author = "van Rees, Balt C.",
    title = "{Theorems for the lightcone bootstrap}",
    eprint = "2412.06907",
    archivePrefix = "arXiv",
    primaryClass = "hep-th",
    doi = "10.21468/SciPostPhys.18.6.207",
    journal = "SciPost Phys.",
    volume = "18",
    number = "6",
    pages = "207",
    year = "2025"
}

@article{Hogervorst:2017sfd,
    author = "Hogervorst, Matthijs and van Rees, Balt C.",
    title = "{Crossing symmetry in alpha space}",
    eprint = "1702.08471",
    archivePrefix = "arXiv",
    primaryClass = "hep-th",
    reportNumber = "YITP-SB-17-7",
    doi = "10.1007/JHEP11(2017)193",
    journal = "JHEP",
    volume = "11",
    pages = "193",
    year = "2017"
}

@article{Hogervorst:2013sma,
    author = "Hogervorst, Matthijs and Rychkov, Slava",
    title = "{Radial Coordinates for Conformal Blocks}",
    eprint = "1303.1111",
    archivePrefix = "arXiv",
    primaryClass = "hep-th",
    reportNumber = "CERN-PH-TH-2013-043, LPTENS-13-05",
    doi = "10.1103/PhysRevD.87.106004",
    journal = "Phys. Rev. D",
    volume = "87",
    pages = "106004",
    year = "2013"
}

@article{Pappadopulo:2012jk,
    author = "Pappadopulo, Duccio and Rychkov, Slava and Espin, Johnny and Rattazzi, Riccardo",
    title = "{OPE Convergence in Conformal Field Theory}",
    eprint = "1208.6449",
    archivePrefix = "arXiv",
    primaryClass = "hep-th",
    reportNumber = "LPTENS-12-31",
    doi = "10.1103/PhysRevD.86.105043",
    journal = "Phys. Rev. D",
    volume = "86",
    pages = "105043",
    year = "2012"
}

@article{Hijano:2015zsa,
    author = "Hijano, Eliot and Kraus, Per and Perlmutter, Eric and Snively, River",
    title = "{Witten Diagrams Revisited: The AdS Geometry of Conformal Blocks}",
    eprint = "1508.00501",
    archivePrefix = "arXiv",
    primaryClass = "hep-th",
    doi = "10.1007/JHEP01(2016)146",
    journal = "JHEP",
    volume = "01",
    pages = "146",
    year = "2016"
}

@article{Mukhametzhanov:2018zja,
    author = "Mukhametzhanov, Baur and Zhiboedov, Alexander",
    title = "{Analytic Euclidean Bootstrap}",
    eprint = "1808.03212",
    archivePrefix = "arXiv",
    primaryClass = "hep-th",
    doi = "10.1007/JHEP10(2019)270",
    journal = "JHEP",
    volume = "10",
    pages = "270",
    year = "2019"
}

@article{Qiao:2017xif,
    author = "Qiao, Jiaxin and Rychkov, Slava",
    title = "{A tauberian theorem for the conformal bootstrap}",
    eprint = "1709.00008",
    archivePrefix = "arXiv",
    primaryClass = "hep-th",
    reportNumber = "CERN-TH-2017-176",
    doi = "10.1007/JHEP12(2017)119",
    journal = "JHEP",
    volume = "12",
    pages = "119",
    year = "2017"
}

@article{Asplund:2014coa,
    author = "Asplund, Curtis T. and Bernamonti, Alice and Galli, Federico and Hartman, Thomas",
    title = "{Holographic Entanglement Entropy from 2d CFT: Heavy States and Local Quenches}",
    eprint = "1410.1392",
    archivePrefix = "arXiv",
    primaryClass = "hep-th",
    doi = "10.1007/JHEP02(2015)171",
    journal = "JHEP",
    volume = "02",
    pages = "171",
    year = "2015"
}

@article{Ceyhan:2025qrj,
    author = "Ceyhan, Fikret and Faulkner, Thomas",
    title = "{Bounds on CFT correlations from the thermal partition function}",
    eprint = "2510.24042",
    archivePrefix = "arXiv",
    primaryClass = "hep-th",
    month = "10",
    year = "2025"
}

@article{Karlsson:2021duj,
    author = "Karlsson, Robin and Parnachev, Andrei and Tadi{\'c}, Petar",
    title = "{Thermalization in large-N CFTs}",
    eprint = "2102.04953",
    archivePrefix = "arXiv",
    primaryClass = "hep-th",
    doi = "10.1007/JHEP09(2021)205",
    journal = "JHEP",
    volume = "09",
    pages = "205",
    year = "2021"
}

@article{Alkalaev:2024knk,
    author = "Alkalaev, Konstantin and Litvinov, Pavel",
    title = "{A note on the large-c conformal block asymptotics and {\ensuremath{\alpha}}-heavy operators}",
    eprint = "2407.12986",
    archivePrefix = "arXiv",
    primaryClass = "hep-th",
    doi = "10.1016/j.nuclphysb.2024.116741",
    journal = "Nucl. Phys. B",
    volume = "1009",
    pages = "116741",
    year = "2024"
}

@article{Arutyunov:2003ad,
    author = "Arutyunov, G. and Penati, S. and Santambrogio, A. and Sokatchev, E.",
    title = "{Four point correlators of BPS operators in N=4 SYM at order g**4}",
    eprint = "hep-th/0305060",
    archivePrefix = "arXiv",
    reportNumber = "AEI-2003-039, BICOCCA-FT-03-13, IFUM-758-FT, LAPTH-980-03",
    doi = "10.1016/j.nuclphysb.2003.07.027",
    journal = "Nucl. Phys. B",
    volume = "670",
    pages = "103--147",
    year = "2003"
}

@article{Eden:2012fe,
    author = "Eden, Burkhard and Heslop, Paul and Korchemsky, Gregory P. and Smirnov, Vladimir A. and Sokatchev, Emery",
    title = "{Five-loop Konishi in N=4 SYM}",
    eprint = "1202.5733",
    archivePrefix = "arXiv",
    primaryClass = "hep-th",
    reportNumber = "CERN-PH-TH-2012-053, DCPT-12-11, HU-EP-12-07, HU-MATH-2012-04, LAPTH-011-12, IPHT-T12-014",
    doi = "10.1016/j.nuclphysb.2012.04.015",
    journal = "Nucl. Phys. B",
    volume = "862",
    pages = "123--166",
    year = "2012"
}

@article{Chicherin:2018avq,
    author = "Chicherin, Dmitry and Georgoudis, Alessandro and Gon{\c{c}}alves, Vasco and Pereira, Raul",
    title = "{All five-loop planar four-point functions of half-BPS operators in $\mathcal N=4$ SYM}",
    eprint = "1809.00551",
    archivePrefix = "arXiv",
    primaryClass = "hep-th",
    doi = "10.1007/JHEP11(2018)069",
    journal = "JHEP",
    volume = "11",
    pages = "069",
    year = "2018"
}

@article{Chicherin_2016,
   title={All three-loop four-point correlators of half-BPS operators in planar $ \mathcal{N}  = 4$ SYM},
   volume={2016},
   ISSN={1029-8479},
   url={http://dx.doi.org/10.1007/JHEP08(2016)053},
   DOI={10.1007/jhep08(2016)053},
   number={8},
   journal={Journal of High Energy Physics},
   publisher={Springer Science and Business Media LLC},
   author={Chicherin, Dmitry and Drummond, James and Heslop, Paul and Sokatchev, Emery},
   year={2016},
   month=aug }

@article{Balasubramanian:2001nh,
    author = "Balasubramanian, Vijay and Berkooz, Micha and Naqvi, Asad and Strassler, Matthew J.",
    title = "{Giant gravitons in conformal field theory}",
    eprint = "hep-th/0107119",
    archivePrefix = "arXiv",
    reportNumber = "UPR-T-943, WIS-15-01-DPP",
    doi = "10.1088/1126-6708/2002/04/034",
    journal = "JHEP",
    volume = "04",
    pages = "034",
    year = "2002"
}

@article{Corley:2001zk,
    author = "Corley, Steve and Jevicki, Antal and Ramgoolam, Sanjaye",
    title = "{Exact correlators of giant gravitons from dual N=4 SYM theory}",
    eprint = "hep-th/0111222",
    archivePrefix = "arXiv",
    reportNumber = "BROWN-HET-1292",
    doi = "10.4310/ATMP.2001.v5.n4.a6",
    journal = "Adv. Theor. Math. Phys.",
    volume = "5",
    pages = "809--839",
    year = "2002"
}

@article{McGreevy:2000cw,
    author = "McGreevy, John and Susskind, Leonard and Toumbas, Nicolaos",
    title = "{Invasion of the giant gravitons from Anti-de Sitter space}",
    eprint = "hep-th/0003075",
    archivePrefix = "arXiv",
    reportNumber = "SU-ITP-00-09",
    doi = "10.1088/1126-6708/2000/06/008",
    journal = "JHEP",
    volume = "06",
    pages = "008",
    year = "2000"
}

@article{Chen:2025yxg,
    author = "Chen, Junding and Jiang, Yunfeng and Zhou, Xinan",
    title = "{Giant Graviton Correlators as Defect Systems}",
    eprint = "2503.22987",
    archivePrefix = "arXiv",
    primaryClass = "hep-th",
    reportNumber = "USTC-ICTS/PCFT-25-14",
    doi = "10.1103/hg9p-hblr",
    journal = "Phys. Rev. Lett.",
    volume = "135",
    number = "8",
    pages = "081602",
    year = "2025"
}

@article{Kazakov:2024ald,
    author = "Kazakov, Vladimir and Murali, Harish and Vieira, Pedro",
    title = "{Huge BPS operators and fluid dynamics in $ \mathcal{N}=4 $ SYM}",
    eprint = "2406.01798",
    archivePrefix = "arXiv",
    primaryClass = "hep-th",
    doi = "10.1007/JHEP09(2025)142",
    journal = "JHEP",
    volume = "09",
    pages = "142",
    year = "2025"
}

@article{Ivanovskiy:2024vel,
    author = "Ivanovskiy, Vyacheslav and Komatsu, Shota and Mishnyakov, Victor and Terziev, Nikolay and Zaigraev, Nikita and Zarembo, Konstantin",
    title = "{Vacuum Condensates on the Coulomb Branch}",
    eprint = "2405.19043",
    archivePrefix = "arXiv",
    primaryClass = "hep-th",
    month = "5",
    year = "2024"
}

@article{Grisaru:2000zn,
    author = "Grisaru, Marcus T. and Myers, Robert C. and Tafjord, Oyvind",
    title = "{SUSY and goliath}",
    eprint = "hep-th/0008015",
    archivePrefix = "arXiv",
    reportNumber = "MCGILL-00-21, BRX-TH-472",
    doi = "10.1088/1126-6708/2000/08/040",
    journal = "JHEP",
    volume = "08",
    pages = "040",
    year = "2000"
}

@article{Kraus:2018pax,
    author = "Kraus, Per and Sivaramakrishnan, Allic",
    title = "{Light-state Dominance from the Conformal Bootstrap}",
    eprint = "1812.02226",
    archivePrefix = "arXiv",
    primaryClass = "hep-th",
    doi = "10.1007/JHEP08(2019)013",
    journal = "JHEP",
    volume = "08",
    pages = "013",
    year = "2019"
}

@article{Poland:2024hvb,
    author = "Poland, David and Rogelberg, Gordon",
    title = "{Moments and saddles of heavy CFT correlators}",
    eprint = "2501.00092",
    archivePrefix = "arXiv",
    primaryClass = "hep-th",
    doi = "10.1007/JHEP10(2025)100",
    journal = "JHEP",
    volume = "10",
    pages = "100",
    year = "2025"
}

@article{Dey:2024nje,
    author = "Dey, Indranil and Pal, Sridip and Qiao, Jiaxin",
    title = "{A universal inequality on the unitary 2D CFT partition function}",
    eprint = "2410.18174",
    archivePrefix = "arXiv",
    primaryClass = "hep-th",
    reportNumber = "CALT-TH 2024-039",
    doi = "10.1007/JHEP07(2025)163",
    journal = "JHEP",
    volume = "07",
    pages = "163",
    year = "2025"
}

@article{Hartman:2014oaa,
    author = "Hartman, Thomas and Keller, Christoph A. and Stoica, Bogdan",
    title = "{Universal Spectrum of 2d Conformal Field Theory in the Large c Limit}",
    eprint = "1405.5137",
    archivePrefix = "arXiv",
    primaryClass = "hep-th",
    reportNumber = "CALT-68-2889, RUNHETC-2014-07",
    doi = "10.1007/JHEP09(2014)118",
    journal = "JHEP",
    volume = "09",
    pages = "118",
    year = "2014"
}

@article{Qiao:2020bcs,
    author = "Qiao, Jiaxin",
    title = "{Classification of Convergent OPE Channels for Lorentzian CFT Four-Point Functions}",
    eprint = "2005.09105",
    archivePrefix = "arXiv",
    primaryClass = "hep-th",
    doi = "10.21468/SciPostPhys.13.4.093",
    journal = "SciPost Phys.",
    volume = "13",
    number = "4",
    pages = "093",
    year = "2022"
}

@article{Caron-Huot:2017vep,
    author = "Caron-Huot, Simon",
    title = "{Analyticity in Spin in Conformal Theories}",
    eprint = "1703.00278",
    archivePrefix = "arXiv",
    primaryClass = "hep-th",
    doi = "10.1007/JHEP09(2017)078",
    journal = "JHEP",
    volume = "09",
    pages = "078",
    year = "2017"
}

@article{Vescovi:2021fjf,
    author = "Vescovi, Edoardo",
    title = "{Four-point function of determinant operators in $\mathcal{N}=4$ SYM}",
    eprint = "2101.05117",
    archivePrefix = "arXiv",
    primaryClass = "hep-th",
    doi = "10.1103/PhysRevD.103.106001",
    journal = "Phys. Rev. D",
    volume = "103",
    number = "10",
    pages = "106001",
    year = "2021"
}

@article{Hartman:2015lfa,
    author = "Hartman, Thomas and Jain, Sachin and Kundu, Sandipan",
    title = "{Causality Constraints in Conformal Field Theory}",
    eprint = "1509.00014",
    archivePrefix = "arXiv",
    primaryClass = "hep-th",
    doi = "10.1007/JHEP05(2016)099",
    journal = "JHEP",
    volume = "05",
    pages = "099",
    year = "2016"
}

@article{Kravchuk:2020scc,
    author = "Kravchuk, Petr and Qiao, Jiaxin and Rychkov, Slava",
    title = "{Distributions in CFT. Part I. Cross-ratio space}",
    eprint = "2001.08778",
    archivePrefix = "arXiv",
    primaryClass = "hep-th",
    doi = "10.1007/JHEP05(2020)137",
    journal = "JHEP",
    volume = "05",
    pages = "137",
    year = "2020"
}

@book{DemboZeitouni1998,
  author    = {Dembo, Amir and Zeitouni, Ofer},
  title     = {Large Deviations Techniques and Applications},
  publisher = {Springer},
  year      = {1998},
  edition   = {2nd},
  series    = {Applications of Mathematics},
  volume    = {38},
  doi       = {10.1007/978-1-4612-5320-4}
}

@article{Chen:2026ium,
    author = "Chen, Junding and Jiang, Yunfeng and Zhou, Xinan",
    title = "{Defect Approach to Giant Graviton Dynamics}",
    eprint = "2602.13570",
    archivePrefix = "arXiv",
    primaryClass = "hep-th",
    reportNumber = "USTC-ICTS/PCFT-26-14",
    month = "2",
    year = "2026"
}

@book{Chung2001,
  author    = {Chung, Kai Lai},
  title     = {A Course in Probability Theory},
  edition   = {3},
  publisher = {Academic Press},
  address   = {San Diego, CA},
  year      = {2001},
}

@book{Durrett2019,
  author    = {Durrett, Rick},
  title     = {Probability: Theory and Examples},
  edition   = {5},
  publisher = {Cambridge University Press},
  address   = {Cambridge},
  year      = {2019},
  series    = {Cambridge Series in Statistical and Probabilistic Mathematics},
}

@inbook{McGillemCooper1984,
  author    = {McGillem, Clare D. and Cooper, George R.},
  title     = {Continuous and Discrete Signal and System Analysis},
  edition   = {2},
  publisher = {Holt, Rinehart and Winston},
  year      = {1984},
  pages     = {118},
  isbn      = {0-03-061703-0},
}

@article{Heyl:2017blm,
    author = "Heyl, Markus",
    title = "{Dynamical quantum phase transitions: a review}",
    eprint = "1709.07461",
    archivePrefix = "arXiv",
    primaryClass = "cond-mat.stat-mech",
    doi = "10.1088/1361-6633/aaaf9a",
    journal = "Rept. Prog. Phys.",
    volume = "81",
    number = "5",
    pages = "054001",
    year = "2018"
}

@article{vanRees:2022zmr,
    author = "van Rees, Balt C. and Zhao, Xiang",
    title = "{Quantum Field Theory in AdS Space instead of Lehmann-Symanzik-Zimmerman Axioms}",
    eprint = "2210.15683",
    archivePrefix = "arXiv",
    primaryClass = "hep-th",
    doi = "10.1103/PhysRevLett.130.191601",
    journal = "Phys. Rev. Lett.",
    volume = "130",
    number = "19",
    pages = "191601",
    year = "2023"
}

@article{vanRees:2023fcf,
    author = "van Rees, Balt C. and Zhao, Xiang",
    title = "{Flat-space Partial Waves From Conformal OPE Densities}",
    eprint = "2312.02273",
    archivePrefix = "arXiv",
    primaryClass = "hep-th",
    month = "12",
    year = "2023"
}

@book{Korevaar2004,
  author = {Korevaar, Jacob},
  title = {Tauberian Theory: {A} Century of Developments},
  series = {Grundlehren der mathematischen Wissenschaften},
  volume = {329},
  publisher = {Springer},
  address = {Berlin, Heidelberg},
  year = {2004},
  isbn = {978-3-540-21058-0},
  doi = {10.1007/978-3-662-10225-1}
}

@article{Strassler:1992zr,
    author = "Strassler, Matthew J.",
    title = "{Field theory without Feynman diagrams: One loop effective actions}",
    eprint = "hep-ph/9205205",
    archivePrefix = "arXiv",
    reportNumber = "SLAC-PUB-5757",
    doi = "10.1016/0550-3213(92)90098-V",
    journal = "Nucl. Phys. B",
    volume = "385",
    pages = "145--184",
    year = "1992"
}

@article{Maxfield:2017rkn,
    author = "Maxfield, Henry",
    title = "{A view of the bulk from the worldline}",
    eprint = "1712.00885",
    archivePrefix = "arXiv",
    primaryClass = "hep-th",
    month = "12",
    year = "2017"
}

@article{Kulkarni:2024ghc,
    author = "Kulkarni, Raghotham A. and Rahul and Bhattacharyya, Soham and Kothawala, Dawood",
    title = "{Worldline EFT treatment of quadratic and cubic gravity theories}",
    eprint = "2410.01266",
    archivePrefix = "arXiv",
    primaryClass = "gr-qc",
    doi = "10.1103/yx1z-1wzs",
    journal = "Phys. Rev. D",
    volume = "112",
    number = "12",
    pages = "124028",
    year = "2025"
}

@article{Bohra:2021zyw,
    author = "Bohra, Hardik and Kakkar, Ashish and Sivaramakrishnan, Allic",
    title = "{Information geometry and holographic correlators}",
    eprint = "2110.15359",
    archivePrefix = "arXiv",
    primaryClass = "hep-th",
    doi = "10.1007/JHEP04(2022)037",
    journal = "JHEP",
    volume = "04",
    pages = "037",
    year = "2022"
}

@article{Beisert:2003tq,
    author = "Beisert, Niklas and Staudacher, Matthias",
    title = "{The N=4 SYM integrable super spin chain}",
    eprint = "hep-th/0307042",
    archivePrefix = "arXiv",
    doi = "10.1016/j.nuclphysb.2003.08.015",
    journal = "Nucl. Phys. B",
    volume = "670",
    pages = "439--463",
    year = "2003"
}

@article{Beisert:2010jr,
    author = "Beisert, Niklas and others",
    title = "{Review of AdS/CFT Integrability: An Overview}",
    eprint = "1012.3982",
    archivePrefix = "arXiv",
    primaryClass = "hep-th",
    doi = "10.1007/s11005-011-0529-2",
    journal = "Lett. Math. Phys.",
    volume = "99",
    pages = "3--32",
    year = "2012"
}

@article{Jiang:2019xdz,
    author = "Jiang, Yunfeng and Komatsu, Shota and Vescovi, Edoardo",
    title = "{Structure constants in $\mathcal{N}=4$ SYM at finite coupling as worldsheet g-function}",
    eprint = "1906.07733",
    archivePrefix = "arXiv",
    primaryClass = "hep-th",
    doi = "10.1007/JHEP07(2020)037",
    journal = "JHEP",
    volume = "07",
    pages = "037",
    year = "2020"
}

@article{Penedones:2010ue,
    author = "Penedones, Joao",
    title = "{Writing CFT correlation functions as AdS scattering amplitudes}",
    eprint = "1011.1485",
    archivePrefix = "arXiv",
    primaryClass = "hep-th",
    doi = "10.1007/JHEP03(2011)025",
    journal = "JHEP",
    volume = "03",
    pages = "025",
    year = "2011"
}

@article{Paulos:2016fap,
    author = "Paulos, Miguel F. and Penedones, Joao and Toledo, Jonathan and van Rees, Balt C. and Vieira, Pedro",
    title = "{The S-matrix bootstrap. Part I: QFT in AdS}",
    eprint = "1607.06109",
    archivePrefix = "arXiv",
    primaryClass = "hep-th",
    doi = "10.1007/JHEP11(2017)133",
    journal = "JHEP",
    volume = "11",
    pages = "133",
    year = "2017"
}

@article{Paulos:2017fhb,
    author = "Paulos, Miguel F. and Penedones, Joao and Toledo, Jonathan and van Rees, Balt C. and Vieira, Pedro",
    title = "{The S-matrix bootstrap II: two dimensional amplitudes}",
    eprint = "1607.06110",
    archivePrefix = "arXiv",
    primaryClass = "hep-th",
    doi = "10.1007/JHEP11(2017)143",
    journal = "JHEP",
    volume = "11",
    pages = "143",
    year = "2017"
}

@article{Ruppeiner:1995zz,
    author = "Ruppeiner, George",
    title = "{Riemannian geometry in thermodynamic fluctuation theory}",
    doi = "10.1103/RevModPhys.67.605",
    journal = "Rev. Mod. Phys.",
    volume = "67",
    pages = "605--659",
    year = "1995",
    note = "[Erratum: Rev.Mod.Phys. 68, 313 (1996)]"
}

@book{Amari:2000ig,
    author = "Amari, Shun-ichi and Nagaoka, Hiroshi",
    title = "{Methods of Information Geometry}",
    publisher = "American Mathematical Society",
    series = "Translations of Mathematical Monographs",
    volume = "191",
    year = "2000"
}

@article{Herrmann:2001prl,
    author = "Herrmann, J. and Griebner, U. and Zhavoronkov, N. and Husakou, A. and Nickel, D. and Knight, J. C. and Wadsworth, W. J. and Russell, P. St. J. and Korn, G.",
    title = "{Experimental Evidence for Supercontinuum Generation by Fission of Higher-Order Solitons in Photonic Fibers}",
    doi = "10.1103/PhysRevLett.87.203901",
    journal = "Phys. Rev. Lett.",
    volume = "87",
    pages = "203901",
    year = "2001"
}

@article{Husakou:2002josab,
    author = "Husakou, A. V. and Herrmann, J.",
    title = "{Supercontinuum generation, four-wave mixing, and fission of higher-order solitons in photonic-crystal fibers}",
    doi = "10.1364/JOSAB.19.002171",
    journal = "J. Opt. Soc. Am. B",
    volume = "19",
    number = "9",
    pages = "2171--2182",
    year = "2002"
}

@article{Dudley:2006rmp,
    author ="Dudley, John M. and Genty, Goery and Coen, St\'ephane",

    title = "{Supercontinuum generation in photonic crystal fiber}",
    doi = "10.1103/RevModPhys.78.1135",
    journal = "Rev. Mod. Phys.",
    volume = "78",
    pages = "1135--1184",
    year = "2006"
}

@book{Agrawal:2019nfo,
    author = "Agrawal, Govind P.",
    title = "{Nonlinear Fiber Optics}",
    edition = "6th",
    publisher = "Academic Press",
    year = "2019"
}

@article{Linardopoulos:2026mut,
    author = "Linardopoulos, Georgios and Park, Chanyong",
    title = "{Heavy holographic correlators in defect conformal field theories}",
    eprint = "2601.15736",
    archivePrefix = "arXiv",
    primaryClass = "hep-th",
    reportNumber = "APCTP Pre2025-009",
    month = "1",
    year = "2026"
}

@book{Williams1991,
    author    = {Williams, David},
    title     = {Probability with Martingales},
    publisher = {Cambridge University Press},
    year      = {1991},
    series    = {Cambridge Mathematical Textbooks},
    address   = {Cambridge},
    isbn      = {978-0-521-40605-5}
}

@book{Billingsley1999,
    author    = {Billingsley, Patrick},
    title     = {Convergence of Probability Measures},
    edition   = {2nd},
    publisher = {John Wiley \& Sons},
    year      = {1999},
    series    = {Wiley Series in Probability and Statistics},
    address   = {New York},
    isbn      = {978-0-471-19745-4}
}

@article{Prokhorov1956,
    author    = {Prokhorov, Yu. V.},
    title     = {Convergence of Random Processes and Limit Theorems in Probability Theory},
    journal   = {Theory of Probability \& Its Applications},
    volume    = {1},
    number    = {2},
    pages     = {157--214},
    year      = {1956},
    doi       = {10.1137/1101016}
}

@book{Munkres2000,
    author    = {Munkres, James R.},
    title     = {Topology},
    edition   = {2nd},
    publisher = {Prentice Hall},
    year      = {2000},
    address   = {Upper Saddle River, NJ},
    isbn      = {978-0-131-81629-9}
}

@article{Das:2020jhy,
    author = "Das, Diptarka and Kusuki, Yuya and Pal, Sridip",
    title = "{Universality in asymptotic bounds and its saturation in 2D CFT}",
    eprint = "2011.02482",
    archivePrefix = "arXiv",
    primaryClass = "hep-th",
    doi = "10.1007/JHEP04(2021)288",
    journal = "JHEP",
    volume = "04",
    pages = "288",
    year = "2021"
}

@article{Fitzpatrick:2016ive,
    author = "Fitzpatrick, A. Liam and Kaplan, Jared and Li, Daliang and Wang, Junpu",
    title = "{On information loss in AdS$_3$/CFT$_2$}",
    eprint = "1603.08925",
    archivePrefix = "arXiv",
    primaryClass = "hep-th",
    doi = "10.1007/JHEP05(2016)109",
    journal = "JHEP",
    volume = "05",
    pages = "109",
    year = "2016"
}

@article{Kraus:2016nwo,
    author = "Kraus, Per and Maloney, Alexander",
    title = "{A Cardy formula for three-point coefficients or how the black hole got its spots}",
    eprint = "1608.03284",
    archivePrefix = "arXiv",
    primaryClass = "hep-th",
    doi = "10.1007/JHEP05(2017)160",
    journal = "JHEP",
    volume = "05",
    pages = "160",
    year = "2017"
}

@article{Das:2017vej,
    author = "Das, Diptarka and Datta, Shouvik and Pal, Sridip",
    title = "{Charged structure constants from modularity}",
    eprint = "1706.04612",
    archivePrefix = "arXiv",
    primaryClass = "hep-th",
    doi = "10.1007/JHEP11(2017)183",
    journal = "JHEP",
    volume = "11",
    pages = "183",
    year = "2017"
}

@article{Ghosh:2019rcj,
    author = "Ghosh, Animik and Maxfield, Henry and Turiaci, Gustavo J.",
    title = "{A universal Schwarzian sector in two-dimensional conformal field theories}",
    eprint = "1912.07654",
    archivePrefix = "arXiv",
    primaryClass = "hep-th",
    doi = "10.1007/JHEP05(2020)104",
    journal = "JHEP",
    volume = "05",
    pages = "104",
    year = "2020"
}

@article{Pal:2023cgk,
    author = "Pal, Sridip and Qiao, Jiaxin",
    title = "{Lightcone modular bootstrap and Tauberian theory: a Cardy-like formula for near-extremal black holes}",
    eprint = "2307.02587",
    archivePrefix = "arXiv",
    primaryClass = "hep-th",
    doi = "10.1007/s00023-024-01503-z",
    journal = "Annales Henri Poincare",
    volume = "26",
    pages = "787",
    year = "2025"
}

@article{Fitzpatrick:2015zha,
    author = "Fitzpatrick, A. Liam and Kaplan, Jared and Walters, Matthew T.",
    title = "{Virasoro conformal blocks and thermality from classical background fields}",
    eprint = "1501.05315",
    archivePrefix = "arXiv",
    primaryClass = "hep-th",
    doi = "10.1007/JHEP11(2015)200",
    journal = "JHEP",
    volume = "11",
    pages = "200",
    year = "2015"
}

@article{Collier:2019weq,
    author = "Collier, Scott and Maloney, Alexander and Maxfield, Henry and Tsiares, Ioannis",
    title = "{Universal dynamics of heavy operators in CFT$_2$}",
    eprint = "1912.00222",
    archivePrefix = "arXiv",
    primaryClass = "hep-th",
    doi = "10.1007/JHEP07(2020)074",
    journal = "JHEP",
    volume = "07",
    pages = "074",
    year = "2020"
}

@article{LIGOScientific:2016aoc,
    author = "{LIGO Scientific Collaboration and Virgo Collaboration}",
    collaboration = "LIGO Scientific, Virgo",
    title = "{Observation of Gravitational Waves from a Binary Black Hole Merger}",
    eprint = "1602.03837",
    archivePrefix = "arXiv",
    primaryClass = "gr-qc",
    doi = "10.1103/PhysRevLett.116.061102",
    journal = "Phys. Rev. Lett.",
    volume = "116",
    number = "6",
    pages = "061102",
    year = "2016"
}

@article{EventHorizonTelescope:2019dse,
    author = "{Event Horizon Telescope Collaboration}",
    collaboration = "Event Horizon Telescope",
    title = "{First M87 Event Horizon Telescope Results. I. The Shadow of the Supermassive Black Hole}",
    eprint = "1906.11238",
    archivePrefix = "arXiv",
    primaryClass = "astro-ph.GA",
    doi = "10.3847/2041-8213/ab0ec7",
    journal = "Astrophys. J. Lett.",
    volume = "875",
    pages = "L1",
    year = "2019"
}

@article{Bambusi1999,
  author    = {Bambusi, D. and Graffi, S. and Paul, T.},
  title     = {Long time semiclassical approximation of quantum flows:
               {A} proof of the {E}hrenfest time},
  journal   = {Asymptotic Analysis},
  volume    = {21},
  pages     = {149--160},
  year      = {1999}
}

@article{Kim:2023sig,
    author = "Kim, Seok and Kundu, Suman and Lee, Eunwoo and Lee, Jaeha and Minwalla, Shiraz and Patel, Chintan",
    title = "{`Grey Galaxies' as an endpoint of the Kerr-AdS superradiant instability}",
    journal = "JHEP",
    volume = "11",
    year = "2023",
    pages = "024",
    doi = "10.1007/JHEP11(2023)024",
    eprint = "2305.08922",
    archivePrefix = "arXiv",
    primaryClass = "hep-th"
}

@article{Bajaj:2024utv,
    author = "Bajaj, Kabir and Kundu, Suman and Lee, Eunwoo and Minwalla, Shiraz and Patel, Chintan",
    title = "{Grey Galaxies in $AdS_5$}",
    eprint = "2412.06904",
    archivePrefix = "arXiv",
    primaryClass = "hep-th",
    year = "2024"
}

@article{Brunello:2025rhh,
    author = "Brunello, Giacomo and Caron-Huot, Simon and Crisanti, Giulio and Giroux, Mathieu and Smith, Sid",
    title = "{High-energy evolution in planar QCD to three loops: the non-conformal contribution}",
    eprint = "2508.03794",
    archivePrefix = "arXiv",
    primaryClass = "hep-ph",
    doi = "10.1007/JHEP11(2025)055",
    journal = "JHEP",
    volume = "11",
    pages = "055",
    year = "2025"
}

@article{Mueller:2018llt,
    author = "Mueller, Alfred H.",
    title = "{Conformal spacelike-timelike correspondence in QCD}",
    eprint = "1804.07249",
    archivePrefix = "arXiv",
    primaryClass = "hep-th",
    doi = "10.1007/JHEP08(2018)139",
    journal = "JHEP",
    volume = "08",
    pages = "139",
    year = "2018"
}

@article{Chiang:2026nmd,
    author = "Chiang, Li-Yuan and Poland, David and Rogelberg, Gordon",
    title = "{Moments in the CFT Landscape}",
    eprint = "2603.18140",
    archivePrefix = "arXiv",
    primaryClass = "hep-th",
    month = "3",
    year = "2026"
}

@article{Dijkgraaf:1996xw,
    author = "Dijkgraaf, Robbert and Moore, Gregory and Verlinde, Erik and Verlinde, Herman",
    title = "{Elliptic genera of symmetric products and second quantized strings}",
    journal = "Commun. Math. Phys.",
    volume = "185",
    year = "1997",
    pages = "197--209",
    doi = "10.1007/s002200050087",
    eprint = "hep-th/9608096",
    archivePrefix = "arXiv",
    primaryClass = "hep-th"
}

@article{AfkhamiJeddi:2020hde,
    author = "Afkhami-Jeddi, Nima and Cohn, Henry and Hartman, Thomas and Tajdini, Amirhossein",
    title = "{Free partition functions and an averaged holographic duality}",
    journal = "JHEP",
    volume = "01",
    year = "2021",
    pages = "130",
    doi = "10.1007/JHEP01(2021)130",
    eprint = "2006.04839",
    archivePrefix = "arXiv",
    primaryClass = "hep-th"
}

@article{Maloney:2020nni,
    author = "Maloney, Alexander and Witten, Edward",
    title = "{Averaging over Narain moduli space}",
    journal = "JHEP",
    volume = "10",
    year = "2020",
    pages = "187",
    doi = "10.1007/JHEP10(2020)187",
    eprint = "2006.04855",
    archivePrefix = "arXiv",
    primaryClass = "hep-th"
}

@article{Benjamin:2021wzr,
    author = "Benjamin, Nathan and Collier, Scott and Fitzpatrick, A. Liam and Maloney, Alexander and Perlmutter, Eric",
    title = "{Harmonic analysis of 2d CFT partition functions}",
    journal = "JHEP",
    volume = "09",
    year = "2021",
    pages = "174",
    doi = "10.1007/JHEP09(2021)174",
    eprint = "2107.10744",
    archivePrefix = "arXiv",
    primaryClass = "hep-th"
}

@article{Gromov:2013pga,
    author = "Gromov, Nikolay and Kazakov, Vladimir and Leurent, Sebastien and Volin, Dmytro",
    title = "{Quantum Spectral Curve for Planar $\mathcal{N} = 4$ Super-Yang-Mills Theory}",
    eprint = "1305.1939",
    archivePrefix = "arXiv",
    primaryClass = "hep-th",
    reportNumber = "IMPERIAL-TP-13-SL-02",
    doi = "10.1103/PhysRevLett.112.011602",
    journal = "Phys. Rev. Lett.",
    volume = "112",
    number = "1",
    pages = "011602",
    year = "2014"
}

@article{Berenstein:2002adc,
    author = "Berenstein, David Eliecer and Maldacena, Juan Martin and Nastase, Horatiu Stefan",
    editor = "Elias, V. and Epp, R. J. and Myers, Robert C.",
    title = "{Strings in flat space and pp waves from N=4 Super Yang Mills}",
    doi = "10.1063/1.1524550",
    journal = "AIP Conf. Proc.",
    volume = "646",
    number = "1",
    pages = "3--14",
    year = "2002"
}

@article{Hofman:2006xt,
    author = "Hofman, Diego M. and Maldacena, Juan Martin",
    title = "{Giant Magnons}",
    eprint = "hep-th/0604135",
    archivePrefix = "arXiv",
    doi = "10.1088/0305-4470/39/41/S17",
    journal = "J. Phys. A",
    volume = "39",
    pages = "13095--13118",
    year = "2006"
}

@article{deMelloKoch:2024sdf,
    author = "de Mello Koch, Robert and Kim, Minkyoo and Mahu, Augustine Larweh",
    title = "{A pedagogical introduction to restricted Schur polynomials with applications to heavy operators}",
    eprint = "2409.15751",
    archivePrefix = "arXiv",
    primaryClass = "hep-th",
    doi = "10.1142/S0217751X24300035",
    journal = "Int. J. Mod. Phys. A",
    volume = "39",
    number = "31",
    pages = "2430003",
    year = "2024"
}

@article{Lin:2004nb,
    author = "Lin, Hai and Lunin, Oleg and Maldacena, Juan Martin",
    title = "{Bubbling AdS space and 1/2 BPS geometries}",
    eprint = "hep-th/0409174",
    archivePrefix = "arXiv",
    reportNumber = "PUPT-2136",
    doi = "10.1088/1126-6708/2004/10/025",
    journal = "JHEP",
    volume = "10",
    pages = "025",
    year = "2004"
}

@article{FollandSitaram1997,
  author    = {Folland, Gerald B. and Sitaram, Alladi},
  title     = {The uncertainty principle: a mathematical survey},
  journal   = {J. Fourier Anal. Appl.},
  volume    = {3},
  number    = {3},
  pages     = {207--238},
  year      = {1997},
  doi       = {10.1007/BF02649110}
}

@article{Horowitz:2000fm,
    author = "Horowitz, Gary T. and Hubeny, Veronika E.",
    title = "{CFT description of small objects in AdS}",
    eprint = "hep-th/0009051",
    archivePrefix = "arXiv",
    doi = "10.1088/1126-6708/2000/10/027",
    journal = "JHEP",
    volume = "10",
    pages = "027",
    year = "2000"
}

@article{Hoyos:2021uff,
    author = "Hoyos, Carlos and Jokela, Niko and Vuorinen, Aleksi",
    title = "{Holographic approach to compact stars and their binary mergers}",
    eprint = "2112.08422",
    archivePrefix = "arXiv",
    primaryClass = "hep-th",
    reportNumber = "HIP-2021-49/TH",
    doi = "10.1016/j.ppnp.2022.103972",
    journal = "Prog. Part. Nucl. Phys.",
    volume = "126",
    pages = "103972",
    year = "2022"
}

@article{Abajian:2023bqv,
    author = "Abajian, Jacob and Aprile, Francesco and Myers, Robert C. and Vieira, Pedro",
    title = "{Correlation functions of huge operators in AdS$_{3}$/CFT$_{2}$: domes, doors and book pages}",
    eprint = "2307.13188",
    archivePrefix = "arXiv",
    primaryClass = "hep-th",
    doi = "10.1007/JHEP03(2024)118",
    journal = "JHEP",
    volume = "03",
    pages = "118",
    year = "2024"
}
\bibliographystyle{JHEP}

\end{document}